\documentclass[aps,pra,superscriptaddress,showpacs,showkeys,notitlepage, numbers,onecolumn,longbibliography,nofootinbib]{revtex4-2}

\pdfoutput=1
\usepackage[utf8]{inputenc}
\usepackage[english]{babel}
\usepackage[T1]{fontenc}

\usepackage{amsmath,amssymb,amsthm}
\usepackage{hyperref}
\usepackage{tikz}
\usepackage{float}
\usepackage{subcaption}
\usepackage{graphicx}
\usepackage{physics}

\hypersetup{
     colorlinks=true,
     linkcolor=blue,
     filecolor=blue,
     citecolor = magenta,      
     urlcolor=black,
}


\usetikzlibrary{quantikz}

\newcommand{\mycomment}[1]{}

\newtheorem{theorem}{Theorem}[section]
\newtheorem{corollary}{Corollary}[section]
\newtheorem{lemma}[theorem]{Lemma}

\theoremstyle{definition}

\begin{document}

\title{Efficient Quantum Circuits for Non-Unitary and Unitary Diagonal Operators with Space-Time-Accuracy trade-offs}

\author{Julien Zylberman}
\affiliation{Sorbonne Université, Observatoire de Paris, Université PSL, CNRS, LUX, 75005 Paris, France}

%
\author{Ugo Nzongani}
\affiliation{CNRS, LIS, Aix-Marseille Université, Université de Toulon, Marseille, France}
\affiliation{Unité de Mathématiques Appliquées, ENSTA Paris, Institut Polytechnique de Paris, 91120 Palaiseau, France}

\author{Andrea Simonetto}
\affiliation{Unité de Mathématiques Appliquées, ENSTA Paris, Institut Polytechnique de Paris, 91120 Palaiseau, France}

\author{Fabrice Debbasch}
\affiliation{Sorbonne Université, Observatoire de Paris, Université PSL, CNRS, LUX, 75005 Paris, France}

\begin{abstract}

Unitary and non-unitary diagonal operators are fundamental building blocks in quantum algorithms with applications in the resolution of partial differential equations, Hamiltonian simulations, the loading of classical data on quantum computers (quantum state preparation) and many others. In this paper, we introduce a general approach to implement unitary and non-unitary diagonal operators with efficient-adjustable-depth quantum circuits. The depth, {\sl i.e.}, the number of layers of quantum gates of the quantum circuit, is reducible with respect either to the width, {\sl i.e.}, the number of ancilla qubits, or to the accuracy between the implemented operator and the target one. While exact methods have an optimal exponential scaling either in terms of size, {\sl i.e.}, the total number of primitive quantum gates, or width, approximate methods prove to be efficient for the class of diagonal operators depending on smooth, at least differentiable, functions. Our approach is general enough to allow any method for diagonal operators to become adjustable-depth or approximate, decreasing the depth of the circuit by increasing its width or its approximation level. This feature offers flexibility and can match with the hardware limitations in coherence time or cumulative gate error. We illustrate these methods by performing quantum state preparation and non-unitary-real-space simulation of the diffusion equation. This simulation paves the way to efficient implementations of stochastic models useful in physics, chemistry, biology, image processing and finance.

\end{abstract}


\maketitle


\section{Introduction}

Gate-based quantum computing leverages the principles of quantum mechanics to perform computations that would be prohibitively challenging on classical computers. The advantage of quantum algorithms lies in the possible decomposition of $n$-qubit operations into a reasonable number of hardware feasible single- and two-qubit operations. In this context, the development of new decompositions which improve on the current ones or widen the range of operations efficiently implementable on quantum computers, to include, {\sl e.g}, non-unitary operations, is crucial to achieve novel applications and to lower the amount of resources.

In this paper, we focus on implementing one of the most fundamental operations in quantum computing, {\sl i.e}, the diagonal operation, be it unitary or non-unitary. In fact, the computational cost of implementing diagonal operators is a limiting factor in many quantum algorithms with applications as varied as real-space simulation of unitary and non-unitary dynamics \cite{zalka1998simulating, wiesner1996simulations, kassal2008polynomial, jordan2012quantum,budinski2021quantum,linden2022quantum,seneviratne2024exact,mangin2024efficient}, quantum optimization \cite{farhi2001quantum,10.1145/3569095} or quantum state preparation \cite{grover2002creating, PhysRevA.109.042401, zhang2022quantum}.

The development of canonical decomposition of diagonal unitaries was first mentioned in the context of general unitary synthesis by A.Barenco et al. in 1995 \cite{barenco1995elementary}. Considering an $n$-qubit diagonal unitary, the authors propose to implement each eigenvalue sequentially with $2^n$ multi-controlled phase gates, each of these gates being in turn implementable with $\mathcal{O}(n^2)$ primitive quantum gates (or only $\mathcal{O}(n)$ if one uses one ancilla qubit). In 2004, S.Bullock et al. \cite{bullock2004asymptotically} were the first to present a quantum circuit related to the Walsh-Hadamard transform of the phases of the diagonal unitaries.
Their quantum circuits, made only of controlled-NOT gate and one qubit $z$-axis rotations, implement any diagonal unitary with $2^{n+1} -3$ gates, and are asymptotically optimal in terms of size for exact implementation. It is only in 2014 that Welch et al. \cite{welch2014efficient} developed the first approximate quantum circuits with no exponential scalings for diagonal unitaries depending on smooth, at least once differentiable, functions. The complexity of their quantum circuits, also based on Walsh-Hadamard transform, is directly related to the maximum value of the derivative of the function, making the synthesis efficient as long as the function has small variations. The same year, Welch et al. developed efficient quantum circuits in the Clifford+T basis for diagonal unitaries with $k\ll 2^n$ distinct eigenvalues \cite{10.5555/3179320.3179326}. In 2021, approximate quantum compiling have been developed to train diagonal ansatze \cite{geller2021experimental}. Then, the work of Zhang et al. \cite{zhang2024depth} improved the work of Welch et al. \cite{welch2014efficient} by reducing the depth of the quantum circuits by a factor of $2$ while maintaining an optimal size scaling. In 2023, Sun et al. \cite{sun2023asymptotically} developed a recursive scheme for exact implementation of diagonal unitaries based on a particular ordering of the Walsh operators. Their method has an optimal depth scaling as $\mathcal{O}(2^n/n)$ and can be implemented with ancilla qubits to reduce the depth of the quantum circuits. In 2024, Zylberman et al. \cite{PhysRevA.109.042401} implemented diagonal unitaries using sparse Walsh Series: by considering the smallest number of Walsh coefficients approximating a smooth, at least differentiable, function, they were able to reduce the depth of the associated quantum circuits. In 2024 also, Nzongani et al. \cite{nzongani2024adjustable} considered adjustable-depth methods to parallelize block-diagonal operators. Recently, Claudon et al. \cite{claudon2024polylogarithmic} were able to revisit the initial results of Barenco et al., by developing a poly-logarithmic depth quantum circuit for multi-controlled operations, improving the circuit complexity of implementing one eigenvalue of a diagonal operator from $\mathcal{O}(n)$ to $\mathcal{O}(\log(n)^3)$ using one ancilla qubit. The work of D.Motlagh et al. \cite{motlagh2024generalized} uses a quantum signal processing protocol to approximate efficiently diagonal operators through their Fourier decompositions, using one ancilla qubit and at the cost of a ressource-intensive classical optimization.

All above methods can be grouped into three families: (i) the methods that use a sequential decomposition and implement one eigenvalue at a time \cite{barenco1995elementary,bullock2004asymptotically,claudon2024polylogarithmic} (ii) the methods that use a Walsh (or Walsh-Hadamard) decomposition \cite{bullock2004asymptotically,welch2014efficient,sun2023asymptotically,PhysRevA.109.042401} and (iii) the methods based on a classical optimization \cite{motlagh2024generalized}. All methods in (i) and (ii) express diagonal unitaries as a product of simpler, efficiently implementable, diagonal unitaries\footnote{and do not rely on ressource intensive classical preprocessing}. For generic diagonal unitary, exact decompositions involve an exponential amount of these simpler diagonal unitaries, leading to non-efficient quantum circuits. 
The hope is then that, for many applications, the relevant diagonal unitaries are actually non generic. For example, the relevant diagonal unitaries can be sparse, {\sl i.e.}, only have a certain “small” number of eigenvalues different from unity, or the phases of the eigenvalues can be computed from functions with some smoothness properties, as for real-space simulations based on the implementation of evolution operators which are diagonal in a specific basis \cite{welch2014efficient,childs2022quantum}.

In this work, we present an adjustable-depth framework to implement diagonal unitaries: from any decomposition into primitive diagonal operators, be it sparse or dense, exact or approximate, the depth of the associated quantum circuits can be reduced using an arbitrary number of ancilla qubits. The ancilla qubits are used to parallelize the implementation of the primitive diagonal operators, giving a first method to adjust the depth of the circuits. For the class of diagonal unitary depending on smooth, at least differentiable, functions, we prove that one can also adjust the depth of the circuits with respect to the approximation, giving a second tunable feature of the circuit. The complexities of the constructed quantum circuits are summarized in different theorems:
\begin{itemize}
\item Full parallelization Theorem \ref{thm : full parallelization} for any diagonal unitary decomposed into a product of $p$ simpler, efficiently implementable, diagonal unitaries $\hat{U}=\prod_{j=0}^{p-1}\hat{U}_j$ given a sufficient number of ancilla qubits.
\item Adjustable-depth Theorem \ref{thm : adjustable-depth with ancilla} for any diagonal unitary decomposed into a product of $p$ simpler, efficiently implementable, diagonal unitaries $\hat{U}=\prod_{j=0}^{p-1}\hat{U}_j$ given a limited number of ancilla qubits.
\item Approximate Theorem \refeq{thm: Approximation theorem} to modify any exact methods into an approximate one for any diagonal unitary associated to a smooth, at least differentiable, function.
\end{itemize}

These first results (Theorems \ref{thm : full parallelization}, \ref{thm : adjustable-depth with ancilla}) are similar in spirit to those in \cite{moore2001parallel}, but they were developed independently and they are here presented in full generality and with numerical simulations.

From now on we focus on diagonal non-unitaries. Implementing non-unitary operators with quantum computers has been often discussed in the last decades \cite{gingrich2004non, terashima2005nonunitary,Briegel_2009, daskin2017ancilla} and is crucial to broaden the range of applications accessible to quantum computing. Some algorithms of quantum state preparation (QSP) involve non-unitary operations \cite{gingrich2004non, PhysRevA.106.022414}. Linear combination of unitaries (LCU) achieves Hamiltonian simulation by implementing the Taylor expansion of the evolution operator into a series of non-unitary operators \cite{10.5555/2481569.2481570, berry2015simulating}. Solving ordinary differential equations is also possible using non-unitary operations within the LCU framework \cite{childs2017quantum, berry2017quantum} and many quantum partial differential equations solvers are based on non-unitary operations \cite{gilyen2019quantum,low2019hamiltonian}. The quantum singular value transform (QSVT) makes it possible to implement polynomials of a block encoded operator and quantum algorithms such as Shor's algorithm \cite{Shor_1997}, quantum search \cite{grover1996fast,giri2017review} or quantum phase estimation \cite{Brassard_2002} can be recast as a QSVT \cite{gilyen2019quantum, martyn2021grand}. Most of these algorithms are based on the block-encoding of a non-unitary operator into a larger unitary one by adding ancilla qubits and the computational resources are often given as the number of queries to a block-encoded operator. In particular, the development of quantum circuits for non-unitary diagonal operators has attracted much attention because these operators appear naturally in the context of stochastic processes \cite{oksendal2013stochastic} and are therefore useful in a wide variety of contexts which range from physics, chemistry and biology \cite{van2004stochastic, freund2000stochastic} to image processing (see section 3.6 in \cite{gonzalez2009digital}) and finance \cite{oksendal2013stochastic, BLACK1976167,pironneau2009partial}.

In this article, we show how to block-encode a non-unitary diagonal operator from a unitary one, keeping the adjustable properties of the quantum circuits. We present the different strategies that one can use starting from a block-encoded diagonal operator: the block-encoding may be enough for the target application, but one may also need to measure an ancilla qubit to get a renormalized qubit state on which the diagonal non-unitary operator has been performed. The probability of success depends directly on the diagonal operator and on the qubit state. Two distincts approaches are presented: a non-destructive repeat-until-success scheme and an amplitude amplification scheme for which the depth of the quantum circuit is adjustable with respect to the probability of success. Finally, we illustrate these methods with two applications: a new quantum state preparation protocols of smooth, at least differentiable, functions is presented and the real-space simulation of the diffusion equation is performed by preparing an initial Gaussian distribution which evolves through the non-unitary evolution operator of the diffusion equation. These contributions can be summarize by the following points:
\begin{itemize}
\item Block-encoding Theorem \ref{thm: block-encoding} of non-unitary diagonal operators $\hat{D}$ using adjustable-depth quantum circuits.
\item Table of complexities \ref{Table : approximate case} for unitary and non-unitary diagonal operators depending on smooth, at least differentiable, functions.
\item Table of complexities \ref{Table : sparse case} for sparse unitary and non-unitary diagonal operators.
\item Non-destructive repeat-until-success protocol for non-unitary diagonal operators when amplitude amplification fails.
\item Efficient quantum state preparation Theorem \ref{thm: qsp} with space-time-accuracy trade-offs to prepare quantum states depending on differentiable functions. 
\item Resolution of the diffusion equation, see Fig. \ref{fig: heat equation resolution}, with high fidelity, few qubits and low depth quantum circuits.
\end{itemize}

The remainder of the paper is organized as follows. Section \ref{sec:preli} presents the sequential and Walsh-Hadamard decompositions and how to copy quantum registers. Section \ref{sec:adj_framework} introduces the general approach to adjust the depth of the quantum circuits with respect to the number of ancilla qubits and Section \ref{sec:adj_framework_approx} with respect to the accuracy. In Section \ref{sec:non_unitary}, the adjustable-depth quantum circuits for non-unitary diagonal operators are presented. Section \ref{sec:adj_depth_methods} summarizes the asymptotic scalings of the different methods for diagonal operators. Section \ref{section amplitude amplification} details the amplitude amplification and the non-destructive repeat-until-success protocols. Finally, Section \ref{sec:applications} presents two applications: quantum state preparation and the resolution of the diffusion equation generated by Brownian motion. Additional technical details are provided in the appendices, and some code files for constructing quantum circuits are available at \cite{github_ugo}.

\section{Preliminaries}\label{sec:preli}

This section introduces preliminary notions used across the paper. We present the metrics chosen to estimate the computational resources of the quantum circuits, as well as the two decompositions of diagonal unitaries, {\sl i.e.}, the sequential and the Walsh-Hadamard decomposition. We also introduce the copy unitary used to prepare the ancilla registers to parallelize the computations.

\subsection{Depth, Size, Width, and Accuracy}

Different metrics exist to evaluate the computational resources needed to implement a given quantum circuit. 
Size is the number of single-qubit and two-qubit gates. Depth is the number of layers of quantum gates, and it is related to the time needed to perform the desired operation. Width is the number of ancilla qubits. A metric for accuracy for us, is the error in spectral norm between the implemented operator and the target operator. The set of primitive single-qubit and two-qubit gates considered in this article is the set of all single-qubit gates complemented by the control NOT gate (CNOT). 
Other metrics such as the CNOT count or the T-count can also be estimated from the scalings provided in this paper. 
A unitary operator acting on $n$ qubits is said to be efficiently implementable if there is a quantum circuit implementing it up to an error $\epsilon>0$ in spectral norm with a size, depth and width scaling at worst as $\mathcal{O}(\text{poly}(n,1/\epsilon))$. By extension, the quantum circuit is said efficient. A non-unitary operator is said to be efficiently implementable on a set  $\mathcal{H}$ of qubit states if the associated quantum circuit is efficient and the probability of successfully implementing the non-unitary operator on any $n$-qubit state $\ket{\psi} \in \mathcal{H}$ scales at worst as $\Omega(\text{poly}(1/n,\epsilon))$.

Some of the quantum circuits presented in this article have a depth and/or a size independent of $n$, making them particularly relevant for large scale fault-tolerant algorithms.

\subsection{Sequential decomposition}
The sequential decomposition expresses the diagonal unitary $\hat{U}=\sum_{x=0}^{N-1}e^{i\theta_x}\ket{x}\bra{x}$ acting on $n$ qubits (with $N=2^n$ eigenvalues) as a product of simple diagonal unitaries $\hat{U}_j=e^{i\theta_{j}}\ket{j}\bra{j}+\sum_{x\neq j} \ket{x}\bra{x}$ with $j\in\{0,...,N-1\}$, $x=\sum_{i=0}^{n-1}x_i 2^i \in \{0,...,N-1\}$, $x_i\in\{0,1\}$ and $\ket{x}=\ket{x_0}\otimes \cdots \otimes\ket{x_{n-1}}$ being the computational basis of the Hilbert space $\mathcal{H}_2^{\otimes n}$. For each $j$, the operator ${\hat U}_j$ has by construction $2^n - 1$ eigenvalues equal to unity and the other eigenvalue is $\exp(i \theta_{j})$, which is one of the eigenvalues of $\hat U$:
\begin{equation}
    \hat{U}=\prod_{j=0}^{N-1}\hat{U}_j,
\end{equation}

and with the notation $\hat{U}=\text{diag}(u_0,\dots,u_{N-1})$:
\begin{equation}
    \text{diag}(u_0,...,u_{N-1})=\text{diag}(u_0,1,\dots,1)\times\text{diag}(1,u_1,\dots,1)\times\cdots\times\text{diag}(1,\dots,1,u_{N-1}).
\end{equation}

Thus, each $\hat{U}_j$ is a $n$-qubit unitary which changes the phase of one of the eigenvectors of the computational basis. By denoting $j=\sum_{i=0}^{n-1}j_i2^i$ the binary decomposition of $j$  (where $ j_i\in\{0,1\}$ for all values of $i$), one can show that $\hat{U}_j$ corresponds to applying the phase $e^{i\theta_j}$ on one of the $n$ qubits if all the qubits are in state $\ket{j_i}$. The operator $\hat{U}_j$ can therefore be written as a $(n-1)$-controlled phase gate $\hat{P}(\theta_j)=\ket{0}\bra{0}+e^{i\theta_j}\ket{1}\bra{1}$ if $j_{n-1}=1$, or $e^{i\theta_j}\ket{0}\bra{0}+\ket{1}\bra{1}=\hat{X}\hat{P}(\theta_j)\hat{X}$ if $j_{n-1}=0$, which is controlled by the qubits $i\in \{0,...,n-2\}$ for which $j_i=1$ and anti-controlled by the qubits $i$ for which $j_i=0$. The anti-controls can be rewritten as control operations using $\hat{X}$-Pauli gates:
\begin{equation}
    \hat{U}_j=(\hat{X}_0^{j_0}\otimes...\otimes\hat{X}_{n-1}^{j_{n-1}})\Lambda_{\{0,...,n-2\}}(\hat{P}(\theta_j))(\hat{X}_0^{j_0}\otimes...\otimes\hat{X}_{n-1}^{j_{n-1}}).
\label{sequential decomposition}
\end{equation}
with $\Lambda_{\{1,...,n-1\}}(\hat{P}(\theta_j))$ a $(n-1)$-controlled $\hat{P}(\theta_j)$ gates and $\hat{X}_i^{j_i}$ the $\hat{X}$-Pauli gate applied on qubit $i$ if $j_i=1$.

Let $k_j$ be the number of $1$'s in the binary decomposition of $j$. Each $\hat{U}_j$ is implementable using $2k_j$ $\hat{X}$-Pauli gates and a $(n-1)$-multi-controlled gate.  The recent work of Claudon et al. \cite{claudon2024polylogarithmic} introduces polylogarithmic-depth quantum circuits to implement $(n-1)$-multi-controlled gates by breaking it down into smaller multi-controlled gates which are implemented in parallel. Their approximate method implements the gate $\Lambda_{\{1,...,n-1\}}(\hat{P}(\theta_j))$  up to an error $\epsilon>0$ with a quantum circuit of depth $\mathcal{O}(\log(n)^3\log(1/\epsilon))$, size $\mathcal{O}(n\log(n)^4\log(1/\epsilon))$ and without ancilla qubits (Proposition 2 in \cite{claudon2024polylogarithmic}). One can also implement exactly the $n$-control gate using one ancilla qubit (zeroed or borrowed\footnote{ Zeroed ancilla qubits are initially in state $\ket{0}$ and reset to $\ket{0}$ at the end of the computation. Borrowed ancilla qubits are in an arbitrary state $\ket{\psi}$ potentially entangled with other qubits and are restored to their initial state afterwards.}) with a depth $\mathcal{O}(\log(n)^3)$ and a size $\mathcal{O}(n\log(n)^4)$ (Corollary 1 in \cite{claudon2024polylogarithmic}) or without ancilla qubits with a linear depth $\mathcal{O}(n)$ and a linear size $\mathcal{O}(n)$ \cite{Craig}. 

Being diagonal operators, the ${\hat U}_j$'s commute with one another. It thus makes sense to wonder if there is an optimal order in which to implement these operators. 
It is possible to use a Gray code order \cite{gray-pulse-code-communication-1953,nielsen2002quantum, beauchamp1984applications}, {\sl i.e.}, from two following operators $\hat{U}_j$ and $\hat{U}_{j'}$ only one bit differs from the binary decomposition of $j$ and $j'$, which cancels a maximum number of $\hat{X}$-Pauli gates from two consecutive operators. Thus, only one $\hat{X}$-Pauli gate remains between each multi-controlled gate, reducing the overall cost of the implementation (see Appendix \ref{example sequential decomposition} for examples of quantum circuits with and without Gray code ordering). 

\subsection{Walsh-Hadamard decomposition}

The exact synthesis of quantum circuits for diagonal unitaries has been proven asymptotically optimal in terms of size (up to a factor $2$) using $2^{n+1}-3$ CNOT and $\hat{R}_Z$ quantum gates \cite{bullock2004asymptotically}. As shown by Welch et al. \cite{welch2014efficient}, this construction is directly related to the Walsh-Hadamard representation of a set of $N=2^n$ data points: a $N$-Walsh Series can represent exactly a set $\{\theta_k\}$ of $N$ data points through the Walsh-Hadamard transform.

Walsh Series and Walsh functions were first introduced by Walsh in 1923 \cite{walsh1923closed}. The Walsh function of order $j \in \{0,1,...,N-1\}$ is defined on $[0,1]$ by
\begin{equation}
    w_j(x)=(-1)^{\sum_{i=0}^{n-1}j_{i}x_{i}},
\label{Walsh function definition}
\end{equation}
where $j_i$ is the $i$-th bit in the binary expansion $j=\sum_{i=0}^{n-1}j_i2^{i}$ and $x_i$ is the $i$-th bit in the dyadic expansion $x=\sum_{i=0}^{\infty} x_i/ 2^{i+1}$.

Walsh functions only take $\pm1$ values depending on their order $j$ and are naturally implementable on quantum computers through their associated $\hat{Z}$-Pauli operators:

\begin{equation}
    \hat{w}_j = \bigotimes_{i=0}^{n-1}(\hat{Z}_i)^{j_i}.
\end{equation}

To implement the diagonal unitary $\hat{U}=\sum_{k=0}^{N-1} e^{i\theta_k}\ket{k}\bra{k}$, which depends on the $N$ real values $\{\theta_k\}$, one first defines the Walsh coefficients:
\begin{equation}
    a_j=\frac{1}{N}\sum_{k=0}^{N-1}\theta_k w_{j}(k/N),
\end{equation}
and write 
\begin{equation}
    \hat{U} = \prod_{j=0}^{N-1}\hat{W}_{j},
\label{walsh decomposition}
\end{equation}
where $\hat{W}_j=e^{ia_j\hat{w}_j}$.

Using the fact that a tensor product of $\hat{Z}$-Pauli gates can be rewritten using two CNOT stairs and one $\hat{Z}$-Pauli gate, one can show that each of the operators $\hat{W}_j$ can be implemented using one $\hat{R}_Z$ gate and $2k_j$ CNOT gates, with $k_j$ the number of bit equals to unity in the binary decomposition of $j$. For $j=\sum_{i=0}^{p-1}j_i2^{i}$ with $1\leq p\leq n$ and $j_{p-1}=1$, the operator $\hat{W}_j$ acts on the $p$-th qubits $0,...,p-1$ as:
\begin{equation}
   \hat{W}_j=\left(\prod_{i=0}^{p-2}\widehat{CNOT}_{i\rightarrow p-1}^{j_i}\right)\left(\hat{I}_2^{\otimes(p-1)} \otimes \hat{R}_Z(a_j))
   \right)\left(\prod_{i=0}^{p-2}\widehat{CNOT}_{i\rightarrow p-1}^{j_i}\right)^{-1},
\label{eq: exp of walsh operator}
\end{equation}
where $\widehat{CNOT}_{i\rightarrow p-1}^{j_i}$ is the CNOT gate controlled by the $i$-th qubit and applied on the $(p-1)$-th qubit if $j_i=1$, the symbol $\prod_{i=0}^{p-2}\hat{A}_i=\hat{A}_0...\hat{A}_{p-2}$ is the product of operator $\hat{A}_i$ with indexes in increasing order, $\hat{I}_2^{\otimes (p-1)}$ is the identity operator on the qubits $0$ to $p-2$ and $\hat{R}_Z(a_j)=e^{i a_j}\ket{0}\bra{0}+e^{-i a_j}\ket{1}\bra{1}$ acts on the qubit $p-1$. The operator $\hat{W}_0(a_0)=e^{ia_0}\hat{I}_2^{\otimes n}$ is a phase encoding the average value of the $\{\theta_k\}$.

Being diagonal operators, the $\hat{W}_j$ commute with one another. Similarly to the sequential decomposition, one can choose a Gray code ordering to cancel a maximum of CNOT gates from two consecutive $\hat{W}_j$ operator (see Appendix \ref{example Walsh-Hadamard decomposition} for examples of quantum circuit with and without Gray code ordering).

Note that sparse Walsh-Hadamard decomposition, {\sl i.e.}, diagonal unitary expressed with only a number $s < N$ of  $\hat{W}_j$ operators, are particularly efficient to approximate diagonal unitaries depending on smooth, at least once differentiable, functions \cite{PhysRevA.109.042401}. In this case, the Gray code ordering may not be efficient to cancel a maximum of CNOT gates.

\subsection{Preparation of the ancilla registers}

The main results of this article use ancilla qubits to adjust the depth of the quantum circuits by parallelizing the implementation of the diagonal unitaries $\hat{U}_j$ or $\hat{W}_j$. This is possible by “copying”\footnote{The name “copy” does not refer to the cloning operation $\ket{\psi}\otimes\ket{0}\rightarrow \ket{\psi}\otimes \ket{\psi}$ which is forbidden by the no-cloning theorem.} the bit value of each of the $n$ qubits into other ancilla qubits. Consider the $i$-th qubit in an arbitrary state $\ket{\psi}_i=\alpha\ket{0}_i+\beta\ket{1}_i$ and a register of $m_i$ zeroed ancilla qubit in state $\ket{0}^{\otimes m_i}$. The copy unitary $\hat{C}_i$ copies acts as:
\begin{equation}
\hat{C}_i(\alpha\ket{0}_i+\beta\ket{1}_i) \otimes \ket{0}^{\otimes m_i}=\alpha\ket{0}^{\otimes (m_i+1)}+\beta\ket{1}^{\otimes (m_i+1)}.
\end{equation}
This operation has already been introduced for other quantum circuit synthesis \cite{Cruz_2019, nzongani2023quantum}. We state and prove clearly the complexity associated to the copy operation with the following lemma.
\begin{lemma}
The “copy” unitary $\hat{C}_i$ acting on $m_i+1$ qubits can be implemented using $m_i$ CNOT gates with a depth $\lceil \log_2(m_i+1) \rceil$.
\label{copy lemma}
\end{lemma}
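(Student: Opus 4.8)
The plan is to realize $\hat{C}_i$ as a balanced binary fan-out tree of CNOT gates, then verify the stated state transformation by induction while simultaneously reading off the gate count and the depth. The starting observation is local: a single CNOT whose control is in the state $\alpha\ket{0}+\beta\ket{1}$ and whose target is a fresh $\ket{0}$ produces $\alpha\ket{00}+\beta\ket{11}$, i.e.\ it propagates the amplitude pattern onto the target. More generally, if a set of $t$ ``value-carrying'' qubits is in the GHZ-like state $\alpha\ket{0}^{\otimes t}+\beta\ket{1}^{\otimes t}$, then applying in parallel one CNOT from each of these $t$ qubits onto a distinct fresh ancilla in $\ket{0}$ sends the $\ket{0}^{\otimes t}$ branch to $\ket{0}^{\otimes 2t}$ and the $\ket{1}^{\otimes t}$ branch to $\ket{1}^{\otimes 2t}$, yielding $\alpha\ket{0}^{\otimes 2t}+\beta\ket{1}^{\otimes 2t}$. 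This doubling statement is the inductive engine of the whole argument.

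Next I would build the circuit layer by layer. Layer $1$ applies $\widehat{CNOT}_{i\to a_1}$, taking the number of value-carrying qubits from $1$ to $2$; at layer $\ell$ each of the $2^{\ell-1}$ current value-carrying qubits controls a fresh ancilla. Because in a given layer the controls are exactly the value-carrying qubits and the targets are disjoint fresh ancillas, all these CNOTs act on disjoint pairs, hence commute and execute in a single time step (depth $1$ per layer). After $\ell$ full layers one holds $2^{\ell}$ value-carrying qubits, so reaching the target count $m_i+1$ requires $L=\lceil \log_2(m_i+1)\rceil$ layers, which is the circuit depth. For the gate count, each CNOT converts exactly one fresh $\ket{0}$ ancilla into a copy, and we pass from $1$ to $m_i+1$ value-carrying qubits, so precisely $m_i$ CNOTs are used; chaining the doubling lemma across the $L$ layers then gives $\hat{C}_i(\alpha\ket{0}_i+\beta\ket{1}_i)\otimes\ket{0}^{\otimes m_i}=\alpha\ket{0}^{\otimes (m_i+1)}+\beta\ket{1}^{\otimes (m_i+1)}$.

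The step requiring the most care is the final layer when $m_i+1$ is not a power of two. There one cannot apply the full $2^{L-1}$ parallel CNOTs, but only $m_i+1-2^{L-1}$ of them, to land exactly on $m_i+1$ qubits. I would argue that (i) this partial set of CNOTs still acts on disjoint control-target pairs and so fits in one time step, leaving the depth at $\lceil\log_2(m_i+1)\rceil$ rather than forcing an extra layer, and (ii) the value-carrying qubits left untouched in this last layer keep their $\ket{0}/\ket{1}$ entry consistent with the global branch, so the GHZ structure $\alpha\ket{0}^{\otimes(m_i+1)}+\beta\ket{1}^{\otimes(m_i+1)}$ is preserved. A small worked instance (e.g.\ $m_i+1=5$, giving layers of $1,2,1$ CNOTs, total $4=m_i$, depth $3=\lceil\log_2 5\rceil$) confirms that both the count and the ceiling come out exactly right, and the edge case $m_i=0$ is trivially consistent with depth $0$.
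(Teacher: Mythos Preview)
Your proof is correct and follows essentially the same approach as the paper: a balanced CNOT fan-out that doubles the number of value-carrying qubits at each layer, with a final partial layer when $m_i+1$ is not a power of two. Your treatment is in fact slightly more explicit than the paper's, particularly in handling the last layer and the gate-count bookkeeping.
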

\begin{proof}
Note first that a set of $q$ CNOT gates applied to different qubits can be implemented in parallel and therefore has a depth $1$. Define $k=\lceil \log_2(m_i+1) \rceil$. A direct recursion on $k'=1,...,k-1$ proves that the copy unitary doubles the number of “copied” qubit at each step, so each newly copied qubit can be used to copy another one at the next step as 
\begin{equation}
\begin{split}
(\alpha\ket{0}+\beta&\ket{1})\otimes \ket{0}^{\otimes m_i} \xrightarrow{\widehat{CNOT}_{0\rightarrow1}} (\alpha\ket{0}^{\otimes 2}+\beta\ket{1}^{\otimes 2})\otimes \ket{0}^{\otimes m_i-1} \\ & \xrightarrow{\widehat{CNOT}_{0\rightarrow2}\otimes \widehat{CNOT}_{1\rightarrow3}} (\alpha\ket{0}^{\otimes 4}+\beta\ket{1}^{\otimes 4})\otimes \ket{0}^{\otimes m_i-3} \\ & \vdots 
\\ & \xrightarrow{\otimes_{j=0}^{2^{k'}-1}\widehat{CNOT}_{j\rightarrow (j+2^{k'})}} \alpha\ket{0}^{\otimes2^{k'+1}}+\beta\ket{1}^{\otimes 2^{k'+1}}\otimes \ket{0}^{(m_i-(2^{k'+1}-1))},
\end{split}
\end{equation}

where $\widehat{CNOT}_{j\rightarrow (j+2^{k'})}$ is the CNOT gate controlled by the $j$-th qubit and applied on $(j+2^{k'})$-th zeroed qubit. After $k-1$ steps, the qubit state is $(\alpha\ket{0}^{\otimes2^{k-1}}+\beta\ket{1}^{\otimes 2^{k-1}})\otimes\ket{0}^{\otimes(m_i-(2^{k-1}-1))}$. A last layer of CNOT gates suffices to produce the state $\alpha\ket{0}^{\otimes(m_i+1)}+\beta\ket{1}^{\otimes (m_i+1)}$. 
\end{proof}

\begin{figure}[ht]
    \centering
\begin{quantikz}
  \lstick[2]{\text{Main register}} & \ctrl{2}\gategroup[4,steps=2,style={dashed,rounded corners,color=blue,inner xsep=2pt},background]{{depth=1}} & \qw & \ctrl{4}\gategroup[8,steps=4,style={dashed,rounded corners,color=blue,inner xsep=2pt},background]{{depth=1}} & \qw & \qw & \qw & \qw \\
  \qw & \qw & \ctrl{2} & \qw & \ctrl{4} & \qw & \qw & \qw \\
  \lstick[2]{\text{1st ancilla register}} & \targ{} & \qw & \qw & \qw & \ctrl{4} & \qw & \qw \\
  \qw & \qw & \targ{} & \qw & \qw & \qw & \ctrl{4} & \qw \\
  \lstick[2]{\text{2nd ancilla register}} & \qw & \qw & \targ{} & \qw & \qw & \qw & \qw \\
  \qw & \qw & \qw & \qw & \targ{} & \qw & \qw & \qw \\
  \lstick[2]{\text{3rd ancilla register}} & \qw & \qw & \qw & \qw & \targ{} & \qw & \qw \\
  \qw & \qw & \qw & \qw & \qw & \qw & \targ{} & \qw 
\end{quantikz} 
\caption{Quantum circuit to copy a register of 2 qubits into 3 ancilla registers. The qubits of the ancilla registers are in the $\ket{0}$ state. The operations framed together can be executed in parallel as they act on different qubits. The depth of the circuit scales logarithmically in the number of copies as the number of copying qubits doubles at each step.}
\label{fig:copy_unitary}
\end{figure}
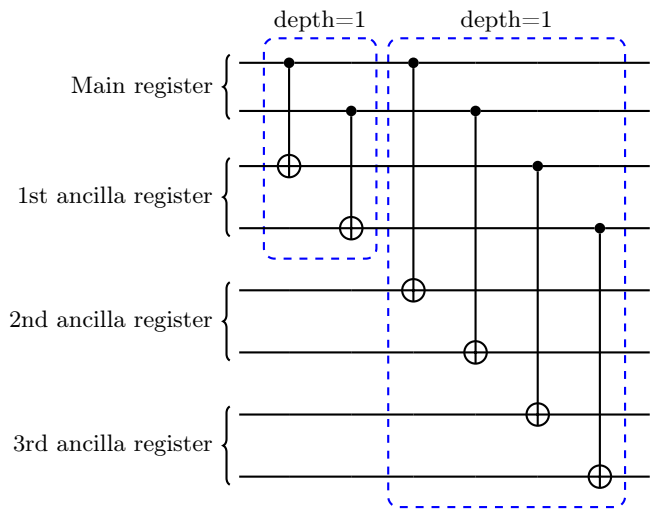

A direct corollary of this lemma adresses the copy of a register of $n$ qubits, where each of the $n$ qubits is copied on $m_i$ other qubits. This operation is performed by implementing each $\hat{C}_i$ in parallel through the operator $\widehat{\text{copy}}=\bigotimes_{i=0}^{n-1} \hat{C}_i$. Its size is given by $\sum_{i=0}^{n-1}m_i$ CNOT gates and its depth is $\lceil \log_2(m+1) \rceil$ where $m=\max_{i=0,...,n-1} m_i$.  A small instance of the copy unitary is shown in Fig. \ref{fig:copy_unitary} where a register of 2 qubits gets copied into 3 ancilla registers of 2 qubits each.

\section{Adjustable-depth framework using ancilla qubits}\label{sec:adj_framework}

In this section, we present an approach to adjust the depth of a quantum circuit implementing a diagonal unitary. Let $\hat{U}$ be a diagonal unitary acting on $n$ qubits with a decomposition $\hat{U}=\prod_{j=0}^{p-1}\hat{U}_j$ where each $\hat{U}_j$ acts on $k_j\leq n$ qubits and is implementable with a quantum circuit of size $s_j$ and depth $d_j$. The $\hat{U}_j$ can be sequential operators (\ref{sequential decomposition}) as well as exponential of Walsh operators (\ref{eq: exp of walsh operator}). Without ancilla qubits, one can implement $\hat{U}$ by performing each $\hat{U}_j$ one by one, leading to a quantum circuit of size $s=\sum_{j=0}^{p-1} s_j$ and depth $d=\sum_{j=0}^{p-1} d_j$.  With ancilla qubits, one can parallelize the implementation of a maximum number of $\hat{U}_j$ operators. The parallelization can be partial or total, reducing the depth proportionally to the number of ancilla qubits available at the cost of preparing the ancilla registers with copy unitaries.

 First, suppose one has enough available ancilla qubits to prepare $p-1$ registers:
\begin{equation}
\begin{split}
    \ket{x}\ket{0}^{\otimes k} \xrightarrow{\widehat{\text{copy}}}  \ket{x}\ket{\tilde{x}}_1...\ket{\tilde{x}}_{p-1},
\end{split}
\end{equation}
where each register $\ket{\tilde{x}}_j$ contains only a copy of the $k_j$ qubits on which $\hat{U}_j$ is acting non-trivially and $k=\sum_{j=1}^{p-1} k_j$. More formally, if one defines the support of $\hat{U}_j$ as the set of qubits on which $\hat{U}_j$ does not act like the identity, the state $\ket{\tilde{x}}_j$ contains only a copy of the qubits in the support of $\hat{U}_j$.

One can then implement each of the $p$ diagonal unitaries $\hat{U}_j=\sum_{x=0}^{N-1}e^{i\theta_j(x)}\ket{x}\bra{x}$ on a different register of qubits:

\begin{equation}
\begin{split}
  \ket{x}\ket{\tilde{x}}_1...\ket{\tilde{x}}_{p-1}  \xrightarrow{ \bigotimes _{j=0}^{p-1} \hat{U}_j } &\hat{U}_0\ket{x}\hat{U}_1\ket{\tilde{x}}_{1}...\hat{U}_{p-1}\ket{\tilde{x}}_{p-1}
 \\ =& e^{i\theta_{0}(x)}e^{i\theta_1(x)}...e^{i\theta_{p-1}(x)}\ket{x}\ket{\tilde{x}}_{1}...\ket{\tilde{x}}_{p-1}
\\=&e^{i\theta(x)}\ket{x}\ket{\tilde{x}}_{1}...\ket{\tilde{x}}_{p-1}
\end{split}
\end{equation}
with $e^{i\theta(x)}=e^{i\sum_{j=0}^{p-1}\theta_j(x)}$ being the eigenvalue of $\hat{U}$ associated to the $\ket{x}$ eigenvector.
Finally, the inverse copy operation resets the ancilla qubits in the state $\ket{0}$:
\begin{equation}
\begin{split}
 e^{i\theta(x)}\ket{x}\ket{\tilde{x}}_{1}...\ket{\tilde{x}}_{p-1} 
  \xrightarrow{ \widehat{\text{copy}}^{-1}}  e^{i\theta(x)}\ket{x}\ket{0}^{\otimes k}.
\end{split}
\end{equation}
The full parallelization protocol has the following complexity:
\begin{theorem} [Full parallelization]
\label{thm : full parallelization}
The implementation of the diagonal operator $\hat{U}=\prod_{j=0}^{p-1}\hat{U}_j$ can be fully parallelized with a quantum circuit using $k=\sum_{j=1}^{p-1}k_j$ ancilla qubits, with a size $s+2k$ and a depth bounded by $\max_{j}(d_j)+2\lceil \log_2(p) \rceil$.
\end{theorem}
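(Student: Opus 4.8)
The plan is to analyze the three-stage construction already displayed just before the theorem: a copy stage $\widehat{\text{copy}}$ that populates the ancilla registers, a parallel stage $\bigotimes_{j=0}^{p-1}\hat{U}_j$ applying each factor on its own register, and an uncompute stage $\widehat{\text{copy}}^{-1}$. I would first confirm correctness of the stated action, which the displayed equations already establish: each copied register $\ket{\tilde{x}}_j$ carries the bit values of $\ket{x}$ restricted to the support of $\hat{U}_j$, so $\hat{U}_j$ contributes the phase $e^{i\theta_j(x)}$, the product of all factors yields the eigenvalue $e^{i\theta(x)}$ of $\hat{U}$, and the uncompute stage restores the ancillas to $\ket{0}^{\otimes k}$. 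The remaining work is purely a resource count for width, size, and depth.

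For the width, each register $\ket{\tilde{x}}_j$ with $j\in\{1,\dots,p-1\}$ holds a copy of exactly the $k_j$ qubits in the support of $\hat{U}_j$, while the factor $\hat{U}_0$ acts in place on the original main register and requires no copy; summing gives $k=\sum_{j=1}^{p-1}k_j$ ancilla qubits. For the size, the corollary of Lemma~\ref{copy lemma} says the full copy over all main qubits costs $\sum_i m_i$ CNOT gates, where $m_i$ is the number of copies made of qubit $i$; since the total number of produced copies equals the number of ancillas, this is exactly $k$ CNOT gates, and the uncompute stage costs another $k$. The parallel stage costs $\sum_{j=0}^{p-1}s_j=s$ by hypothesis, so adding the three contributions yields size $s+2k$.

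The depth count is where the only real subtlety lies, and it is the step I expect to be the main obstacle. The parallel stage has depth $\max_j d_j$ because the $\hat{U}_j$ act on pairwise-disjoint registers and hence compose in a single layer. For the copy stage, the corollary gives depth $\lceil\log_2(m+1)\rceil$ with $m=\max_i m_i$, so I must bound $m$. The key observation is that a single main-register qubit lies in the support of at most all $p$ factors $\hat{U}_0,\dots,\hat{U}_{p-1}$, but the factor reading it from the original register (counted among these) needs no copy; hence each qubit is copied at most $p-1$ times, so $m\le p-1$ and the copy depth is at most $\lceil\log_2 p\rceil$. The same bound holds for the uncompute stage, and summing the three stages gives the claimed depth bound $\max_j(d_j)+2\lceil\log_2 p\rceil$. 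Keeping the estimate $m\le p-1$ precise is the crux: it is what converts the logarithmic-in-number-of-copies depth of Lemma~\ref{copy lemma} into a logarithmic-in-$p$ depth, and it relies essentially on $\hat{U}_0$ being applied in place rather than on a fresh copy.
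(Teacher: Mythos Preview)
Your proof is correct and follows essentially the same three-stage analysis as the paper: count $k$ CNOTs for each of the two copy stages to get size $s+2k$, then bound the depth by the maximum $d_j$ plus the logarithmic copy depth from Lemma~\ref{copy lemma}. Your justification that each main qubit is copied at most $p-1$ times (since $\hat{U}_0$ reads it in place) is in fact slightly more careful than the paper's looser phrasing ``a qubit copied $p$ times,'' and is exactly what is needed to land on $\lceil\log_2 p\rceil$ via $\lceil\log_2(m+1)\rceil$.
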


\begin{proof}
The size of the quantum circuit is given by the sum of the size of each $\hat{U}_j$ and the size of the $\widehat{\text{copy}}$ and $\widehat{\text{copy}}^{-1}$ unitaries: one controlled-NOT gate is applied on each of the $k$ ancilla qubits for the $\widehat{\text{copy}}$ and a second time for $\widehat{\text{copy}}^{-1}$, giving a final size of $s+2k$. The depth corresponds to the sum of the depths of the two copy unitaries which, in the worst case of a qubit copied $p$ times, is bounded by $2\lceil \log_2(p) \rceil$, and by the depth of the parallelized $\hat{U}_j$, which is given by the maximum of the depths of the $\hat{U}_j$'s.
\end{proof}

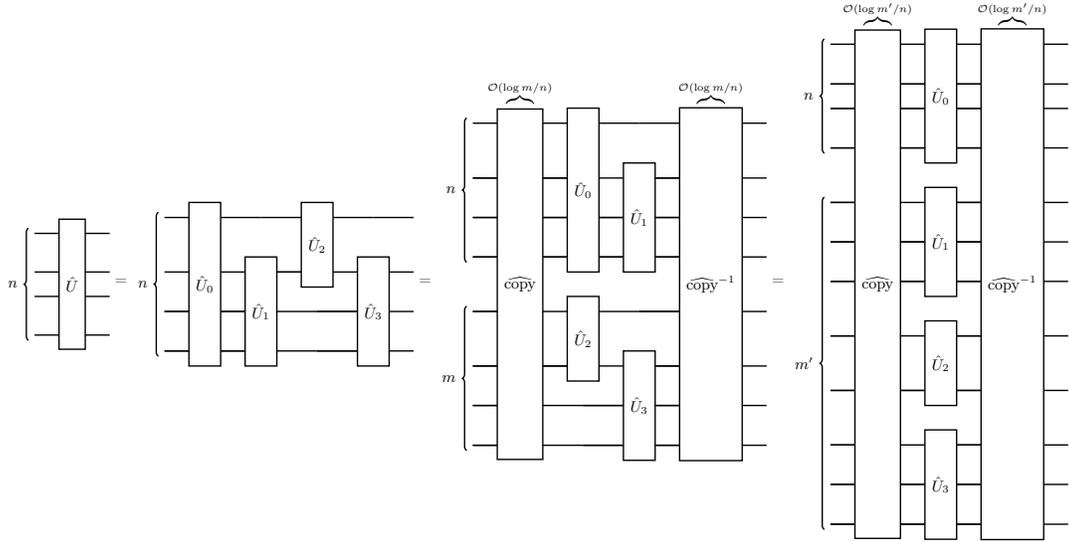
\begin{figure}
    \centering
    \scalebox{0.65}{
    \begin{quantikz}
        \lstick[4]{$n$} & \gate[4]{\hat{U}} & \qw \\
        & & \qw \\
        & & \qw \\
        & & \qw
    \end{quantikz}
    =
    \begin{quantikz}
        \lstick[5]{$n$} & \gate[4]{\hat{U}_0} & \qw & \gate[2]{\hat{U}_2} & \qw & \qw \\
        & & \gate[3]{\hat{U}_1} & \qw & \gate[3]{\hat{U}_3} & \qw \\
        & \qw & \qw & \qw & \qw & \qw \\
        & \qw & \qw & \qw & \qw & \qw
    \end{quantikz}
    =
    \begin{quantikz}
        \lstick[4]{$n$} & \gate[8]{ \widehat{\text{copy}}}\gategroup[8,steps=1,style={dashed,rounded corners,color=white, inner xsep=-10pt, inner ysep=-6pt},background]{{\sc
$\overbrace{}^{\mathcal{O}(\log m/n)}$}} & \gate[4]{\hat{U}_0} & \qw & \gate[8]{ \widehat{\text{copy}}^{-1}}\gategroup[8,steps=1,style={dashed,rounded corners,color=white, inner xsep=-10pt, inner ysep=-6pt},background]{{\sc
$\overbrace{}^{\mathcal{O}(\log m/n)}$}} & \qw  \\
        & \qw & \qw & \gate[3]{\hat{U}_1} & \qw & \qw \\
        & \qw & \qw & \qw & \qw & \qw \\
        & \qw & \qw & \qw & \qw & \qw \\
        \lstick[4]{$m$} & \qw & \gate[2]{\hat{U}_2} & \qw & \qw & \qw \\
        & \qw & \qw & \gate[3]{\hat{U}_3} & \qw & \qw \\
        & \qw & \qw & \qw & \qw & \qw \\
        & \qw& \qw & \qw & \qw & \qw 
    \end{quantikz}
    =
    \begin{quantikz}
        \lstick[4]{$n$} & \gate[12]{ \widehat{\text{copy}}}\gategroup[12,steps=1,style={dashed,rounded corners,color=white, inner xsep=-10pt, inner ysep=-6pt},background]{{\sc
$\overbrace{}^{\mathcal{O}(\log m'/n)}$}} & \gate[4]{\hat{U}_0} & \gate[12]{ \widehat{\text{copy}}^{-1}}\gategroup[12,steps=1,style={dashed,rounded corners,color=white, inner xsep=-10pt, inner ysep=-6pt},background]{{\sc
$\overbrace{}^{\mathcal{O}(\log m'/n)}$}} & \qw \\
        & \qw & \qw & \qw & \qw \\
        & \qw & \qw & \qw & \qw \\
        & \qw & \qw & \qw & \qw \\
        \lstick[8]{$m'$} & \qw & \gate[3]{\hat{U}_1} & \qw & \qw \\
        & \qw & \qw & \qw & \qw \\
        & \qw & \qw & \qw & \qw \\
        & \qw & \gate[2]{\hat{U}_2} & \qw & \qw \\
        & \qw & \qw & \qw & \qw \\
        & \qw & \gate[3]{\hat{U}_3} & \qw & \qw \\
        & \qw & \qw & \qw & \qw \\
        & \qw & \qw & \qw & \qw
    \end{quantikz}
    }
    \caption{Adjustable-depth quantum circuit for a diagonal unitary $\hat{U}=\prod_{j=0}^3\hat{U}_j$ using ancilla qubits. The main register is composed of $n$ qubits, the depth of the circuit is reduced by adding ancilla qubits. The “copy” operations are performed with a depth logarithmic in $m/n$.}
    \label{fig:adjustable_depth_qc}
\end{figure}

If the number $m$ of available ancilla qubits is smaller than $k$, one can gather the $\hat{U}_j$'s into $p'<p$ groups and implement the groups in parallel to each other.
Figure \Ref{fig:adjustable_depth_qc} gives an example of a diagonal unitary composed of $p=4$ efficiently implementable diagonal operators which are implementable sequentially or in parallel depending on the number of ancilla qubits available.  For $m=n$, one can implement half of the $\hat{U}_j$'s in parallel to the others, reducing at best by a factor of $2$ the depth of the circuit (but doubling its width).  The following Theorem \ref{thm : adjustable-depth with ancilla} gives an upper bound for the depth of the adjustable-depth protocol which decreases with the number of ancilla qubits, at the cost of performing the $ \widehat{\text{copy}}$ and $ \widehat{\text{copy}}^{-1}$ unitaries.

\begin{theorem}[Adjustable-depth]
\label{thm : adjustable-depth with ancilla}
Let $m\ge n$ be the number of ancilla qubits and $m'=\lceil m/n \rceil$. The diagonal unitary $\hat{U}=\prod_{j=0}^{p-1}\hat{U}_j$ can be implemented with a quantum circuit of size $s+2m$ and depth $d$ bounded as:
\begin{equation}
    d\leq \sum_{j=0}^{\lceil p/m' \rceil-1 } d_j +2\lceil \log_2(m') \rceil \leq \lceil \frac{p}{m'} \rceil \max_{j}(d_j) +2\lceil \log_2(m') \rceil,
\end{equation}
where the $\hat{U}_j$ are ordered to have decreasing depth $d_0\ge d_1\ge...\ge d_{p-1}$.
\end{theorem}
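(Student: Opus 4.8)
The plan is to reuse the parallelization mechanism of Theorem \ref{thm : full parallelization}, but to distribute the $p$ factors over a limited number of parallel \emph{lanes} rather than giving each factor its own register. First I would observe that with $m \ge n$ ancilla qubits one can host $m'-1 = \lceil m/n\rceil - 1$ full copies of the $n$-qubit main register: indeed $n(m'-1) = n(\lceil m/n\rceil - 1) \le m$, so the ancilla budget suffices, and together with the main register this yields $m'$ pairwise-disjoint $n$-qubit lanes. By the corollary to Lemma \ref{copy lemma}, producing these copies with $\widehat{\text{copy}}$ costs at most $m$ CNOT gates and has depth $\lceil \log_2 m'\rceil$ (each main-register qubit is duplicated into at most $m'$ lanes, so $m_i+1 = m'$ in the notation of the lemma), and the same holds for the uncopy $\widehat{\text{copy}}^{-1}$.

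Next I would partition the index set $\{0,\dots,p-1\}$ into $m'$ groups $G_0,\dots,G_{m'-1}$, each of cardinality at most $\lceil p/m'\rceil$ (always possible by pigeonhole), and assign group $G_i$ to lane $i$. Because each lane carries a faithful copy of the computational-basis label $x$ on the relevant support qubits, applying $\hat{U}_j$ inside lane $i$ contributes exactly the phase $e^{i\theta_j(x)}$, just as in the full-parallelization argument; the factors within a single lane share that lane's qubits and are therefore executed sequentially, while distinct lanes act on disjoint qubits and run simultaneously. Hence the depth of the parallel block equals $\max_i \sum_{j \in G_i} d_j$.

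The crux of the argument is then a simple but essential counting step: since the depths are sorted $d_0 \ge d_1 \ge \cdots \ge d_{p-1}$ and each group has at most $\lceil p/m'\rceil$ elements, the depth-sum of any single group is bounded by the sum of the $\lceil p/m'\rceil$ \emph{largest} depths,
\begin{equation}
\sum_{j \in G_i} d_j \;\le\; \sum_{j=0}^{\lceil p/m'\rceil - 1} d_j \;\le\; \Big\lceil \tfrac{p}{m'} \Big\rceil \max_{j} d_j .
\end{equation}
Adding the depths of $\widehat{\text{copy}}$ and $\widehat{\text{copy}}^{-1}$ gives the claimed bound $d \le \sum_{j=0}^{\lceil p/m'\rceil-1} d_j + 2\lceil \log_2 m'\rceil$, and the second inequality is immediate from $d_0 = \max_j d_j$. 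For the size, the $p$ factors still contribute $s=\sum_j s_j$ gates and the two copy unitaries contribute at most $2n(m'-1)\le 2m$ CNOTs, for a total of at most $s+2m$.

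I expect the main obstacle to be not any single computation but making the group-assignment rigorous: one must verify that a balanced partition into $m'$ groups of size $\le \lceil p/m'\rceil$ always exists, that the limited ancilla budget genuinely permits $m'$ lanes through the inequality $n(m'-1)\le m$, and, above all, that bounding each lane by the sum of the $\lceil p/m'\rceil$ largest depths is legitimate. It is precisely here that the decreasing ordering $d_0\ge \cdots \ge d_{p-1}$ is indispensable; a careless round-robin estimate (assigning lane $0$ the indices $0, m', 2m',\dots$ and bounding its cost by $d_0+d_{m'}+\cdots$) would yield a different, generally weaker, expression, so I would be careful to phrase the bound as ``at most the $\lceil p/m'\rceil$ largest depths, summed'' rather than tie it to one explicit schedule.
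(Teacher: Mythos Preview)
Your proof is correct and follows essentially the same approach as the paper's own proof: create $m'$ parallel lanes via the copy unitary, partition the $p$ factors into $m'$ groups of size at most $\lceil p/m'\rceil$, bound each group's depth by the sum of the $\lceil p/m'\rceil$ largest depths, and add the $2\lceil\log_2 m'\rceil$ copy/uncopy layers. If anything, you are more careful than the paper about verifying that the ancilla budget genuinely accommodates $m'-1$ full copies (the inequality $n(m'-1)\le m$) and about stating the size as \emph{at most} $s+2m$.
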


\begin{proof}
Similarly to the fully parallelized theorem, the size of the quantum circuit is given by the size $s$ of the circuits implementing the $\hat{U}_j$'s added to the number $2m$ of controlled-NOT gates needed to perform the $ \widehat{\text{copy}}$ and $\widehat{\text{copy}}^{-1}$ unitaries. The first inequality is given by the worst case scenario where the $\hat{U}_j$'s act non-trivially on $n$ qubits, in which case one needs at least $m\ge n$ ancilla qubits to reduce the depth of the quantum circuit. In this situation, the number of groups that can be implemented in parallel is $m'=\lfloor m/n \rfloor +1$ (the $+1$ term comes from the $n$ initial qubits, {\sl i.e.}, $m'=\lceil m/n\rceil$). It is always possible to create $m'$ random groups of at most $\lceil p/m' \rceil$ operators. In the worst case, the maximum depth of the groups is the sum from $j=0$ to $\lceil p/m' \rceil-1$ of the maximum depths of the $\hat{U}_j$'s: $\sum_{j=0}^{\lceil p/m' \rceil -1} d_j$. Since the depths of the two copy steps $2\lceil \log_2(m')\rceil$, one thus arrives at the first inequality. The second inequality trivially comes from the fact that $\forall j$, $d_j\leq \max_{j'}(d_{j'})$. 
\end{proof}

This adjustable-depth protocol does not deal with finding an optimal strategy to group the $\hat{U}_j$'s. Such a strategy needs to consider the depth $d_j$ of each operator and to group them in order to minimize the maximum depth of the groups. Finding this optimal solution or a “good” solution can take some classical time and it is in general a difficult combinatorial problem (even though some tailored algorithmic techniques may exist) \cite{bertsimas2005optimization}. We do not look into this here, however, the upper bound is reached for a diagonal unitary made only of sequential operators (\ref{sequential decomposition}) acting non-trivially on $n$ qubits with identical depth. For a diagonal unitary given by a Walsh-Hadamard decomposition, it is possible to group the unitaries with distinct support first, before parallelizing them. 

Overall, the parallelization approach reduces the depth of the quantum circuit while increasing the width, making it possible to adjust the quantum circuit to the constraints of the hardware, {\sl e.g.}, the decoherence time and the number of available qubits. Numerical results illustrating the trade-off between depth and width are given in the application section in Fig. \ref{fig: space_time_tradeoff} for the quantum state preparation of Gaussian states.

\section{Adjustable-depth framework using approximations}\label{sec:adj_framework_approx}
This section presents two efficient methods to approximate diagonal unitaries $\hat{U}_f=e^{i\hat{f}}$ which depend on smooth or, at least, once differentiable functions $f$. The first method transforms any exact quantum circuit for diagonal unitaries into an efficient approximate one. The second method computes a good approximation of $f$ using sparse Walsh Series. Both methods exploit the specific structure of $\hat{U}_f$ to get an efficient approximate quantum circuit, while exact methods suffer from exponential scalings either in terms of depth or width \cite{sun2023asymptotically}. These two approximating methods have efficient scalings with a depth and size asymptotically independent of the number of qubits $n$. By comparison, approximate quantum compiling \cite{geller2021experimental} does not provide any guarantees to reach a given accuracy in the large $n$ limit.

Let $f$ be a smooth, at least once differentiable, function defined on $[0,1]$, $\hat{U}_{f,n}=e^{i\hat{f}_n}=\sum_{x=0}^{N-1} e^{if(x/N)}\ket{x}\bra{x}$ be the target unitary acting on $n$ qubits and $\hat{U}_{f,m}=\sum_{x=0}^{M-1} e^{if(x/M)}\ket{x}\bra{x}$ be the restriction $\hat{U}_{f,n}$ to $m<n$ qubits (with $M=2^m$). The following theorem states that implementing $\hat{U}_{f,m}$ using an exact method gives an efficient quantum circuit approximating the target unitary $\hat{U}_{f,n}$ up to an error $\epsilon>0$ in spectral norm: 

\begin{theorem}
\label{thm: Approximation theorem}
Given a quantum circuit implementing exactly any $n$-qubit diagonal unitary with size $s(n)$, depth $d(n)$ and width $w(n)$, one can $\epsilon$-approximate $\hat{U}_{f,n}$ with a quantum circuit of size $s(m)$, depth $d(m)$ and width $w(m)$ with $m=\lceil \log_2(||f'||_{\infty,[0,1]}/\epsilon) \rceil$.
\end{theorem}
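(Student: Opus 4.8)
The plan is to show that the $m$-qubit unitary $\hat{U}_{f,m}$, when viewed as an operator on $n$ qubits, is spectrally close to the target $\hat{U}_{f,n}$, and then simply invoke the exact method on $m$ qubits to obtain the claimed complexities $s(m)$, $d(m)$, $w(m)$. The key observation is that $\hat{U}_{f,m}$ uses the function values $f(y/M)$ for $y \in \{0,\dots,M-1\}$, and each such value coarsely approximates the block of $n$-qubit eigenvalues $e^{if(x/N)}$ whose high-order bits match $y$. Concretely, I would first make precise how $\hat{U}_{f,m}$ acts on $n$ qubits: since $m < n$, applying $\hat{U}_{f,m}$ to the top $m$ qubits (or equivalently identifying $x = y \cdot 2^{n-m} + r$ with $y$ the leading $m$ bits and $r \in \{0,\dots,2^{n-m}-1\}$ the trailing bits) assigns the phase $e^{if(y/M)}$ to every basis state $\ket{x}$ sharing leading bits $y$. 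The target assigns $e^{if(x/N)}$. Both operators are diagonal, so the spectral-norm error is the maximum over $x$ of $|e^{if(x/N)} - e^{if(y/M)}|$.

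The central estimate is then a one-dimensional calculation. For fixed $x = y\cdot 2^{n-m} + r$, I have $x/N = y/M + r/N$ with $0 \le r/N < 2^{-m}$, so $|x/N - y/M| < 1/M = 2^{-m}$. Using $|e^{i\alpha} - e^{i\beta}| \le |\alpha - \beta|$ together with the mean value theorem, I get
\begin{equation}
\left| e^{if(x/N)} - e^{if(y/M)} \right| \le \left| f(x/N) - f(y/M) \right| \le \|f'\|_{\infty,[0,1]} \cdot \left| \frac{x}{N} - \frac{y}{M} \right| \le \frac{\|f'\|_{\infty,[0,1]}}{2^{m}}.
\end{equation}
Since this bounds the spectral norm $\|\hat{U}_{f,n} - \hat{U}_{f,m}\|$ (the maximum modulus of a diagonal difference), choosing $m = \lceil \log_2(\|f'\|_{\infty,[0,1]}/\epsilon) \rceil$ forces $2^{-m} \le \epsilon/\|f'\|_{\infty,[0,1]}$, hence the error is at most $\epsilon$. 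Finally, because $\hat{U}_{f,m}$ is a genuine $m$-qubit diagonal unitary, the hypothesized exact method implements it with size $s(m)$, depth $d(m)$, and width $w(m)$, which on the full $n$-qubit register only touches $m$ qubits and leaves the rest idle — giving exactly the claimed resource counts.

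The main obstacle, and the only place requiring care, is the bookkeeping in the first step: correctly identifying the embedding of the $m$-qubit operator into the $n$-qubit space so that the coarse eigenvalue $e^{if(y/M)}$ is assigned consistently to the right block of basis states, and verifying that the bit-truncation $x \mapsto y = \lfloor x/2^{n-m}\rfloor$ indeed yields $|x/N - y/M| < 2^{-m}$ rather than some larger quantity. This hinges on the dyadic structure: truncating the $n$-bit expansion of $x/N$ to its first $m$ bits produces exactly $y/M$, so the discarded tail $r/N$ is nonnegative and strictly below $2^{-m}$. Everything downstream is a routine Lipschitz estimate, so once the truncation-versus-dyadic-expansion alignment is pinned down, the remaining argument is immediate. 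I would also note in passing that differentiability of $f$ on all of $[0,1]$ (so that $\|f'\|_{\infty,[0,1]}$ is finite) is exactly the hypothesis that makes the bound meaningful, and that the result requires no assumption on the exact method beyond its stated complexity profile, which is what makes the theorem a generic \emph{black-box} reduction from exact to approximate synthesis.
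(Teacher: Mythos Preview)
Your proof is correct and reaches the same bound as the paper, but your route is more elementary. The paper detours through the Walsh--Hadamard representation: it introduces the $M$-Walsh series $S_{f,M}$, invokes a cited lemma to identify $S_{f,M}$ as the piecewise-constant interpolant of $f$ on dyadic intervals, shows $e^{i\hat S_{f,M}}=\hat U_{f,m}\otimes \hat I_2^{\otimes(n-m)}$, and then quotes the known estimate $\|f-S_{f,M}\|_\infty\le\|f'\|_\infty/2^m$. You bypass all of this machinery by working directly with the bit truncation $x\mapsto y=\lfloor x/2^{n-m}\rfloor$, observing $|x/N-y/M|<2^{-m}$, and applying the mean value theorem. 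The two arguments are logically equivalent (the Walsh series \emph{is} the dyadic step function you build implicitly), but yours is self-contained and makes it more transparent that the theorem is a black-box reduction independent of any Walsh structure; the paper's phrasing is natural in its context because Walsh methods are the paper's main technical tool, so casting the approximation in that language lets it plug directly into the rest of the development.
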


\begin{proof}
The proof is based on the Walsh-Hadamard representation of the target unitary, independently of its method of implementation. First, we prove that implementing exactly the diagonal unitary $\hat{U}_{f,m}$ on $m=\lceil \log_2(||f'||_{\infty,[0,1]}/\epsilon) \rceil$ qubits gives an $\epsilon$-approximation of $\hat{U}_{f,n}$ in spectral norm.   Define $S_{f,M}=\sum_{j=0}^{M-1}a_j^fw_j$ as the $M$-Walsh Series of $f$,  with $a_j^f=\frac{1}{M}\sum_{x=0}^{M-1}f(x/M)w_{j}(x/M)$ the $j$-th Walsh coefficients associated to the function $f$ and $M=2^m$. The $M$-Walsh Series $S_{f,M}$ of $f$ is a piecewise function taking at most $M$ different values and defined by $\forall x \in [k/M, (k+1)/M[$, $S_{f,M}(x)=S_{f,M}(k/2^m)=f(k/2^m)$ with $k\in\{0,...,M-1\}$ (see Lemma 1.1 of \cite{PhysRevA.109.042401}). Thus, the $n$-qubit operator $\hat{S}_{f,M}=\sum_{x=0}^{N-1}S_{f,M}(x/N)\ket{x}\bra{x}$ has at most $M$ distinct eigenvalues and can be rewritten as: 

\begin{equation}
\begin{split}
    \hat{S}_{f,M}&=\sum_{x_0,...,x_{n-1}=0}^1S_{f,M}\left(\sum_{i=0}^{n-1} x_i/ 2^{i+1}\right)\ket{x_0,...,x_{n-1}}\bra{x_0,...,x_{n-1}}\\
    &=\sum_{x_0,...,x_{m-1}=0}^1S_{f,M}\left(\sum_{i=0}^{m-1} x_i/ 2^{i+1}\right)\ket{x_0,...,x_{m-1}}\bra{x_0,...,x_{m-1}} \\& \otimes \sum_{x_m,...,x_{n-1}=0}^1\ket{x_m,...,x_{n-1}}\bra{x_m,...,x_{n-1}} 
    \\ &= \hat{f}_m \otimes  \hat{I}_2^{\otimes n-m},
\end{split}
\end{equation}
where  $\hat{f}_m=\sum_{x=0}^{M-1}f(x/M)\ket{x}\bra{x}$ is an $m$-qubit operator.

Thus, $\hat{U}_{f,m} \otimes \hat{I}_2^{\otimes n-m}=e^{i\hat{f}_m\otimes \hat{I}_2^{\otimes n-m}}=e^{i\hat{S}_{f,M}}=\hat{U}_{S_{f,M},n}$, which is the diagonal unitary with eigenvalues associated to the $M$-Walsh series of $f$. Furthermore, $M$-Walsh Series approximate the function itself up to an error depending on the maximum value of the derivative of $f$ and decreasing linearly with $M$ \cite{walsh1923closed,welch2014efficient}: $||f-S_{f,M}||_{\infty,[0,1]}\leq \frac{||f'||_{\infty,[0,1]}}{2^m}$. In terms of spectral norm $||.||_2$, this implies that $\hat{U}_{S_{M,f},n}$ is an approximation of the target diagonal unitary $\hat{U}_{f,n}$:
\begin{equation}
    ||\hat{U}_{f,n}-\hat{U}_{S_{f,M},n}||_{2}\leq \frac{||f'||_{\infty,[0,1]}}{2^m}.
\end{equation}
Finally, one can choose  $m=\lceil \log_2(||f'||_{\infty,[0,1]}/\epsilon) \rceil$ of qubits to implement the target unitary $\hat{U}_{f,n}$ up to a given error $\epsilon>0$ in spectral norm with a quantum circuit of size $s(m)$, depth $d(m)$ and width $w(m)$.
\end{proof}

In the particular case $m\geq n$, one implements exactly the diagonal unitary on $n$ qubits and the error vanishes. On the contrary, for a given $\epsilon>0$, this method gives a quantum circuit which approximates diagonal unitaries with complexities asymptotically independent of the number of qubits $n$.  

A second way to approximate efficiently a diagonal unitary defined through a given function $f$ is to consider a sparse Walsh series.
 For instance, by retaining only the largest coefficients in the $M$-Walsh Series, one can reach a given infidelity with quantum circuits 
 of smaller size and depth \cite{PhysRevA.106.022414}. More generally, the problem of finding the “best” Walsh series with the smallest number of terms approximating up to a given $\epsilon>0$ a function $f$ is called the minimax problem \cite{yuen1975function}. While the minimax problem has no known canonical solutions, sparse Walsh series appear numerically as the most efficient method to approximate smooth, at least differentiable, functions. 

Numerical results illustrating the trade-off between depth and accuracy are given in the application section, for the quantum state preparation of Gaussian states, see Fig. \ref{fig: tradeoff_time_accuracy}. The trade-off is presented for the approximate method using an $M$-Walsh series and for a sparse Walsh-Series.

\section{Implementation of non-unitary diagonal operators}\label{sec:non_unitary}

In this section, we first outline the process for implementing a non-unitary diagonal operator with real eigenvalues using two diagonal unitaries and one ancilla qubit. Then, we adapt the previously introduced methods to develop techniques for implementing non-unitary diagonal operators with adjustable-depth quantum circuits, applicable to both real and complex eigenvalues.

Consider the non-unitary diagonal operator $\hat{D}=\text{diag}(d_0,...,d_{N-1})$, $N=2^n$ with real eigenvalues $\{d_x\}_{x=0}^{N-1}$. The block encoding approach consists in implementing $\hat{D}$ in a larger unitary operator $\hat{U}$ using additional ancilla qubits as: 
\begin{equation}
    \hat{U}_D=\begin{pmatrix} \hat{D} & * \\ * & *
    \end{pmatrix},
\end{equation}
where the other block “$*$” are chosen to ensure that $\hat{U}_b$ is unitary.

More formally, $\hat{U}_D$ is said to be a $(\alpha,m,\epsilon)$-block encoding of $\hat{D}$ with $\alpha >0$ and $\epsilon>0$ if 
\begin{equation}
    \|\frac{1}{\alpha}\hat{D}-(\bra{0}^{\otimes m}\otimes \hat{I}_N ) \hat{U}_D (\ket{0}^{\otimes m}\otimes \hat{I}_N) \|_2 \leq \epsilon,
\end{equation}
in spectral norm. Note that $\alpha$ is often the largest eigenvalue of $\hat{D}$ and $m$ is the number of additional qubits used to block-encode $\hat{D}$ into a unitary operator. In the case where the block-encoding is exact, we note that $\hat{U}_D$ is a $(\alpha,m)$-block encoding of $\hat{D}$.

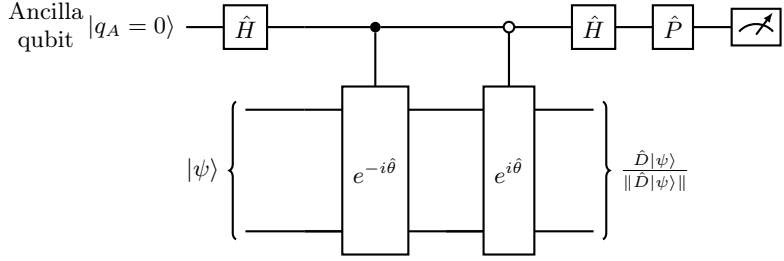
\begin{figure}[ht]
    \centering
\begin{quantikz}
  \lstick[wires=1]{$\begin{matrix} \text{Ancilla} \\ \text{qubit} \end{matrix} \ket{q_A=0}$}& \gate{\hat{H}} & \qw 
&\ctrl{1} &\qw& \octrl{1}& \gate{\hat{H}}& \gate{\hat{P}} & \meter{} 
\\
&\lstick[wires=3]{$\ket{\psi}$} 
   &  \qw
&\gate[3,nwires={2}]{e^{-i\hat{\theta}}} & \qw &\gate[3,nwires={2}]{e^{i\hat{\theta}}}& \qw \rstick[wires=3]{$\frac{\hat{D}\ket{\psi}}{\|\hat{D}\ket{\psi}\|}$} &&
\\ & & && &&&&&&
\\ &&\qw & \qw &\qw&\qw&\qw&& 
\end{quantikz} 
\caption{Quantum circuit to block-encode a non-unitary diagonal operator $\hat{D}=\text{diag}(d_0,...,d_{2^n-1})$ using $\hat{\theta}=\arcsin(\hat{D}/d_{\max})$, with $d_{\max}=\max_{i=0,...,2^n-1}|d_i|$.}

\label{quantum circuit scheme for non-unitary diagonal}
\end{figure}

Figure \ref{quantum circuit scheme for non-unitary diagonal} depicts the sequence of unitary operations needed to block-encode a non-unitary diagonal operator using two controlled diagonal unitaries and one ancilla qubit without using parallelization protocols. From a qubit state $\ket{\psi}$ and an ancilla qubit $\ket{0}$, one obtains:

\begin{equation}
\begin{split}
     \ket{\psi}\ket{0}\xrightarrow{\hat{H}} \ket{\psi}\frac{\ket{0}+\ket{1}}{\sqrt{2}} & \xrightarrow{e^{i\hat{\theta}\otimes\hat{Z}}} \frac{e^{i\hat{\theta}}\ket{\psi}\ket{0}+e^{-i\hat{\theta}}\ket{\psi}\ket{1}}{\sqrt{2}} \\& \xrightarrow{\hat{P}\hat{H}} -i\cos(\hat{\theta})\ket{\psi}\ket{0}+\sin(\hat{\theta})\ket{\psi}\ket{1}.
\end{split}
\end{equation} 

The gate $\hat{P}=\begin{pmatrix} 1 & 0 \\ 0 & -i \end{pmatrix}$ is optional and only serves to suppress a factor $i$ in the final state. The operator $\hat{\theta}$ encodes the eigenvalues of $\hat{D}$ as $\hat{\theta}=\arcsin(\hat{D}/d_{\max})$, where $d_{\max}=\max_{x=0,...,N-1}|d_x|$ is the largest eigenvalue of $\hat{D}$. Without measuring the ancilla, this quantum circuit is an exact $(d_{\max},1)$ block-encoding of $\hat{D}$. This block-encoding is sufficient for some applications but one may need to output only the state proportionate to $\hat{D}\ket{\psi}$ by measuring the ancilla qubit in state $\ket{1}$:
 \begin{equation}\label{eq:final_state_qsp}
\ket{\phi}=-i\sqrt{\hat{I}-\left(\frac{\hat{D}}{d_{\max}}\right)^2}\ket{\psi}\ket{0}+\frac{\hat{D}}{d_{\max}}\ket{\psi}\ket{1}  \xrightarrow{\text{if } \ket{q_A}=\ket{1}} \frac{\hat{D}\ket{\psi} }{\| \hat{D}\ket{\psi}\|},
\end{equation}
with a probability of success given by
\begin{equation}
\mathbb{P}(1)=\| \frac{\hat{D}}{d_{\max}}\ket{\psi}\|^2=|\bra{\psi}(\frac{\hat{D}}{d_{\max}})^2\ket{\psi}|=\sum_{x=0}^{2^n-1}d_x^2|\psi_x|^2/d_{\max}^2.
\label{eq: proba success}
\end{equation}
The specific structures of $\hat{D}$ and $\ket{\psi}$ directly affect the probability of success. When the diagonal operator $\hat{D}$ and the qubit state $\ket{\psi}$ depend on continuous real-valued functions $f$ and $g$ defined on $[0,1]$, {\sl i.e.}, $\hat{D}=\sum_{x=0}^{N-1} f(x/N)\ket{x}\bra{x}$ and $\ket{\psi}=\frac{1}{||g||_{2,N}}\sum_{x=0}^{N-1} g(x/N)\ket{x}$, the probability of success tends to an $n$-independent limit, enabling an efficient implementation:
\begin{equation}
\mathbb{P}(1)=\frac{\|fg\|_{2,[0,1]}^2}{\|g\|_{2,[0,1]}^2\|f\|_{\infty}^2}=\Theta(1),
\label{eq: proba of success continue}
\end{equation}
with $\|fg\|_{2,[0,1]}^2=\int_0^1f(x)^2g(x)^2\textrm{d}x$, $\|g\|_{2,[0,1]}^2=\int_0^1g(x)^2\textrm{d}x$ and $\|f\|_{\infty}=\sup_{x\in [0,1]}|f(x)|$. The notation $\mathbb{P}(1)=\Theta(1)$ means that the probability of success is bounded by two constants $0<C_1\leq \mathbb{P}(1)\leq C_2 \leq1$ in the large $n$ limit.
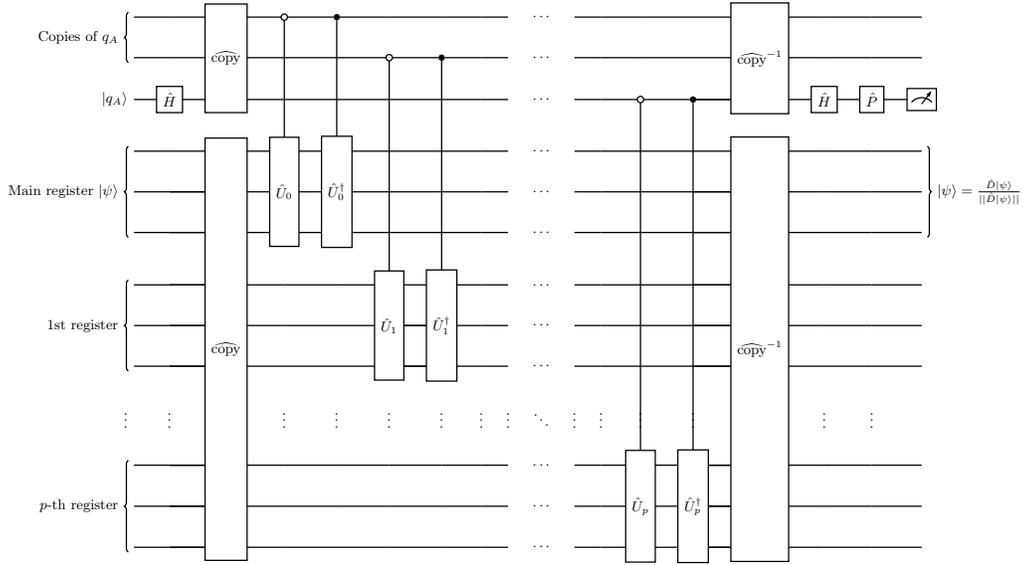
\begin{figure}
\scalebox{0.6}{
\begin{quantikz}
    \lstick[2]{Copies of $q_A$} & \qw & \gate[3]{ \widehat{\text{copy}}} & \octrl{4} & \ctrl{4} & \qw & \qw & \qw & \qw & \cdots & & \qw & \qw & \qw & \gate[3]{ \widehat{\text{copy}}^{-1}} & \qw & \qw & \qw \\
     & \qw & & \qw & \qw & \octrl{6} & \ctrl{6} & \qw & \qw & \cdots & & \qw & \qw & \qw & & \qw & \qw & \qw \\
    \lstick{$\ket{q_A}$} & \gate{\hat{H}} & & \qw & \qw & \qw & \qw & \qw & \qw & \cdots & & \qw & \octrl{8} & \ctrl{8} & \qw & \gate{\hat{H}} & \gate{\hat{P}} & \meter{} &  \\
    \lstick[3]{Main register $\ket{\psi}$} & \qw & \gate[10,nwires={7}]{ \widehat{\text{copy}}} & \gate[3]{\hat{U}_0} & \gate[3]{\hat{U}_0^\dagger} & \qw & \qw & \qw & \qw & \cdots & & \qw & \qw & \qw & \gate[10,nwires={7}]{ \widehat{\text{copy}}^{-1}} & \qw & \qw & \rstick[3]{$\ket{\psi}=\frac{\hat{D}\ket{\psi}}{||\hat{D}\ket{\psi}||}$}\qw \\
    & \qw & \qw & & & \qw & \qw & \qw & \qw & \cdots & & \qw & \qw & \qw & \qw & \qw & \qw & \qw \\
    & \qw & \qw & & & \qw & \qw & \qw & \qw & \cdots & & \qw & \qw & \qw & \qw & \qw & \qw & \qw \\
    \lstick[3]{1st register} & \qw & \qw & \qw & \qw & \gate[3]{\hat{U}_1} & \gate[3]{\hat{U}_1^\dagger} & \qw & \qw & \cdots & & \qw & \qw & \qw & \qw & \qw & \qw & \qw \\
    & \qw & \qw & \qw & \qw & & \qw & \qw & \qw & \cdots & & \qw & \qw & \qw & \qw & \qw & \qw & \qw \\
    & \qw & \qw & \qw & \qw & & \qw & \qw & \qw & \cdots & & \qw & \qw & \qw & \qw & \qw & \qw & \qw \\
    \lstick{$\vdots$} & \vdots & \vdots & \vdots & \vdots & \vdots & \vdots & \vdots & \vdots & \ddots & \vdots & \vdots & \vdots & \vdots & \vdots & \vdots & \vdots \\
    \lstick[3]{$p$-th register} & \qw & \qw & \qw & \qw & \qw & \qw & \qw & \qw & \cdots & & \qw & \gate[3]{\hat{U}_{p}} & \gate[3]{\hat{U}_{p}^\dagger} & \qw & \qw & \qw & \qw \\
    & \qw & \qw & \qw & \qw & \qw & \qw & \qw & \qw & \cdots & & \qw & & & \qw & \qw & \qw & \qw \\
    & \qw & \qw & \qw & \qw & \qw & \qw & \qw & \qw & \cdots & & \qw & & & \qw & \qw & \qw & \qw 
\end{quantikz} 
}
\caption{Adjustable-depth quantum circuit for the implementation of non-unitary diagonal operator $\hat{D}$. The main register $\ket{\psi}$ is composed of $n$ qubits. The ancilla qubit is $\ket{q_A}$. The upper register is composed of ancilla qubits that store the bit value of $\ket{q_A}$, it is used to perform several controls simultaneously. The $p$ ancilla registers below the main register are used to parallelize the implementation of $\hat{U}$. If the output of the measurement of $\ket{q_A}$ is 0, the target state has been successfuly implemented.}
\label{Adjustable depth framework for non unitary diagonal operator}
\end{figure}

When the diagonal operator $\hat{D}$ is $s$-sparse, i.e., $\hat{D}=\sum_{x \in S} d_x \ket{x}\bra{x}$, with $S \subseteq \{0,1,...,N-1\}$ and $|S|=s$, and the qubit state depends on a continuous function, {\sl i.e.}, $\ket{\psi}=\frac{1}{||g||_{2,N}}\sum_{x=0}^{N-1} g(x/N)\ket{x}$, the probability of success diminishes exponentially with the number of qubits:
\begin{equation}
    \mathbb{P}(1)=\frac{1}{||g||_{2,N}}\sum_{x \in S} \frac{d_x^2}{d_{\max}^2}g(x/N)^2=\Theta(1/N),
\label{eq: proba success sparse}
\end{equation}
with $N=2^n$.
In the other cases, for instance for a sparse qubit state and a non-sparse diagonal operator, no general formula garantees a significant probability of success in the large $n$ limit. The probability of success can also be improved using amplitude amplification as discussed in Section \ref{section amplitude amplification}.

Now, consider the problem of implementing a non-unitary diagonal operator $\hat{D}$ using an adjustable-depth quantum circuit associated to a Walsh or a sequential decomposition of $\hat{U}=e^{ i \arcsin(\hat{D}/d_{\max})}=\prod_{j=0}^{p-1}\hat{U}_j$.  In Fig. \ref{quantum circuit scheme for non-unitary diagonal}, the controls of $\hat{U}$ and $\hat{U}^{\dagger}$ do not preserve the adjustable-depth feature of the methods presented in the previous sections since one ancilla qubit cannot control several blocks at the same time. In order to overcome this issue, one can use $p$ copies of the ancilla qubit $\ket{q_A}$ to control $p$ blocks in parallel at the same time. That is why, one needs to prepare a copy register of $\ket{q_A}$ before the control of the diagonal operators. The following theorem states and proves that Fig. \ref{Adjustable depth framework for non unitary diagonal operator} represents an adjustable-depth quantum circuit for a block-encoding of $\hat{D}$ up to the measurement of the ancilla qubit $\ket{q_A}$.

\begin{theorem}[Block-encoding of $\hat{D}$]
\label{thm: block-encoding}
Let $\hat{D}$ be a non-unitary diagonal operator with real eigenvalues and $\hat{U}=e^{ i \arcsin(\hat{D}/(\alpha d_{\max}))}$ with $\alpha \geq 1$. Suppose there is a decomposition $\tilde{U}=\prod_{j=0}^{p-1} \hat{U}_j$ approximating $\hat{U}$ up to an error $\epsilon>0$ in spectral norm, where each $\hat{U}_j$ is a diagonal unitary acting on $k_j\leq n$ qubits. Then, the quantum circuit defined Fig. \ref{Adjustable depth framework for non unitary diagonal operator}\footnote{without the measurement.} is an $(\alpha d_{\max}, k+p, \epsilon)$-block-encoding of $\hat{D}$, where $k=\sum_{j=1}^{p-1}k_j$.\end{theorem}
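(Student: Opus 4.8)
The plan is to prove that the circuit of Fig.~\ref{Adjustable depth framework for non unitary diagonal operator} (without the final measurement) realizes, on the main register together with the single qubit $q_A$, exactly the isometry produced by the elementary block-encoding circuit of Fig.~\ref{quantum circuit scheme for non-unitary diagonal}, while returning every other ancilla (the $p-1$ copies of $q_A$ and the $k$ support copies) to $\ket{0}$; the error $\epsilon$ then enters only through the replacement of $\hat U=e^{i\hat\theta}$ by its decomposition $\tilde U=\prod_{j=0}^{p-1}\hat U_j$, with $\hat\theta=\arcsin(\hat D/(\alpha d_{\max}))$.

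First I would propagate the state through the circuit assuming the exact product $\hat U=\prod_j\hat U_j$. After the Hadamard on $q_A$ and the copy unitary, the $p$ control qubits occupy the GHZ-type state $(\ket{0}^{\otimes p}+\ket{1}^{\otimes p})/\sqrt2$, and after the copy unitary on $\ket{\psi}$ each register $j$ holds a faithful copy $\ket{\tilde x}_j$ of the support qubits of $\hat U_j$, exactly as in Theorem~\ref{thm : full parallelization}. The parallel layer applies $\hat U_j$ on register $j$ when its control copy reads $0$ (the open controls) and $\hat U_j^\dagger$ when it reads $1$ (the closed controls); since each $\hat U_j$ is diagonal and acts on a copy of $\ket{x}$, this accumulates $\prod_j e^{i\theta_j(x)}=e^{i\theta(x)}$ on the $\ket{0}^{\otimes p}$ branch and $\prod_j e^{-i\theta_j(x)}=e^{-i\theta(x)}$ on the $\ket{1}^{\otimes p}$ branch. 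The two inverse copy unitaries then disentangle and reset all $k+p-1$ extra ancillas, leaving $\big(e^{i\hat\theta}\ket{\psi}\ket{0}+e^{-i\hat\theta}\ket{\psi}\ket{1}\big)/\sqrt2$ on the main register and $q_A$, which is precisely the state created by the controlled $e^{i\hat\theta\otimes\hat Z}$ of Fig.~\ref{quantum circuit scheme for non-unitary diagonal}.

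With this reduction the closing $\hat H$ and $\hat P$ on $q_A$ reproduce the Fig.~\ref{quantum circuit scheme for non-unitary diagonal} output, whose relevant ($q_A=1$) coefficient is $\sin\hat\theta=\hat D/(\alpha d_{\max})$; counting the ancillas ($1$ for $q_A$, $p-1$ for its copies, and $k=\sum_{j=1}^{p-1}k_j$ for the support copies) gives the advertised width $k+p$ and normalization $\alpha d_{\max}$. To obtain the error, I would carry $\tilde U$ through the same computation: the block-encoded operator becomes $\tfrac1{2i}(\tilde U-\tilde U^\dagger)$ instead of $\tfrac1{2i}(\hat U-\hat U^\dagger)=\hat D/(\alpha d_{\max})$, and a triangle inequality combined with $\|\tilde U^\dagger-\hat U^\dagger\|_2=\|\tilde U-\hat U\|_2$ bounds the spectral-norm difference by $\|\tilde U-\hat U\|_2\le\epsilon$, i.e. exactly the $(\alpha d_{\max},k+p,\epsilon)$-block-encoding accuracy.

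The step I expect to be the main obstacle is the bookkeeping of the second paragraph: one must check that the open/closed-controlled parallel layer produces the correct opposite phases on both branches of the $q_A$ superposition at once, and that both inverse copy unitaries fully uncompute the GHZ register and every support-copy register, so that no residual ancilla entanglement corrupts the block structure. Verifying this clean uncomputation on the $\ket{1}^{\otimes p}$ branch, where each $\hat U_j^\dagger$ rather than $\hat U_j$ is applied, is the point most prone to sign and conjugation errors; the ancilla count and the $\epsilon$-propagation are then routine.
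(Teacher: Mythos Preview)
Your proposal is correct and follows essentially the same approach as the paper's proof: trace a computational-basis state through the circuit (Hadamard, copy of $q_A$ and of the main register, parallel (anti-)controlled $\hat U_j$'s, inverse copies, $\hat P\hat H$), verify that all $k+p-1$ extra ancillas return to $\ket{0}$ leaving $\sin(\tilde\theta_x)$ on the $q_A=1$ branch, and then bound $|\sin\tilde\theta_x-\sin\theta_x|$ by $\|\tilde U-\hat U\|_2\le\epsilon$. Your operator-level formulation of the error as $\|\tfrac{1}{2i}(\tilde U-\tilde U^\dagger)-\tfrac{1}{2i}(\hat U-\hat U^\dagger)\|_2$ is exactly the paper's componentwise bound rewritten; the bookkeeping step you flag as the main obstacle is precisely what the paper carries out explicitly.
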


The proof is presented in Appendix \ref{proof thm block encoding}.

In terms of complexity, copying the ancilla qubit $\ket{q_A}$ always requires less controlled-not gates than copying the main register.  Additionally, controlling an operator $\hat{U}_j$ has the same asymptotic cost than implementing $\hat{U}_j$: remark that for any operator of the form $\hat{V}^{-1}\hat{U}\hat{V}$, the controlled operation is given by $C(\hat{V}^{-1}\hat{U}\hat{V})=\hat{V}^{-1}C(\hat{U})\hat{V}$. Therefore, if $\hat{U}_j$ is a sequential operator given by Eq. \eqref{sequential decomposition}, one only needs to transform the $(n-1)$-controlled phase gate into a $n$-controlled phase gate and if $\hat{U_j}$ is a Walsh-Hadamard operator given by Eq. \eqref{eq: exp of walsh operator}, one only needs to control the $\hat{R}_{Z}$ gates. As a consequence, the asymptotic cost in terms of size and depth of implementing a diagonal operator $\hat{D}$ is directly given by the cost of implementing the unitary $e^{ i \arcsin(\hat{D}/d_{\max})}$ up to the same error.

When the non-unitary diagonal operator $\hat{D}$ has complex eigenvalues, one can separate the modulus part and the phase part by writing $\hat{D}=e^{i\arg(\hat{D})}\widehat{|D|}$, where $\arg{\hat{D}}$ is the diagonal operator encoding the argument of each eigenvalue and $\widehat{|D|}$ encodes the modulus. The operator  $e^{i\arg(\hat{D})}$ is unitary, while $\widehat{|D|}$ has real eigenvalues and can be block-encoded using Theorem \ref{thm: block-encoding} through the diagonal unitaries $e^{\pm i \arcsin(\widehat{|D|}/(\alpha d_{\max}))}$. It is possible to implement the phase operator with one method and the modulus part with another if relevant. It is also possible to implement directly the control $e^{i (\arg(\hat{D}) \pm \arcsin(\widehat{|D|}/(\alpha d_{\max})))}$ to avoid implementing the phase operator and the modulus separately.

\section{Efficient-adjustable-depth methods}\label{sec:adj_depth_methods}

In this section, we combine the previous theorems to obtain the overall complexities for the efficient-adjustable-depth quantum circuits for unitary and non-unitary diagonal operators without exponential scalings. 

\subsection{Unitary and non-unitary diagonal operators defined in terms of smooth functions}

Consider a diagonal unitary $\hat{U}=\sum_{x=0}^{N-1} e^{if(x/N)}\ket{x}\bra{x}$ defined in terms of a smooth, at least differentiable, function $f$. For each decomposition of $\hat{U}$, sequential or Walsh-Hadamard, the combination of the adjustable-depth Theorem \ref{thm : adjustable-depth with ancilla} or the fully parallelized Theorem \ref{thm : full parallelization} gives three possibilities: without ancilla, adjustable-depth with a given number $p>0$ of ancilla and fully parallelized. Then, the efficiency is achieved by approximating the diagonal unitary up to an error $\epsilon>0$ using Theorem \ref{thm: Approximation theorem} and the associated method.

For a non-unitary diagonal operator $\hat{D}=\sum_{x=0}^{N-1} f(x/N)\ket{x}\bra{x}$ defined in terms of a smooth, at least differentiable, function $f$, the previous section guarantees that the asymptotic cost of implementing $\hat{D}$ is given by the cost of implementing $\hat{U}=e^{i \arcsin(\hat{D}/(\alpha d_{\max}))}$. The parameter $\alpha>1$ has been introduced to ensure that the function $\arcsin(f/(\alpha \|f\|_{\infty}))$ is differentiable\footnote{ $\alpha$ can be taken constant equal to $1.1$ with the only backward of reducing the probability of success by a factor $\alpha^2$.} and that the approximation Theorem \ref{thm: Approximation theorem} can be applied.  

The complexities for each of these methods are presented in Table \ref{Table : approximate case} and proven in Appendix \ref{sec:approximation_methods}.

\begin{table*}[ht]
    \centering
\begin{tabular}{ |p{7.5cm}||p{2.7cm}|p{2.5cm}|p{2.8cm}|}
 \hline
 \multicolumn{4}{|c|}{Efficient quantum circuit for diagonal operators depending on differentiable functions} \\
 \hline
 Method & Depth & Size & Ancilla \\
 \hline

 Sequential (\ref{cor : Sequential no ancilla : approximé}, \ref{cor:  approximate sequential decomposition with one ancilla}, \ref{cor: approximate block-encoding using a sequential decomposition and one ancilla}, \ref{cor: approximate  block-encoding sequential decomposition  with two ancilla})  & $\tilde{\mathcal O}(1/\epsilon)$ & $\tilde{\mathcal O}(1/\epsilon)$ & $0$ or $1$ or $2$\\
 Sequential adjustable-depth (\ref{cor: partially parallelized approximate sequential decomposition}, \ref{cor: partially parallelized approximate sequential decomposition bis}, \ref{corollary: partially parallelized block-encoding with approximate sequential decomposition}, \ref{cor: partially parallelized approximate block-encoding sequential decomposition using polylog MCphase}) &   $\tilde{\mathcal O}(\frac{1}{\epsilon m}+\log(m))$  & $\tilde{\mathcal O}(1/\epsilon+m)$& $m=\Omega(\log(1/\epsilon))$\\
 Sequential fully parallelized (\ref{lemma: fully parallelized approximate sequential decomposition}, \ref{cor: Fully parallelized block-encoding with approximate sequential decomposition}) & $\mathcal{O}(\log(1/\epsilon))$ & $\tilde{\mathcal O}(1/\epsilon)$ & $\tilde{O}(1/\epsilon)$\\
\hline
 
Walsh-Hadamard (\cite{welch2014efficient},\ref{cor: approximate walsh-hadamard decomposition},\ref{cor: approximate walsh-hadamard block-encoding with one ancilla}) & $\mathcal{O}(1/\epsilon)$ & $\mathcal{O}(1/\epsilon)$ & $0$ or $1$ \\
 Walsh-Hadamard adjustable-depth (\ref{cor: partially parallelized approximate walsh-hadamard decomposition with one ancilla}, \ref{Partially parallelized block-encoding with approximate Walsh-Hadamard decomposition})   &   $\tilde{\mathcal O}(\frac{1}{\epsilon m}+\log(m))$  & $\tilde{\mathcal O}(1/\epsilon+m)$& $m=\Omega(\log(1/\epsilon))$ \\
 Walsh-Hadamard fully parallelized (\ref{lemma: fully parallelized approximate walsh-hadamard decomposition},\ref{lemma: fully parallelized approximate Walsh-Hadamard decomposition})  & $\mathcal{O}(\log(1/\epsilon))$ & $\tilde{\mathcal O}(1/\epsilon)$ & $\tilde{\mathcal O}(1/\epsilon)$\\
\hline
 
   Walsh-recursive  (\ref{cor: lemma11+approximate}, \ref{walsh-recursive nonunitary})  & $\mathcal{O}(\frac{1}{\epsilon\log(1/\epsilon)})$ & $\mathcal{O}(1/\epsilon)$&  $0$ or $\mathcal{O}(\log(1/\epsilon))$ \\
    Walsh-optimized adjustable-depth (\ref{cor:lemma20+approximate}, \ref{Walsh-optimized adjustable-depth non-unitary})  & $\mathcal{O}(\frac{1}{\epsilon m}+\log m)$ & $\tilde{\mathcal O}(1/\epsilon+m)$&  $ m=\Omega(\log(1/\epsilon)) $ \\
    Walsh-optimized fully parallelized (\ref{cor: lemma20 max qubit+approximate}, \ref{Walsh-optimized fully parallelized non-unitary}) & $\mathcal{O}(\log(1/\epsilon))$ & $\mathcal{O}(1/\epsilon)$&  $\mathcal{O}(\frac{1}{\epsilon\log(1/\epsilon)})$ \\
  \hline
 Fourier-GQSP (\ref{cor:GQSP approximate}) & $\mathcal{O}(n\log(1/\epsilon))$ & $\mathcal{O}(n\log(1/\epsilon))$ & $1$\\
 Fourier-GQSP with ancilla (\ref{cor:GQSP approximate parallel}) & $\mathcal{O}(\log(n)\log(1/\epsilon))$ & $\mathcal{O}(n\log(1/\epsilon))$ & $n$\\
\hline
\end{tabular}  
\caption{Table of depth, size and $m$-ancilla qubits for different efficient quantum circuits to implement, up to an error $\epsilon>0$ in spectral norm, $n$-qubit unitary and non-unitary diagonal operators associated to real-valued function $f$ with bounded first derivative. All the scalings are proven in Appendix \ref{For diagonal unitaries depending on differentiable functions} for diagonal unitaries and in Appendix \ref{sec: non-unitary diagonal operators depending on differentiable functions} for non-unitary diagonal operators. The notation $\tilde{\mathcal O}(\cdot)$ stands for $\mathcal{O}(\cdot)$ up to a polylogarithmic factor.}
\label{Table : approximate case}
\end{table*}

The complexities presented in the first  three lines of Table \ref{Table : approximate case} are derived from the exact implementation of diagonal operators using the sequential decomposition and from the approximation Theorem \ref{thm: Approximation theorem} (more details are given in Appendix \ref{sec:exact methods}). The number of ancilla qubits depends on the parallelization method (no parallelization, partial parallelization or full parallelization), the unitarity or non-unitarity of the diagonal operator (with one additional ancilla required for a non-unitary operator), and the method for implementing multi-controlled gates (where more ancilla simplify the implementation).  

The next three lines present the complexity scaling for the Walsh-Hadamard decomposition of diagonal operators. These scalings are derived from the exact methods described in Appendix \ref{sec:exact methods} and from the approximation Theorem \ref{thm: Approximation theorem}.  The number of ancilla qubits depends on the parallelization method (no parallelization, partial parallelization, or full parallelization), and on whether the diagonal operator is unitary or non-unitary. 

The next three lines of Table \ref{Table : approximate case} present the Walsh-recursive and the Walsh-optimized methods. These scalings are derived from the exact methods developed in \cite{sun2023asymptotically}. When no ancilla qubits are available, the authors of \cite{sun2023asymptotically} propose a recursive scheme for diagonal unitaries with an optimal depth complexity of $\mathcal{O}(2^n/n)$. However, the size is not optimal and is bounded by $2^{n+3}+\mathcal{O}(\log(n^2/\log(n)))$ (Lemma 24 of \cite{sun2023asymptotically}). A second method uses a number $2n \leq m \leq 2^n/n$ of ancilla qubits to parallelize the Walsh operators in an order which optimizes the number of CNOT gates, achieving a depth $\mathcal{O}(2^n/m+\log(m))$ and a size $\mathcal{O}(2^n)$ (Lemma 10 in \cite{sun2023asymptotically}). These exact methods, together with the approximation Theorem \ref{thm: Approximation theorem} and the block-encoding Theorem \ref{thm: block-encoding}, lead to the scalings presented in table \ref{Table : approximate case}.

The last two lines of Table \ref{Table : approximate case} use the Fourier-GQSP method developed in the recent article \cite{motlagh2024generalized}. These methods rely on a quantum signal processing protocol to implement diagonal operators through their Fourier expansions. From single-qubit rotations and controlled-diagonal operators based on exponential of linear functions $\hat{U}_{\omega}=\sum_{x=0}^{N-1}e^{i\frac{2\pi x}{N}}\ket{x}\bra{x}$, one can construct a polynomial $P$ of $\hat{U}_\omega$ that approximates the Fourier decomposition of the target diagonal operator $\hat{D}\simeq P(\hat{U}_\omega)=\sum_{j=-p}^p c_j \hat{U}_{\omega}^j$, where $c_j$ is the associated $j$-th Fourier coefficient.  That article proves the existence of quantum circuits to implement Fourier series of diagonal operators. However, achieving a depth that scales logarithmically with the inverse of the error requires the Fourier series to converge exponentially fast. For this purpose, the diagonal operator $\hat{D}$ has to depend on a periodic analytic function $f$ with the property that the $L^1$-norms of its $p$-th derivatives behave asymptotically as  $\|\partial^{(p)}f\|_{L^1}=\mathcal{O}( p^p K^{-p})$ where $K$ is a real number strictly greater than one (see corollary \ref{cor:GQSP approximate} and corollary \ref{cor:GQSP approximate parallel} for details). Moreover, constructing the quantum circuit requires classical optimization to determine the angles of the single-qubit rotations. This step is resource-intensive and its convergence towards an accurate solution is not guaranteed. Additionally, this protocol uses one ancilla qubit, even when the diagonal operator is unitary. The Fourier-GQSP method with ancilla (last line of Table \ref{Table : approximate case}) combines the Fourier-GQSP method with the "copy" unitary method presented in \ref{copy lemma}: using $n$ additional ancilla qubits, one can reduce the depth of the Fourier-GQSP method by a factor $n$ by creating copies of the initial ancilla qubit to parallelize the control of the phase gates associated with the control $\hat{U}_\omega$ operator (more details in corollary \ref{cor:GQSP approximate parallel}).

\subsection{Sparse unitary and non-unitary diagonal operators}

Sparsity allows diagonal operators efficiently implemented. Sparse diagonal operators either possess $s$ non-trivial eigenvalues, i.e. $s$ eigenvalues different from one for diagonal unitaries and $s$ eigenvalues different from zero for non-unitary diagonal operators, or consist of a product of $s$ Walsh operators $\hat{W}_j$ defined in Eq. \eqref{walsh decomposition}. The methods for implementing sparse diagonal operators and their associated scalings are presented in Table \ref{Table : sparse case}. The first five lines consider diagonal operators with $s$ non-trivial eigenvalues. The scalings are directly given by the chosen method for the multi-control gates as explained in Appendix \ref{sec: sparse diagonal operator}. The first, second and third lines correspond directly to $s$ times one multi-control phase gate. The fourth and fifth lines are derived from the adjustable-depth Theorem \ref{thm : adjustable-depth with ancilla} and the full parallelization Theorem \ref{thm : full parallelization}. To implement a sparse non-unitary diagonal operator, one requires an additional ancilla qubit.

The last three lines describes the complexity scalings for diagonal operators that consist of a product of $s$ Walsh-Hadamard operators. The cost of implementation depends directly on a constant $k$, which represents the maximum number of ones in the binary decomposition of the order of the Walsh operators. This constant is at most $n$, indicating that the scalings are double-sparse: sparse in the number $s$ of operators and sparse in the number of ones $k$. Each of the three lines corresponds to a parallelization method: no parallelization, partial parallelization and full parallelization. The number of ancilla qubits also depends on the parallelization method and whether the diagonal operator is unitary or non-unitary.


\begin{table*}[ht]
    \centering
\begin{tabular}{ |p{7cm}||p{3.9cm}|p{3.5cm}|p{1.7cm}|}
 \hline
 \multicolumn{4}{|c|}{Efficient quantum circuit for sparse diagonal operators} \\
 \hline
 Method & Depth & Size & Ancilla \\
 \hline

 Sequential no ancilla exact (\ref{lemma : Sequential no ancilla : Sparse}, \ref{lemma : Sequential block-encoding no ancilla : Sparse}) & $\mathcal O (ns)$ & $\mathcal O (ns)$ & 0 or 1 \\
 Sequential no ancilla approximated  (\ref{lemma : Sequential no ancilla : Sparse approxime}, \ref{lemma : block-encoding Sequential no ancilla : Sparse approxime}) & $\mathcal O(s\log^3(n)\log(\frac{s}{\epsilon}))$ & $\mathcal O(ns\log^4(n)\log(\frac{s}{\epsilon}))$ & 0 or 1 \\
 Sequential one ancilla exact  (\ref{lemma : Sequential one ancilla : Sparse}, \ref{lemma : block-encoding Sequential one ancilla : Sparse}) & $\mathcal O (s\log^3(n))$ & $\mathcal O (ns\log^4(n))$ & 1 or 2\\
 Sequential Adjustable-depth  (\ref{cor : Sequential adjustable-depth : Sparse}, \ref{cor : block-encoding Sequential adjustable-depth : Sparse}) &   $\mathcal{O}(\frac{s\log^3(n)}{ (m/n)}+\log(m/n))$  & $\mathcal{O}(ns\log^4(n)+m)$& $m =\Omega(n)$\\
 Sequential fully parallelized  (\ref{cor : Sequential fully parallelized : Sparse}, \ref{cor : block-encoding Sequential fully parallelized : Sparse}) & $\mathcal{O}(\log(s)+\log^3(n))$ & $\mathcal{O}(ns)$ & $\mathcal{O}(ns)$\\
\hline
 
 Walsh no ancilla  (\ref{lemma : walsh no ancilla : Sparse}, \ref{lemma : block-encoding walsh no ancilla : Sparse})  & $\mathcal{O}(sk)$ & $\mathcal{O}(sk)$ & 0 or 1\\
 Walsh Adjustable-depth  (\ref{cor : walsh adjustable-depth : Sparse}, \ref{cor : block-encoded walsh adjustable-depth : Sparse})  &   $\mathcal{O}(\frac{sk}{(m/k)}+\log(m/k))$  & $\mathcal{O}(sk+m)$& $m =\Omega(k)$\\
 Walsh fully parallelized  (\ref{cor : walsh fully parallelized : Sparse}, \ref{cor : block-encoded walsh fully parallelized : Sparse}) & $\mathcal{O}(k+\log(s))$ & $\mathcal{O}(sk)$ & $\mathcal{O}(sk)$\\
\hline
 
  \hline
\end{tabular}  
\caption{Table of depth, size and $m$-ancilla qubits for different quantum circuits implementing unitary or non-unitary $n$-qubit diagonal operators with sparse decompositions. $s$ is the number of sequential or Walsh-Hadamard operators in the decomposition of the diagonal operator. All scalings are proven in Appendix \ref{sparse du} for diagonal unitaries and Appendix \ref{sparse nu diag} for non-unitary diagonal operators.}
\label{Table : sparse case}
\end{table*}

Sparse Walsh-Hadamard decompositions are particularly efficient to implement diagonal unitaries depending on real-valued functions with bounded first derivative see Fig. \ref{fig: space_time_tradeoff} and Fig. \ref{fig: tradeoff_time_accuracy} or for bounded piecewise continuous functions (see Appendix A 3 in \cite{PhysRevA.109.042401}). In particular, to reach a given accuracy, sparse Walsh-Hadamard decompositions are often more efficient than $M$-Walsh series (presented in Table \ref{Table : approximate case}) when the most significant Walsh coefficients are implemented before the others.

\section{Amplitude amplification and non-destructive Repeat-Until-Success protocols}
\label{section amplitude amplification}

Amplitude amplification enables to increase the probability of success of performing a non-unitary diagonal operator $\hat{D}$ on a given state $\ket{\psi}$. When the cost of preparing the state $\ket{\psi}$ from $\ket{0}^{\otimes n}$ is not prohibitively high, one can perform the usual amplitude amplification protocol \cite{Brassard_2002}, otherwise one can perform either the oblivious amplitude amplification protocol \cite{10.1145/2591796.2591854} or a repeat-until-success scheme where, at each failure, the “wrong” state is corrected to become the “good” state up to a new probability of success.

The amplitude amplification protocol increases the probability of success of measuring the ancilla qubit in state $\ket{1}$ by making repeated reflexions composed of the unitary $\hat{U}_\psi$ preparing the state $\ket{\psi}$ from $\ket{0}^{\otimes n}$ and of the unitary $\hat{U}_D$ which block-encodes $\hat{D}$. Considering the final state $\ket{\phi}=\hat{U}_D\ket{\psi}\otimes \ket{0}_{q_A}=\hat{U}_D(\hat{U}_\psi\otimes \hat{I}_2)\ket{0}^{\otimes n} \otimes \ket{0}_{q_A}$ defined in Eq. \eqref{eq:final_state_qsp}, the amplitude amplification protocol is composed of $\hat{U}_{\phi}=\hat{I}-2\ket{\phi}\bra{\phi}$ and $\hat{U}_P=-(\hat{I}-2\hat{P})$ with $\hat{P}$ the projector on the good subspace:  $\hat{P}=\hat{I}_2\otimes\cdots\otimes \hat{I}_2 \otimes \ket{1}_{q_A}\bra{1}_{q_A}$. The unitary $\hat{U}_{\phi}$ can be rewritten as $\hat{U}_{\phi}=\hat{U_D}^{-1}(\hat{U_ \psi}^{-1}\otimes \hat{I}_2)\Lambda_{1,...,n\rightarrow q_A}(-\hat{Z})(\hat{U_\psi}\otimes \hat{I}_2)\hat{U_D}$ with $\Lambda_{1,...,n\rightarrow q_A}(-\hat{Z})$ an anti-controlled minus $\hat{Z}$-Pauli gate and 
$\hat{U}_P=\hat{I}_2\otimes\cdots\otimes \hat{I}_2 \otimes (-\hat{Z})$. The scheme of the quantum circuit associated to a step of amplitude amplification is shown on Fig. \ref{fig:amplitude_amplification}. The number $k$ of steps needed to perform the amplitude amplification is given by $k=\lfloor \pi/(4\beta) \rfloor$ with $\beta=\arcsin(\sqrt{\mathbb{P}(1)})$. The number $k$ can be estimated from $\mathbb{P}(1)$ using one of the formula $(\ref{eq: proba success}, \ref{eq: proba of success continue}, \ref{eq: proba success sparse})$. In the case where the diagonal unitary and the qubit state are associated to continuous functions, the number of amplitude amplification steps needed to reach $\mathbb{P}(1)\simeq 1$ is asymptotically independent of the number of qubits $n$, as demonstrated in Fig. \ref{fig:second} for the quantum state preparation of Gaussian states. The amplitude amplification protocol is illustrated Fig. \ref{fig:third}, also for the quantum state preparation of Gaussian states, proving numerically that the probability of success reaches $\mathbb{P}(1)\simeq 1$ in a number $k=\lfloor \pi/(4\beta) \rfloor$ of steps.

 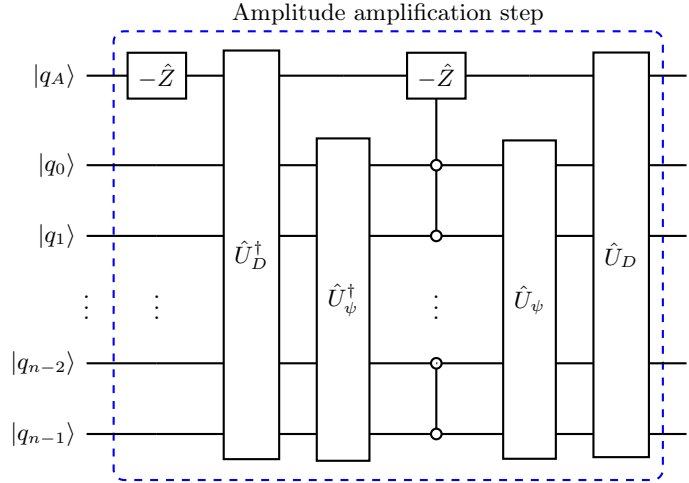
\begin{figure}[ht]
\centering
\begin{quantikz}
  \lstick{$\ket{q_A}$} & \gate{-\hat{Z}}\gategroup[6,steps=6,style={dashed,rounded corners,color=blue,inner xsep=2pt},background]{{Amplitude amplification step}} & \gate[6,nwires={4}]{\hat{U}_D^\dagger} &\qw & \gate{-\hat{Z}} & \qw& \gate[6,nwires={4}]{\hat{U}_D} & \qw \\
  \lstick{$\ket{q_0}$} & \qw &&\gate[5,nwires={3}]{\hat{U}_\psi^\dagger} & \octrl{-1} &\gate[5,nwires={3}]{\hat{U}_\psi}  && \qw \\
  \lstick{$\ket{q_1}$} & \qw &&& \octrl{-1} && & \qw \\
  \vdots & \vdots & \vdots & \vdots & \vdots & \vdots & \vdots \\
  \lstick{$\ket{q_{n-2}}$} & \qw &&& \octrl{0} && & \qw \\
  \lstick{$\ket{q_{n-1}}$} & \qw &&& \octrl{-1} && & \qw
\end{quantikz} 
\caption{Quantum circuit for amplitude amplification. Note that for QSP, operator $\hat{U}_\psi$ is an Hadamard tower on the register $\ket{q}$, {\sl i.e.}, $\hat{U}_\psi= \hat{H}^{\otimes n}$. Operator $\hat{U}_D$ is the block-encoding of the non-unitary diagonal operator $\hat{D}$.}
\label{fig:amplitude_amplification}
\end{figure}

Conversely, a repeat-until-success protocol is relevant when the cost of preparing the state $\ket{\psi}$ from $\ket{0}^{\otimes n}$ is prohibitively high.  Here we propose a non-destructive repeat-until-success protocol: consider the diagonal unitary  $e^{i\arcsin(\hat{D}/(\alpha  d_{\max}))}$ where a constant parameter $\alpha>1$ ensures that the failure operator  $\hat{D}'=\sqrt{\hat{I}-(\hat{D}/(\alpha d_{\max}))^2}$ is invertible. If a failure occurs, one can implement the non-unitary diagonal operator $\hat{D}(\hat{D}'^{-1})$ using the unitary operator  $e^{i\arcsin(\hat{D}(\hat{D}'^{-1})/(\alpha d_{\max}'))}$ with $d_{\max}'=\max_{i=0,...,N-1}|d_i|/(\sqrt{1-(d_i/(\alpha d_{\max}))^2})$. Applying this operator on the state $\frac{\hat{D}'\ket{\psi}}{\| \hat{D}'\ket{\psi} \|}$ leads to the target state $\frac{\hat{D}\ket{\psi}}{\| \hat{D}\ket{\psi} \|}$ with  a probability of success $\mathbb{P}'(1)=\frac{\| \hat{D}\ket{\psi}\|^2 }{\alpha^2 d_{\max}'^2 \| \hat{D'}\ket{\psi} \|^2}$. This protocol can be implemented until a success is reached, avoiding destroying and preparing again the state $\ket{\psi}$. Figure \ref{fig:fourth} illustrates this protocol for the quantum state preparation of Gaussian states, showing that the probability of success after $k$ failures reaches a constant value. Therefore, the average number of failures before a success is finite and can be estimated numerically.
 
\section{Applications}\label{sec:applications}

Unitary and non-unitary diagonal operators are fundamental for many applications ranging from the resolution of partial differential equations modeling phenomena in physics, chemistry, biology and finance \cite{van2004stochastic, freund2000stochastic,oksendal2013stochastic, BLACK1976167,pironneau2009partial} to the loading of classical data on qubit states \cite{sun2023asymptotically, PhysRevA.106.022414, PhysRevA.109.042401}, image processing \cite{gonzalez2009digital} and optimization \cite{farhi2001quantum,10.1145/3569095}. In particular, non-unitary operations have already been mentionned in many different quantum algorithms for quantum simulation \cite{gilyen2019quantum,low2019hamiltonian,seneviratne2024exact,mangin2024efficient} . As an illustration, we apply the methods previously presented to two problems. The first one is the quantum state preparation problem for which we improve the current state of the art of preparing qubit-state depending on continuous, differentiable function $f$ and we provide new methods with space-time-accuracy trade-offs. The second major application is the resolution of partial differential equation where the associated evolution operator is implementable using diagonal operators and the Quantum Fourier Transform. As an example, we solve the diffusion equation, also called heat equation, which possesses a non-unitary evolution.

\subsection{Quantum State Preparation}
\label{sec:qsp}
The quantum state preparation problem consists in loading a set of classical data $\{y_x\}_{x=0}^{2^n-1}$ into a $n$-qubit state $\ket{\psi}=\frac{1}{\mathcal{N}}\sum_{x=0}^{2^n-1}y_x\ket{x}$  up to a normalization factor $\mathcal{N}$. In the following we provide an efficient quantum state preparation algorithm for $n$-qubit states depending on a differentiable function $f$ is defined by:
\begin{equation}
\ket{f}=\frac{1}{||f||_{2,N}}\sum_{x=0}^{N-1} f(x/N)\ket{x},
\label{eq: target state}
\end{equation}
with $||f||_{2,N}=\sqrt{\sum_{x=0}^{N-1} |f(x/N)|^2}$ and $N=2^n$.
Now, consider the non-unitary operator $\hat{D}=\sum_{x=0}^{N-1}f(x/N)\ket{x}\bra{x}$. The quantum state preparation can be performed by implementing $\hat{D}$ with one of the method previously presented Table \ref{Table : approximate case} and applying $\hat{D}$ on the uniform superposition of all states $\ket{s}=\hat{H}^{\otimes n} \ket{0}^{\otimes n} = \frac{1}{\sqrt{N}}\sum_{x=0}^{N-1}\ket{x}$. The following Theorem summarizes this result:

\begin{theorem}[Quantum State Preparation]
\label{thm: qsp}
Let $f$ be a real-valued differentiable function in $L^2([0,1])$, $n$ the number of qubits and $\epsilon>0$. There is a quantum circuit that efficiently prepare the state $\ket{f}$ defined in Eq. \eqref{eq: target state} up to an infidelity $\epsilon>0$ using $p\geq 1$ ancilla qubits with a depth scaling at worst as $\mathcal{O}(1/(p\sqrt{\epsilon})+\log(p))$, size $\mathcal{O}(n+1/(p\sqrt{\epsilon}))$ and a probability of success $\mathbb{P}(1)=\|f\|_{2,[0,1]}^2/\|f\|^2_\infty=\Theta(1)$. 
\end{theorem}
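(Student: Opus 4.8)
The plan is to realize $\ket{f}$ as the post-selected output of the non-unitary operator $\hat{D}=\sum_{x=0}^{N-1} f(x/N)\ket{x}\bra{x}$ acting on the uniform superposition $\ket{s}=\hat{H}^{\otimes n}\ket{0}^{\otimes n}$, and then to read off the three asserted quantities (probability, accuracy, resources) from the machinery already established. First I would note that $\hat{D}\ket{s}=\tfrac{1}{\sqrt{N}}\sum_x f(x/N)\ket{x}$, so measuring the block-encoding ancilla $\ket{q_A}$ in state $\ket{1}$ as in Eq. \eqref{eq:final_state_qsp} renormalizes this to exactly $\ket{f}$ of Eq. \eqref{eq: target state}. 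This reduces the problem to block-encoding $\hat{D}$ on $\ket{s}$ and estimating the success probability.

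For the probability of success I would apply the continuous-limit formula \eqref{eq: proba of success continue} with $\ket{\psi}=\ket{s}$, which corresponds to the constant profile $g\equiv 1$. Since $\|g\|_{2,[0,1]}=1$, this gives directly $\mathbb{P}(1)=\|f\|_{2,[0,1]}^2/\|f\|_\infty^2=\Theta(1)$, the claimed $n$-independent constant.

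For the resource scalings I would chain the three earlier results. The block-encoding Theorem \ref{thm: block-encoding} shows that the cost of implementing $\hat{D}$ matches, up to the same spectral-norm error, that of the diagonal unitary $\hat{U}=e^{i\arcsin(\hat{D}/(\alpha d_{\max}))}$, where $\alpha>1$ is chosen precisely so that the profile $g=\arcsin(f/(\alpha\|f\|_\infty))$ is differentiable and the approximation Theorem \ref{thm: Approximation theorem} applies. That theorem then implements $\hat{U}$ up to a spectral error $\delta$ using only $m=\lceil\log_2(\|g'\|_\infty/\delta)\rceil$ qubits, i.e. a decomposition into $O(2^m)=O(1/\delta)$ elementary diagonal operators. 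Parallelizing this decomposition over $p$ ancilla registers with the adjustable-depth Theorem \ref{thm : adjustable-depth with ancilla} (or the full-parallelization Theorem \ref{thm : full parallelization}) yields depth $O(\tfrac{1}{p\delta}+\log p)$ and size $O(n+1/(p\delta))$, the $O(n)$ term coming from the Hadamard tower that prepares $\ket{s}$.

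The step I expect to be the main obstacle is fixing the accuracy budget, i.e. converting the target \emph{infidelity} $\epsilon$ into the spectral-norm error $\delta$ of the block-encoding. Here I would compare the unnormalized vectors $v=\hat{D}\ket{s}$ and $\tilde v=\tilde D\ket{s}$: one has $\|v\|=\|f\|_{2,N}/\sqrt{N}\to\|f\|_{2,[0,1]}=\Theta(1)$ uniformly in $n$, while $\|v-\tilde v\|\le \alpha d_{\max}\,\delta$. Because the infidelity of the renormalized states is quadratic in the vector error, $1-|\braket{f}{\tilde f}|^2=O((\|v-\tilde v\|/\|v\|)^2)=O(\delta^2)$, it suffices to take $\delta=\Theta(\sqrt{\epsilon})$, which is the origin of the $\sqrt{\epsilon}$. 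Substituting $\delta=\Theta(\sqrt{\epsilon})$ into the scalings above produces the announced depth $O(1/(p\sqrt{\epsilon})+\log p)$ and size $O(n+1/(p\sqrt{\epsilon}))$; the delicate point throughout is controlling the normalization factors so that all hidden constants stay independent of $n$.
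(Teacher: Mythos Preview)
Your approach is essentially the same as the paper's: both realize $\ket{f}$ as the post-selected output of $\hat{D}$ acting on $\ket{s}=\hat{H}^{\otimes n}\ket{0}^{\otimes n}$, read off the success probability from Eq.~\eqref{eq: proba of success continue} with $g\equiv 1$, invoke the Table~\ref{Table : approximate case} scalings (i.e.\ the chain of Theorems~\ref{thm : adjustable-depth with ancilla}, \ref{thm: Approximation theorem}, \ref{thm: block-encoding}), and convert the target infidelity $\epsilon$ into a spectral-norm budget $\delta=\Theta(\sqrt{\epsilon})$ via the quadratic relation $1-F=O(\delta^2)$, controlling the normalization through $\|f\|_{2,N}^2/N\to\|f\|_{2,[0,1]}^2$. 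Your treatment of the accuracy step is in fact slightly more explicit than the paper's.
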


\begin{proof} This theorem is a direct consequence of the scaling given in Table \ref{Table : approximate case}. The associated quantum circuits are the same as the one for the non-unitary diagonal operator. The non-unitary diagonal operator requires to be applied on the uniform superposition $\ket{s}=\frac{1}{\sqrt{N}}\sum_{x=0}^{N-1}\ket{x}$, which is prepared by a one-depth layer of $n$ Hadamard gates from the state $\ket{0}^{\otimes n}$. The infidelity between the target quantum state $\ket{f}$ and the prepared quantum state $\ket*{\tilde{f}}$ is defined as $1-F=1-|\braket*{\tilde{f}}{f}|^2$. If the $2$-norm of the difference between $\ket{f}$ and $\ket*{\tilde{f}}$ is smaller than $\epsilon$, then the associated infidelity is smaller than $\epsilon^2$. 

In particular, if the non-unitary diagonal operator $\hat{D}=\sum_{x=0}^{N-1}f(x/N)\ket{x}\bra{x}$ is approximated by $\tilde{D}$ up to an error $\sqrt{\epsilon}$ in spectral norm, and we define $\ket{f}=\hat{D}\ket{s}/\|\hat{D}\ket{s}\|_{2,N}$ and  $\ket*{\tilde{f}}=\hat{\tilde{D}}\ket{s}/\|\hat{\tilde{D}}\ket{s}\|_{2,N}$. Then, $\| \ket{f}- \ket*{\tilde{f}} \|_{2,N}\leq \frac{2\sqrt{N\epsilon}}{\|f\|_{2,N}}$ such that $1-F\leq \frac{4N\epsilon}{\|f\|_{2,N}^2}$. Note that $\|f\|_{2,N}^2/N$ converges toward $\|f\|_{2,[0,1]}^2$ in the large $N$ limit. Therefore, there exists $N_0\in \mathbb{N}$ such that $\forall N>N_0,  \frac{N}{\|f\|_{2,N}^2}\leq K_1$, implying that $1-F=O(\epsilon)$.  The probability of success is given by Eq. \eqref{eq: proba of success continue} for the uniform function $g=1$ and $\mathbb{P}(1)=\|f\|_{2,[0,1]}^2/\|f\|^2_\infty=\Theta(1)$. 
\end{proof}

In the case of a small probability of success, one can still perform a constant number $k=\lfloor \pi/(4\arcsin (\sqrt{\mathbb{P}(1)})) \rfloor$ of amplitude amplification steps to reach $\mathbb{P}(1)\simeq 1$. 

The quantum state preparation of Gaussian states is performed in Fig. \ref{fig:first} and the probability of success is presented as a function of the number of qubits in Fig. \ref{fig:second}, showing an independent of $n$ limit. The amplitude amplification protocol is also illustrated in Fig. \ref{fig:third}, proving numerically the efficiency of our method. The space-time trade-off is illustrated in Fig. \ref{fig: space_time_tradeoff} for an exact Walsh-Hadamard decomposition, an approximate one using a $M$-Walsh series and using a sparse Walsh series. The trade-off is particularly efficient for a small number of ancilla where one diminishes by several order the depth of the circuits. Note that increasing the number of ancilla qubits, increases the cost of the copy operations as well. As a consequence, for a large number of ancilla qubits, adding some ancilla qubits may not improve anymore the depth of the circuit. The trade-off between time and accuracy is illustrated in Fig. \ref{fig: tradeoff_time_accuracy} for the quantum state preparation of Gaussian states. The infidelity of the prepared states decreases with the number of operators one implements as stated in the approximation Theorem \ref{thm: Approximation theorem}. Furthermore, sparse Walsh-Hadamard decomposition allows to reach a given infidelity with smaller depth quantum circuits than $M$-Walsh Series.

In \cite{PhysRevA.109.042401}, $n$-qubit states are implemented using the Taylor expansion of the operator $\hat{I}-e^{i\hat{f}\epsilon_0} \simeq \hat{f}\epsilon_0$ with $\hat{f}=\sum_{x=0}^{2^n-1}f(x)\ket{x}\bra{x}$. In comparison, the methods presented here implement directly $\hat{f}$ using $e^{\pm i \arcsin(\hat{f}/(\alpha f_{\max}))}$ avoiding an additional error coming from the Taylor expansion. Theorem \ref{thm: qsp} also improves the probability of success which is scaling with $\epsilon$ in \cite{PhysRevA.109.042401} while here it reaches a constant value.

\begin{figure}
\centering
\begin{subfigure}{0.47\textwidth}
    \includegraphics[width=\textwidth]{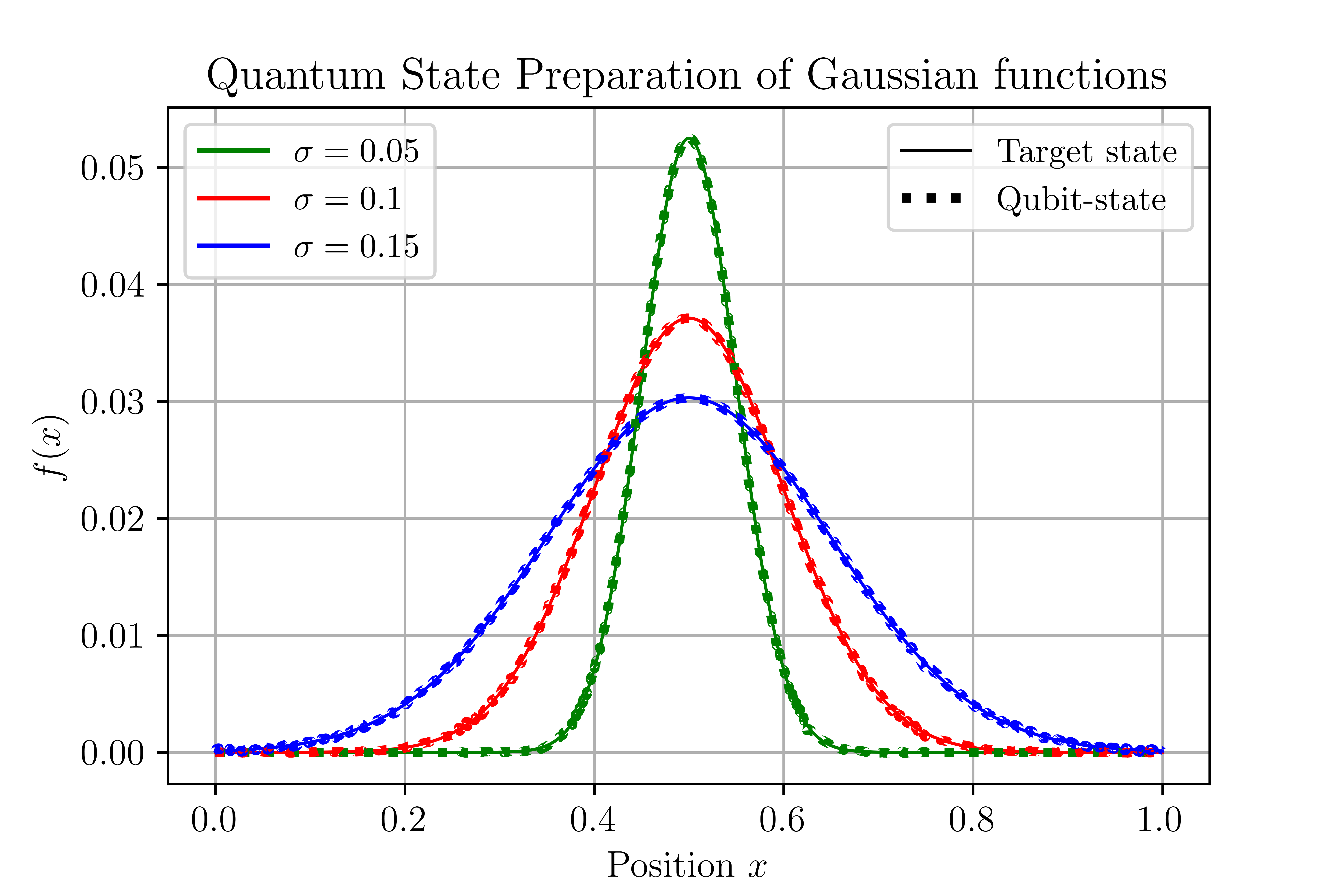}
    \caption{}
    \label{fig:first}
\end{subfigure}
\begin{subfigure}{0.47\textwidth}
    \includegraphics[width=\textwidth]{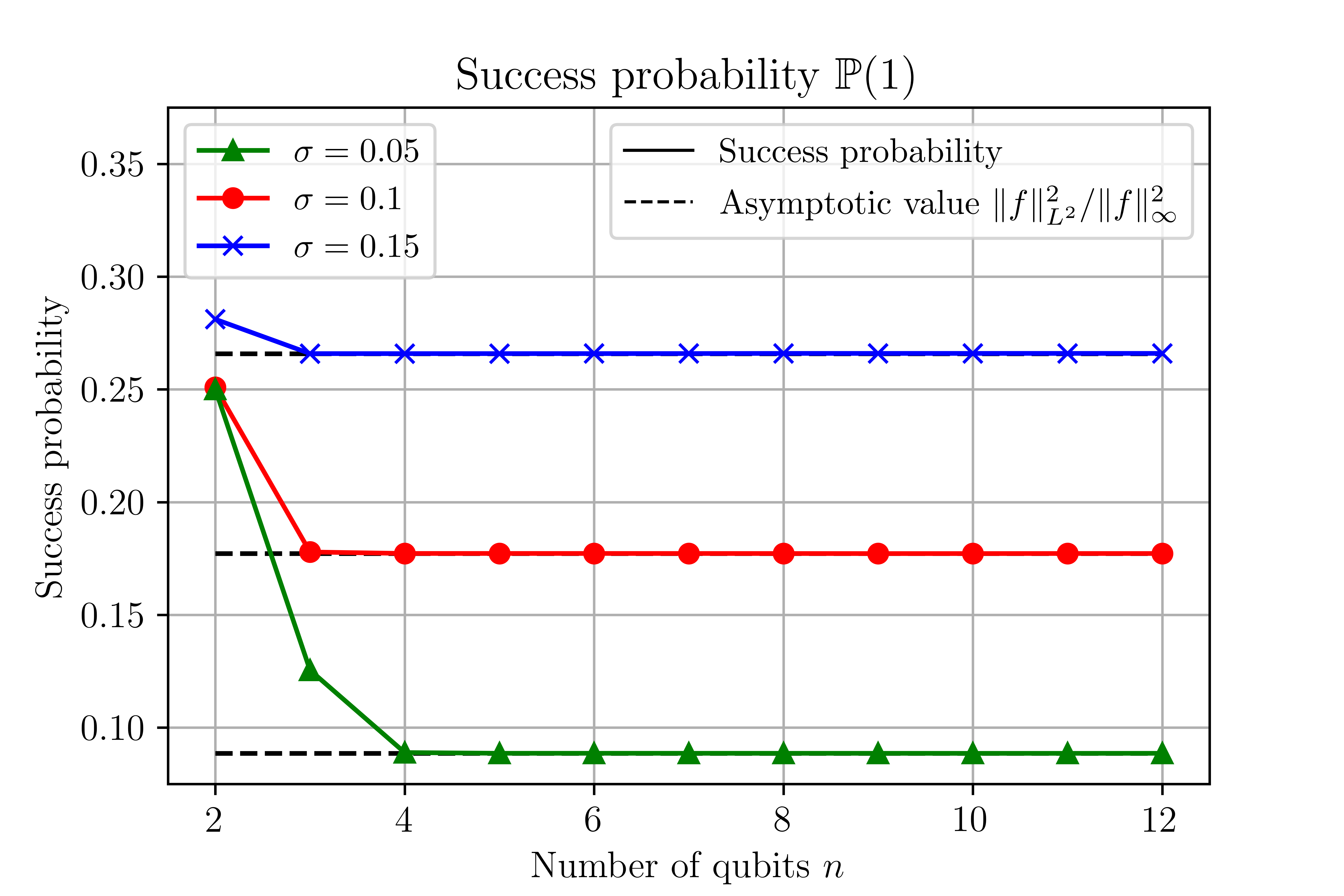}
    \caption{}
    \label{fig:second}
\end{subfigure}
\begin{subfigure}{0.47\textwidth}
    \includegraphics[width=\textwidth]{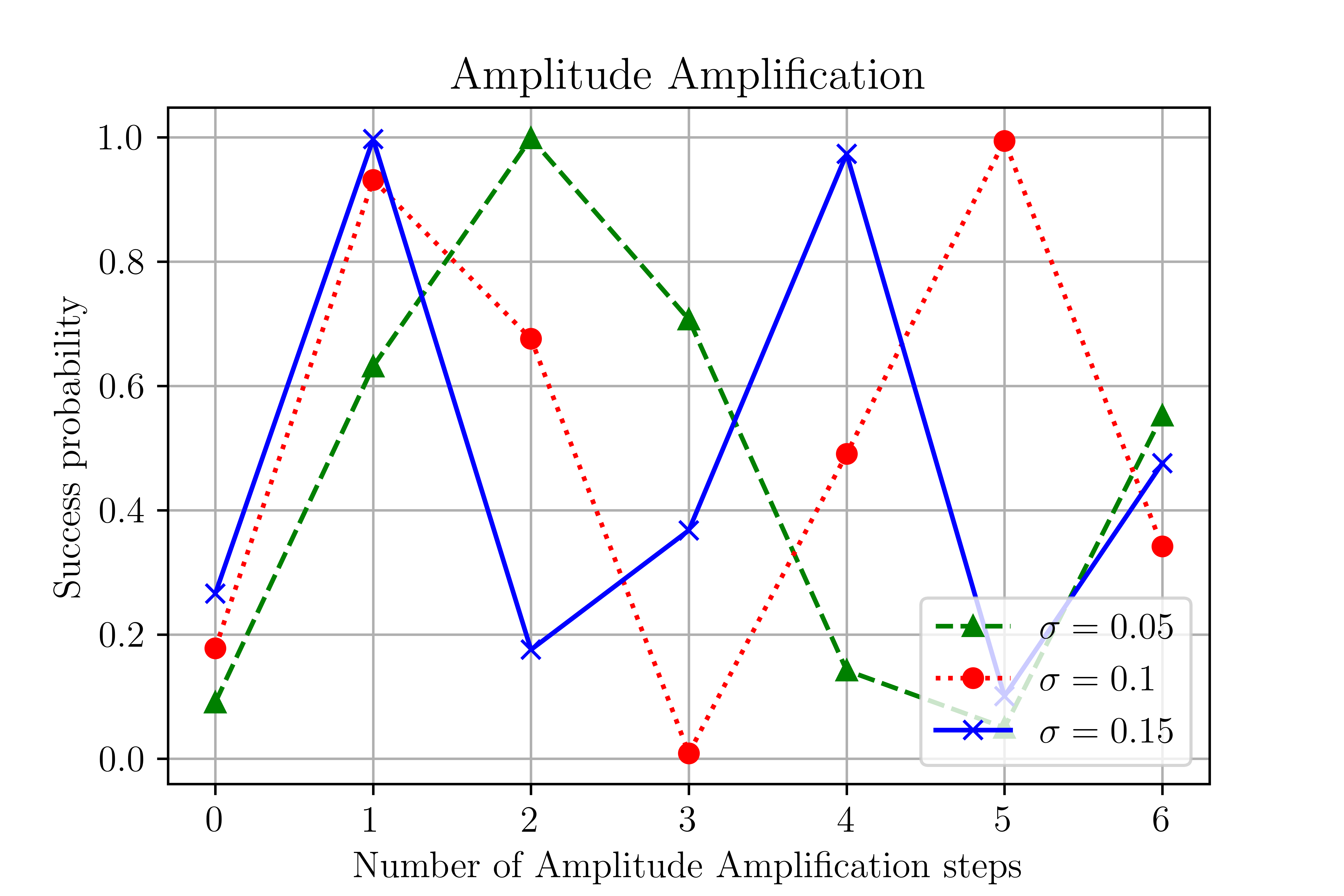}
    \caption{}
    \label{fig:third}
\end{subfigure}
\begin{subfigure}{0.47\textwidth}
    \includegraphics[width=\textwidth]{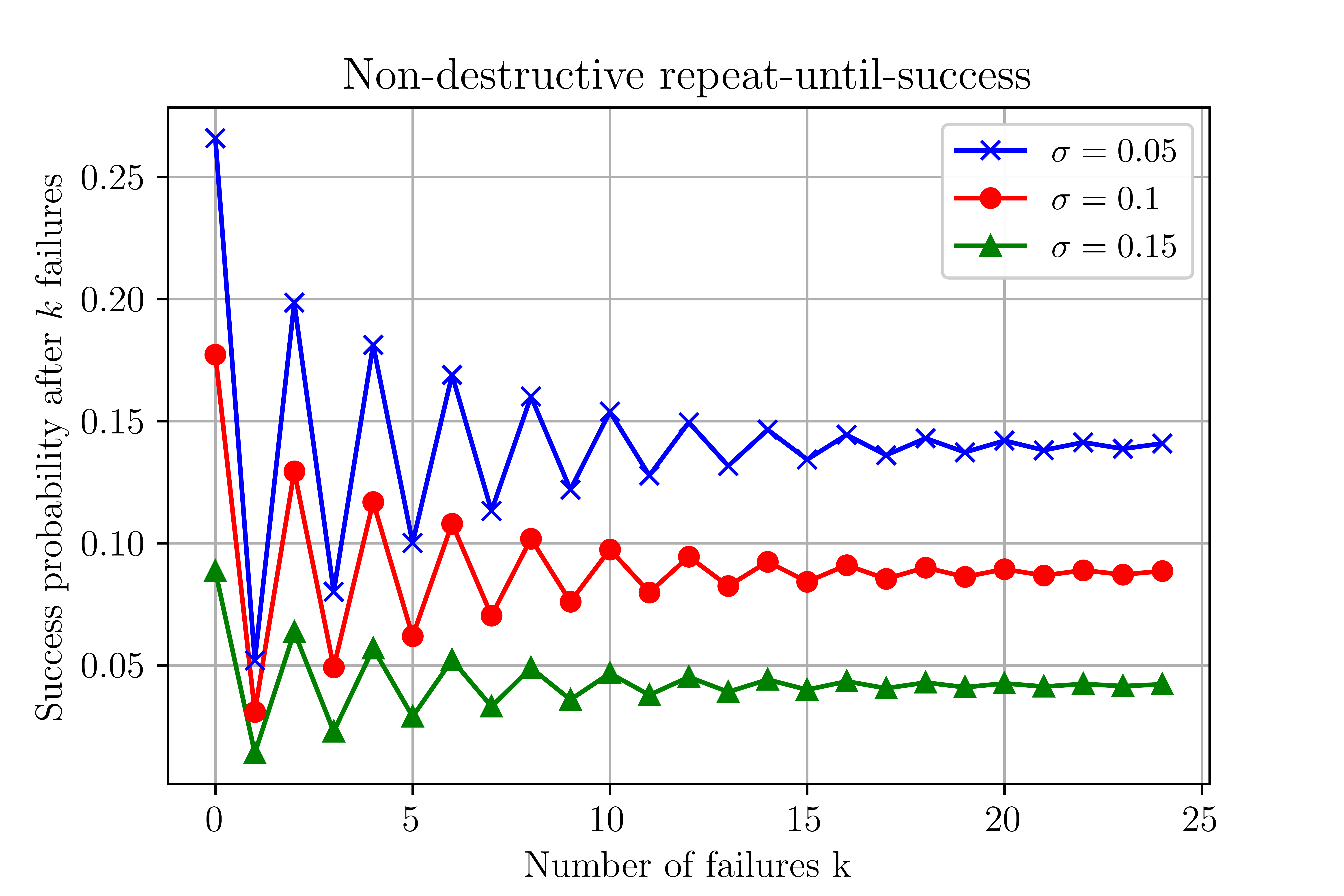}
    \caption{}
    \label{fig:fourth}
\end{subfigure}
\caption{Quantum state preparation (Fig. \ref{fig:first}) of Gaussian state $f(x)\propto e^{-0.5(x-0.5)^2/\sigma^2}$ for $\sigma=0.05, 0.1$ and $0.15$ on $n=12$ qubits using respectively $s=30, 45, 90$ of Walsh-Hadamard operators with 2-norm error between the qubit state and the target state of $0.0054, 0.0052, 0.0054$  respectively. Probability of success of the QSP protocol as a function of the number of qubits (Fig. \ref{fig:second}), as a function of the number of amplitude amplification steps (Fig. \ref{fig:third}) and probability of success after $k$ failures for the non-destructive repeat-until-success scheme (Fig. \ref{fig:fourth}). The numerical values of $\mathbb{P}(1)$ reach their theoretical asymptotic limit $\|f\|^2_{L^2}/ \|f\|_{\infty}^2$  even for a small number of qubits and the number of amplitude amplification steps needed to reach the first pic of probability verify $k=\lfloor \pi/(4\arcsin (\sqrt{\mathbb{P}(1)})) \rfloor = 1, 1, 2$ respectively. For the non-destructive repeat-until-success scheme, the average number of failures before reaching a success are $6, 10, 22$ respectively.
}
\label{fig:figures QSP}
\end{figure}

\begin{figure}
\centering
\includegraphics[width=0.4\textwidth]{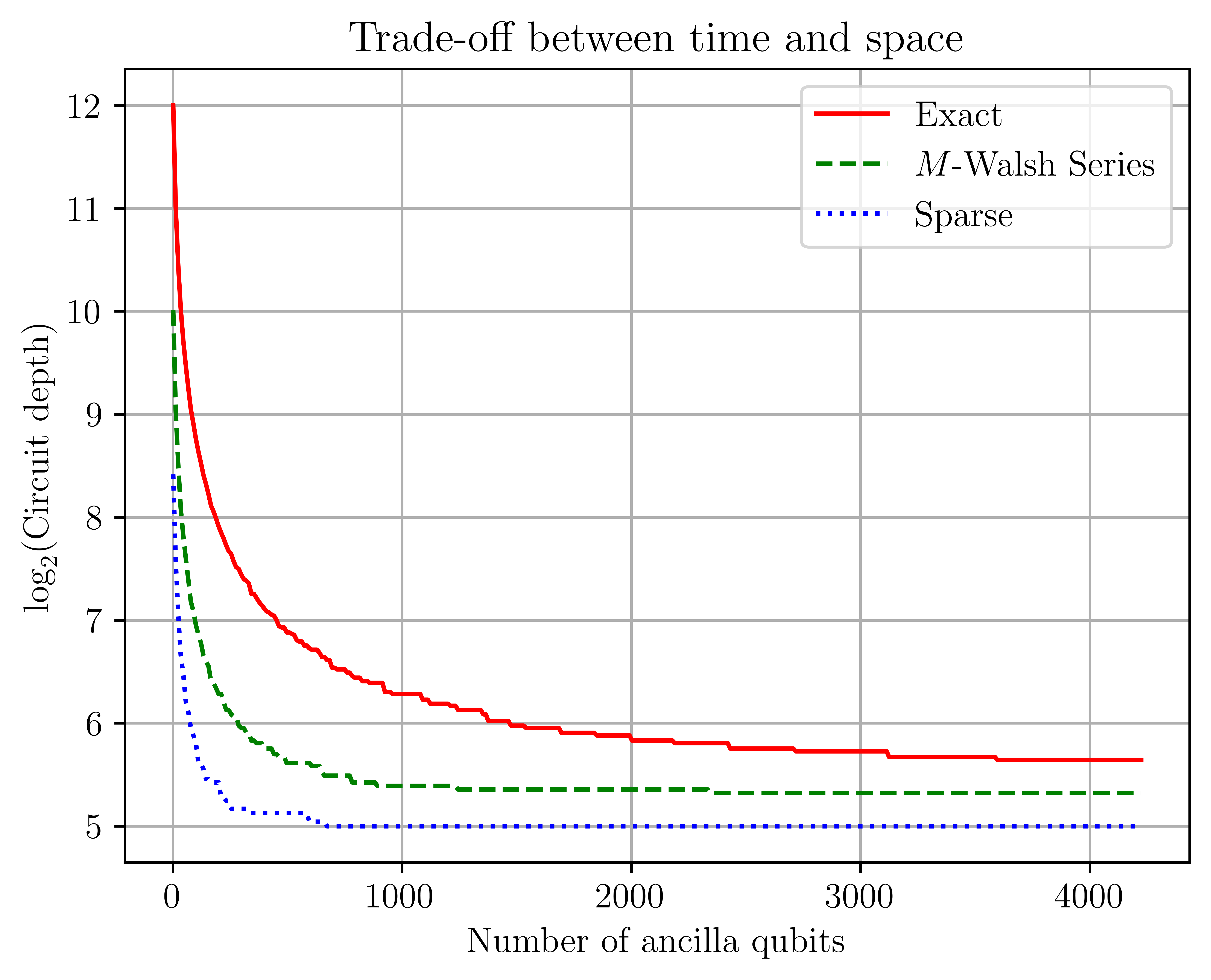}    
\caption{Depth as a function of the number of ancilla qubits for the quantum state preparation of a Gaussian state $f(x)\propto e^{-0.5(x-0.5)^2/\sigma^2}$ with $\sigma=0.1$ on $n=10$ qubits for three different Walsh-Hadamard decompositions: exact with $2{10}$ operators, approximate with $2^8$ operators and an infidelity $1-F=5.96\times10^{-5}$ and sparse with 70 Walsh-Hadamard operators and an infidelity $1-F=6.06\times10^{-5}$.}
\label{fig: space_time_tradeoff}
\end{figure}

\begin{figure}
\centering
\includegraphics[width=0.5\textwidth]{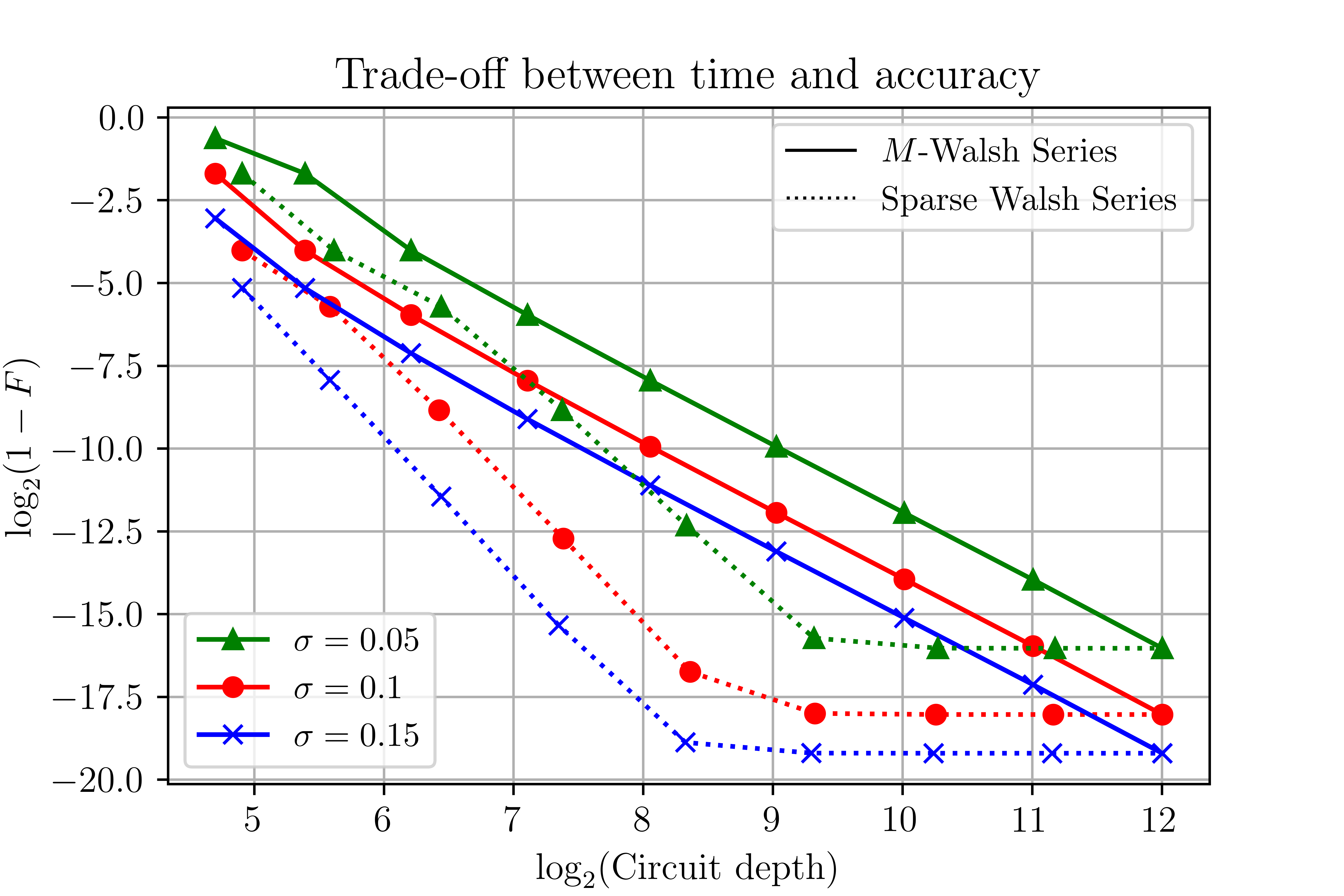}        
\caption{Infidelity as a function of the depth of the quantum circuit associated to the quantum state preparation of Gaussian states $f(x)\propto e^{-0.5(x-0.5)^2/\sigma^2}$ with $\sigma=0.05, 0.1, 0.15$ on $n=12$ qubits without ancilla qubits. The dots from upper left to lower right correspond to the $M$-Walsh Series with $M=2^m$ Walsh-Hadamard operators (full lines) and to the sparse Walsh Series composed of the $2^m$ Walsh-Hadamard operators with the largest coefficients in absolute value with $m=2,3,\hdots,10$. }
\label{fig: tradeoff_time_accuracy}
\end{figure}

\subsection{Real-space simulation of the Heat equation}

As an illustration of the methods presented in this article, we solve the following one-dimensional diffusion equation for a given initial condition $f_0$ defined on $[0,1]$ and a constant diffusion coefficient $\kappa$:

\begin{equation}
    \partial_t f = \kappa \partial_{xx} f,
\label{eq : heat equation}
\end{equation}
where $f$ is a function of the time $[0,T]$ and space $[0,1]$. For simplicity, we consider periodic boundary conditions. The numerical resolution is based on a real-space simulation where the space $[0,1]$ of the simulation is discretized into $N$ equi-length intervals. The spatial derivatives are discretized using a central finite difference $\partial_x f \rightarrow \frac{f(x+\Delta x)-f(x-\Delta x)}{2\Delta x}$ with $\Delta x=1/N$ the length of a space interval. 

The real space encoding consists in approximating the solution $f$ of Eq. \eqref{eq : heat equation} at the $N=2^n$ spatial points $\mathcal{X}_n=\{0,1/N,...,(N-1)/N\}$ and at each time $t>0$ by a state $\ket*{\tilde{f}}_t$, solution of the discretized diffusion equation :
\begin{equation}
\label{eq:Discrete ODE}
\begin{split}
    \partial_t \ket*{\tilde{f}}_t &= \kappa\frac{(\hat{S}-\hat{S}^\dagger)^2}{4\Delta x^2} \ket*{\tilde{f}}_t \\
\ket*{\tilde{f}}_{t=0} &=\ket{f_0}
\end{split},
\end{equation}
where $\frac{(\hat{S}-\hat{S}^\dagger)^2}{4\Delta x^2}$ is the discretized Laplacian operator with $\hat{S}=\sum_{x\in \mathcal{X}_n}\ket{x+1}\bra{x}$ the shift operator, also called increment operator\footnote{with $\ket{n+1}=\ket{0}$ due to the periodic boundary conditions.}. The solution $\ket*{\tilde{f}}_t$ is given by the evolution operator applied on the initial condition $\ket{f_0}$:
\begin{equation}
   \ket*{\tilde{f}}_t=e^{\frac{\kappa t }{4\Delta x^2} (\hat{S}-\hat{S}^\dagger)^2}\ket{f_0}.
\end{equation}

Note that $ \ket*{\tilde{f}}_t$ is not normalized at this stage of the computations. The evolution operator of the diffusion equation can be diagonalized in Fourier space using the Quantum Fourier Transform \cite{coppersmith2002approximate,nielsen2002quantum} thanks to the fact that the shift operators are circulant matrices which are diagonal in Fourier space \cite{10.5555/1202296}. The diagonal operator is non-unitary and depends on a smooth function $f(x)=e^{-\frac{\kappa}{\Delta x^2} \sin(2\pi x)^2 t}$, so the evolution is given by:

\begin{equation}
    \ket*{\tilde{f}}_t=\widehat{QFT}^{-1}e^{-\frac{\kappa t}{\Delta x^2} \sin(2\pi \hat{x})^2 }\widehat{QFT}\ket{f_0},
\label{eq: solution quantum numerical}
\end{equation}
where $\hat{x}=\sum_{x\in \mathcal{X}_n}x\ket{x}\bra{x}$ is the position operator.

The initial condition $\ket{f_0}$ is encoded in an $n$-qubit state thanks to the QSP protocol presented in the previous section (Theorem \ref{thm: qsp}) and the non-unitary diagonal operator is implemented using a sequential decomposition. Figure \ref{fig: heat equation resolution} presents an $n=8$-qubit simulation where the initial Gaussian state $f_0(x)=e^{-0.5(x-0.5)^2/\sigma^2}$ with $\sigma=0.1$  evolves through the heat equation. The amplitude of the numerical solution $\ket*{\tilde{f}}_t$ is shown as a function of the position for different times $t$ and compared with the analytical solution $f$ of Eq. \eqref{eq : heat equation} computed as a Fourier Series: 
\begin{equation}
f(x,t)=\sum_{q=0}^{+\infty}e^{-4\kappa q^2 \pi^2t} (\alpha_q \cos(2q\pi (x-0.5))+\beta_q \sin(2q\pi (x-0.5))),
\label{eq: analytical solution}
\end{equation}
with the Fourier coefficients, for $q\ge 1$:
\begin{equation}
\begin{split}
\alpha_q &= 2\int_0^1 f_0(x)\cos(2q\pi (x-0.5)) \textrm{d}x \\
\beta_q &= 2\int_0^1 f_0(x)\sin(2q\pi (x-0.5)) \textrm{d}x
\end{split},
\end{equation}
and $\alpha_0=\int_0^1 f_0(x)\textrm{d}x$, $\beta_0=0$.

To implement the resolution of the heat equation described in Eq. \eqref{eq: solution quantum numerical}, one needs to block-encode the operator $\hat{D}_t=\sum_{x=0}^{N-1}e^{-\frac{\kappa t}{\Delta x^2}\sin(2\pi x/N)^2}\ket{x}\bra{x}$, using the diagonal unitaries $e^{\pm i\arcsin{\hat{D}_t/(\alpha d_{\max})}}$. The eigenvalues of $\arcsin{(\hat{D}_t/(\alpha d_{\max}))}$ are shown in Fig. \ref{fig:second_diffusion} at different times of interest for $\alpha=1.1$. In this example, the number of non-trivial eigenvalues is very small, implying that a sequential decomposition is relevant\footnote{For a fixed $n$, the evolution operator becomes increasingly sparse in Fourier space as the time $t$ progresses.}. Fig. \ref{fig:first_diffusion} presents a comparison between the normalized analytical solution of the diffusion equation and the normalized states obtained from Eq. \eqref{eq: solution quantum numerical}, by implementing a very small number of sequential operators (14, 8, 6 and 1 respectively).

The error is defined in terms of the $2$-norm of the difference $\| \ket{f}_t/\|\ket{f}_t\|_{2,N}-\ket*{\tilde{\tilde{f}}}_t/\|\ket*{\tilde{\tilde{f}}}_t\|_{2,N} \|_{2,N}$, where $\ket{f}_t$ is a non-normalized state encoding the analytical solution at time $t$ as $\ket{f}_t=\sum_{x=0}^{N-1}f(x/N,t)\ket{x}$ and the associated normalization factor is $\|\ket{f}_t\|_{2,N}=\sqrt{\sum_{x=0}^{N-1}|f(x/N,t)|^2}$. $\ket*{\tilde{\tilde{f}}}_t$ is the prepared state $\ket*{\tilde{\tilde{f}}}_t=(\widehat{QFT}^{-1}) \tilde{\hat{D}}_t (\widehat{QFT})\ket*{\tilde{f}_0}$ with a normalization factor  $\|\ket*{\tilde{\tilde{f}}}_t\|_{2,N}= \| (\widehat{QFT}^{-1}) \tilde{\hat{D}}_t (\widehat{QFT})\ket*{\tilde{f}_0}\|_{2,N}$ where $\ket*{\tilde{f}_0}$ is the approximated initial condition and $\tilde{\hat{D}}_t $ is the approximation of the diagonal operator $\hat{D}_t$.

Three approximations contribute to the error: the approximate quantum state preparation of the initial condition, the space discretization with the centered finite difference approximation of the Laplacian operator, and the approximation in implementing the evolution operator.  The quantity  $\| \ket{f}_t/\|\ket{f}_t\|_{2,N}-\ket*{\tilde{\tilde{f}}}_t/\|\ket*{\tilde{\tilde{f}}}_t\|_{2,N} \|_{2,N}$ can be bounded by the sum of these three errors. The first one is the approximation of the initial Gaussian function. This approximation is discussed in detail in the previous section on quantum state preparation \ref{sec:qsp} and in Fig \ref{fig: tradeoff_time_accuracy}.  The second source of error arises from the spatial discretization of the Laplacian operator. The discretization error is the difference between $\ket{f}_t/\|\ket{f}_t\|_{2,N}$ and $\ket*{\tilde{f}}_t/ \| \ket*{\tilde{f}}_t \|_{2,N}$, where $\ket*{\tilde{f}}_t$ is the solution of Eq. \eqref{eq:Discrete ODE}. This difference can be bounded using the variation of parameter formula and the Taylor formula, resulting in $\|\ket{f}_t/\|\ket{f}_t\|_{2,N}-\ket*{\tilde{f}}_t/ \| \ket*{\tilde{f}}_t \|_{2,N} \|_{2,N}\leq Kt/N$, where $K$ is a constant that depends on the maximum value of the third spatial derivatives of $f$. A detailed computation of the discretization error is given in Appendix \ref{sec:discretization error}. The last source of error comes from the sparsity of the diagonal operator represented in Fig \ref{fig:second_diffusion}. Only a few eigenvalues are  implemented exactly, while the others are approximated as $0$.  

In practice, the numerical simulations demonstrate that the difference globally diminishes as the time $t$ increases. This is peculiar to the diffusion equation, as the Fourier components gradually vanish with the time $t$. Only the constant average value remains in the long-time limit. Therefore, for this particular example of a partial differential equation, the numerical scheme converges as the time $t$ increases.

\begin{figure}[ht]
\centering
\begin{subfigure}{0.47\textwidth}
    \includegraphics[width=\textwidth]{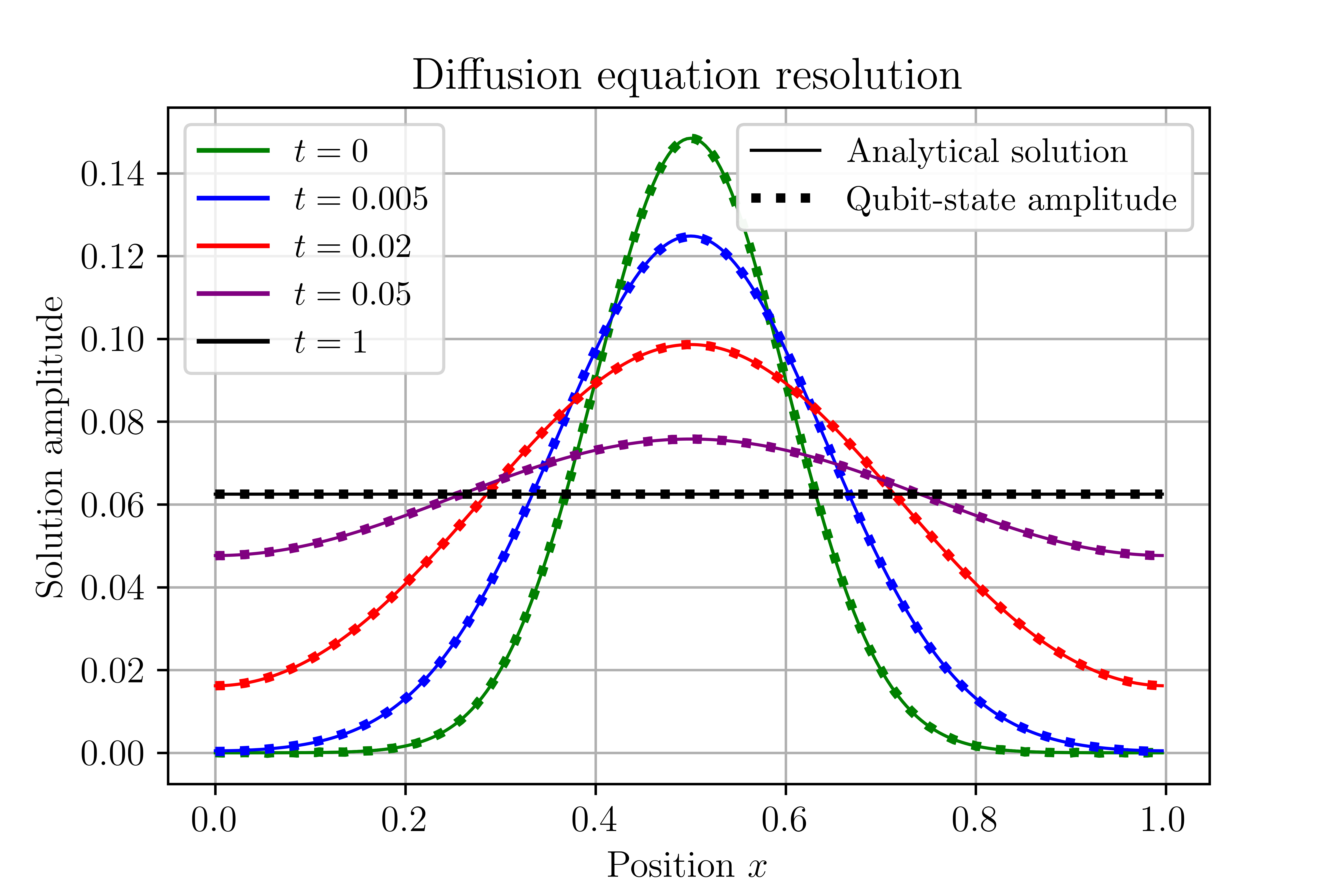}
    \caption{}
    \label{fig:first_diffusion}
\end{subfigure}
\begin{subfigure}{0.47\textwidth}
    \includegraphics[width=\textwidth]{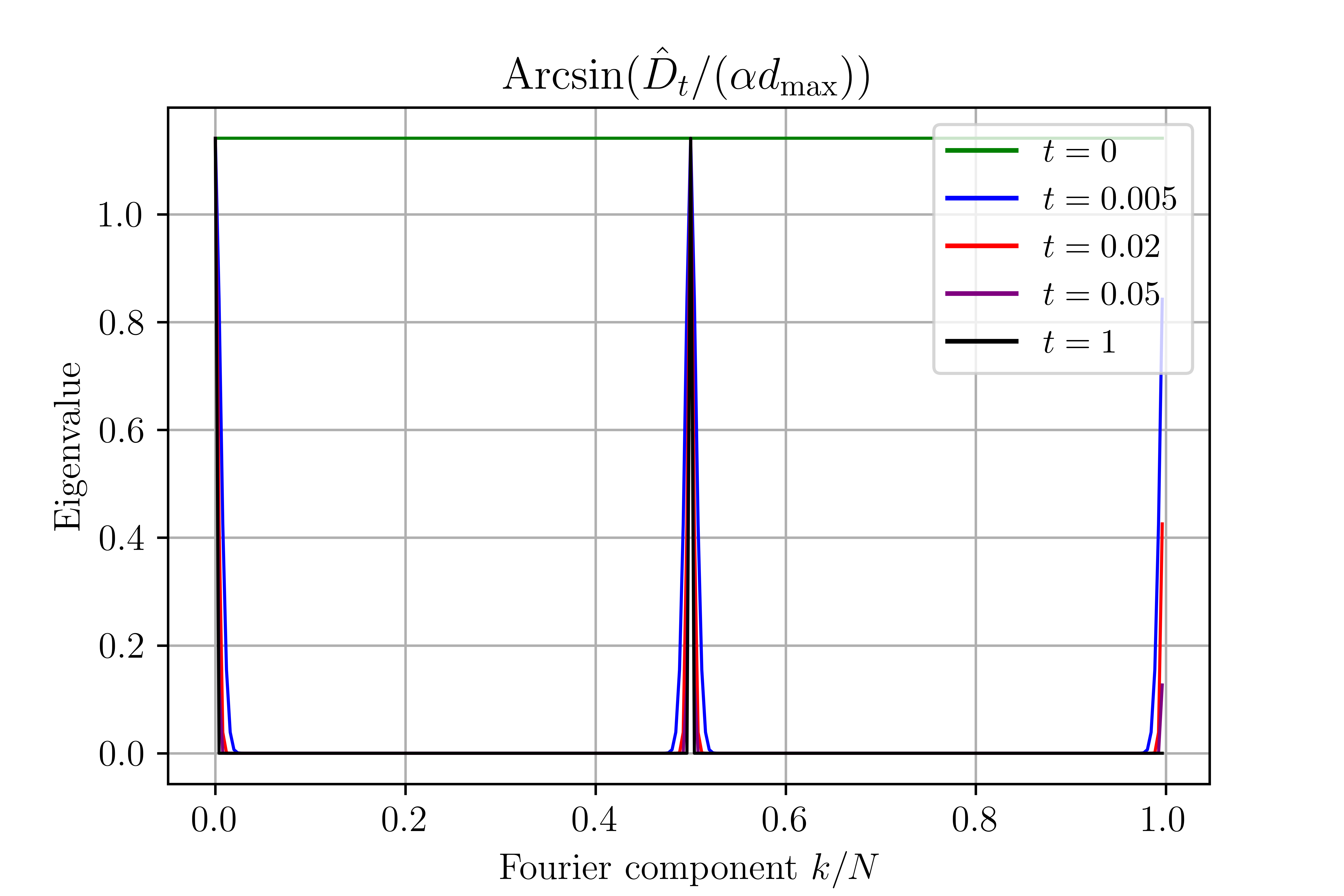}
    \caption{}
    \label{fig:second_diffusion}
\end{subfigure}
\caption{Resolution of the heat equation at different times $t$.  Fig. \ref{fig:first_diffusion} shows the $n=8$ qubit state given by Eq. \eqref{eq: solution quantum numerical} (dashed lines) and compares it to the analytical solution (full line) given by Eq. \eqref{eq: analytical solution} for an initial condition $f_0(x)=e^{-0.5(x-0.5)^2/\sigma^2}$, where $\sigma=0.1$. Fig. \ref{fig:second_diffusion} presents the eigenvalues $e^{-\frac{\kappa t}{\Delta x^2} \sin(2 \pi x)^2}$ of the evolution operator given in Eq. \eqref{eq: solution quantum numerical} at different times $t$, with parameters $\alpha=1.1$ and $d_{\max}=1$, highlighting its sparsity. The non-unitary diagonal operator is implemented using $s=14,8,6,1$ sequential operators for times $t=0.005,0.02,0.05,1$ respectively. The errors between the analytical solution and the implemented qubit states are given by the $2$-norm of the difference of the normalized vectors: $\| \ket{f}_t/\|\ket{f}_t\|_{2,N}-\ket*{\tilde{\tilde{f}}}_t/\|\ket*{\tilde{\tilde{f}}}_t\|_{2,N} \|_{2,N}$ with values $2.2\times10^{-3}, 1.8\times10^{-3}, 2.2\times10^{-4}, 2.7\times10^{-4}$ and $3.2\times10^{-15}$ for times $t=0.005,0.02,0.05,1$ respectively. }
\label{fig: heat equation resolution}
\end{figure}

\section{Discussion}

The methods presented in this work are the first to enable the implementation of unitary and non-unitary diagonal operators with trade-offs between the depth, the number of ancilla and the error at the same time. Previous methods focused on exact synthesis of diagonal operators \cite{barenco1995elementary,bullock2004asymptotically,zhang2024depth,sun2023asymptotically}, looking for optimal scalings either in terms of size \cite{bullock2004asymptotically,zhang2024depth} or depth \cite{sun2023asymptotically},  or considering trade-offs only between depth and the level of approximation \cite{welch2014efficient,10.5555/3179320.3179326}.  

Trade-offs between depth and width can always be achieved for sequential or Walsh-Hadamard decomposition since the associated building-blocks are parallelizable. However, parallelization is not straightforward for the Fourier decomposition proposed by Motlagh et al. \cite{motlagh2024generalized} because the sum of the Fourier components rely on a quantum signal processing protocol. This protocol does not consist of a product of commuting operators and only the control phase gates can be parallelized  using at most $n$ ancilla qubits (see corollary \ref{cor:GQSP approximate parallel}).

Trade-offs between depth and accuracy are only possible for a specific class of diagonal operators, whether unitary or non-unitary, namely the diagonal operators depending on continuously-differentiable functions such as $\hat{U}=\sum_{x=0}^{N-1} e^{if(x)}\ket{x}\bra{x}$ or $\hat{D}=\sum_{x=0}^{N-1} f(x/N)\ket{x}\bra{x}$. The bound on the error given in the approximation Theorem \ref{thm: Approximation theorem} depends directly on the maximum of the derivative of the associated function. When the derivative of $f$ is unbounded, or if it scales with the number of qubits, the error is not guarranted to decrease as the depth increases. Numerical evidence demonstrates that the trade-off remain possible for square-root functions and some bounded piecewise differentiable functions \cite{PhysRevA.109.042401}.  Approximating diagonal operators with Fourier series \cite{motlagh2024generalized} is possible and may converge exponentially fast, but the convergence criterion is stronger than the Walsh-Hadamard condition on the derivative of $f$.

The quantum state preparation Theorem \ref{thm: qsp} provides an efficient and adjustable-depth method to prepare quantum states associated with continuously-differentiable functions. The provided scalings are upper bounds on the computational resources needed to reach a given accuracy. It is often possible to construct shallower quantum circuits by considering sparse Walsh-Hadamard series in which only the dominant terms of the series are implemented. In particular, compared to the Walsh Series Loader presented in  \cite{PhysRevA.109.042401}, Theorem \ref{thm: qsp} improves the quantum-state-preparation probability of success from $\Theta(\epsilon)$ to $\Theta(1)$ at the cost of implementing two controlled diagonal unitaries instead of one. Preparing sparse quantum states, where only $s$ components are non-zero, can be achieved by applying a non-unitary diagonal operator to the uniform superposition state $\ket{s}=(1/N)\sum_{x=0}^N\ket{x}$. The computational costs associated with the sequential sparse methods are summarized in Table \ref{Table : sparse case}. These methods may be compared to other sparse quantum state preparation algorithms that are not based on diagonal operators \cite{feniou2024sparse,de2022double}. In particular, the sparse sequential methods presented here do not exploit the double-sparsity property of some quantum states, as detailed in \cite{de2022double}.

Additionally, the 'dilation' method presented in FIG. 1 of \cite{PhysRevA.106.022414} for implementing non-unitary diagonal operators is very similar to the one presented in FIG. \ref{quantum circuit scheme for non-unitary diagonal}. Their "dilation" method is identical to a block-encoding method in which one additional qubit is used. A diagonal operator with eigenvalues $\sigma_{ii}\in \mathbb{C}$ can be implemented with two diagonal unitaries denoted by $\hat{\Sigma}_{\pm}$ in Eq. 2 of  \cite{PhysRevA.106.022414}. These diagonal unitaries correspond to the $e^{\pm i \arcsin(\hat{D}/d_{\max})}$ of FIG. \ref{quantum circuit scheme for non-unitary diagonal}: $\hat{\Sigma}_{ii \pm}=\sigma_{ii}\pm i \sqrt{1-|\sigma_{ii}|^2}(\sigma_{ii}/|\sigma_{ii}|)=(\sigma_{ii}/|\sigma_{ii}|)e^{\pm i \arccos(|\sigma_{ii}|)}$ with $|\sigma_{ii}|\leq 1$. Two differences arise from the conventions. First, we separate the modulus part and the phase part of the complex eigenvalues: any non-unitary diagonal operator with complex eigenvalues can be written as a product of a diagonal unitary (encoding the phases) with a non-unitary diagonal operator (encoding the modulus). This latter operator is then block-encoded using $e^{\pm i \arcsin(|\hat{D}|/d_{\max})}$. Instead of implementing two diagonal operators, one may implement a single block-encoding given by the controlled diagonal unitaries  $e^{i (\arg(\hat{D}) \pm \arcsin(|\hat{D}|/d_{\max}))}$, where $\arg(\hat{D})$ is the diagonal operator encoding the argument of the eigenvalues of $\hat{D}$. The second difference is that we chose the success condition to correspond to the ancilla state $\ket{1}$, whereas in \cite{PhysRevA.106.022414} it corresponds to $\ket{0}$. A bit-flip gate $\hat{X}$ can be used to switch between these conventions.

\section{Conclusion}

In conclusion, we have presented a versatile framework to implement efficiently unitary and non-unitary diagonal operators with adjustable-depth circuits. We have demonstrated that ancilla qubits can be used to parallelize the implementation of diagonal operators, making it possible to customize the depth of quantum circuits with respect to the width. This framework is based on two key tunable features: parallelization and approximation. We have established the theoretical foundations through several theorems, detailling how the full parallelization and the adjustable-depth techniques can be applied to any diagonal unitary exactly decomposed into simpler operators.

We have also extended these new methods in two different ways. First, we have shown how to implement in an approximate manner unitary operators which depend on continuously differentiable functions. Second, we have adapted the above framework to non-unitary operators.

Applications as diverse as quantum state preparation, Hamiltonian simulations and numerical resolution of partial differential equations require implementing non-unitary diagonal operators efficiently. For example, non-unitary operators appear in PDEs generated by stochastic processes \cite{oksendal2013stochastic} and are thus essential in many contexts ranging from physics, chemistry and biology \cite{van2004stochastic, freund2000stochastic} to image processing \cite{gonzalez2009digital} and finance \cite{BLACK1976167,pironneau2009partial}.  However, implementing efficiently non-unitary operators is a significant challenge in the context of circuit-based quantum computing. Our approach involves the block-encoding of non-unitary diagonal operators into larger unitary ones, retaining the adjustable properties of the quantum circuits. We have presented strategies to successfully implement these block-encoded operators, such as non-destructive repeat-until-success schemes and amplitude amplification techniques, which allow for depth adjustment based on success probability.

As an illustration, we have presented the practical implementations of our methods to efficient quantum state preparation and high-fidelity simulation of the diffusion equation associated to Brownian motion. By preparing an initial Gaussian distribution and simulating its evolution, we demonstrate the capability of our framework to implement non-unitary operations with space-time-accuracy trade-offs.

Overall, this work provides a comprehensive set of tools for implementing both unitary and non-unitary diagonal operators in quantum circuits, with the flexibility to adjust depth as needed. This paves the way for more efficient and adaptable quantum algorithms, broadening the horizons for quantum computing applications. Additional work could consider to use these quantum circuits to solve more complex multiscale-multidimensional partial differential equations modeling physics, chemistry of finance phenomena or could consider that approximate methods may be generalizable to diagonal operators depending on bounded, continuous functions even when the functions are not differentiable.

\section*{Acknowledgements}

The authors would like to thank G.Di Molfetta, B.Claudon and C.Feniou for their useful feedbacks on the form and content of this manuscript. The quantum circuit diagrams were generated with the quantikz package \cite{kay2018tutorial}. U.Nzongani acknowledges support from the PEPR EPiQ ANR-22-PETQ-0007, by the ANR
JCJC DisQC ANR-22-CE47-0002-01. 

Code files for unitary diagonal operators, non-unitary diagonal operators, and quantum state preparation using the Walsh-Hadamard decomposition are available at \cite{github_ugo}.

\bibliography{maintext.bib}

\providecommand{\noopsort}[1]{}\providecommand{\singleletter}[1]{#1}%
\begin{thebibliography}{10}

\bibitem{zalka1998simulating}
Christof Zalka.
\newblock Simulating quantum systems on a quantum computer.
\newblock {\em Proceedings of the Royal Society of London. Series A:
  Mathematical, Physical and Engineering Sciences}, 454(1969):313--322, 1998.

\bibitem{wiesner1996simulations}
Stephen Wiesner.
\newblock Simulations of many-body quantum systems by a quantum computer.
\newblock {\em arXiv preprint quant-ph/9603028}, 1996.

\bibitem{kassal2008polynomial}
Ivan Kassal, Stephen~P Jordan, Peter~J Love, Masoud Mohseni, and Al{\'a}n
  Aspuru-Guzik.
\newblock Polynomial-time quantum algorithm for the simulation of chemical
  dynamics.
\newblock {\em Proceedings of the National Academy of Sciences},
  105(48):18681--18686, 2008.

\bibitem{jordan2012quantum}
Stephen~P Jordan, Keith~SM Lee, and John Preskill.
\newblock Quantum algorithms for quantum field theories.
\newblock {\em Science}, 336(6085):1130--1133, 2012.

\bibitem{budinski2021quantum}
Ljubomir Budinski.
\newblock Quantum algorithm for the advection--diffusion equation simulated
  with the lattice boltzmann method.
\newblock {\em Quantum Information Processing}, 20(2):57, 2021.

\bibitem{linden2022quantum}
Noah Linden, Ashley Montanaro, and Changpeng Shao.
\newblock Quantum vs. classical algorithms for solving the heat equation.
\newblock {\em Communications in Mathematical Physics}, 395(2):601--641, 2022.

\bibitem{seneviratne2024exact}
Avin Seneviratne, Peter~L Walters, and Fei Wang.
\newblock Exact non-markovian quantum dynamics on the nisq device using kraus
  operators.
\newblock {\em ACS omega}, 9(8):9666--9675, 2024.

\bibitem{mangin2024efficient}
Mariane Mangin-Brinet, Jing Zhang, Denis Lacroix, and Edgar Andres~Ruiz Guzman.
\newblock Efficient solution of the non-unitary time-dependent schrodinger
  equation on a quantum computer with complex absorbing potential.
\newblock {\em Quantum}, 8:1311, 2024.

\bibitem{farhi2001quantum}
Edward Farhi, Jeffrey Goldstone, Sam Gutmann, Joshua Lapan, Andrew Lundgren,
  and Daniel Preda.
\newblock A quantum adiabatic evolution algorithm applied to random instances
  of an np-complete problem.
\newblock {\em Science}, 292(5516):472--475, 2001.

\bibitem{10.1145/3569095}
Nicola Mariella and Andrea Simonetto.
\newblock A quantum algorithm for the sub-graph isomorphism problem.
\newblock {\em ACM Transactions on Quantum Computing}, 4(2), feb 2023.

\bibitem{grover2002creating}
Lov Grover and Terry Rudolph.
\newblock Creating superpositions that correspond to efficiently integrable
  probability distributions.
\newblock {\em arXiv preprint quant-ph/0208112}, 2002.

\bibitem{PhysRevA.109.042401}
Julien Zylberman and Fabrice Debbasch.
\newblock Efficient quantum state preparation with walsh series.
\newblock {\em Phys. Rev. A}, 109:042401, Apr 2024.

\bibitem{zhang2022quantum}
Xiao-Ming Zhang, Tongyang Li, and Xiao Yuan.
\newblock Quantum state preparation with optimal circuit depth: Implementations
  and applications.
\newblock {\em Physical Review Letters}, 129(23):230504, 2022.

\bibitem{barenco1995elementary}
Adriano Barenco, Charles~H Bennett, Richard Cleve, David~P DiVincenzo, Norman
  Margolus, Peter Shor, Tycho Sleator, John~A Smolin, and Harald Weinfurter.
\newblock Elementary gates for quantum computation.
\newblock {\em Physical review A}, 52(5):3457, 1995.

\bibitem{bullock2004asymptotically}
Stephen~S Bullock and Igor~L Markov.
\newblock Asymptotically optimal circuits for arbitrary n-qubit diagonal
  comutations.
\newblock {\em Quantum Information \& Computation}, 4(1):27--47, 2004.

\bibitem{welch2014efficient}
Jonathan Welch, Daniel Greenbaum, Sarah Mostame, and Al{\'a}n Aspuru-Guzik.
\newblock Efficient quantum circuits for diagonal unitaries without ancillas.
\newblock {\em New Journal of Physics}, 16(3):033040, 2014.

\bibitem{10.5555/3179320.3179326}
Jonathan Welch, Alex Bocharov, and Krysta~M. Svore.
\newblock Efficient approximation of diagonal unitaries over the clifford+t
  basis.
\newblock {\em Quantum Info. Comput.}, 16(1–2):87–104, January 2016.

\bibitem{geller2021experimental}
Michael~R Geller, Zo{\"e} Holmes, Patrick~J Coles, and Andrew Sornborger.
\newblock Experimental quantum learning of a spectral decomposition.
\newblock {\em Physical Review Research}, 3(3):033200, 2021.

\bibitem{zhang2024depth}
Shihao Zhang, Kai Huang, and Lvzhou Li.
\newblock Depth-optimized quantum circuit synthesis for diagonal unitary
  operators with asymptotically optimal gate count.
\newblock {\em Physical Review A}, 109(4):042601, 2024.

\bibitem{sun2023asymptotically}
Xiaoming Sun, Guojing Tian, Shuai Yang, Pei Yuan, and Shengyu Zhang.
\newblock Asymptotically optimal circuit depth for quantum state preparation
  and general unitary synthesis.
\newblock {\em IEEE Transactions on Computer-Aided Design of Integrated
  Circuits and Systems}, 2023.

\bibitem{nzongani2024adjustable}
Ugo Nzongani and Pablo Arnault.
\newblock Adjustable-depth quantum circuit for position-dependent coin
  operators of discrete-time quantum walks.
\newblock {\em Quantum Information Processing}, 23(5):193, 2024.

\bibitem{claudon2024polylogarithmic}
Baptiste Claudon, Julien Zylberman, C{\'e}sar Feniou, Fabrice Debbasch, Alberto
  Peruzzo, and Jean-Philip Piquemal.
\newblock Polylogarithmic-depth controlled-not gates without ancilla qubits.
\newblock {\em Nature Communications}, 15(1):5886, 2024.

\bibitem{motlagh2024generalized}
Danial Motlagh and Nathan Wiebe.
\newblock Generalized quantum signal processing.
\newblock {\em PRX Quantum}, 5(2):020368, 2024.

\bibitem{childs2022quantum}
Andrew~M Childs, Jiaqi Leng, Tongyang Li, Jin-Peng Liu, and Chenyi Zhang.
\newblock Quantum simulation of real-space dynamics.
\newblock {\em Quantum}, 6:860, 2022.

\bibitem{moore2001parallel}
Cristopher Moore and Martin Nilsson.
\newblock Parallel quantum computation and quantum codes.
\newblock {\em SIAM journal on computing}, 31(3):799--815, 2001.

\bibitem{gingrich2004non}
Robert~M Gingrich and Colin~P Williams.
\newblock Non-unitary probabilistic quantum computing.
\newblock In {\em ACM International Conference Proceeding Series}, volume~58,
  pages 1--6. Citeseer, 2004.

\bibitem{terashima2005nonunitary}
Hiroaki Terashima and Masahito Ueda.
\newblock Nonunitary quantum circuit.
\newblock {\em International Journal of Quantum Information}, 3(04):633--647,
  2005.

\bibitem{Briegel_2009}
H.~J. Briegel, D.~E. Browne, W.~Dür, R.~Raussendorf, and M.~Van~den Nest.
\newblock Measurement-based quantum computation.
\newblock {\em Nature Physics}, 5(1):19–26, January 2009.

\bibitem{daskin2017ancilla}
Ammar Daskin and Sabre Kais.
\newblock An ancilla-based quantum simulation framework for non-unitary
  matrices.
\newblock {\em Quantum Information Processing}, 16:1--17, 2017.

\bibitem{PhysRevA.106.022414}
Anthony~W. Schlimgen, Kade Head-Marsden, LeeAnn~M. Sager-Smith, Prineha Narang,
  and David~A. Mazziotti.
\newblock Quantum state preparation and nonunitary evolution with diagonal
  operators.
\newblock {\em Phys. Rev. A}, 106:022414, Aug 2022.

\bibitem{10.5555/2481569.2481570}
Andrew~M. Childs and Nathan Wiebe.
\newblock Hamiltonian simulation using linear combinations of unitary
  operations.
\newblock {\em Quantum Info. Comput.}, 12(11–12):901–924, nov 2012.

\bibitem{berry2015simulating}
Dominic~W Berry, Andrew~M Childs, Richard Cleve, Robin Kothari, and Rolando~D
  Somma.
\newblock Simulating hamiltonian dynamics with a truncated taylor series.
\newblock {\em Physical review letters}, 114(9):090502, 2015.

\bibitem{childs2017quantum}
Andrew~M Childs, Robin Kothari, and Rolando~D Somma.
\newblock Quantum algorithm for systems of linear equations with exponentially
  improved dependence on precision.
\newblock {\em SIAM Journal on Computing}, 46(6):1920--1950, 2017.

\bibitem{berry2017quantum}
Dominic~W Berry, Andrew~M Childs, Aaron Ostrander, and Guoming Wang.
\newblock Quantum algorithm for linear differential equations with
  exponentially improved dependence on precision.
\newblock {\em Communications in Mathematical Physics}, 356:1057--1081, 2017.

\bibitem{gilyen2019quantum}
Andr{\'a}s Gily{\'e}n, Yuan Su, Guang~Hao Low, and Nathan Wiebe.
\newblock Quantum singular value transformation and beyond: exponential
  improvements for quantum matrix arithmetics.
\newblock In {\em Proceedings of the 51st Annual ACM SIGACT Symposium on Theory
  of Computing}, pages 193--204, 2019.

\bibitem{low2019hamiltonian}
Guang~Hao Low and Isaac~L Chuang.
\newblock Hamiltonian simulation by qubitization.
\newblock {\em Quantum}, 3:163, 2019.

\bibitem{Shor_1997}
Peter~W. Shor.
\newblock Polynomial-time algorithms for prime factorization and discrete
  logarithms on a quantum computer.
\newblock {\em SIAM Journal on Computing}, 26(5):1484–1509, October 1997.

\bibitem{grover1996fast}
Lov~K. Grover.
\newblock A fast quantum mechanical algorithm for database search, 1996.

\bibitem{giri2017review}
Pulak~Ranjan Giri and Vladimir~E Korepin.
\newblock A review on quantum search algorithms.
\newblock {\em Quantum Information Processing}, 16:1--36, 2017.

\bibitem{Brassard_2002}
Gilles Brassard, Peter Høyer, Michele Mosca, and Alain Tapp.
\newblock Quantum amplitude amplification and estimation, 2002.

\bibitem{martyn2021grand}
John~M Martyn, Zane~M Rossi, Andrew~K Tan, and Isaac~L Chuang.
\newblock Grand unification of quantum algorithms.
\newblock {\em PRX quantum}, 2(4):040203, 2021.

\bibitem{oksendal2013stochastic}
Bernt Oksendal.
\newblock {\em Stochastic differential equations: an introduction with
  applications}.
\newblock Springer Science \& Business Media, 2013.

\bibitem{van2004stochastic}
NG~Van~Kampen.
\newblock Stochastic processes in physics and chemistry. 5th, 2004.

\bibitem{freund2000stochastic}
Jan~A Freund and Thorsten P{\"o}schel.
\newblock {\em Stochastic processes in physics, chemistry, and biology}, volume
  557.
\newblock Springer Science \& Business Media, 2000.

\bibitem{gonzalez2009digital}
Rafael~C Gonzalez.
\newblock {\em Digital image processing}.
\newblock Pearson education india, 2009.

\bibitem{BLACK1976167}
Fischer Black.
\newblock The pricing of commodity contracts.
\newblock {\em Journal of Financial Economics}, 3(1):167--179, 1976.

\bibitem{pironneau2009partial}
Olivier Pironneau and Yves Achdou.
\newblock Partial differential equations for option pricing.
\newblock {\em Handbook of Numerical Analysis}, 15:369--495, 2009.

\bibitem{github_ugo}
Zylberman Julien, Nzongani Ugo, Simonetto Andrea, and Fabrice Debbasch.
\newblock
  https://github.com/ugo-nzongani/quantum-circuits-for-non-unitary-and-unitary-diagonal-operators.
\newblock 2024.

\bibitem{Craig}
Craig Gidney.
\newblock
  https://algassert.com/circuits/2015/06/22/using-quantum-gates-instead-of-ancilla-
  bits.html.
\newblock 2015.

\bibitem{gray-pulse-code-communication-1953}
F.~Gray.
\newblock Pulse code communication, 1953.
\newblock US Patent 2,632,058.

\bibitem{nielsen2002quantum}
Michael~A Nielsen and Isaac Chuang.
\newblock Quantum computation and quantum information, 2002.

\bibitem{beauchamp1984applications}
Kenneth~George Beauchamp.
\newblock {\em Applications of Walsh and related functions, with an
  introduction to sequency theory}, volume~2.
\newblock Academic press, 1984.

\bibitem{walsh1923closed}
Joseph~L Walsh.
\newblock A closed set of normal orthogonal functions.
\newblock {\em American Journal of Mathematics}, 45(1):5--24, 1923.

\bibitem{Cruz_2019}
Diogo Cruz, Romain Fournier, Fabien Gremion, Alix Jeannerot, Kenichi Komagata,
  Tara Tosic, Jarla Thiesbrummel, Chun~Lam Chan, Nicolas Macris, Marc‐André
  Dupertuis, and Clément Javerzac‐Galy.
\newblock Efficient quantum algorithms for ghz and w states, and implementation
  on the ibm quantum computer.
\newblock {\em Advanced Quantum Technologies}, 2(5–6), April 2019.

\bibitem{nzongani2023quantum}
Ugo Nzongani, Julien Zylberman, Carlo-Elia Doncecchi, Armando P{\'e}rez,
  Fabrice Debbasch, and Pablo Arnault.
\newblock Quantum circuits for discrete-time quantum walks with
  position-dependent coin operator.
\newblock {\em Quantum Information Processing}, 22(7):270, 2023.

\bibitem{bertsimas2005optimization}
D.~Bertsimas and R.~Weismantel.
\newblock {\em Optimization Over Integers}.
\newblock Dynamic Ideas, 2005.

\bibitem{yuen1975function}
Chung-Kwong Yuen.
\newblock Function approximation by walsh series.
\newblock {\em IEEE Transactions on Computers}, 100(6):590--598, 1975.

\bibitem{10.1145/2591796.2591854}
Dominic~W. Berry, Andrew~M. Childs, Richard Cleve, Robin Kothari, and
  Rolando~D. Somma.
\newblock Exponential improvement in precision for simulating sparse
  hamiltonians.
\newblock In {\em Proceedings of the Forty-Sixth Annual ACM Symposium on Theory
  of Computing}, STOC '14, page 283–292, New York, NY, USA, 2014. Association
  for Computing Machinery.

\bibitem{coppersmith2002approximate}
D.~Coppersmith.
\newblock An approximate fourier transform useful in quantum factoring, 2002.

\bibitem{10.5555/1202296}
Robert~M. Gray.
\newblock {\em Toeplitz And Circulant Matrices: A Review (Foundations and
  Trends(R) in Communications and Information Theory)}.
\newblock Now Publishers Inc., Hanover, MA, USA, 2006.

\bibitem{feniou2024sparse}
C{\'e}sar Feniou, Olivier Adjoua, Baptiste Claudon, Julien Zylberman, Emmanuel
  Giner, and Jean-Philip Piquemal.
\newblock Sparse quantum state preparation for strongly correlated systems.
\newblock {\em The Journal of Physical Chemistry Letters}, 15(11):3197--3205,
  2024.

\bibitem{de2022double}
Tiago~ML de~Veras, Leon~D da~Silva, and Adenilton~J da~Silva.
\newblock Double sparse quantum state preparation.
\newblock {\em Quantum Information Processing}, 21(6):204, 2022.

\bibitem{kay2018tutorial}
Alastair Kay.
\newblock Tutorial on the quantikz package.
\newblock {\em arXiv preprint arXiv:1809.03842}, 2018.

\end{thebibliography}
\bibliographystyle{unsrt}
\appendix

\section{Examples of quantum circuits}\label{app:circuits}

In this section, illustrative examples of quantum circuits are presented for both the sequential and Walsh-Hadamard decompositions with $n=3$ qubits. First, the non-optimized circuits are displayed, then the optimize one with a Gray code ordering to reduce the number of gates. Finally, the quantum circuit associated to the parallelization scheme, shown on Fig. \ref{fig:adjustable_depth_qc}, with $p=3$ ancilla qubits is presented. 
Consider the $3$-qubit diagonal unitary:

\begin{equation}
\hat{U}_\theta=e^{i\hat{\theta}}=
\begin{pmatrix}
e^{i\theta_0} & 0 & 0 & 0 & 0 & 0 & 0 & 0 \\
0 & e^{i\theta_1} & 0 & 0 & 0 & 0 & 0 & 0 \\
0 & 0 & e^{i\theta_2} & 0 & 0 & 0 & 0 & 0 \\
0 & 0 & 0 & e^{i\theta_3} & 0 & 0 & 0 & 0 \\
0 & 0 & 0 & 0 & e^{i\theta_4} & 0 & 0 & 0 \\
0 & 0 & 0 & 0 & 0 & e^{i\theta_5} & 0 & 0 \\
0 & 0 & 0 & 0 & 0 & 0 & e^{i\theta_6} & 0 \\
0 & 0 & 0 & 0 & 0 & 0 & 0 & e^{i\theta_7}
\end{pmatrix},
\end{equation}
where each $e^{i\theta_x}$ corresponds to the eigenvalue of basis state eigenvector $\ket{x}$. The main register of the following quantum circuits is $\ket{q=q_{n-1}\dots q_1q_0}$ where $q_i\in\{0,1\}$ and $q=\sum_{j=0}^{n-1}q_j2^j$.

\subsection{Sequential decomposition}
\label{example sequential decomposition}
The sequential decomposition of diagonal unitaries leads to quantum circuits composed of multi-controlled phase and NOT gates. Figure \ref{fig:naive_sequential} gives an example of an exact decomposition of an arbitrary $3$-qubit diagonal unitary where the multi-controlled gate are implemented in decimal ordering.

\begin{figure}[ht]
    \centering
\scalebox{0.6}{
\begin{quantikz}
  \lstick{$\ket{q_0}$} & \targ{} & \gate{\hat{P}_0} & \targ{} & \qw & \gate{\hat{P}_1} & \qw & \targ{} & \gate{\hat{P}_2} & \targ{} & \qw & \gate{\hat{P}_3} & \qw & \targ{} & \gate{\hat{P}_4} & \targ{} & \qw & \gate{\hat{P}_5} & \qw & \targ{} & \gate{\hat{P}_6} & \targ{} & \gate{\hat{P}_7} & \rstick[3]{$e^{i\hat{\theta}}\ket{q}$}\qw \\
  \lstick{$\ket{q_1}$} & \targ{} & \ctrl{-1} & \targ{} & \targ{} & \ctrl{-1} & \targ{} & \qw & \ctrl{-1} & \qw & \qw & \ctrl{-1} & \qw & \targ{} & \ctrl{-1} & \targ{} & \targ{} & \ctrl{-1} & \targ{} & \qw & \ctrl{-1} & \qw & \ctrl{-1} & \qw  \\
  \lstick{$\ket{q_2}$} & \targ{} & \ctrl{-1}& \targ{} & \targ{} & \ctrl{-1} & \targ{} & \targ{} & \ctrl{-1} & \targ{} & \targ{} & \ctrl{-1} & \targ{} & \qw & \ctrl{-1} & \qw & \qw & \ctrl{-1} & \qw & \qw & \ctrl{-1} & \qw & \ctrl{-1} & \qw 
\end{quantikz} 
}
\caption{Sequential decomposition without Gray code for $n=3$ qubits with the phase gates $\hat{P}_x=\hat{P}(\theta_x)=\begin{pmatrix}
    1 & 0 \\ 0 & e^{i\theta_x}
\end{pmatrix}$ with $x\in [0,7]$.}
\label{fig:naive_sequential}
\end{figure}

A significant number of NOT gates cancel out. One can use a Gray code \cite{gray-pulse-code-communication-1953} to maximise the number of cancellation. The Gray code, also called reflected binary code, was first introduced by Frank Gray in 1953. It is a way to enumerate a set of binary element such that the Hamming distance of two neighboring element is equal to 1, {\sl i.e.}, there is only a one bit difference between two neighboring bit strings. For instance, the 3-bits Gray code is shown on Table \ref{fig:gray}.

\begin{table}[ht]
\centering
\begin{tabular}{|c|c|c|}
\hline
\textit{Decimal with Gray code ordering} & \textit{Binary representation} \\
\hline
0 & 000 \\
\hline
1 & 001 \\
\hline
3 & 011 \\
\hline
2 & 010 \\
\hline
6 & 110 \\
\hline
7 & 111 \\
\hline
5 & 101 \\
\hline
4 & 100 \\
\hline
\end{tabular}
\caption{3-bits Gray code.}
\label{fig:gray}
\end{table}

The quantum circuit implementing an arbitrary $3$-qubit diagonal unitary using a Gray code ordering is represented Fig. \ref{fig:gray_sequential}.

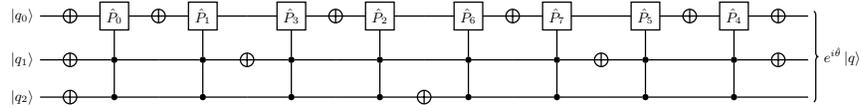
\begin{figure}[ht]
    \centering
\scalebox{0.6}{
\begin{quantikz}
  \lstick{$\ket{q_0}$} & \targ{} & \gate{\hat{P}_0} & \targ{} & \gate{\hat{P}_1} & \qw & \gate{\hat{P}_3} & \targ{} & \gate{\hat{P}_2} & \qw & \gate{\hat{P}_6} & \targ{} & \gate{\hat{P}_7} & \qw & \gate{\hat{P}_5} & \targ{} & \gate{\hat{P}_4} & \targ{} & \rstick[3]{$e^{i\hat{\theta}}\ket{q}$}\qw \\
  \lstick{$\ket{q_1}$} & \targ{} & \ctrl{-1} & \qw & \ctrl{-1} & \targ{} & \ctrl{-1} & \qw & \ctrl{-1} & \qw & \ctrl{-1} & \qw & \ctrl{-1} & \targ{} & \ctrl{-1} & \qw & \ctrl{-1} & \targ{} & \qw &  \\
  \lstick{$\ket{q_2}$} & \targ{} & \ctrl{-1} & \qw & \ctrl{-1} & \qw & \ctrl{-1} & \qw & \ctrl{-1} & \targ{} & \ctrl{-1} & \qw & \ctrl{-1} & \qw & \ctrl{-1} & \qw & \ctrl{-1} & \qw & \qw
\end{quantikz} 
}
\caption{Sequential decomposition with Gray code for $n=3$ qubits the phase gates $\hat{P}_x=\hat{P}(\theta_x)=\begin{pmatrix}
    1 & 0 \\ 0 & e^{i\theta_x}
\end{pmatrix}$ with $x\in [0,7]$.}
\label{fig:gray_sequential}
\end{figure}

Lastly, we show the quantum circuit implementing the adjustable-depth sequential decomposition of $e^{i\hat{f}}$ on Fig. \ref{fig:adjustable_sequential_example}.

\begin{figure}[ht]
\centering
\scalebox{0.6}{
\begin{quantikz}
  \lstick{$\ket{q_0}$} & \ctrl{3}\gategroup[6,steps=3,style={dashed,rounded corners,color=blue,inner xsep=2pt},background]{{$\widehat{\text{copy}}$}} & \qw & \qw & \targ{} & \gate{\hat{P}_0} & \targ{} & \gate{\hat{P}_1} & \qw & \gate{\hat{P}_3} & \targ{} & \gate{\hat{P}_2} & \targ{} & \qw\gategroup[6,steps=3,style={dashed,rounded corners,color=blue,inner xsep=2pt},background]{{$\widehat{\text{copy}}^{-1}$}}& \qw & \ctrl{3} & \rstick[3]{$e^{i\hat{f}}\ket{q}$}\qw \\
  \lstick{$\ket{q_1}$} & \qw & \ctrl{3} & \qw & \targ{} & \ctrl{-1} & \qw & \ctrl{-1} & \targ{} & \ctrl{-1} & \qw & \ctrl{-1} & \qw & \qw & \ctrl{3} & \qw & \qw \\
  \lstick{$\ket{q_2}$} & \qw & \qw & \ctrl{3} & \targ{} & \ctrl{-1} & \qw & \ctrl{-1} & \qw & \ctrl{-1} & \qw & \ctrl{-1} & \targ{} & \ctrl{3} & \qw & \qw & \qw \\
  \lstick{$\ket{q'_0=0}$} & \targ{} & \qw & \qw & \targ{} & \gate{\hat{P}_6} & \targ{} & \gate{\hat{P}_7} & \qw & \gate{\hat{P}_5} & \targ{} & \gate{\hat{P}_4} & \targ{} & \qw & \qw & \targ{} & \rstick{$\ket{q'_0=0}$}\qw \\
  \lstick{$\ket{q'_1=0}$} & \qw & \targ{} & \qw & \qw & \ctrl{-1} & \qw & \ctrl{-1} & \targ{} & \ctrl{-1} & \qw & \ctrl{-1} & \targ{} & \qw & \targ{} & \qw & \rstick{$\ket{q'_1=0}$}\qw \\
  \lstick{$\ket{q'_2=0}$} & \qw & \qw & \targ{} & \qw & \ctrl{-1} & \qw & \ctrl{-1}& \qw & \ctrl{-1}& \qw & \ctrl{-1} & \qw & \targ{} & \qw & \qw & \rstick{$\ket{q'_2=0}$}\qw
\end{quantikz} 
}
\caption{Adjustable-depth quantum circuit implementing the sequential decomposition with Gray code for $n=3$ qubits with $p=3$ ancilla qubits.}
\label{fig:adjustable_sequential_example}
\end{figure}
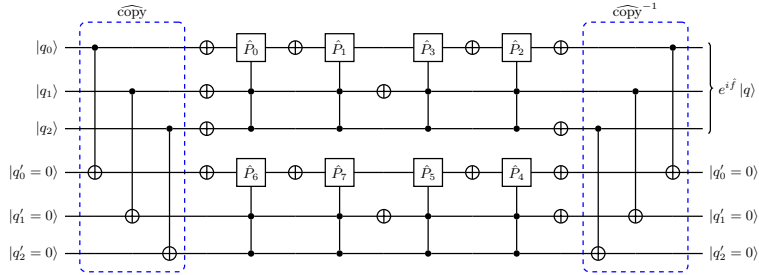

\subsection{Walsh-Hadamard decomposition}
\label{example Walsh-Hadamard decomposition}
Diagonal unitaries can be exactly implemented as a product of Walsh-Hadamard decomposition $\hat{U}=\prod \hat{W}_j$ is implemented with a quantum circuit composed of CNOT and $\hat{R}_Z$ gates. Its naïve form is shown on Fig. \ref{fig:naive_walsh}. 

\begin{figure}[ht]
    \centering
\scalebox{0.6}{
\begin{quantikz}
  \lstick{$\ket{q_0}$} & \gate{e^{ia_0}\hat{I}} & \gate{\hat{R}_1} & \qw & \ctrl{1} & \qw & \ctrl{1} & \qw & \ctrl{2} & \qw & \ctrl{2} & \qw & \qw & \qw & \ctrl{2} & \qw & \qw & \qw & \ctrl{2} & \qw & \rstick[3]{$e^{i\hat{\theta}}\ket{q}$}\qw\\
  \lstick{$\ket{q_1}$} & \qw & \qw & \gate{\hat{R}_2} & \targ{} & \gate{\hat{R}_3} & \targ{} & \qw & \qw & \qw & \qw & \ctrl{1} & \qw & \ctrl{1} & \qw & \ctrl{1} & \qw & \ctrl{1} & \qw & \qw & \qw \\
  \lstick{$\ket{q_2}$} & \qw & \qw & \qw & \qw & \qw & \qw & \gate{\hat{R}_4} & \targ{} & \gate{\hat{R}_5} & \targ{} & \targ{} & \gate{\hat{R}_6} & \targ{} & \targ{} & \targ{} & \gate{\hat{R}_7} & \targ{} & \targ{} & \qw & \qw
\end{quantikz} 
}
\caption{Walsh decomposition with no Gray code for $n=3$ qubits with $\hat{R}_x\equiv \hat{R}_Z(-2a_x)$ with $x\in [1,7]$. This circuit is shown on Fig. 5 of \cite{welch2014efficient}. The first gate corresponds to the global phase gate $e^{ia_0} \hat{I}=\begin{pmatrix}
    e^{ia_0} & 0 \\ 0 & e^{ia_0}
\end{pmatrix}$.}
\label{fig:naive_walsh}
\end{figure}
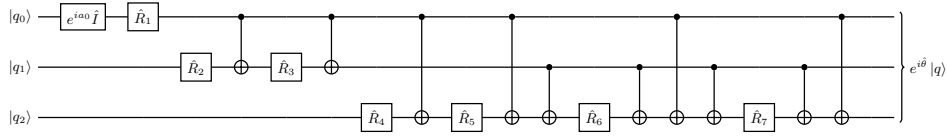

As with the sequential decomposition, the number of CNOTs can be reduced by changing the order of the operator using Gray code \cite{gray-pulse-code-communication-1953,welch2014efficient}. The obtained circuit is shown on Fig. \ref{fig:gray_walsh}.

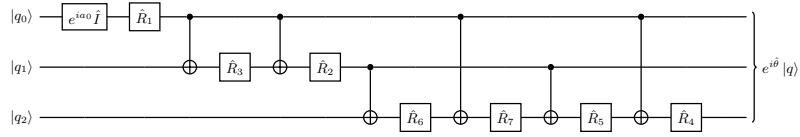
\begin{figure}[ht]
    \centering
\scalebox{0.6}{
\begin{quantikz}
  \lstick{$\ket{q_0}$} & \gate{e^{ia_0} \hat{I}} & \gate{\hat{R}_1} & \ctrl{1} & \qw & \ctrl{1} & \qw & \qw & \qw & \ctrl{2} & \qw & \qw & \qw & \ctrl{2} & \qw & \qw & \rstick[3]{$e^{i\hat{\theta}}\ket{q}$}\qw \\
  \lstick{$\ket{q_1}$} & \qw & \qw & \targ{} & \gate{\hat{R}_3} & \targ{} & \gate{\hat{R}_2} & \ctrl{1} & \qw & \qw & \qw & \ctrl{1} & \qw & \qw & \qw & \qw & \qw \\
  \lstick{$\ket{q_2}$} & \qw & \qw & \qw & \qw & \qw & \qw & \targ{} & \gate{\hat{R}_6} & \targ{} & \gate{\hat{R}_7} & \targ{} & \gate{\hat{R}_5} & \targ{} & \gate{\hat{R}_4} & \qw & \qw
\end{quantikz} 
}
\caption{Walsh decomposition with Gray code for $n=3$ qubits with $\hat{R}_x\equiv \hat{R}_Z(-2a_x)$ with $x\in [1,7]$. This circuit is shown on Fig. 5 of \cite{welch2014efficient}. The first gate corresponds to the global phase gate $e^{ia_0} \hat{I}=\begin{pmatrix}
    e^{ia_0} & 0 \\ 0 & e^{ia_0}
\end{pmatrix}$.}
\label{fig:gray_walsh}
\end{figure}

Finally, we show the quantum circuit implementing the adjustable-depth Walsh-Hadamard decomposition of $e^{i\hat{f}}$ on Fig. \ref{fig:adjustable_walsh_example}.

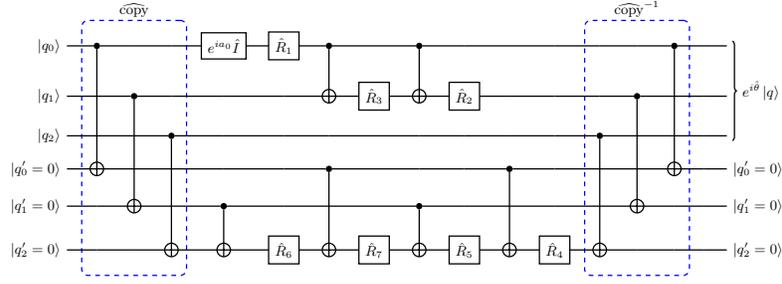
\begin{figure}[ht]
\centering
\scalebox{0.6}{
\begin{quantikz}
  \lstick{$\ket{q_0}$} & \ctrl{3}\gategroup[6,steps=3,style={dashed,rounded corners,color=blue,inner xsep=2pt},background]{{$\widehat{\text{copy}}$}} & \qw & \qw & \gate{e^{ia_0}\hat{I}} & \gate{\hat{R}_1} & \ctrl{1} & \qw & \ctrl{1} & \qw & \qw & \qw & \qw\gategroup[6,steps=3,style={dashed,rounded corners,color=blue,inner xsep=2pt},background]{{$\widehat{\text{copy}}^{-1}$}} & \qw & \ctrl{3} & \qw & \rstick[3]{$e^{i\hat{\theta}}\ket{q}$}\qw \\
  \lstick{$\ket{q_1}$} & \qw & \ctrl{3} & \qw & \qw & \qw & \targ{} & \gate{\hat{R}_3} & \targ{} & \gate{\hat{R}_2} & \qw & \qw & \qw & \ctrl{3} & \qw & \qw & \qw \\
  \lstick{$\ket{q_2}$} & \qw & \qw & \ctrl{3} & \qw & \qw & \qw & \qw & \qw & \qw & \qw & \qw & \ctrl{3} & \qw & \qw & \qw & \qw \\
  \lstick{$\ket{q'_0=0}$} & \targ{} & \qw & \qw & \qw & \qw & \ctrl{2} & \qw & \qw & \qw & \ctrl{2} & \qw & \qw & \qw & \targ{} & \qw & \rstick{$\ket{q'_0=0}$}\qw \\
  \lstick{$\ket{q'_1=0}$} & \qw & \targ{} & \qw & \ctrl{1} & \qw & \qw & \qw & \ctrl{1} & \qw & \qw & \qw & \qw & \targ{} & \qw & \qw & \rstick{$\ket{q'_1=0}$}\qw\\
  \lstick{$\ket{q'_2=0}$} & \qw & \qw & \targ{} & \targ{} & \gate{\hat{R}_6} & \targ{} & \gate{\hat{R}_7} & \targ{} & \gate{\hat{R}_5} & \targ{} & \gate{\hat{R}_4} & \targ{} & \qw & \qw & \qw & \rstick{$\ket{q'_2=0}$}\qw
\end{quantikz} 
}
\caption{Adjustable-depth quantum circuit implementing the Walsh decomposition with Gray code for $n=3$ qubits with $p=3$ ancilla qubits.}
\label{fig:adjustable_walsh_example}
\end{figure}

\section{Proof of Theorem \ref{thm: block-encoding}}
\label{proof thm block encoding}

\begin{proof}
Consider a decomposition $\tilde{U}=\prod_{j=0}^{p-1} \hat{U}_j$ where each $\hat{U}_j$ acts on $k_j\leq n$ qubits and consider a number $k+p$ ancilla qubits, with $k=\sum_{j=1}^{p-1}k_j$. The registers of ancilla qubits are denoted by $R_1,\hdots, R_{p-1}$ and each $R_j $ contains $k_j$ qubits that will serve to copy only the qubits on which $\hat{U}_j$ is acting non-trivially. The register $R_{q_A}$ contains $p-1$ qubits on which $q_A$ will be copied.
The proof of correctness of the quantum circuit Fig. \ref{Adjustable depth framework for non unitary diagonal operator} starts by considering a state $\ket{x}$ of the computational basis of the main register. For simplicity, tensor products between the qubit states and tensor products with identity operators are omitted:
\begin{equation}
\begin{split}
&\ket{x} \ket{0}^{\otimes k} \ket{0}_{q_A} \ket{0}^{\otimes (p-1)}  \\ &\xrightarrow{\hat{H}}  \ket{x} \ket{0}^{\otimes k} \frac{1}{\sqrt{2}}(\ket{0}_{q_A}+\ket{1}_{q_A}) \ket{0}^{\otimes (p-1)} \\
&\xrightarrow{\widehat{\text{copy}}_{R\rightarrow R_1\cup \hdots \cup R_{p-1}} \otimes \widehat{\text{copy}}_{q_A\rightarrow R_{q_A} }}
\ket{x} \ket{\tilde{x}_1}\hdots \ket{\tilde{x}_{p-1}}  \frac{1}{\sqrt{2}}(\ket{0}^{\otimes p}+\ket{1}^{\otimes p}) \\
&\xrightarrow{\bigotimes_{j=0}^{p-1} \hat{C}_{\bar{q}_j\rightarrow R_j}(\hat{U}_j)} \frac{1}{\sqrt{2}}(\hat{U}_0\ket{x} \hat{U}_1 \ket{\tilde{x}_1}\hdots \hat{U}_{p-1}\ket{\tilde{x}_{p-1}}\ket{0}^{\otimes p}\\&+\ket{x} \ket{\tilde{x}_1}\hdots \ket{\tilde{x}_{p-1}}\ket{1}^{\otimes p}) \\
&\xrightarrow{\bigotimes_{j=0}^{p-1} \hat{C}_{q_j\rightarrow R_j}(\hat{U}_j^{\dagger})}
 \frac{1}{\sqrt{2}}(\hat{U}_0\ket{x} \hat{U}_1 \ket{\tilde{x}_1}\hdots \hat{U}_{p-1}\ket{\tilde{x}_{p-1}}\ket{0}^{\otimes p}\\&+\hat{U}_0^{\dagger}\ket{x} \hat{U}_1^{\dagger}\ket{\tilde{x}_1}\hdots \hat{U}_{p-1}^{\dagger}\ket{\tilde{x}_{p-1}}\ket{1}^{\otimes p}) \\
&
= \frac{1}{\sqrt{2}}(e^{i\tilde{\theta}_x}\ket{x} \ket{\tilde{x}_1}\hdots\ket{\tilde{x}_{p-1}}\ket{0}^{\otimes p}+e^{-i\tilde{\theta}_x}\ket{x}  \ket{\tilde{x}_1}\hdots \ket{\tilde{x}_{p-1}}\ket{1}^{\otimes p}) \\
& \xrightarrow{\widehat{\text{copy}}_{R\rightarrow R_1\cup \hdots \cup R_{p-1}}^{-1} \otimes \widehat{\text{copy}}_{q_A\rightarrow R_{q_A}}^{-1}}  \frac{1}{\sqrt{2}}(e^{i\tilde{\theta}_x}\ket{x} \ket{0}^{\otimes k} \ket{0}_{q_A}+e^{-i\tilde{\theta}_x}\ket{x}  \ket{0}^{\otimes k}\ket{1}_{q_A})\ket{0}^{\otimes(p-1)} \\
& \xrightarrow{\hat{P}\hat{H}} (-i\cos(\theta_x)\ket{x} \ket{0}^{\otimes k} \ket{0}_{q_A}+\sin(\theta_x)\ket{x}  \ket{0}^{\otimes k}\ket{1}_{q_A})\ket{0}^{\otimes(p-1)}
\end{split}
\end{equation} 

with $\tilde{\theta}_x=\sum_{j=0}^{p-1} \theta_j(x)$ and $\hat{U}_j=\sum_{x=0}^{N-1}e^{i\theta_j(x)}\ket{x}\bra{x}$. The operator $\hat{C}_{\bar{q}_j\rightarrow R_j}(\hat{U}_j)$ is defined as the unitary $\hat{U}_j$ applied on the register $R_j$ and anti-controlled by the $j$-th qubit of $R_{q_A}$ ($q_0$ denotes $q_A$) and the operator $\hat{C}_{q_j\rightarrow R_j}(\hat{U}_j^{\dagger})$ is defined as the unitary $\hat{U}_j^\dagger$ applied on the register $R_j$ and controlled by the $j$-th qubit of $R_{q_A}$. The unitary $\widehat{\text{copy}}_{R\rightarrow R_1\cup \hdots \cup R_{p-1}}$ corresponds to the copy of main register $R$ into the copy registers $R_1,\hdots, R_{p-1}$ and $\widehat{\text{copy}}_{q_A\rightarrow R_{q_A}}$ corresponds to the copy of $q_A$ into the register $R_{q_A}$.

The unitary $\hat{U}_D$ corresponding to the previous operations ({\sl i.e.} Fig. \ref{Adjustable depth framework for non unitary diagonal operator}) is a $(\alpha d_{\max},k+p,\epsilon)$-block encoding of $\hat{D}=\sum_{x=0}^N d_x\ket{x}\bra{x}$:
\begin{equation}
\begin{split}
&\|\frac{1}{\alpha d_{\max}}\hat{D}-(\bra{0}^{\otimes k}\bra{1}_{q_A}\bra{0}^{\otimes(p-1)}\otimes \hat{I}_R)\hat{U}_D(\ket{0}^{\otimes k}\ket{1}_{q_A}\ket{0}^{\otimes(p-1)}\otimes \hat{I}_R) \|_2 \\
&=\|\sum_{x=0}^{N-1}\frac{d_x}{\alpha d_{\max}}\ket{x}\bra{x}-\sum_{x=0}^{N-1}\sin(\tilde{\theta}_x)\ket{x}\bra{x} \|_2 \\
&=\max_x \| \frac{d_x}{\alpha d_{\max}}-\sin(\tilde{\theta}_x) \| \leq \epsilon
\end{split}
\end{equation}
The inequality comes from the hypothesis that $\tilde{U}=\sum_{x=0}^N e^{i\tilde{\theta}_x}\ket{x}\bra{x}$ is an $\epsilon$ approximation of $\hat{U}=e^{i\arcsin(\hat{D}/(\alpha d_{\max}))}$ in spectral norm.
\end{proof}

\section{Scalings of the different quantum circuits}\label{sec:approximation_methods}

First, the results for the exact quantum circuit synthesis of unitary and non-unitary diagonal operators are presented in section \ref{sec:exact methods}. Then, these exact methods are combined with the approximation Theorem \ref{thm: Approximation theorem} to proove the scalings of the approximative methods in section \ref{sec:approximate methods}. A special subsection has been attributed to the fundamentally different Fourier GQSP methods  \ref{sec: Fourier GQSP methods}. The different methods for sparse diagonal operators are presented in section \ref{sec: sparse diagonal operator}.

\subsection{Exact methods}
\label{sec:exact methods}
\subsubsection{Diagonal unitaries}
The following Lemmas and corollaries summarize the complexity of implementing exactly any $n$-qubit diagonal unitary.

\begin{lemma}[Exact sequential decomposition without ancilla]
\label{Exact_sequential_decomposition_without_ancilla}
    Any $n$-qubit diagonal unitary is implementable through its sequential decomposition with a quantum circuit of depth $\mathcal{O}(n2^n)$ and size $\mathcal{O}(n2^n)$ without using ancilla qubits.
\end{lemma}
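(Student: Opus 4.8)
The plan is to construct the circuit directly from the sequential decomposition $\hat{U}=\prod_{j=0}^{N-1}\hat{U}_j$ with $N=2^n$, and to bound the total cost by summing the cost of each factor. Since no ancilla qubits are available, I would simply concatenate the $N$ circuits implementing the individual $\hat{U}_j$ one after another, so that both the total size and the total depth are the sums of the sizes and depths of the building blocks. The $\hat{U}_j$ commute, so any ordering is valid; the asymptotic bound is insensitive to this choice, although a Gray code ordering would cancel most of the $\hat{X}$-Pauli gates in practice.

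The key step is to bound the resources for a single $\hat{U}_j$. By Eq.~\eqref{sequential decomposition}, each $\hat{U}_j$ equals an $(n-1)$-controlled phase gate $\Lambda_{\{0,\dots,n-2\}}(\hat{P}(\theta_j))$ conjugated by a layer of at most $2k_j\le 2n$ single-qubit $\hat{X}$-Pauli gates, where $k_j$ is the number of ones in the binary expansion of $j$. The multi-controlled phase gate dominates the cost: invoking the ancilla-free construction of \cite{Craig}, it is implementable with depth $\mathcal{O}(n)$ and size $\mathcal{O}(n)$. The two $\hat{X}$-layers add at most $2k_j=\mathcal{O}(n)$ gates to the size, and since each layer acts on distinct qubits it contributes only constant depth. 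Hence each $\hat{U}_j$ has both depth and size $\mathcal{O}(n)$.

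Summing over the $N=2^n$ factors, the total size is $\sum_{j=0}^{N-1}\mathcal{O}(n)=\mathcal{O}(n2^n)$ and, because the factors are implemented sequentially, the total depth is likewise $\sum_{j=0}^{N-1}\mathcal{O}(n)=\mathcal{O}(n2^n)$. No ancilla qubits are used at any stage, which establishes the claim. This argument is essentially a counting exercise, so there is no genuine obstacle; the only point requiring care is to invoke the \emph{ancilla-free} cost of the multi-controlled gate rather than the cheaper ancilla-assisted or approximate variants mentioned in the preliminaries, since using an ancilla would violate the hypothesis of the lemma.
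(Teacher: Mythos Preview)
Your proof is correct and follows essentially the same approach as the paper: decompose $\hat{U}$ into $2^n$ factors $\hat{U}_j$, implement each $(n-1)$-controlled phase gate via the ancilla-free linear-depth construction of \cite{Craig}, and sum. The only cosmetic difference is that the paper explicitly invokes Gray code ordering to reduce the $\hat{X}$-Pauli overhead, whereas you correctly observe this is irrelevant for the asymptotic bound.
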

\begin{proof}
    The quantum circuit is composed of $2^n$ operator $\hat{U}_j$ defined in Eq. \eqref{sequential decomposition}. Thanks to the fact that the unitaries $\hat{U}_j$ commute with each other, one can choose the order of implementation of the $\hat{U}_j$ which cancels a maximum number of $\hat{X}$ gates. This is achieved using a Gray code \cite{gray-pulse-code-communication-1953,nielsen2002quantum, beauchamp1984applications}: between two following $\hat{U}_j$, $\hat{U}_{j'}$, only one bit changes in the binary decomposition of $j$ and $j'$. This leads to an ordering where only one $\hat{X}$ gate remains between two consecutive $\Lambda_{\{0,...,n-2\}}(\hat{P}(\theta_j))$, $\Lambda_{\{0,...,n-2\}}(\hat{P}(\theta_{j'})$. Then, each of the $\Lambda_{\{0,...,n-2\}}(\hat{P}(\theta_j))$ is implemented using the scheme of \cite{Craig} which is an exact scheme with size and depth linear in $n$ without using ancilla qubits.
\end{proof}

\begin{lemma}[Exact sequential decomposition with one ancilla]
\label{lemma: Exact sequential decomposition with one ancilla}
    Any $n$-qubit diagonal unitary is implementable through its sequential decomposition with a quantum circuit of depth $\mathcal{O}(\log(n)^3 2^{n})$ and size $\mathcal{O}(n\log(n)^4 2^{n})$ using one ancilla qubit.
\end{lemma}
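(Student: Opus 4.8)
The plan is to mirror the proof of Lemma \ref{Exact_sequential_decomposition_without_ancilla}, changing only the subroutine used to realize each multi-controlled phase gate. As before, I would decompose the target $n$-qubit diagonal unitary into its $2^n$ sequential factors $\hat{U}_j$ of Eq. \eqref{sequential decomposition} and order them according to a Gray code \cite{gray-pulse-code-communication-1953}, so that the binary labels of consecutive factors differ in exactly one bit. After cancellation, this leaves a single $\hat{X}$-Pauli gate between two consecutive multi-controlled gates $\Lambda_{\{0,\dots,n-2\}}(\hat{P}(\theta_j))$, contributing an overall $\mathcal{O}(2^n)$ to both depth and size, which is subleading.

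The substantive change is to implement each $(n-1)$-controlled phase gate with the ancilla-based construction of Claudon et al. (Corollary 1 in \cite{claudon2024polylogarithmic}) rather than the ancilla-free linear-depth scheme of \cite{Craig} used in the previous lemma. That construction implements each multi-controlled gate \emph{exactly} with depth $\mathcal{O}(\log(n)^3)$ and size $\mathcal{O}(n\log(n)^4)$ using a single ancilla qubit. Summing over the $2^n$ factors, together with the subleading $\hat{X}$-gate contribution, yields a total depth $\mathcal{O}(\log(n)^3\, 2^n)$ and total size $\mathcal{O}(n\log(n)^4\, 2^n)$, as claimed.

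The one point that requires care—and the only real obstacle—is arguing that a single ancilla qubit suffices for the whole circuit, rather than one ancilla per factor. This follows because the ancilla in the Claudon et al. construction is either zeroed or borrowed: in both cases it is guaranteed to be restored to its initial state at the end of each multi-controlled gate. Consequently the same physical ancilla can be reused sequentially across all $2^n$ multi-controlled gates without accumulating residual entanglement, keeping the total width at one ancilla qubit and completing the proof.
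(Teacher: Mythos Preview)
Your proposal is correct and follows essentially the same approach as the paper: Gray-code ordering of the $2^n$ sequential factors followed by implementing each $(n-1)$-controlled phase gate via Corollary~1 of \cite{claudon2024polylogarithmic}, then summing. Your explicit remark that the single ancilla can be reused across all factors because the Claudon et al.\ construction restores it is a welcome clarification that the paper's proof leaves implicit.
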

\begin{proof}
Similarly than previously, the $2^n$ sequential operator $\hat{U}_j$ are implemented in a gray-code order. Then, each of them is implemented with scheme presented in \cite{claudon2024polylogarithmic} which uses one ancilla qubit to achieve a polylogarithmic depth $\mathcal{O}(\log(n)^3)$ to implement any $n$-controlled operations at the cost of a size $\mathcal{O}(n\log(n)^4)$ (see Corollary 1 in \cite{claudon2024polylogarithmic}). Therefore, the overall depth for an exact implementation of a given arbitrary diagonal unitary is $\mathcal{O}(\log(n)^3 2^n)$ and size $\mathcal{O}(n\log(n)^4 2^n)$ using one ancilla qubit.
\end{proof}

\begin{lemma}[Exact Walsh-Hadamard decomposition without ancilla (Theorem 1.3 in \cite{bullock2004asymptotically})]
\label{lemma: Exact walsh-hadamard decomposition}
    Any $n$-qubit diagonal unitary is implementable through its Walsh-Hadamard decomposition with a quantum circuit of depth $\mathcal{O}(2^n)$ and size $\mathcal{O}(2^n)$ without using ancilla qubits.
\end{lemma}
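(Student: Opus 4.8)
The plan is to start from the Walsh-Hadamard decomposition $\hat U=\prod_{j=0}^{N-1}\hat W_j$ of \eqref{walsh decomposition} and to count the primitive gates after cancelling CNOTs through a Gray code ordering. Recall from \eqref{eq: exp of walsh operator} that a single factor $\hat W_j=e^{ia_j\hat w_j}$ is realized by conjugating one $\hat R_Z(a_j)$ rotation with two ``stairs'' of $k_j$ CNOT gates each, where $k_j$ is the number of ones in the binary expansion of $j$. Implementing the $N=2^n$ factors independently would therefore cost $\sum_{j=0}^{N-1}(1+2k_j)$ gates; since each bit position is set in exactly half of the integers $0,\dots,N-1$, one has $\sum_{j=0}^{N-1}k_j=n2^{n-1}$, giving a naive size $\mathcal O(n2^n)$. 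The whole point of the proof is to remove this factor $n$.

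First I would fix an ordering of the factors $\hat W_j$ following a Gray code, so that consecutive orders $j,j'$ differ in a single bit of their binary expansions. Because all the $\hat W_j$ commute (they are diagonal), such a reordering leaves $\hat U$ unchanged. The key step is then to observe that, when two consecutive factors are concatenated, the inner CNOT stair of one and the outer CNOT stair of the other overlap in all but a bounded number of CNOTs; the overlapping CNOTs are identical and cancel pairwise, so that after the telescoping only a single CNOT survives between each pair of consecutive $\hat R_Z$ rotations. Carrying this cancellation through the entire Gray code traversal leaves an alternating sequence of $\mathcal O(2^n)$ single-qubit rotations and single CNOTs, matching the asymptotically optimal count $2^{n+1}-3$ of \cite{bullock2004asymptotically}, and hence yielding a size of $\mathcal O(2^n)$.

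For the depth, I would simply note that the reduced circuit is a sequence of $\mathcal O(2^n)$ one- and two-qubit gates acting on overlapping qubits, so that its depth is bounded by its size and is therefore also $\mathcal O(2^n)$. Since no ancilla qubit is introduced at any stage, this establishes the statement.

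The main obstacle is the precise bookkeeping of the CNOT cancellation under the Gray code ordering. Because the target qubit of the CNOT stair in \eqref{eq: exp of walsh operator} is the highest active bit $p-1$, which itself changes from one factor to the next, the telescoping argument must track how stairs with different targets align; verifying that exactly one CNOT survives per transition --- rather than an accumulating residue --- is the technical heart, and it is precisely the content of the Bullock--Markov construction that we invoke.
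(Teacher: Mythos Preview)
The paper does not supply its own proof of this lemma; it simply cites Theorem~1.3 of Bullock--Markov~\cite{bullock2004asymptotically}. Your sketch is precisely the argument underlying that construction, and it is fully consistent with the paper's discussion of the Walsh--Hadamard decomposition in Section~\ref{sec:preli} and the Gray-code-ordered example in Appendix~\ref{example Walsh-Hadamard decomposition} (Fig.~\ref{fig:gray_walsh}), where one sees explicitly that a single CNOT remains between each pair of consecutive $\hat R_Z$ rotations, giving the $2^{n+1}-3$ gate count.

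Regarding the obstacle you flag: it is real but easily overcome. CNOTs sharing the same target qubit commute (they each act as $x_t\mapsto x_t\oplus x_c$ for different controls $c$), so even when the highest active bit $p-1$ stays fixed while a lower control bit flips, the inner stair of $\hat W_j$ and the outer stair of $\hat W_{j'}$ can be reordered so that all but one CNOT cancel. When the highest bit itself changes between $j$ and $j'$, one of the two adjacent factors has an empty stair on that side (a single-bit Walsh operator has no CNOTs), so again only one CNOT survives. This is the bookkeeping that yields exactly $2^n-2$ CNOTs and $2^n-1$ rotations, and hence the $\mathcal{O}(2^n)$ size and depth.
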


\begin{corollary}[Fully parallelized exact sequential decomposition]
\label{lemma: fully parallelized sequential decomposition with one ancilla}
    Any $n$-qubit diagonal unitary is implementable through its sequential decomposition with a fully parallelized quantum circuit of depth $\mathcal{O}(n)$ and size $\mathcal{O}(n2^n)$ using $\mathcal{O}(n2^n)$ ancilla qubits.
\end{corollary}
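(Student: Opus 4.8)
The plan is to obtain this corollary by feeding the exact sequential decomposition directly into the full parallelization Theorem~\ref{thm : full parallelization}. First I would write $\hat{U}=\prod_{j=0}^{p-1}\hat{U}_j$ with $p=2^n$, where each $\hat{U}_j$ is the single-eigenvalue operator of Eq.~\eqref{sequential decomposition}. As established in the proof of Lemma~\ref{Exact_sequential_decomposition_without_ancilla}, each $\hat{U}_j$ is an $(n-1)$-controlled phase gate dressed by at most $2n$ $\hat{X}$-Pauli gates, and the ancilla-free scheme of \cite{Craig} realises it with depth $d_j=\mathcal{O}(n)$ and size $s_j=\mathcal{O}(n)$. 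Crucially, each such operator acts non-trivially on all $n$ qubits, so its support has size $k_j=n$.

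Next I would apply Theorem~\ref{thm : full parallelization} with these parameters. The number of ancilla qubits it requires is $k=\sum_{j=1}^{p-1}k_j=n(2^n-1)=\mathcal{O}(n2^n)$, which matches the claimed width. The depth bound of the theorem is $\max_{j}(d_j)+2\lceil\log_2(p)\rceil$; substituting $\max_{j} d_j=\mathcal{O}(n)$ and $2\lceil\log_2(2^n)\rceil=2n$ gives $\mathcal{O}(n)$. For the size, the theorem yields $s+2k$ with $s=\sum_{j=0}^{p-1}s_j=2^n\cdot\mathcal{O}(n)=\mathcal{O}(n2^n)$ and $2k=\mathcal{O}(n2^n)$, so the total size is $\mathcal{O}(n2^n)$, as claimed. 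All three scalings then follow by arithmetic.

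Because the argument is essentially a substitution, I do not expect a genuine obstacle; the one point to handle carefully is the bookkeeping of the support size $k_j$. One might hope to parallelize more cheaply, but since every sequential factor touches all $n$ qubits through its control structure, a full copy of the $n$-qubit register must be made for each of the $p-1$ parallel factors, which is exactly what forces the $\mathcal{O}(n2^n)$ ancilla count and rules out any saving below it. I would also remark that the Gray-code ordering exploited in Lemma~\ref{Exact_sequential_decomposition_without_ancilla} plays no role here: once the factors act on disjoint copies of the register they are implemented independently, each carrying its own $\hat{X}$ gates, and the logarithmic-depth copy and uncopy steps supplied by Theorem~\ref{thm : full parallelization}, themselves of depth $\mathcal{O}(n)$ for $p=2^n$, absorb the only remaining coordination cost.
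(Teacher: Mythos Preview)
Your argument is correct and follows exactly the approach of the paper, which simply invokes Lemma~\ref{Exact_sequential_decomposition_without_ancilla} (or Lemma~\ref{lemma: Exact sequential decomposition with one ancilla}) together with the full parallelization Theorem~\ref{thm : full parallelization}. You have merely made the substitution of parameters explicit, which is fine.
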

\begin{proof}
Lemma \ref{Exact_sequential_decomposition_without_ancilla} or lemma \ref{lemma: Exact sequential decomposition with one ancilla} associated to the full parallelization Theorem \ref{thm : full parallelization}.
\end{proof}

\begin{corollary}[Fully parallelized exact Walsh-Hadamard decomposition]
\label{lemma: fully parallelized walsh-hadamard decomposition}
    Any $n$-qubit diagonal unitary is implementable through its Walsh-Hadamard decomposition with a fully parallelized quantum circuit of depth $\mathcal{O}(n)$ and size $\mathcal{O}(n2^n)$ without using $\mathcal{O}(n2^n)$ ancilla qubits.
\end{corollary}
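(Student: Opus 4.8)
The plan is to combine the exact Walsh-Hadamard synthesis of Lemma \ref{lemma: Exact walsh-hadamard decomposition} with the full parallelization Theorem \ref{thm : full parallelization}, in direct analogy with the proof of Corollary \ref{lemma: fully parallelized sequential decomposition with one ancilla}. I would start from the decomposition $\hat{U} = \prod_{j=0}^{N-1} \hat{W}_j$ of \eqref{walsh decomposition} with $N = 2^n$, so that there are $p = 2^n$ primitive factors; each $\hat{W}_j$ acts non-trivially only on the $k_j$ qubits indexed by the set bits of $j$ and is implementable with $2k_j$ CNOTs and a single $\hat{R}_Z$ gate as in \eqref{eq: exp of walsh operator}. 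Lemma \ref{lemma: Exact walsh-hadamard decomposition} guarantees that the unparallelized circuit has size $s = \mathcal{O}(2^n)$, which is the only input needed from the exact construction.

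Next I would apply Theorem \ref{thm : full parallelization} verbatim and evaluate the three resulting quantities. The number of ancilla qubits is $k = \sum_{j=1}^{2^n-1} k_j$; since each of the $n$ bit positions is set in exactly $2^{n-1}$ of the integers in $\{0,\dots,2^n-1\}$, this sum equals $n 2^{n-1} = \mathcal{O}(n 2^n)$, which is the claimed ancilla count. The size is then $s + 2k = \mathcal{O}(2^n) + \mathcal{O}(n 2^n) = \mathcal{O}(n 2^n)$, matching the statement.

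For the depth, Theorem \ref{thm : full parallelization} gives the bound $\max_j(d_j) + 2\lceil \log_2 p \rceil$. Each factor $\hat{W}_j$ has depth $d_j = \mathcal{O}(k_j) = \mathcal{O}(n)$, since its circuit is at most two CNOT staircases of length $k_j \le n$ sharing a common target qubit together with one rotation, and $2\lceil \log_2 p \rceil = 2\lceil \log_2 2^n \rceil = 2n$. Hence the total depth is $\mathcal{O}(n) + 2n = \mathcal{O}(n)$, completing the three estimates.

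The only point requiring genuine care — and the main thing I would double-check — is the copy overhead folded into the $2\lceil \log_2 p \rceil$ term. One must verify that no single qubit is copied more than $\mathcal{O}(2^n)$ times: qubit $i$ is copied once for each $j$ with $j_i = 1$, i.e. $2^{n-1}-1$ times, so by Lemma \ref{copy lemma} the copy and uncopy layers contribute only depth $\lceil \log_2 2^{n-1}\rceil = \mathcal{O}(n)$ each. Once this is confirmed, the logarithmic copy depth is genuinely comparable to (rather than dominating) the $\mathcal{O}(n)$ factor depth, and the advertised $\mathcal{O}(n)$ depth bound holds.
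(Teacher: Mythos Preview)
Your proposal is correct and follows exactly the same route as the paper, which simply cites Lemma~\ref{lemma: Exact walsh-hadamard decomposition} together with Theorem~\ref{thm : full parallelization}; you have merely filled in the arithmetic (the identities $\sum_{j} k_j = n\,2^{n-1}$, $\max_j d_j = \mathcal{O}(n)$, and $2\lceil\log_2 p\rceil = 2n$) that the paper leaves implicit. One minor remark: the ``$s$'' appearing in Theorem~\ref{thm : full parallelization} is $\sum_j s_j$ for the individual factors, which here is $\sum_j(2k_j+1)=\mathcal{O}(n2^n)$ rather than the Gray-code-optimized $\mathcal{O}(2^n)$ of Lemma~\ref{lemma: Exact walsh-hadamard decomposition}, but since the $2k$ term already contributes $\mathcal{O}(n2^n)$ your final size bound is unaffected.
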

\begin{proof}
Lemma \ref{lemma: Exact walsh-hadamard decomposition} with the full parallelization Theorem \ref{thm : full parallelization}.
\end{proof}

\begin{corollary}[Partially parallelized exact sequential decomposition using linear depth decomposition]
\label{lemma: partially parallelized sequential decomposition}
    Any $n$-qubit diagonal unitary is implementable through its sequential decomposition with a partially parallelized quantum circuit of depth $\mathcal{O}(n^2 2^n/m+\log(m/n))$ and size $\mathcal{O}(n2^n+m)$ using $ m \in [n, \mathcal{O}(n2^n)]$ ancilla qubits. 
\end{corollary}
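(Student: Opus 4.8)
The plan is to combine the exact, ancilla-free sequential decomposition of Lemma \ref{Exact_sequential_decomposition_without_ancilla} with the adjustable-depth Theorem \ref{thm : adjustable-depth with ancilla}, treating the former as the per-block data $(s_j,d_j)$ that is fed into the latter. First I would recall that the sequential decomposition writes $\hat{U}=\prod_{j=0}^{p-1}\hat{U}_j$ with $p=2^n$ factors, each $\hat{U}_j$ of the form \eqref{sequential decomposition}: at most $2k_j\le 2n$ Pauli-$\hat{X}$ gates conjugating a single $(n-1)$-controlled phase gate. Implementing every such multi-controlled phase gate with the ancilla-free linear-depth scheme of \cite{Craig} gives each block a size $s_j=\mathcal{O}(n)$ and depth $d_j=\mathcal{O}(n)$, hence a total uncompiled size $s=\sum_{j=0}^{p-1}s_j=\mathcal{O}(n2^n)$ and a common bound $\max_j d_j=\mathcal{O}(n)$.

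Next I would feed these blocks into Theorem \ref{thm : adjustable-depth with ancilla}. Since each $\hat{U}_j$ acts non-trivially on all $n$ qubits, this is exactly the worst-case branch of that theorem: each parallel group needs an $n$-qubit support copy, so with $m\ge n$ ancilla qubits one can run $m'=\lceil m/n\rceil$ groups simultaneously, each group containing at most $\lceil p/m'\rceil$ operators. The theorem then yields size $s+2m=\mathcal{O}(n2^n+m)$ and depth
\[
d \;\le\; \Big\lceil \tfrac{p}{m'}\Big\rceil \max_j d_j + 2\lceil \log_2 m'\rceil .
\]

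To finish I would substitute $p=2^n$, $\max_j d_j=\mathcal{O}(n)$ and $m'=\lceil m/n\rceil$. The dominant term becomes $\lceil 2^n/\lceil m/n\rceil\rceil\cdot\mathcal{O}(n)=\mathcal{O}(n^2 2^n/m)$, while the copy overhead contributes $2\lceil \log_2\lceil m/n\rceil\rceil=\mathcal{O}(\log(m/n))$, so that $d=\mathcal{O}(n^2 2^n/m+\log(m/n))$, as claimed. The range of $m$ is pinned down at its two ends: $m\ge n$ is forced because below this threshold no genuine parallelization of the full-support blocks is possible, and at $m=\Theta(n2^n)$ one has $m'=\Theta(2^n)$, $\lceil p/m'\rceil=1$, which recovers the fully parallelized regime of Theorem \ref{thm : full parallelization} (depth $\mathcal{O}(n)$); beyond this point extra ancilla cannot reduce the depth further.

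The main obstacle is bookkeeping rather than conceptual: I must verify that the worst-case branch of Theorem \ref{thm : adjustable-depth with ancilla} applies verbatim (all $d_j$ of the same $\mathcal{O}(n)$ order, each block of full support), and carefully manage the nested ceilings so that the product $\lceil p/m'\rceil\max_j d_j$ and the logarithmic copy term collapse cleanly into the advertised asymptotic $\mathcal{O}(n^2 2^n/m+\log(m/n))$ across the whole interval $m\in[n,\mathcal{O}(n2^n)]$.
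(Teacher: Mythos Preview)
Your proposal is correct and follows exactly the paper's approach: the paper's proof simply invokes Lemma~\ref{Exact_sequential_decomposition_without_ancilla} together with Theorem~\ref{thm : adjustable-depth with ancilla}, and you have carried out the same combination, supplying the explicit arithmetic that the paper leaves implicit.
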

\begin{proof}
Lemma \ref{Exact_sequential_decomposition_without_ancilla} with the partial parallelization Theorem \ref{thm : adjustable-depth with ancilla}.
\end{proof}

\begin{corollary}[Partially parallelized exact sequential decomposition using polylogarithmic depth decomposition ]
\label{lemma: partially parallelized sequential decomposition bis}
    Any $n$-qubit diagonal unitary is implementable through its sequential decomposition with a partially parallelized quantum circuit of depth $\mathcal{O}(n \log(n)^3 2^n/m+\log(m/n))$ and size $\mathcal{O}(n\log(n)^4 2^n+m)$ using $ m \in [n+2, \mathcal{O}(n2^n)]$ ancilla qubits. 
\end{corollary}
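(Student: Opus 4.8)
The plan is to mirror the proof of Corollary~\ref{lemma: partially parallelized sequential decomposition}, replacing the linear-depth multi-controlled gate construction by the polylogarithmic-depth one. First I would write the sequential decomposition $\hat{U}=\prod_{j=0}^{p-1}\hat{U}_j$ with $p=2^n$, ordered by a Gray code so that only one $\hat{X}$ gate separates consecutive multi-controlled phase gates. Each $\hat{U}_j$ is then implemented with the scheme of Claudon et al.\ used in Lemma~\ref{lemma: Exact sequential decomposition with one ancilla}, which realizes an $n$-controlled phase gate in depth $d_j=\mathcal{O}(\log(n)^3)$ and size $s_j=\mathcal{O}(n\log(n)^4)$ using a single ancilla qubit. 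The total non-parallelized size is therefore $s=\sum_{j=0}^{p-1}s_j=\mathcal{O}(n\log(n)^4\,2^n)$.

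Next I would feed these building blocks into the adjustable-depth Theorem~\ref{thm : adjustable-depth with ancilla}. With roughly $m$ ancilla qubits used to duplicate the main register, one forms $m'=\lceil m/n\rceil$ parallel groups of at most $\lceil p/m'\rceil$ operators each. The theorem gives a size $s+2m=\mathcal{O}(n\log(n)^4\,2^n+m)$ and a depth bounded by $\lceil p/m'\rceil\max_j(d_j)+2\lceil\log_2(m')\rceil$. Substituting $p=2^n$, $\max_j d_j=\mathcal{O}(\log(n)^3)$ and $m'=\lceil m/n\rceil$ yields depth $\mathcal{O}(n\log(n)^3\,2^n/m+\log(m/n))$, which is the claimed scaling.

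The step that needs care — and the only substantive difference from Corollary~\ref{lemma: partially parallelized sequential decomposition} — is the bookkeeping of the extra ancilla qubit demanded by the polylogarithmic multi-controlled gate. Since the $m'$ parallel groups execute simultaneously, each group needs its own workspace ancilla; within a group the operators run one after another and can reuse a single workspace qubit. Thus of the total ancilla budget $m$, about $m'=\lceil m/n\rceil$ qubits are devoted to these workspaces and the remaining $\Theta(m)$ serve to copy the main register, so feeding $\Theta(m)$ copy-ancillas into Theorem~\ref{thm : adjustable-depth with ancilla} reproduces the stated depth and size. This accounting also explains why the lower endpoint rises from $n$ to $n+2$ relative to Corollary~\ref{lemma: partially parallelized sequential decomposition}: the minimal useful configuration needs $n$ qubits to create one extra copy of the register (so that $m'=2$) together with $2$ workspace qubits, one per group. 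I expect this ancilla accounting, rather than the depth and size estimates, to be the main (if minor) obstacle; the remainder is a direct substitution into Theorem~\ref{thm : adjustable-depth with ancilla}.
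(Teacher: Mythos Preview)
Your proposal is correct and follows exactly the paper's approach: the paper's proof is the one-liner ``Lemma~\ref{lemma: Exact sequential decomposition with one ancilla} with the partial parallelization Theorem~\ref{thm : adjustable-depth with ancilla},'' which is precisely what you do. Your careful accounting of the per-group workspace ancilla and the resulting $n+2$ lower bound is more explicit than anything the paper spells out, but it is consistent with (and arguably completes) the paper's terse argument.
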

\begin{proof}
Lemma \ref{lemma: Exact sequential decomposition with one ancilla} with the partial parallelization Theorem \ref{thm : adjustable-depth with ancilla}.
\end{proof}

\begin{corollary}[Partially parallelized exact Walsh-Hadamard decomposition]
\label{lemma: partially parallelized walsh hadamard decomposition with one ancilla}
    Any $n$-qubit diagonal unitary is implementable through its Walsh-Hadamard decomposition with a partially parallelized quantum circuit of depth $\mathcal{O}(n^2 2^n/m+\log(m/n))$ and size $\mathcal{O}( n2^n+m)$  using $ m \in [n, \mathcal{O}(n2^n)]$ ancilla qubits. 
\end{corollary}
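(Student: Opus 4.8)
The plan is to prove this corollary exactly as the preceding ones for the sequential decomposition were proved, namely by feeding the exact ancilla-free Walsh-Hadamard synthesis of Lemma \ref{lemma: Exact walsh-hadamard decomposition} into the adjustable-depth parallelization of Theorem \ref{thm : adjustable-depth with ancilla}. First I would recall that the Walsh-Hadamard decomposition writes $\hat{U}=\prod_{j=0}^{p-1}\hat{W}_j$ with $p=2^n$ factors, each $\hat{W}_j$ acting non-trivially on at most $n$ qubits. From Eq. \eqref{eq: exp of walsh operator}, a single $\hat{W}_j$ consists of two CNOT ``stairs'' of length $k_j\le n$ surrounding one $\hat{R}_Z$ rotation; since all CNOTs in a given stair share the same target qubit they cannot be executed in parallel, so each factor has depth $d_j=\mathcal{O}(n)$, hence $\max_j d_j=\mathcal{O}(n)$, and size $s_j=\mathcal{O}(k_j)$.

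Next I would substitute these parameters into Theorem \ref{thm : adjustable-depth with ancilla}. With $m\ge n$ ancilla and $m'=\lceil m/n\rceil$, the theorem bounds the depth by $\lceil p/m'\rceil\max_j(d_j)+2\lceil\log_2(m')\rceil$. Inserting $p=2^n$, $m'=\lceil m/n\rceil$ and $\max_j d_j=\mathcal{O}(n)$ yields
\begin{equation}
d\le \left\lceil \frac{2^n}{\lceil m/n\rceil}\right\rceil\mathcal{O}(n)+2\lceil\log_2(m/n)\rceil=\mathcal{O}\!\left(\frac{n^2 2^n}{m}+\log(m/n)\right),
\end{equation}
which is the claimed depth. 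For the size, the theorem adds $2m$ CNOTs for the $\widehat{\text{copy}}$ and $\widehat{\text{copy}}^{-1}$ unitaries to the sum $s=\sum_j s_j$ of the factor sizes; using $s_j=\mathcal{O}(k_j)$ and $\sum_{j=0}^{2^n-1} k_j=n\,2^{n-1}=\mathcal{O}(n2^n)$ gives $s+2m=\mathcal{O}(n2^n+m)$. The endpoints of the ancilla range are consistency checks: the lower endpoint $m=n$ is the minimum required by the hypothesis $m\ge n$, where $m'=1$ and the bound degenerates to the non-parallelized, non-Gray-code depth $\mathcal{O}(n2^n)$; the upper endpoint $m=\mathcal{O}(n2^n)$ is the full-parallelization regime $m'\sim p$, where the depth saturates at $\mathcal{O}(n)$, matching Corollary \ref{lemma: fully parallelized walsh-hadamard decomposition}.

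The one point requiring care is the size bound, and it is the only place where this proof differs structurally from its sequential-decomposition analogues. In the ancilla-free synthesis of Lemma \ref{lemma: Exact walsh-hadamard decomposition} the Gray-code ordering cancels the CNOTs shared between consecutive operators, reaching the asymptotically optimal $\mathcal{O}(2^n)$. Once the factors are distributed across $m'$ parallel groups acting on disjoint copied registers, however, these cross-operator cancellations are no longer available, so each $\hat{W}_j$ must be realized with its own $2k_j$ CNOTs; summing over $j$ restores the $\mathcal{O}(n2^n)$ count appearing in the statement rather than $\mathcal{O}(2^n)$. Everything else is a direct substitution of parameters into Theorem \ref{thm : adjustable-depth with ancilla}, so I anticipate no serious obstacle beyond tracking this loss of Gray-code optimization correctly.
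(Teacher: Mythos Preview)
Your proof is correct and follows exactly the approach of the paper, which simply cites ``Lemma \ref{lemma: Exact walsh-hadamard decomposition} with the partial parallelization Theorem \ref{thm : adjustable-depth with ancilla}'' without spelling out any of the parameter substitutions. Your explicit accounting of $p=2^n$, $\max_j d_j=\mathcal{O}(n)$, $\sum_j k_j=n2^{n-1}$, and especially your explanation of why the Gray-code cancellation is lost under parallelization (yielding $\mathcal{O}(n2^n)$ instead of $\mathcal{O}(2^n)$ size) are useful details that the paper omits, but the underlying argument is identical.
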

\begin{proof}
Lemma \ref{lemma: Exact walsh-hadamard decomposition} with the partial parallelization Theorem \ref{thm : adjustable-depth with ancilla}.
\end{proof}

\begin{lemma}[Lemma 11 \cite{sun2023asymptotically}]
Any $n$-qubit diagonal unitary can be implemented by a quantum circuit of depth $\mathcal{O}(2^n/n)$ and size $\mathcal{O}(2^n)$ without ancillary qubits.
\end{lemma}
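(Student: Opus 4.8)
The plan is to build the circuit from the Walsh--Hadamard decomposition of Eq.~\eqref{walsh decomposition}, $\hat{U}=\prod_{j=0}^{N-1}\hat{W}_j$ with $\hat{W}_j=e^{ia_j\hat{w}_j}$, and to recast its implementation as a \emph{parity-rotation} (phase-polynomial) problem. Recall from Eq.~\eqref{eq: exp of walsh operator} that each $\hat{W}_j$ is a single $\hat{R}_Z(a_j)$ rotation conjugated by a CNOT network that writes the parity $\langle j,x\rangle=\bigoplus_{i:j_i=1}x_i$ onto one qubit; equivalently, $\hat{U}$ applies to $\ket{x}$ the phase $\theta(x)=\sum_{j}a_j(-1)^{\langle j,x\rangle}$. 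The first step is therefore to note that implementing $\hat{U}$ amounts to realizing each of the $2^n-1$ nontrivial rotations $\hat{R}_Z(a_j)$ exactly once, while managing the CNOT bookkeeping that produces the required parities on the qubits.

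The key idea for the depth bound is to track, at every stage of the circuit, the linear function of the input held by each of the $n$ qubits. Starting from the identity, after a sequence of CNOTs qubit $i$ carries a parity $\langle r_i,x\rangle$, and the collection of rows $r_0,\dots,r_{n-1}$ forms an invertible matrix $A\in GL(n,\mathbb{F}_2)$; a single CNOT from control $c$ to target $t$ performs the elementary row operation $r_t\mapsto r_t\oplus r_c$, i.e. left-multiplication of $A$ by a transvection. Whenever the row $r_i$ of the current matrix equals the binary vector of an order $j$, applying $\hat{R}_Z(a_j)$ to qubit $i$ realizes $\hat{W}_j$. Because the $n$ rows of a given $A$ are distinct, at each configuration one may apply up to $n$ distinct Walsh rotations \emph{simultaneously} in depth $1$, so that only $\Theta(2^n/n)$ configurations are needed to accommodate all $2^n-1$ rotations.

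It remains to design a traversal of configurations $A^{(0)}=I,A^{(1)},\dots,A^{(L)}$ with $L=\mathcal{O}(2^n/n)$ such that (i) every nonzero vector of $\mathbb{F}_2^n$ appears as a row of some $A^{(t)}$ (so every $\hat{W}_j$ is applied), (ii) each transition $A^{(t)}\to A^{(t+1)}$ is realized by a CNOT network of $\mathcal{O}(n)$ gates and $\mathcal{O}(1)$ depth, and (iii) the final configuration returns to $A^{(L)}=I$ so that the computational basis is restored and no ancilla is consumed. I would construct this traversal by partitioning $\mathbb{F}_2^n\setminus\{0\}$ into $\mathcal{O}(2^n/n)$ blocks of $n$ vectors, each block forming a basis (an invertible matrix), and moving between consecutive bases by a short, fixed-depth CNOT pattern---analogous to a Gray-code ordering but lifted to the level of full bases rather than single bit strings.

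The main obstacle is precisely this combinatorial construction: one must enumerate all $2^n-1$ nonzero parities as rows of only $\mathcal{O}(2^n/n)$ invertible matrices while keeping each inter-configuration transition at constant depth and linear size, which is the delicate point of condition (ii). Granting such a traversal, the accounting is immediate: there are $L=\mathcal{O}(2^n/n)$ rotation layers of depth $1$ and $L$ transition blocks of depth $\mathcal{O}(1)$, giving total depth $\mathcal{O}(2^n/n)$; the CNOT count is $L\cdot\mathcal{O}(n)=\mathcal{O}(2^n)$ and the rotation count is $\mathcal{O}(2^n/n)$, so the size is $\mathcal{O}(2^n)$, with no ancilla used. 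Finally, this depth is optimal: a diagonal unitary carries $2^n-1$ independent phase parameters, while each circuit layer contains at most $n$ parametrized single-qubit gates and hence at most $\mathcal{O}(n)$ free real parameters, forcing depth $\Omega(2^n/n)$.
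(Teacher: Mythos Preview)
The paper does not prove this lemma; it is quoted verbatim from Sun et al.\ \cite{sun2023asymptotically} (their Lemma~11, with the size bound appearing as their Lemma~24), and the surrounding text only describes the method as ``a recursive scheme'' without reproducing the argument. So there is no in-paper proof to compare against directly.

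On its own merits, your framework is the right one: tracking the $GL(n,\mathbb{F}_2)$ configuration of parities carried by the wires, and applying $n$ Walsh rotations in parallel per configuration, is exactly the phase-polynomial viewpoint underlying the result. However, you explicitly concede the only nontrivial part of the proof. The sentence ``Granting such a traversal, the accounting is immediate'' is accurate, but the traversal \emph{is} the lemma. Showing that $\mathbb{F}_2^n\setminus\{0\}$ can be covered by $\mathcal{O}(2^n/n)$ invertible matrices is easy; arranging them so that consecutive matrices differ by a CNOT network of depth $\mathcal{O}(1)$ is not, since a generic change of basis in $GL(n,\mathbb{F}_2)$ requires CNOT depth $\Omega(n/\log n)$. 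Sun et al.\ do not build a flat ``Gray code on bases'' as you sketch; their construction is recursive in $n$, which is what makes condition~(ii) go through. Your outline does not supply this mechanism, so as written it is a plan rather than a proof.

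Two minor points: your rotation count is misstated --- there are $2^n-1$ rotations total, packed into $\mathcal{O}(2^n/n)$ layers of $n$ each, not $\mathcal{O}(2^n/n)$ rotations --- though the size conclusion $\mathcal{O}(2^n)$ is unaffected. Your optimality remark at the end is correct and matches the lower bound cited in \cite{sun2023asymptotically}.
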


\begin{lemma}[Lemma 20 \cite{sun2023asymptotically}]
For any $m\in [2n, 2^n/n]$, any $n$-qubit diagonal unitary can be implemented by a quantum circuit of depth $\mathcal{O}(2^n/m +\log(m))$ and size $\mathcal{O}(2^n+nm)$ with $m$ ancillary qubits.
\end{lemma}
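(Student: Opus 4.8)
The plan is to build on the Walsh--Hadamard decomposition $\hat U = \prod_{j=0}^{2^n-1}\hat W_j$ with $\hat W_j = e^{i a_j \hat w_j}$ of Eq.~\eqref{walsh decomposition}. The key starting observation, read off from Eq.~\eqref{eq: exp of walsh operator}, is that applying $\hat W_j$ amounts to writing the parity $\langle j,x\rangle=\bigoplus_i j_i x_i$ of the register bits onto one wire, firing a single rotation $\hat R_Z(a_j)$ there, and uncomputing the parity. Implementing $\hat U$ therefore reduces to scheduling $2^n$ such parity-controlled rotations, interleaved with CNOT networks that transform one parity into the next. Because the $\hat W_j$ commute, any order is admissible, so the whole task becomes a routing problem: use the $m$ ancilla to keep $\Theta(m)$ parities simultaneously ``live'', fire $\Theta(m)$ rotations per layer, and thereby finish in $\Theta(2^n/m)$ layers.

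Concretely, I would first fan out each of the $n$ register bits into the ancilla workspace, which by the copy Lemma~\ref{copy lemma} costs depth $\mathcal{O}(\log m)$ and $\mathcal{O}(nm)$ CNOTs. I would then traverse all $2^n$ Walsh indices along a Gray-code-like schedule that, at each step, advances $\Theta(m)$ of the live parities and fires the associated $\Theta(m)$ rotations. The point of the Gray ordering is that two consecutive indices on a given lane differ in a single coordinate $i$, so that advancing that lane replaces $\langle j,x\rangle$ by $\langle j,x\rangle\oplus x_i$, i.e.\ a single CNOT from a copy of $x_i$ onto the parity wire (Eq.~\eqref{eq: exp of walsh operator}); when the $\Theta(m)$ controls used in a step are arranged to be distinct, all $\Theta(m)$ advances form one depth-$\mathcal{O}(1)$ layer and the $\Theta(m)$ rotations form one depth-$1$ layer. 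After $\mathcal{O}(2^n/m)$ such steps every $\hat W_j$ has been applied, and a final uncomputation of the fan-out returns the ancilla to $\ket{0}$ in depth $\mathcal{O}(\log m)$.

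The bookkeeping then yields the claim. The depth is $\mathcal{O}(\log m)$ for set-up, $\mathcal{O}(2^n/m)$ for the constant-depth main loop, and $\mathcal{O}(\log m)$ for tear-down, for a total of $\mathcal{O}(2^n/m+\log m)$. The size is $2^n$ rotations, $\mathcal{O}(2^n)$ advance-CNOTs (one per lane-advance, over $\Theta(m)$ lanes and $\mathcal{O}(2^n/m)$ steps), and $\mathcal{O}(nm)$ CNOTs for the fan-out and its inverse, hence $\mathcal{O}(2^n+nm)$; the width is $m$. The hypothesis $m\in[2n,2^n/n]$ is exactly what makes this balanced: it guarantees enough ancilla to host $\Theta(m)$ live lanes while keeping $\gtrsim n$ indices per lane, so that the one-time $\mathcal{O}(\log m)$ fan-out is amortised against the loop.

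The main obstacle is the combinatorial schedule at the heart of the construction. One must exhibit a traversal of $\mathbb{F}_2^n$ (the set of Walsh indices) that (a) visits every index exactly once, (b) is split into $\Theta(m)$ lanes whose consecutive indices differ in a single coordinate, so each advance is one CNOT, and (c) uses, at every step, $\Theta(m)$ \emph{distinct} controls, so that no two simultaneous advances contend for the same copy of a register bit and a single circuit layer suffices. Reconciling (c) with the $m$-ancilla budget --- the copies of register bits that the lanes consume must be supplied and refreshed within the $\mathcal{O}(nm)$ CNOT allotment rather than blowing up the qubit count --- and proving these copies stay consistent under the incremental parity updates is where the real work lies; the depth and size accounting is then routine. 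I expect this scheduling/routing argument, rather than any analytic estimate, to be the crux, as reflected in the original derivation of \cite{sun2023asymptotically}.
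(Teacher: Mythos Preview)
The paper does not give its own proof of this statement: the lemma is simply quoted and attributed to Lemma~20 of \cite{sun2023asymptotically}, with no argument reproduced. There is therefore nothing in the paper to compare your proposal against; the paper treats the result as a black box imported for use in the corollaries that follow (e.g.\ Corollaries~\ref{cor:lemma20+approximate} and \ref{cor: lemma20 max qubit+approximate}).

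As for the substance of your sketch, it is a reasonable outline of the strategy in \cite{sun2023asymptotically}: Walsh decomposition, fan-out of the data bits into an ancilla workspace, and a Gray-coded schedule that keeps $\Theta(m)$ parity accumulators advancing in lockstep so that each round fires $\Theta(m)$ rotations. You are also right that the genuine difficulty is the scheduling/routing construction that realises conditions (a)--(c) simultaneously under the ancilla budget; you have not supplied that construction here, and without it the proposal remains a plan rather than a proof. In particular, the step ``when the $\Theta(m)$ controls used in a step are arranged to be distinct'' hides the entire combinatorial content: one must exhibit a concrete partition of $\mathbb{F}_2^n$ into lanes and a step schedule with the claimed non-collision property, and this is exactly what the cited reference devotes its technical effort to.
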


\begin{lemma}[Lemma 20 \cite{sun2023asymptotically} with a maximum number of ancilla qubits]
Any $n$-qubit diagonal unitary can be implemented by a quantum circuit of depth $\mathcal{O}(n)$ and size $\mathcal{O}(2^n)$ with $m=2^n/n$ ancillary qubits.
\end{lemma}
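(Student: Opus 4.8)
The plan is to obtain this statement as the boundary case of the preceding Lemma 20 of \cite{sun2023asymptotically}, taking the number of ancilla qubits to be the largest value permitted by that lemma, namely $m = 2^n/n$. Since the general statement holds for every $m \in [2n, 2^n/n]$, the only preliminary point to check is that the chosen value actually lies in this interval: the upper endpoint is attained exactly, and the lower-bound constraint $2^n/n \ge 2n$ amounts to $2^n \ge 2n^2$, which holds for all sufficiently large $n$ (the finitely many small cases being irrelevant to the asymptotic scaling).

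With $m = 2^n/n$ admissible, I would simply substitute into the depth and size bounds of Lemma 20 and simplify. For the depth, $\mathcal{O}(2^n/m + \log m)$ becomes $\mathcal{O}\bigl(2^n/(2^n/n) + \log(2^n/n)\bigr) = \mathcal{O}(n + (n - \log_2 n)) = \mathcal{O}(n)$, since the logarithmic term $\log(2^n/n) = n - \log_2 n$ is itself $\mathcal{O}(n)$ and is absorbed into the leading $n$. For the size, $\mathcal{O}(2^n + nm)$ becomes $\mathcal{O}(2^n + n \cdot 2^n/n) = \mathcal{O}(2^n + 2^n) = \mathcal{O}(2^n)$, so the ancilla contribution $nm = 2^n$ is of the same order as the intrinsic size and does not worsen the scaling.

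There is essentially no genuine obstacle here: the result is a direct specialization, and the entire content is the arithmetic simplification of the two asymptotic expressions at the endpoint $m = 2^n/n$. The only point deserving a moment's care is confirming that the $\log m$ depth term does not dominate — but because $m$ is exponential in $n$, its logarithm is linear in $n$ and merges with the $2^n/m = n$ term, yielding the clean $\mathcal{O}(n)$ depth claimed in the statement.
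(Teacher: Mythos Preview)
Your proposal is correct and is exactly the intended approach: the lemma is simply the specialization of Lemma~20 of \cite{sun2023asymptotically} at the upper endpoint $m=2^n/n$, and the paper treats it as such without giving a separate proof. Your arithmetic checks on depth and size are accurate, and the brief verification that $m=2^n/n\ge 2n$ for large $n$ is the only additional point worth noting.
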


\subsubsection{Non-unitary diagonal operators}

The following corollaries summarize the complexity of implementing exactly a block-encoding $\hat{U}_D$ of any $n$-qubit non-unitary diagonal operator $\hat{D}$ through the controlled-diagonal unitaries $e^{\pm i\arcsin(\hat{D}/(\alpha d_{\max}))}$ with $\alpha \geq 1$.

\begin{corollary}[Exact block-encoding using a sequential decomposition and one ancilla qubit]
\label{Exact block-encoding using a sequential decomposition and one ancilla}
     For any $\alpha \geq 1$, any $n$-qubit non-unitary diagonal operator $\hat{D}$ can be $(\alpha d_{\max},1,0)$-block-encoded with a quantum circuit of depth $\mathcal{O}(n2^{n})$ and size $\mathcal{O}(n2^{n})$ using one ancilla qubit.
\end{corollary}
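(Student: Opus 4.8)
The plan is to combine the single-ancilla block-encoding construction of Fig.~\ref{quantum circuit scheme for non-unitary diagonal} with the exact sequential decomposition of Lemma~\ref{Exact_sequential_decomposition_without_ancilla}. Since $\alpha\geq 1$, the operator $\hat{D}/(\alpha d_{\max})$ has spectral norm at most $1/\alpha\leq 1$, so $\hat{\theta}=\arcsin(\hat{D}/(\alpha d_{\max}))$ is well defined, and the circuit of Fig.~\ref{quantum circuit scheme for non-unitary diagonal}, run with this choice of $\hat{\theta}$, realizes an \emph{exact} $(\alpha d_{\max},1)$-block-encoding of $\hat{D}$ using the single ancilla $\ket{q_A}$. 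It therefore remains only to bound the cost of implementing the controlled diagonal unitaries $e^{\pm i\hat{\theta}}$, equivalently the operator $e^{i\hat{\theta}\otimes\hat{Z}}$.

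First I would write $\hat{\theta}=\sum_{j=0}^{2^n-1}\theta_j\ket{j}\bra{j}$ and factor $e^{i\hat{\theta}\otimes\hat{Z}}=\prod_{j=0}^{2^n-1} e^{i\theta_j\ket{j}\bra{j}\otimes\hat{Z}}$, using that the summands commute because they are jointly diagonal. Each factor applies $e^{i\theta_j\hat{Z}}=\mathrm{diag}(e^{i\theta_j},e^{-i\theta_j})$ to the ancilla conditioned on the main register being in the computational-basis state $\ket{j}$, which is exactly an $n$-controlled single-qubit rotation (the $n$ controls sitting on the main register, the target being the ancilla). This is the controlled version of the sequential operator $\hat{U}_j$ of Eq.~\eqref{sequential decomposition}: as observed after Theorem~\ref{thm: block-encoding}, controlling $\hat{U}_j$ merely promotes its $(n-1)$-controlled phase gate to an $n$-controlled one, leaving the asymptotic cost per operator unchanged.

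Next I would bound the cost. Ordering the $2^n$ operators along a Gray code cancels all but one $\hat{X}$ gate between consecutive multi-controlled rotations, exactly as in the proof of Lemma~\ref{Exact_sequential_decomposition_without_ancilla}. Each $n$-controlled rotation is then implemented with the ancilla-free linear-depth scheme of \cite{Craig}, costing $\mathcal{O}(n)$ in both depth and size; summing over the $2^n$ operators yields depth and size $\mathcal{O}(n2^n)$, while the Hadamard and $\hat{P}$ gates on the ancilla contribute only $\mathcal{O}(1)$. Since the sequential decomposition of $e^{i\hat{\theta}}$ is exact, the approximation error is $\epsilon=0$, and the only ancilla used is $\ket{q_A}$, giving an exact $(\alpha d_{\max},1,0)$-block-encoding.

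The point that must not be overlooked---and the only genuinely delicate step---is the \emph{choice} of multi-controlled-gate subroutine. Using the one-ancilla polylogarithmic scheme of \cite{claudon2024polylogarithmic} would yield a shallower per-gate depth but would consume a second ancilla, breaking the ``$1$'' in the target $(\alpha d_{\max},1,0)$-block-encoding. Insisting on a single ancilla is precisely what forces the ancilla-free linear $\mathcal{O}(n)$ per-gate cost of \cite{Craig}, and hence the overall $\mathcal{O}(n2^n)$ scaling; trading this constraint against depth is exactly what the subsequent corollaries exploit.
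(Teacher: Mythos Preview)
Your proof is correct and follows the same approach as the paper, which simply cites Lemma~\ref{Exact_sequential_decomposition_without_ancilla} together with the block-encoding Theorem~\ref{thm: block-encoding}. You have unpacked the mechanics in more detail---how controlling $\hat{U}_j$ promotes the $(n-1)$-controlled phase gate to an $n$-controlled one, why the Gray code ordering and the ancilla-free linear-depth scheme of \cite{Craig} are the right choices---and your closing remark correctly identifies that insisting on a single ancilla is what forces the $\mathcal{O}(n)$ per-gate cost rather than the polylogarithmic alternative (which is exactly the trade exploited in Corollary~\ref{lemma: Exact block-encoding sequential decomposition with one ancilla}).
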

\begin{proof}
    Lemma \ref{Exact_sequential_decomposition_without_ancilla} and the block-encoding Theorem \ref{thm: block-encoding}.
\end{proof}

\begin{corollary}[Exact sequential block-encoding with two ancilla qubits]
\label{lemma: Exact block-encoding sequential decomposition with one ancilla}
 For any $\alpha \geq 1$, any $n$-qubit non-unitary diagonal operator $\hat{D}$ can be $(\alpha d_{\max},2,0)$-block-encoded with a quantum circuit of depth $\mathcal{O}(\log(n)^3 2^{n})$ and size  $\mathcal{O}(n\log(n)^4 2^{n})$ using two ancilla qubits.
\end{corollary}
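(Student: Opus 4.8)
The plan is to obtain this corollary as the polylogarithmic-depth analogue of Corollary \ref{Exact block-encoding using a sequential decomposition and one ancilla}: where that result feeds the linear-depth sequential decomposition (Lemma \ref{Exact_sequential_decomposition_without_ancilla}) into the block-encoding Theorem \ref{thm: block-encoding}, I would instead feed in the one-ancilla polylogarithmic-depth sequential decomposition of Lemma \ref{lemma: Exact sequential decomposition with one ancilla}. Concretely, I would set $\hat{U}=e^{i\arcsin(\hat{D}/(\alpha d_{\max}))}$, which is a genuine $n$-qubit diagonal unitary for every $\alpha\geq 1$ because $|d_x|/(\alpha d_{\max})\leq 1$, and take its \emph{exact} sequential decomposition $\hat{U}=\prod_{j} \hat{U}_j$ as a single, unparallelized block, so that $p=1$ and $k=0$ in the notation of Theorem \ref{thm: block-encoding}. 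Invoking that theorem with $\epsilon=0$ then certifies that the circuit of Fig. \ref{quantum circuit scheme for non-unitary diagonal} is an exact $(\alpha d_{\max},1,0)$-block-encoding at the level of the framework, the single framework ancilla being $q_A$; everything else is a resource count for the controlled unitary $C_{q_A}(\hat{U})$ appearing there.

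The key step is to cost the controlled sequential operators. Using the conjugation identity $C(\hat{V}^{-1}\hat{U}\hat{V})=\hat{V}^{-1}C(\hat{U})\hat{V}$ recalled in Section \ref{sec:non_unitary}, controlling a sequential operator $\hat{U}_j$ of the form \eqref{sequential decomposition} only promotes its $(n-1)$-controlled phase gate to an $n$-controlled phase gate, leaving the surrounding $\hat{X}$ layers untouched. I would realize each such $n$-controlled phase gate with the scheme of Claudon et al. (Corollary 1 of \cite{claudon2024polylogarithmic}), which is exact, uses a single ancilla, and has depth $\mathcal{O}(\log(n)^3)$ and size $\mathcal{O}(n\log(n)^4)$. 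Ordering the $\mathcal{O}(2^n)$ operators along a Gray code so that only one $\hat{X}$ gate survives between consecutive multi-controlled gates leaves the asymptotics unchanged, and summing over the operators gives total depth $\mathcal{O}(\log(n)^3 2^n)$ and size $\mathcal{O}(n\log(n)^4 2^n)$, exactly as claimed.

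The only point requiring real care, and where the count becomes $2$ rather than $1$, is the ancilla bookkeeping. Claudon's workspace ancilla can be reused across all $\mathcal{O}(2^n)$ operators, since it is zeroed (or borrowed) and restored after each gate, so implementing the whole decomposition of $\hat{U}$ consumes only one ancilla; but that qubit plays a role distinct from $q_A$, which is placed in superposition by the Hadamard and acts as the outer control of the block-encoding, whereas Claudon's qubit is internal workspace of each multi-controlled gate. The two therefore cannot be merged, and $q_A$ together with Claudon's workspace qubit yield exactly two ancilla, completing the $(\alpha d_{\max},2,0)$-block-encoding. I expect this accounting, rather than any analytic estimate, to be the single subtle point: the proof is otherwise a one-line combination of Lemma \ref{lemma: Exact sequential decomposition with one ancilla} with Theorem \ref{thm: block-encoding}.
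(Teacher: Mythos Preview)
Your proof is correct and follows essentially the same approach as the paper: combine Lemma \ref{lemma: Exact sequential decomposition with one ancilla} with the block-encoding Theorem \ref{thm: block-encoding}, observing that controlling each sequential operator by $q_A$ promotes its $(n-1)$-controlled phase gate to an $n$-controlled one, which is then implemented exactly with the one-ancilla scheme of Claudon et al. Your more explicit ancilla bookkeeping, distinguishing the block-encoding ancilla $q_A$ from Claudon's reusable workspace qubit, is exactly what the paper's proof sketches in its final sentence.
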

\begin{proof}
Lemma \ref{lemma: Exact sequential decomposition with one ancilla} and the block-encoding Theorem \ref{thm: block-encoding}: the $(n-1)$-controlled phase gates are controlled by one ancilla qubit, becoming $n$-controlled phase gates. The last ancilla qubit is used to implement the $n$-controlled phase gates exactly with Corollary 1 of \cite{claudon2024polylogarithmic}.
\end{proof}

\begin{corollary}[Exact Walsh-Hadamard block-encoding with one ancilla]
\label{lemma: Exact walsh-hadamard block-encoding with one ancilla }
For any $\alpha \geq 1$, any $n$-qubit non-unitary diagonal operator $\hat{D}$ can be $(\alpha d_{\max},1,0)$-block-encoded with a quantum circuit of depth $\mathcal{O}(2^n)$ and size  $\mathcal{O}(2^n)$ using one ancilla qubit.

\end{corollary}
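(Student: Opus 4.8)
The plan is to combine the exact Walsh-Hadamard synthesis of Lemma \ref{lemma: Exact walsh-hadamard decomposition} with the block-encoding construction of Theorem \ref{thm: block-encoding}, specialized to the non-parallelized circuit of Fig. \ref{quantum circuit scheme for non-unitary diagonal} in which a single ancilla qubit $q_A$ carries the block-encoding. First I would set $\hat{U}=e^{i\arcsin(\hat{D}/(\alpha d_{\max}))}$; since $\alpha \geq 1$ guarantees that every eigenvalue $d_x/(\alpha d_{\max})$ lies in $[-1,1]$, the operator $\arcsin(\hat{D}/(\alpha d_{\max}))$ is a well-defined real diagonal operator and $\hat{U}$ is a genuine diagonal unitary on $n$ qubits.

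Next I would invoke Lemma \ref{lemma: Exact walsh-hadamard decomposition} to write $\hat{U}=\prod_{j=0}^{N-1}\hat{W}_j$ and implement it exactly with depth $\mathcal{O}(2^n)$, size $\mathcal{O}(2^n)$ and no ancilla. Feeding this decomposition into Theorem \ref{thm: block-encoding} with $k=0$ (the main register is never copied) and a single copy of $q_A$ yields an $(\alpha d_{\max},1,0)$-block-encoding of $\hat{D}$: the block-encoding is exact, $\epsilon=0$, precisely because the Walsh-Hadamard decomposition of $\hat{U}$ is exact and no approximation of the function enters.

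The only quantity left to control is the overhead incurred by controlling $\hat{U}$ and $\hat{U}^{\dagger}$ with $q_A$. Here I would use the identity $C(\hat{V}^{-1}\hat{U}\hat{V})=\hat{V}^{-1}C(\hat{U})\hat{V}$ noted after Theorem \ref{thm: block-encoding}: for each Walsh operator written as in Eq. \eqref{eq: exp of walsh operator}, the two CNOT stairs are left uncontrolled and only the single $\hat{R}_Z$ gate becomes a controlled rotation. A controlled $\hat{R}_Z$ decomposes into $\mathcal{O}(1)$ primitive gates, so controlling the entire decomposition multiplies the gate count and the depth by a constant factor and leaves the scaling at depth $\mathcal{O}(2^n)$ and size $\mathcal{O}(2^n)$, now using the one ancilla $q_A$.

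I expect the main obstacle to be bookkeeping rather than conceptual: confirming that the constant-factor blow-up from the controlled $\hat{R}_Z$ gates does not spoil the $\mathcal{O}(2^n)$ bound, and checking that a single ancilla genuinely suffices, i.e. that no copying of the main register is needed when the $\hat{W}_j$ are applied sequentially on one register. Both are immediate from the structure of the circuit, so the corollary follows directly from Lemma \ref{lemma: Exact walsh-hadamard decomposition} and Theorem \ref{thm: block-encoding}.
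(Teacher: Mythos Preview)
Your proposal is correct and follows exactly the paper's approach: the paper's proof is the one-line citation ``Lemma \ref{lemma: Exact walsh-hadamard decomposition} and the block-encoding Theorem \ref{thm: block-encoding}'', and you have simply unpacked those two ingredients, including the constant-overhead control argument that the paper makes in the paragraph following Theorem \ref{thm: block-encoding}. The only cosmetic difference is that you explicitly specialize to the non-parallelized circuit of Fig.~\ref{quantum circuit scheme for non-unitary diagonal} (equivalently, $p=1$, $k=0$ in Theorem \ref{thm: block-encoding}) to justify the single ancilla, which the paper leaves implicit.
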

\begin{proof}
Lemma \ref{lemma: Exact walsh-hadamard decomposition} and the block-encoding Theorem \ref{thm: block-encoding}
\end{proof}

\begin{corollary}[Fully parallelized block-encoding with exact sequential decomposition]
\label{cor: Fully parallelized block-encoding with exact sequential decomposition}
  For any $\alpha \geq 1$, any $n$-qubit non-unitary diagonal operator $\hat{D}$ can be $(\alpha d_{\max},m,0)$-block-encoded with a quantum circuit of depth $\mathcal{O}(n)$ and size $\mathcal{O}(n2^{n})$ using $m=\mathcal{O}(n2^n)$ ancilla qubits.
 
\end{corollary}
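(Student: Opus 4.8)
The plan is to reduce the block-encoding of $\hat{D}$ to the exact implementation of a single diagonal unitary and then invoke the already-established fully parallelized sequential decomposition. By the block-encoding Theorem \ref{thm: block-encoding}, to $(\alpha d_{\max}, m, 0)$-block-encode $\hat{D}$ it suffices to realize \emph{exactly} (that is, with $\epsilon = 0$) the diagonal unitary $\hat{U} = e^{i\arcsin(\hat{D}/(\alpha d_{\max}))}$ as a product $\hat{U} = \prod_{j=0}^{p-1}\hat{U}_j$ of sequential operators of the form \eqref{sequential decomposition}. Since $\alpha \geq 1$ guarantees $\|\hat{D}/(\alpha d_{\max})\|_2 \leq 1$, the operator $\arcsin(\hat{D}/(\alpha d_{\max}))$ is well defined and $\hat{U}$ is a genuine $n$-qubit diagonal unitary; no smoothness hypothesis is needed here because the decomposition is exact.

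First I would apply Corollary \ref{lemma: fully parallelized sequential decomposition with one ancilla} to $\hat{U}$: its sequential decomposition into $p = 2^n$ operators can be fully parallelized via Theorem \ref{thm : full parallelization}, yielding a circuit of depth $\mathcal{O}(n)$ and size $\mathcal{O}(n2^n)$ that uses $\mathcal{O}(n2^n)$ ancilla qubits to copy the main register. Next I would account for the controls demanded by the block-encoding: as noted after Theorem \ref{thm: block-encoding}, controlling a sequential operator merely promotes its $(n-1)$-controlled phase gate to an $n$-controlled phase gate, which has the same asymptotic depth and size, so neither the $\mathcal{O}(n)$ depth nor the $\mathcal{O}(n2^n)$ size is affected. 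Finally I would add the copy register for the flag qubit $q_A$: copying $q_A$ across the $p = 2^n$ parallel blocks costs, by Lemma \ref{copy lemma}, depth $\lceil \log_2(2^n)\rceil = \mathcal{O}(n)$ and size $\mathcal{O}(2^n)$, and contributes $p-1 = \mathcal{O}(2^n)$ further ancilla, leaving the total ancilla count at $m = k + p = \mathcal{O}(n2^n)$, where $k = \sum_{j=1}^{p-1}k_j = \mathcal{O}(n2^n)$ since each sequential operator has support on all $n$ qubits.

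Collecting these bounds gives the claimed exact $(\alpha d_{\max}, \mathcal{O}(n2^n), 0)$-block-encoding with depth $\mathcal{O}(n)$ and size $\mathcal{O}(n2^n)$. The only point requiring care — and where I expect the bookkeeping to be most delicate — is confirming that appending the $q_A$-copy register and promoting the controls does not asymptotically dominate the already-parallelized cost of $\hat{U}$; this holds precisely because both additions contribute only $\mathcal{O}(n)$ in depth and $\mathcal{O}(2^n)$ in size, hence are subsumed by the $\mathcal{O}(n)$ depth and $\mathcal{O}(n2^n)$ size of the parallelized sequential circuit.
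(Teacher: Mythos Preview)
Your proposal is correct and follows the same approach as the paper: the paper's proof simply cites Lemma~\ref{Exact_sequential_decomposition_without_ancilla} (or Lemma~\ref{lemma: Exact sequential decomposition with one ancilla}) combined with the full parallelization Theorem~\ref{thm : full parallelization} and the block-encoding Theorem~\ref{thm: block-encoding}, which is exactly what you do via Corollary~\ref{lemma: fully parallelized sequential decomposition with one ancilla}. You have merely spelled out the bookkeeping for the $q_A$ copies and the promotion of the multi-controlled gates that the paper handles implicitly through the discussion following Theorem~\ref{thm: block-encoding}.
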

\begin{proof}
Lemma \ref{Exact_sequential_decomposition_without_ancilla} or Lemma \ref{lemma: Exact sequential decomposition with one ancilla} associated to the full parallelization Theorem \ref{thm : full parallelization}  and the block-encoding Theorem \ref{thm: block-encoding}
\end{proof}

\begin{corollary}[Fully parallelized block-encoding with exact Walsh-Hadamard decomposition]
\label{lemma: fully parallelized Walsh-Hadamard decomposition}
For any $\alpha \geq 1$, any $n$-qubit non-unitary diagonal operator $\hat{D}$ can be $(\alpha d_{\max},m,0)$-block-encoded with a quantum circuit of depth $\mathcal{O}(n)$ and size $\mathcal{O}(n2^{n})$ using $m=\mathcal{O}(n2^n)$ ancilla qubits.
\end{corollary}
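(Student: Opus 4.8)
The plan is to derive this corollary as a direct composition of the exact fully-parallelized Walsh-Hadamard construction with the block-encoding Theorem \ref{thm: block-encoding}, mirroring exactly the proof of Corollary \ref{cor: Fully parallelized block-encoding with exact sequential decomposition}. First I would set $\hat{U} = e^{i\arcsin(\hat{D}/(\alpha d_{\max}))}$. Since $\alpha \geq 1$ guarantees that every eigenvalue $d_x/(\alpha d_{\max})$ lies in $[-1,1]$, the operator $\hat{U}$ is a well-defined $n$-qubit diagonal unitary, and it suffices to implement it exactly (i.e.\ with $\epsilon = 0$) inside the block-encoding scheme of Fig. \ref{Adjustable depth framework for non unitary diagonal operator}.

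Next I would invoke Lemma \ref{lemma: Exact walsh-hadamard decomposition} to write $\hat{U} = \prod_{j=0}^{p-1}\hat{W}_j$ with $p = 2^n$ commuting Walsh operators, each $\hat{W}_j$ of the form Eq. \eqref{eq: exp of walsh operator} acting non-trivially on $k_j \leq n$ qubits, and then apply the full parallelization Theorem \ref{thm : full parallelization}. This is precisely the content of Corollary \ref{lemma: fully parallelized walsh-hadamard decomposition}, giving depth $\mathcal{O}(n)$ and size $\mathcal{O}(n2^n)$ using $k = \sum_{j=1}^{p-1} k_j = \mathcal{O}(n2^n)$ ancillas: the depth is $\max_j d_j + 2\lceil \log_2 p\rceil = \mathcal{O}(n)$, and the size bound follows from $\sum_j k_j = n\,2^{n-1}$.

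Finally I would feed this decomposition into the block-encoding Theorem \ref{thm: block-encoding} with $\epsilon = 0$, producing an $(\alpha d_{\max}, k+p, 0)$-block-encoding. As argued in Section \ref{sec:non_unitary}, controlling each $\hat{W}_j$ on a copy of the flag qubit $q_A$ only requires controlling its single $\hat{R}_Z$ gate, so the controlled circuit keeps the $\mathcal{O}(n)$ depth and $\mathcal{O}(n2^n)$ size of the uncontrolled one. Counting ancillas, the total is $k + p = \mathcal{O}(n2^n) + 2^n = \mathcal{O}(n2^n)$, as claimed.

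The only genuine bookkeeping — as opposed to citation — lies in checking that the additional register of $p = 2^n$ copies of $q_A$ does not worsen any asymptotic: its width $2^n$ is dominated by the main copy register of width $k = \mathcal{O}(n2^n)$, and its copy depth $\mathcal{O}(\log p) = \mathcal{O}(n)$ matches the existing depth, so no step dominates. I therefore expect no substantive obstacle, since the statement is obtained by composing two already-established results.
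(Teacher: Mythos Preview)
Your proof is correct and follows exactly the paper's approach: combine Lemma \ref{lemma: Exact walsh-hadamard decomposition} with the full parallelization Theorem \ref{thm : full parallelization} (i.e.\ Corollary \ref{lemma: fully parallelized walsh-hadamard decomposition}) and then apply the block-encoding Theorem \ref{thm: block-encoding}. Your additional bookkeeping on the $q_A$ copy register is accurate but more detailed than what the paper actually spells out.
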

\begin{proof}
Lemma \ref{lemma: Exact walsh-hadamard decomposition} with the full parallelization Theorem \ref{thm : full parallelization} and the block-encoding Theorem \ref{thm: block-encoding}.
\end{proof}

\begin{corollary}[Partially parallelized block-encoding with exact sequential decomposition]
\label{corollary: partially parallelized block-encoding with exact sequential decomposition}
For any $\alpha \geq 1$, any $n$-qubit non-unitary diagonal operator $\hat{D}$ can be $(\alpha d_{\max},m,0)$-block-encoded with a quantum circuit of depth $\mathcal{O}(n^2 2^n/m+\log(m/n))$ and size $\mathcal{O}(n2^n+m)$ using $ m \in [\Omega(n), \mathcal{O}(n2^n)]$ ancilla.
\end{corollary}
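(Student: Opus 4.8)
The plan is to obtain the block-encoding from the circuit of Fig.~\ref{Adjustable depth framework for non unitary diagonal operator}, fed with the \emph{exact} partially parallelized sequential decomposition of the diagonal unitary $\hat{U}=e^{i\arcsin(\hat{D}/(\alpha d_{\max}))}$. The organizing principle, already noted in the discussion preceding Theorem~\ref{thm: block-encoding}, is that the asymptotic depth and size of block-encoding $\hat{D}$ coincide with those of implementing $\hat{U}$ itself; since we aim for an exact block-encoding we take $\epsilon=0$ and implement $\hat{U}$ exactly through its sequential decomposition. Concretely, I would invoke Corollary~\ref{lemma: partially parallelized sequential decomposition} (equivalently, Lemma~\ref{Exact_sequential_decomposition_without_ancilla} together with the adjustable-depth Theorem~\ref{thm : adjustable-depth with ancilla}), which for $m\in[n,\mathcal{O}(n2^n)]$ gives a circuit for $\hat{U}$ of depth $\mathcal{O}(n^2 2^n/m+\log(m/n))$ and size $\mathcal{O}(n2^n+m)$, and then apply the block-encoding Theorem~\ref{thm: block-encoding}.

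The two points to verify are that controlling the sequential blocks and copying the ancilla qubit $q_A$ do not spoil these scalings. For the first, recall from the remarks before Theorem~\ref{thm: block-encoding} that controlling the operator \eqref{sequential decomposition} merely promotes its $(n-1)$-controlled phase gate to an $n$-controlled one; using the exact constructions of \cite{Craig} or \cite{claudon2024polylogarithmic}, this costs the same up to a constant factor, so the controlled circuit retains depth $\mathcal{O}(n^2 2^n/m+\log(m/n))$ and size $\mathcal{O}(n2^n+m)$. For the second, since the decomposition parallelizes the blocks into $\mathcal{O}(m/n)$ groups, $q_A$ need only be copied $\mathcal{O}(m/n)$ times; by Lemma~\ref{copy lemma} this copy costs $\mathcal{O}(m/n)$ CNOT gates in depth $\mathcal{O}(\log(m/n))$, both of which are dominated by the main-register copy unitaries already present. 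Hence the extra $q_A$-register is absorbed into the $\mathcal{O}(m)$ ancilla count and the $\log(m/n)$ depth term is unchanged, yielding the claimed $(\alpha d_{\max},m,0)$-block-encoding.

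I do not anticipate a substantive obstacle, as the statement is a direct composition of three previously established results. The only genuine care is the bookkeeping just described, namely confirming that the constant-factor overhead of controlling each sequential block and the logarithmic-depth copy of $q_A$ are both subsumed by the dominant terms $\mathcal{O}(n^2 2^n/m)$ in depth and $\mathcal{O}(n2^n+m)$ in size.
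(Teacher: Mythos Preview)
Your proposal is correct and follows essentially the same route as the paper: combine Lemma~\ref{Exact_sequential_decomposition_without_ancilla} with the partial parallelization Theorem~\ref{thm : adjustable-depth with ancilla} (i.e., Corollary~\ref{lemma: partially parallelized sequential decomposition}) and then apply the block-encoding Theorem~\ref{thm: block-encoding}. Your additional bookkeeping on the control overhead and the $q_A$-copy register is more explicit than the paper's one-line justification but does not diverge from it.
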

\begin{proof}
Lemma \ref{Exact_sequential_decomposition_without_ancilla} with the partial parallelization Theorem \ref{thm : adjustable-depth with ancilla} and the block-encoding Theorem \ref{thm: block-encoding}
\end{proof}

\begin{corollary}[Partially parallelized block-encoding with exact sequential decomposition using polylogarithmic depth decomposition]
\label{lemma: partially parallelized block-encoding sequential decomposition using polylog MCphase}
For any $\alpha \geq 1$, any $n$-qubit non-unitary diagonal operator $\hat{D}$ can be $(\alpha d_{\max},m,0)$-block-encoded with a quantum circuit of depth $\mathcal{O}(n \log(n)^3 2^n/m+\log(m/n))$ and size $\mathcal{O}(n\log(n)^4 2^n+m)$ using $m$ ancilla qubits with $m=\Omega(n)$ and $m=\mathcal{O}(2^n/n)$. 
\end{corollary}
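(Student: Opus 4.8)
The plan is to obtain this corollary by composing three results already established in the excerpt: the exact sequential decomposition with one ancilla (Lemma \ref{lemma: Exact sequential decomposition with one ancilla}), the adjustable-depth parallelization theorem (Theorem \ref{thm : adjustable-depth with ancilla}), and the block-encoding theorem (Theorem \ref{thm: block-encoding}). The underlying idea is that Theorem \ref{thm: block-encoding} reduces the block-encoding of $\hat{D}$ to the implementation of the controlled diagonal unitary $\hat{U}=e^{i\arcsin(\hat{D}/(\alpha d_{\max}))}$, so it suffices to parallelize a sequential decomposition of $\hat{U}$ and track how the scalings propagate through the parallelization.

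First I would take the sequential decomposition $\hat{U}=\prod_{j=0}^{p-1}\hat{U}_j$ with $p=2^n$, where each $\hat{U}_j$ is the multi-controlled phase gate of Eq. \eqref{sequential decomposition}. By Lemma \ref{lemma: Exact sequential decomposition with one ancilla}, each such gate is realized with the Claudon et al. scheme in depth $d_j=\mathcal{O}(\log(n)^3)$ and size $\mathcal{O}(n\log(n)^4)$ using one ancilla qubit, so without parallelization the total size is $s=\mathcal{O}(n\log(n)^4 2^n)$. Applying Theorem \ref{thm : adjustable-depth with ancilla} with $m'=\lceil m/n\rceil$ groups then yields a depth bounded by $\lceil p/m'\rceil\max_j(d_j)+2\lceil\log_2(m')\rceil$; substituting $p=2^n$ and $\max_j(d_j)=\mathcal{O}(\log(n)^3)$ gives $\mathcal{O}(n\log(n)^3 2^n/m+\log(m/n))$, while the size picks up the $2m$ copy CNOTs to become $\mathcal{O}(n\log(n)^4 2^n+m)$. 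The observation from the main text that $C(\hat{V}^{-1}\hat{U}\hat{V})=\hat{V}^{-1}C(\hat{U})\hat{V}$ ensures that controlling each $\hat{U}_j$ only turns an $(n-1)$-controlled phase into an $n$-controlled phase, leaving these asymptotics unchanged; and the block-encoding is exact ($\epsilon=0$) since the sequential decomposition of $\hat{U}$ is itself exact.

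The step I expect to require the most care is the ancilla bookkeeping rather than the scaling algebra. Three distinct families of ancillas appear: the $\Theta(m)$ qubits used to copy the main register for parallelization, the block-encoding qubit $q_A$ together with its copies used to control the parallel blocks (Theorem \ref{thm: block-encoding}), and one Claudon-scheme workspace qubit per parallel branch. I would argue that when $m'=\lceil m/n\rceil$ blocks run simultaneously, the number of active $q_A$-copies and per-branch workspace qubits is $\mathcal{O}(m')=\mathcal{O}(m/n)$, hence strictly lower order than $m$, so that the total ancilla count remains $\Theta(m)$ and the stated budget is respected. Finally I would confirm the validity window: the lower bound $m=\Omega(n)$ is exactly the condition $m\ge n$ under which Theorem \ref{thm : adjustable-depth with ancilla} begins to reduce the depth, and the upper bound $m=\mathcal{O}(2^n/n)$ should be the regime in which the leading depth term $n\log(n)^3 2^n/m$ still dominates the additive $\log(m/n)$ copy overhead, which is what keeps the quoted depth expression meaningful.
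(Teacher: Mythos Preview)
Your proposal is correct and follows exactly the same route as the paper: the paper's proof consists of the single line ``Lemma \ref{lemma: Exact sequential decomposition with one ancilla} with the partial parallelization Theorem \ref{thm : adjustable-depth with ancilla} and the block-encoding Theorem \ref{thm: block-encoding}'', and you have simply unpacked this composition in detail. Your additional discussion of the ancilla bookkeeping (parallelization copies, $q_A$-copies, and per-branch Claudon workspace) and of the validity window for $m$ is more explicit than anything the paper provides, but it is consistent with the paper's conventions and does not deviate from its argument.
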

\begin{proof}
Lemma \ref{lemma: Exact sequential decomposition with one ancilla} with the partial parallelization Theorem \ref{thm : adjustable-depth with ancilla} and the block-encoding Theorem \ref{thm: block-encoding}
\end{proof}

\begin{corollary}[Partially parallelized block-encoding with exact Walsh-Hadamard decomposition]
\label{Partially parallelized block-encoding with exact Walsh-Hadamard decomposition}
For any $\alpha \geq 1$, any $n$-qubit non-unitary diagonal operator $\hat{D}$ can be $(\alpha d_{\max},m,0)$-block-encoded with a quantum circuit of depth $\mathcal{O}(n^2 2^n/m+\log(m/n))$ and size $\mathcal{O}(n2^n+m)$ using $m$ ancilla qubits with $m=\Omega(n)$ and $m=\mathcal{O}(2^n/n)$. 
\end{corollary}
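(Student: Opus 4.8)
The plan is to follow the same template as the preceding corollaries: feed the exact Walsh-Hadamard synthesis of Lemma~\ref{lemma: Exact walsh-hadamard decomposition} into the adjustable-depth bound of Theorem~\ref{thm : adjustable-depth with ancilla}, and then wrap the result in the block-encoding construction of Theorem~\ref{thm: block-encoding}. First I would write $\hat{U}=e^{i\arcsin(\hat{D}/(\alpha d_{\max}))}=\prod_{j=0}^{2^n-1}\hat{W}_j$ as a product of $p=2^n$ Walsh operators, each $\hat{W}_j$ acting non-trivially on $k_j\le n$ qubits with individual depth $d_j=\mathcal{O}(n)$ coming from the two CNOT stairs in Eq.~\eqref{eq: exp of walsh operator}.

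Next I would invoke Theorem~\ref{thm: block-encoding}, which controls each $\hat{W}_j$ by a copy of the ancilla qubit $\ket{q_A}$. Using $C(\hat{V}^{-1}\hat{U}\hat{V})=\hat{V}^{-1}C(\hat{U})\hat{V}$ as noted in Section~\ref{sec:non_unitary}, controlling $\hat{W}_j$ reduces to controlling its single $\hat{R}_Z$ gate, so the per-operator depth and size are preserved up to constants. Since the Walsh decomposition of $\hat{U}$ is exact, the block-encoding is exact, i.e. $\epsilon=0$, yielding an $(\alpha d_{\max},m,0)$-block-encoding.

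Applying Theorem~\ref{thm : adjustable-depth with ancilla} with $m'=\lceil m/n\rceil$ parallel groups, the depth bound $\lceil p/m'\rceil\max_j d_j+2\lceil\log_2 m'\rceil$ evaluates to $\mathcal{O}(n\,2^n/m)\cdot\mathcal{O}(n)+\mathcal{O}(\log(m/n))=\mathcal{O}(n^2 2^n/m+\log(m/n))$, while the size $s+2m$ gives $\mathcal{O}(n2^n+m)$. The lower bound $m=\Omega(n)$ is required to form at least two groups, and $m=\mathcal{O}(2^n/n)$ marks the onset of full parallelization, beyond which additional ancilla stop reducing the depth.

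The only genuine bookkeeping point, rather than a real obstacle, is the size. Parallelizing across distinct registers forbids the Gray-code cancellation of CNOTs that makes the unparallelized size $\mathcal{O}(2^n)$ in Lemma~\ref{lemma: Exact walsh-hadamard decomposition}: each $\hat{W}_j$ must now carry its own two CNOT stairs, so $s=\sum_j 2k_j=\mathcal{O}(n2^n)$. I would check that this degradation is harmless, since the target size already permits $\mathcal{O}(n2^n)$, and the same remark applies verbatim to the controlled operators used inside the block-encoding.
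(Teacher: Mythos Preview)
Your proof is correct and follows exactly the paper's approach, which simply invokes Lemma~\ref{lemma: Exact walsh-hadamard decomposition}, Theorem~\ref{thm : adjustable-depth with ancilla}, and Theorem~\ref{thm: block-encoding} in combination; your additional bookkeeping about Gray-code cancellation and the resulting $s=\mathcal{O}(n2^n)$ is a useful clarification the paper leaves implicit. Your only slip is the explanation of the upper bound $m=\mathcal{O}(2^n/n)$: full parallelization of $2^n$ Walsh operators actually requires $m=\Theta(n2^n)$ ancilla (cf.\ Corollary~\ref{lemma: fully parallelized Walsh-Hadamard decomposition}), so this bound is just the stated range of applicability rather than the onset of full parallelization.
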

\begin{proof}
Lemma \ref{lemma: Exact walsh-hadamard decomposition} with the partial parallelization Theorem \ref{thm : adjustable-depth with ancilla} and the block-encoding Theorem \ref{thm: block-encoding}.
\end{proof}

\begin{corollary}[Lemma 11 \cite{sun2023asymptotically} + block-encoding]
For any $\alpha \geq 1$, any $n$-qubit non-unitary diagonal operator $\hat{D}$ can be $(\alpha d_{\max},m,0)$-block-encoded with a quantum circuit of depth $\mathcal{O}(2^n/n)$ and size $\mathcal{O}(2^n)$ using $n$ ancilla qubits. 
\end{corollary}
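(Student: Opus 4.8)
The plan is to follow exactly the recipe used in the preceding corollaries: take the exact diagonal-unitary synthesis of Lemma 11 of \cite{sun2023asymptotically} and feed it into the block-encoding Theorem \ref{thm: block-encoding}. First I would reduce the non-unitary problem to a unitary one. Since $\hat{D}$ has real eigenvalues $d_x$ with $|d_x|\le d_{\max}$ and $\alpha\ge 1$, every eigenvalue of $\hat{D}/(\alpha d_{\max})$ lies in $[-1/\alpha,1/\alpha]\subseteq[-1,1]$, so $\hat{U}=e^{i\arcsin(\hat{D}/(\alpha d_{\max}))}$ is a well-defined diagonal unitary whose eigenvalues are computed \emph{exactly}. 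Lemma 11 then implements $\hat{U}$ (and $\hat{U}^{\dagger}$) with a quantum circuit of depth $\mathcal{O}(2^n/n)$ and size $\mathcal{O}(2^n)$ using no ancilla qubits, and because the synthesis is exact the resulting block-encoding will have error $\epsilon=0$.

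The second step is to insert this circuit into the block-encoding construction of Fig. \ref{Adjustable depth framework for non unitary diagonal operator}, which requires \emph{controlled} versions of $\hat{U}$ and $\hat{U}^{\dagger}$ by the block-encoding ancilla $q_A$. Using the identity $C(\hat{V}^{-1}\hat{U}\hat{V})=\hat{V}^{-1}C(\hat{U})\hat{V}$ noted earlier, only the central diagonal rotations of each Walsh factor need to be controlled, leaving the CNOT ladders untouched; hence controlling costs at most a constant factor in size. Invoking Theorem \ref{thm: block-encoding} then yields an exact $(\alpha d_{\max},n,0)$-block-encoding, where the $n$ ancilla qubits are $q_A$ together with $n-1$ additional copies of it.

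The main obstacle --- and the reason $n$ (rather than a single) ancilla qubit is needed --- is to control $\hat{U}$ without destroying the depth $\mathcal{O}(2^n/n)$ of Lemma 11. A single control qubit $q_A$ can only participate in one gate per layer, so naively controlling each gate by $q_A$ would serialize the (up to $n$) gates that Lemma 11 runs simultaneously in a given layer and inflate the depth to $\mathcal{O}(2^n)$. To avoid this I would, as in the Fourier-GQSP-with-ancilla construction, prepare $n-1$ copies of $q_A$ with the copy unitary of Lemma \ref{copy lemma}; the $n$ resulting control qubits can then drive all $\le n$ parallel rotations of each layer at once. The copy and its inverse add depth only $\mathcal{O}(\log n)$ and size $\mathcal{O}(n)$, both negligible against $\mathcal{O}(2^n/n)$ and $\mathcal{O}(2^n)$ respectively, so the final circuit retains depth $\mathcal{O}(2^n/n)$, size $\mathcal{O}(2^n)$ and uses exactly $n$ ancilla qubits, completing the argument.
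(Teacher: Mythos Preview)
Your proposal is correct and follows essentially the same route as the paper's own proof: combine Lemma~11 of \cite{sun2023asymptotically} with the block-encoding Theorem~\ref{thm: block-encoding}, and preserve the $\mathcal{O}(2^n/n)$ depth of the controlled circuit by preparing $n-1$ copies of the ancilla $q_A$ via Lemma~\ref{copy lemma} so that the $\le n$ parallel gates of each layer can be controlled simultaneously. The paper states this slightly more generically---controlling every primitive gate of the Lemma~11 circuit rather than singling out the rotations inside ``Walsh factors''---which is safer since Lemma~11 is a recursive scheme whose internal structure is not literally a product of $\hat{W}_j$'s, but the essential mechanism and the resulting $\mathcal{O}(\log n)$ copy overhead are identical.
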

\begin{proof}
The Lemma 11 \cite{sun2023asymptotically} associated with the block-encoding Theorem \ref{thm: block-encoding} where each $\hat{U}_j$ is not a sequential or Walsh-Hadamard operator, but a primitive quantum gates of the decomposition given by the quantum circuits of \cite{sun2023asymptotically}. Indeed, any $n$-qubit unitary operation with size $s(n)$ and depth $d(n)$ can be controlled by a qubit $q_A$ with size $\mathcal{O}(s(n)+n)$ and depth $d(n)+\mathcal{O}(log(n))$ using $n-1$ copies of $q_A$ to control in parallel the different gates. 
\end{proof}

\begin{corollary}[Lemma 20  \cite{sun2023asymptotically} + block-encoding]
For any $\alpha \geq 1$, any $n$-qubit non-unitary diagonal operator $\hat{D}$ can be $(\alpha d_{\max},m,0)$-block-encoded with a quantum circuit of depth $\mathcal{O}(2^n/m +\log(m))$ and size $\mathcal{O}(n2^n)$ using $m$ ancilla qubits with $m=\Omega(n)$ and $m=\mathcal{O}(2^n/n)$. 
\end{corollary}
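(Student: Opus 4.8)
The plan is to combine Lemma 20 of \cite{sun2023asymptotically} with the block-encoding Theorem \ref{thm: block-encoding}, reusing the controlling strategy already employed in the previous corollary (Lemma 11 $+$ block-encoding). First I would recall that, by Theorem \ref{thm: block-encoding}, realizing the block-encoding of $\hat{D}$ with the circuit of Fig. \ref{Adjustable depth framework for non unitary diagonal operator} reduces to implementing the two controlled diagonal unitaries $e^{\pm i\arcsin(\hat{D}/(\alpha d_{\max}))}$, since the asymptotic cost of the block-encoding is governed by that of the unitary $\hat{U}=e^{i\arcsin(\hat{D}/(\alpha d_{\max}))}$. The choice $\alpha\geq 1$ guarantees $\|\hat{D}/(\alpha d_{\max})\|_2\leq 1/\alpha\leq 1$, so $\arcsin$ is well defined and $\hat{U}$ is a genuine diagonal unitary; moreover the implementation of $\hat{U}$ provided by Lemma 20 is \emph{exact}, which is precisely why the block-encoding error can be taken as $\epsilon=0$. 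Applying Lemma 20 to $\hat{U}$ for $m\in[2n,2^n/n]$ yields a circuit of depth $\mathcal{O}(2^n/m+\log m)$ and size $\mathcal{O}(2^n+nm)$ using $m$ ancilla.

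The remaining work is to turn this into a circuit for the \emph{controlled} unitary $C_{q_A}(\hat{U})$. Here I would invoke the identity $C(\hat{V}^{-1}\hat{U}_0\hat{V})=\hat{V}^{-1}C(\hat{U}_0)\hat{V}$ already used in the main text: inside Lemma 20's construction the ancilla-copy operations that set up the parallelization (together with the CNOT staircases that change the Walsh basis) play the role of $\hat{V}$ and need \emph{not} be controlled, so only the core single-qubit $\hat{R}_Z$ rotations carrying the Walsh phases pick up the extra control by $q_A$. There are $\mathcal{O}(2^n)$ such rotations, each controlled version costing $\mathcal{O}(1)$ primitive gates, so the controlling preserves the size of Lemma 20's circuit up to a constant factor.

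The only genuine subtlety is that a single copy of $q_A$ cannot control the $\Theta(m)$ rotations that Lemma 20 executes simultaneously within one layer. To resolve this I would prepare $\mathcal{O}(m)$ copies of $q_A$ with the copy unitary of Lemma \ref{copy lemma}, costing depth $\mathcal{O}(\log m)$ and $\mathcal{O}(m)$ additional ancilla, assign one copy per controlled rotation inside each parallel layer (the copies may be reused across layers since $q_A$ is never altered), and uncopy at the end. Summing up, the depth is $\mathcal{O}(2^n/m+\log m)+\mathcal{O}(\log m)=\mathcal{O}(2^n/m+\log m)$, the size is $\mathcal{O}(2^n+nm)$, which is bounded by $\mathcal{O}(n2^n)$ over the admissible range $m=\mathcal{O}(2^n/n)$, and the total ancilla count is $m+\mathcal{O}(m)=\mathcal{O}(m)$, matching the statement up to the constant absorbed in the $\Omega(n)$ and $\mathcal{O}(2^n/n)$ bounds on $m$. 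I expect the most delicate step to be exactly this bookkeeping of the $q_A$-copies: one must verify that Lemma 20's parallelization never requires more than $\mathcal{O}(m)$ simultaneous controls, and that inserting the controls together with the extra copy register does not interfere with the reset of Lemma 20's own ancilla, so that the final circuit remains an \emph{exact} $(\alpha d_{\max},m,0)$-block-encoding.
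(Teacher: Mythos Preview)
Your proposal is correct and follows the same approach as the paper, which simply writes ``Lemma 20 \cite{sun2023asymptotically} associated to the block-encoding Theorem \ref{thm: block-encoding}.'' You have supplied considerably more detail than the paper does: in particular, your treatment of the $q_A$-copy bookkeeping (preparing $\mathcal{O}(m)$ copies so that the $\Theta(m)$ parallel $\hat{R}_Z$ rotations in each layer of Lemma 20's construction can be simultaneously controlled) is exactly the mechanism the paper invokes only implicitly here and spells out only in the preceding corollary for Lemma 11. Your size accounting is in fact tighter than the stated $\mathcal{O}(n2^n)$, since $\mathcal{O}(2^n+nm)=\mathcal{O}(2^n)$ on the admissible range, but that is of course consistent with the corollary.
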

\begin{proof}
Lemma 20  \cite{sun2023asymptotically}  associated to the block-encoding Theorem \ref{thm: block-encoding}.
\end{proof}

\begin{corollary}[Lemma 20  \cite{sun2023asymptotically} with a maximum number of ancilla qubits + block-encoding]
For any $\alpha \geq 1$, any $n$-qubit non-unitary diagonal operator $\hat{D}$ can be $(\alpha d_{\max},m,0)$-block-encoded with a quantum circuit of depth $\mathcal{O}(n)$ and size $\mathcal{O}(n2^n)$ using $m=\mathcal{O}(2^n/n)$ ancilla qubits. 
\end{corollary}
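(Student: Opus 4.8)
The plan is to derive this corollary exactly as the preceding ``Lemma~20 + block-encoding'' corollary, only starting from the maximally-parallelized instance of the Sun et al. scheme. First I would set $\hat{U}=e^{i\arcsin(\hat{D}/(\alpha d_{\max}))}$; since $\hat{D}$ is diagonal with real eigenvalues and $\alpha\ge 1$ forces $\|\hat{D}/(\alpha d_{\max})\|_2\le 1/\alpha\le 1$, the operator $\arcsin(\hat{D}/(\alpha d_{\max}))$ is a well-defined real diagonal operator, so $\hat{U}$ is a genuine diagonal unitary. Invoking Lemma~20 of \cite{sun2023asymptotically} in the maximal-ancilla regime $m=2^n/n$ then produces a circuit $\mathcal{C}$ that implements $\hat{U}$ \emph{exactly} with depth $\mathcal{O}(n)$ and size $\mathcal{O}(2^n)$. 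Because $\hat{U}$ is realized exactly, the block-encoding obtained will also be exact, i.e.\ of type $(\alpha d_{\max},m,0)$.

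The core of the argument is to convert $\mathcal{C}$ into the controlled operations demanded by Fig.~\ref{Adjustable depth framework for non unitary diagonal operator}, namely an anti-controlled $\hat{U}$ and a controlled $\hat{U}^{\dagger}$ on the main register, both conditioned on the ancilla $\ket{q_A}$. In contrast to the sequential and Walsh-Hadamard decompositions, $\mathcal{C}$ is not a product of commuting diagonal unitaries that can be plugged straight into Theorem~\ref{thm: block-encoding}; instead I would reuse the device of the ``Lemma~11 + block-encoding'' corollary and control $\mathcal{C}$ gate by gate. The one structural point is that a single control qubit cannot feed the many gates of one circuit layer simultaneously, so I would first use the copy unitary of Lemma~\ref{copy lemma} to spread $\ket{q_A}$ into $\mathcal{O}(2^n/n)$ copies --- enough to supply an independent control to every gate of the widest layer of $\mathcal{C}$, since a layer on $n+m=\mathcal{O}(2^n/n)$ qubits holds at most $\mathcal{O}(2^n/n)$ gates --- and uncopy them afterwards. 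Note that, unlike in Theorem~\ref{thm: block-encoding}, only $\ket{q_A}$ is copied; the main register is processed by $\mathcal{C}$ directly. Where convenient, the identity $C(\hat{V}^{-1}\hat{U}\hat{V})=\hat{V}^{-1}C(\hat{U})\hat{V}$ lets one control only the rotation gates and leave the CNOT structure untouched.

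For the bookkeeping, preparing and later erasing the $\mathcal{O}(2^n/n)$ copies of $\ket{q_A}$ costs, by Lemma~\ref{copy lemma}, depth $\mathcal{O}(\log(2^n/n))=\mathcal{O}(n)$ and size $\mathcal{O}(2^n/n)$, while replacing each primitive gate of $\mathcal{C}$ by its singly-controlled version multiplies depth and size only by constants, and the surrounding single-qubit gates $\hat{H}$ and $\hat{P}$ on $\ket{q_A}$ are $\mathcal{O}(1)$. Combining these with the $\mathcal{O}(n)$ depth and $\mathcal{O}(2^n)$ size of $\mathcal{C}$ yields an $(\alpha d_{\max},m,0)$-block-encoding of depth $\mathcal{O}(n)$ and size $\mathcal{O}(n 2^n)$ using $m=\mathcal{O}(2^n/n)$ ancillas ($\ket{q_A}$, its $\mathcal{O}(2^n/n)$ copies, and the $2^n/n$ working ancillas of $\mathcal{C}$), as claimed.

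The step I expect to be delicate is the depth accounting rather than any point of principle: one has to confirm that the copies of $\ket{q_A}$ genuinely supply enough concurrently-available controls for the widest layer --- which is precisely why the maximal-ancilla regime, where $\mathcal{C}$ is already $\mathcal{O}(n)$-deep, is the natural setting --- and that the logarithmic copy depth $\mathcal{O}(\log(2^n/n))=\mathcal{O}(n)$ neither dominates nor breaks the $\mathcal{O}(n)$ target. The size bound $\mathcal{O}(n 2^n)$ should be read as the uniform, deliberately loose estimate used for all Sun-et-al.-based block-encodings in Table~\ref{Table : approximate case}; the gate-by-gate construction above in fact stays within it.
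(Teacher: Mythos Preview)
Your proposal is correct and follows essentially the same approach as the paper. The paper's proof is the one-line ``Lemma 20 \cite{sun2023asymptotically} associated with the block-encoding Theorem \ref{thm: block-encoding}'', and you have simply unpacked what that association means here: taking the maximal-ancilla instance of Lemma~20, controlling the resulting circuit gate-by-gate via copies of $\ket{q_A}$ (exactly the device the paper spells out in its ``Lemma~11 + block-encoding'' proof), and checking that the copy/uncopy overhead stays within the stated depth and size bounds.
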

\begin{proof}
Lemma 20  \cite{sun2023asymptotically} associated with the block-encoding Theorem \ref{thm: block-encoding}.
\end{proof}

\subsection{Approximate methods}
\label{sec:approximate methods}

\subsubsection{For diagonal unitaries depending on differentiable functions}
\label{For diagonal unitaries depending on differentiable functions} 
In the following, we consider a $n$-qubit diagonal unitary $\hat{U}=\sum_{x=0}^{N-1}e^{if(x/N)}\ket{x}\bra{x}$, with $N=2^n$ depending on a real-valued function $f$ defined on $[0,1]$ with bounded first derivative.

\begin{corollary}[Approximate sequential decomposition without ancilla]
\label{cor : Sequential no ancilla : approximé}
Any $n$-qubit diagonal unitary depending on a real-valued function $f$ defined on $[0,1]$ with bounded first derivative is implementable up to an error $\epsilon>0$ in spectral norm through its sequential decomposition with a quantum circuit of depth $\mathcal{O}(\log(1/\epsilon)/\epsilon)$ and size $\mathcal{O}(\log(1/\epsilon)/\epsilon)$ without ancilla qubits.
\end{corollary}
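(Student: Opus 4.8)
The plan is to obtain this corollary as a direct composition of two results already established: the approximation Theorem \ref{thm: Approximation theorem} and the exact sequential synthesis of Lemma \ref{Exact_sequential_decomposition_without_ancilla}. The key observation is that Theorem \ref{thm: Approximation theorem} reduces the problem of $\epsilon$-approximating the $n$-qubit target $\hat{U}_{f,n}$ to the problem of \emph{exactly} implementing the restricted $m$-qubit diagonal unitary $\hat{U}_{f,m}$, with $m=\lceil \log_2(\|f'\|_{\infty,[0,1]}/\epsilon)\rceil$, and that the resulting circuit inherits the size, depth and width of whatever exact $m$-qubit method one plugs in. Here the exact method is the ancilla-free sequential decomposition.

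First I would invoke Theorem \ref{thm: Approximation theorem}: because $f$ has bounded first derivative, implementing $\hat{U}_{f,m}$ exactly on $m=\lceil \log_2(\|f'\|_{\infty,[0,1]}/\epsilon)\rceil$ qubits yields a circuit whose spectral-norm distance to $\hat{U}_{f,n}$ is at most $\epsilon$, with cost functions $s(m)$, $d(m)$ and $w(m)$. Second, I would supply the exact method on those $m$ qubits via Lemma \ref{Exact_sequential_decomposition_without_ancilla}, which gives $d(m)=\mathcal{O}(m2^m)$, $s(m)=\mathcal{O}(m2^m)$ and $w(m)=0$, so that no ancilla qubits are introduced. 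Finally I would substitute the value of $m$: since $\|f'\|_{\infty,[0,1]}$ is a fixed constant, one has $2^m\le 2\,\|f'\|_{\infty,[0,1]}/\epsilon=\mathcal{O}(1/\epsilon)$ and $m=\mathcal{O}(\log(1/\epsilon))$, whence $m2^m=\mathcal{O}(\log(1/\epsilon)/\epsilon)$. This delivers the claimed depth and size $\mathcal{O}(\log(1/\epsilon)/\epsilon)$ with zero ancilla qubits.

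There is no genuine analytic obstacle here, as the statement is a corollary; the only point requiring care is the bookkeeping of the exponentials and, more importantly, the implicit assumption that $\|f'\|_{\infty,[0,1]}$ is a constant that does not grow with $n$. If the derivative bound were allowed to scale with the number of qubits, the truncation level $m$ and hence the cost would pick up that dependence, and the $n$-independence of the scaling would be lost. Making explicit that $\|f'\|_{\infty,[0,1]}=\mathcal{O}(1)$ is therefore the one hypothesis I would foreground, after which the two cited results combine mechanically to give the stated complexities.
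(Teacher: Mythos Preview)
Your proposal is correct and follows exactly the same route as the paper's own proof, which simply states that the corollary is a direct consequence of Lemma \ref{Exact_sequential_decomposition_without_ancilla} and Theorem \ref{thm: Approximation theorem}. Your additional remark about $\|f'\|_{\infty,[0,1]}$ being a constant independent of $n$ is a worthwhile clarification but does not change the argument.
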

\begin{proof}
This corollary is a direct consequence of Lemma \ref{Exact_sequential_decomposition_without_ancilla} and the approximate Theorem \ref{thm: Approximation theorem}.
\end{proof}

\begin{corollary}[Approximate sequential decomposition with one ancilla]
\label{cor:  approximate sequential decomposition with one ancilla}
Any $n$-qubit diagonal unitary depending on a real-valued function $f$ defined on $[0,1]$ with bounded first derivative is implementable up to an error $\epsilon>0$ in spectral norm through its sequential decomposition with a quantum circuit of depth $\mathcal{O}(\log(\log(1/\epsilon))^3/\epsilon)$ and size $\mathcal{O}(\log(1/\epsilon)\log(\log(1/\epsilon))^4/\epsilon)$ using one ancilla qubit.
\end{corollary}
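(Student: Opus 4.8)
The plan is to obtain this corollary exactly as its ancilla-free counterpart, Corollary~\ref{cor : Sequential no ancilla : approximé}, was obtained: by feeding the exact one-ancilla construction into the approximation Theorem~\ref{thm: Approximation theorem}. The starting point is Lemma~\ref{lemma: Exact sequential decomposition with one ancilla}, which implements \emph{any} $n$-qubit diagonal unitary through its sequential decomposition with depth $d(n)=\mathcal{O}(\log(n)^3 2^n)$, size $s(n)=\mathcal{O}(n\log(n)^4 2^n)$ and width $w(n)=1$. Theorem~\ref{thm: Approximation theorem} then tells us that to $\epsilon$-approximate $\hat{U}_{f,n}$ it suffices to run this exact circuit on only $m=\lceil \log_2(||f'||_{\infty,[0,1]}/\epsilon) \rceil$ qubits, inheriting depth $d(m)$, size $s(m)$ and width $w(m)$.

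The second step is the arithmetic substitution $n \mapsto m$. Since $f$ has bounded first derivative, $||f'||_{\infty,[0,1]}$ is an $n$-independent constant, so $m=\mathcal{O}(\log(1/\epsilon))$ and $2^m=\Theta(||f'||_{\infty,[0,1]}/\epsilon)=\Theta(1/\epsilon)$, while $\log(m)=\mathcal{O}(\log(\log(1/\epsilon)))$. Plugging these into $d(m)$ yields depth $\mathcal{O}(\log(\log(1/\epsilon))^3/\epsilon)$ and into $s(m)$ yields size $\mathcal{O}(\log(1/\epsilon)\log(\log(1/\epsilon))^4/\epsilon)$, matching the claimed bounds, while the width $w(m)=1$ is unchanged.

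This is essentially a one-line deduction, so there is no serious obstacle; the only point requiring care is checking that the boundedness hypothesis on $f'$ is precisely what collapses the a priori qubit count $m$ down to $\mathcal{O}(\log(1/\epsilon))$, since if $||f'||_{\infty,[0,1]}$ grew with $n$ the resulting circuit would no longer be efficient. It is also worth noting that the single ancilla qubit used by the polylogarithmic multi-controlled-phase construction of \cite{claudon2024polylogarithmic} inside Lemma~\ref{lemma: Exact sequential decomposition with one ancilla} now acts on the reduced $m$-qubit register rather than on $n$ qubits, which is exactly why the $\log(n)$ factors in the exact depth and size become $\log(m)=\log(\log(1/\epsilon))$ after the reduction.
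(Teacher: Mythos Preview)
Your proof is correct and follows exactly the paper's own argument: combine Lemma~\ref{lemma: Exact sequential decomposition with one ancilla} with the approximation Theorem~\ref{thm: Approximation theorem} and substitute $m=\mathcal{O}(\log(1/\epsilon))$ into the depth, size and width bounds. Your additional remarks on why $||f'||_{\infty}$ being constant is essential and why the width stays at one are accurate elaborations but not part of the paper's terse one-line proof.
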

\begin{proof}
This corollary is a direct consequence of Lemma \ref{lemma: Exact sequential decomposition with one ancilla} and the approximate Theorem \ref{thm: Approximation theorem}.
\end{proof}

\begin{corollary}[Approximate Walsh-Hadamard decomposition without ancilla \cite{welch2014efficient}]
\label{cor: approximate walsh-hadamard decomposition}
Any $n$-qubit diagonal unitary depending on a real-valued function $f$ defined on $[0,1]$ with bounded first derivative is implementable up to an error $\epsilon>0$ in spectral norm through its Walsh-Hadamard decomposition with a quantum circuit of depth $\mathcal{O}(1/\epsilon)$ and size $\mathcal{O}(1/\epsilon)$ without ancilla qubit.
\end{corollary}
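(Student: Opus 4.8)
The plan is to obtain this corollary as an immediate composition of the exact Walsh-Hadamard synthesis, Lemma \ref{lemma: Exact walsh-hadamard decomposition}, with the approximation result, Theorem \ref{thm: Approximation theorem}, exactly as was done for the sequential case in Corollary \ref{cor : Sequential no ancilla : approximé}. First I would record the exact scalings supplied by Lemma \ref{lemma: Exact walsh-hadamard decomposition}: any $n$-qubit diagonal unitary admits an exact Walsh-Hadamard circuit of size $s(n)=\mathcal{O}(2^n)$ and depth $d(n)=\mathcal{O}(2^n)$ with width $w(n)=0$. Crucially, the construction of \cite{bullock2004asymptotically} uses only CNOT and $\hat{R}_Z$ gates and therefore carries no extra polynomial-in-$n$ overhead, unlike the sequential decomposition, whose multi-controlled phase gates cost an additional factor of $n$. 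This is the one point I would check carefully, since it is precisely what distinguishes the $\mathcal{O}(1/\epsilon)$ bound claimed here from the $\mathcal{O}(\log(1/\epsilon)/\epsilon)$ bound of the sequential corollary.

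Next I would feed these scalings into Theorem \ref{thm: Approximation theorem}, which asserts that implementing the target unitary exactly on its $m$ most significant qubits already yields an $\epsilon$-approximation in spectral norm, where
\begin{equation}
m=\left\lceil \log_2\!\left(\|f'\|_{\infty,[0,1]}/\epsilon\right)\right\rceil .
\end{equation}
Substituting $n\mapsto m$ into the exact scalings gives a circuit of size $s(m)=\mathcal{O}(2^m)$, depth $d(m)=\mathcal{O}(2^m)$ and width $w(m)=0$, so that no ancilla qubit is used.

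Finally I would estimate $2^m$. By the ceiling, $2^m\le 2\,\|f'\|_{\infty,[0,1]}/\epsilon$, and since $f$ is assumed to have bounded first derivative the quantity $\|f'\|_{\infty,[0,1]}$ is a finite constant independent of $n$ and $\epsilon$. Hence $2^m=\mathcal{O}(1/\epsilon)$, yielding both depth and size $\mathcal{O}(1/\epsilon)$ as claimed. I anticipate no genuine obstacle: the entire argument is a substitution into two already-established results, and the only substantive verification is confirming that the Walsh-Hadamard primitive contributes no hidden factor of $m$ to the gate count, so that the logarithmic factor appearing in the sequential case is absent here.
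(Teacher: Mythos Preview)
Your proposal is correct and follows essentially the same approach as the paper, which simply states that the result is a direct consequence of Lemma \ref{lemma: Exact walsh-hadamard decomposition} and the approximation Theorem \ref{thm: Approximation theorem}. Your additional care in checking that the Walsh-Hadamard primitive carries no hidden $m$-dependent factor, and your explicit bound $2^m\le 2\|f'\|_{\infty,[0,1]}/\epsilon$, just spell out what the paper leaves implicit.
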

\begin{proof}
This corollary is a direct consequence of Lemma \ref{lemma: Exact walsh-hadamard decomposition} and the approximate Theorem \ref{thm: Approximation theorem}.
\end{proof}

\begin{corollary}[Fully parallelized approximate sequential decomposition]
\label{lemma: fully parallelized approximate sequential decomposition}
Any $n$-qubit diagonal unitary depending on a real-valued function $f$ defined on $[0,1]$ with bounded first derivative is implementable up to an error $\epsilon>0$ in spectral norm through its sequential decomposition with a quantum circuit of depth $\mathcal{O}(\log(1/\epsilon))$ and size $\mathcal{O}(\log(1/\epsilon)/\epsilon)$ using $\mathcal{O}(\log(1/\epsilon)/\epsilon)$ ancilla qubits.
\end{corollary}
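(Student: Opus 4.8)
The plan is to obtain this corollary by composing two results already established earlier in the paper: the fully parallelized exact sequential decomposition, Corollary \ref{lemma: fully parallelized sequential decomposition with one ancilla}, together with the approximation Theorem \ref{thm: Approximation theorem}. The exact construction implements an arbitrary $n$-qubit diagonal unitary through its sequential decomposition with depth $d(n)=\mathcal{O}(n)$, size $s(n)=\mathcal{O}(n2^n)$ and width $w(n)=\mathcal{O}(n2^n)$. The approximation theorem then guarantees that, to $\epsilon$-approximate $\hat{U}_{f,n}$ in spectral norm, it suffices to run this same exact circuit on only $m=\lceil \log_2(||f'||_{\infty,[0,1]}/\epsilon)\rceil$ qubits, incurring the costs $d(m)$, $s(m)$ and $w(m)$. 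So the proof is essentially a substitution of $m$ into the three exact scalings.

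First I would handle the depth, which is immediate: since $||f'||_{\infty,[0,1]}$ is a fixed constant of the problem, $m=\mathcal{O}(\log(1/\epsilon))$, hence $d(m)=\mathcal{O}(m)=\mathcal{O}(\log(1/\epsilon))$. The only arithmetic genuinely worth spelling out concerns the factor $2^m$ appearing in the exact size and width. Because the ceiling inflates the exponent by at most one, one has $2^m=2^{\lceil \log_2(||f'||_{\infty,[0,1]}/\epsilon)\rceil}=\Theta(||f'||_{\infty,[0,1]}/\epsilon)=\Theta(1/\epsilon)$. Consequently $s(m)=\mathcal{O}(m\,2^m)=\mathcal{O}(\log(1/\epsilon)/\epsilon)$ and $w(m)=\mathcal{O}(m\,2^m)=\mathcal{O}(\log(1/\epsilon)/\epsilon)$, which are exactly the claimed bounds.

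Since this is a direct plug-in, there is no substantial obstacle; the one point requiring care is simply verifying that the hypothesis of Theorem \ref{thm: Approximation theorem} holds, namely that $f$ is at least once differentiable with bounded first derivative, so that $||f'||_{\infty,[0,1]}<\infty$ and $m$ is well defined and finite. This is precisely the standing assumption on $f$. I would also remark, to make the statement meaningful, that the bound is of interest in the regime $m<n$: when $m\ge n$ the restriction is vacuous, the implementation is exact and the error vanishes, as noted in the discussion following Theorem \ref{thm: Approximation theorem}. In the relevant regime the resulting depth, size and width are genuinely independent of $n$, which is the content of the corollary.
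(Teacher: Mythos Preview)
Your proposal is correct and follows exactly the paper's own approach: the paper's proof consists of a single sentence invoking Corollary~\ref{lemma: fully parallelized sequential decomposition with one ancilla} together with the approximation Theorem~\ref{thm: Approximation theorem}, and you carry out precisely this substitution, with the added benefit of spelling out the arithmetic on $m$ and $2^m$ explicitly.
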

\begin{proof}
This corollary is a direct consequence of corollary \ref{lemma: fully parallelized sequential decomposition with one ancilla} associated with the approximate Theorem \ref{thm: Approximation theorem}.
\end{proof}

\begin{corollary}[Fully parallelized approximate Walsh-Hadamard decomposition]
\label{lemma: fully parallelized approximate walsh-hadamard decomposition}
Any $n$-qubit diagonal unitary depending on a real-valued function $f$ defined on $[0,1]$ with bounded first derivative is implementable up to an error $\epsilon>0$ in spectral norm through its Walsh-Hadamard decomposition with a quantum circuit of depth $\mathcal{O}(\log(1/\epsilon))$ and size $\mathcal{O}(\log(1/\epsilon)/\epsilon)$ using $\mathcal{O}(\log(1/\epsilon)/\epsilon)$ ancilla qubits.
\end{corollary}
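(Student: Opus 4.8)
The plan is to obtain this corollary as a direct combination of the exact fully parallelized Walsh--Hadamard construction (Corollary \ref{lemma: fully parallelized walsh-hadamard decomposition}) with the approximation Theorem \ref{thm: Approximation theorem}. First I would recall that Corollary \ref{lemma: fully parallelized walsh-hadamard decomposition} provides, for \emph{any} $m$-qubit diagonal unitary, a fully parallelized quantum circuit of depth $d(m)=\mathcal{O}(m)$, size $s(m)=\mathcal{O}(m2^m)$ and width $w(m)=\mathcal{O}(m2^m)$. Since the approximation theorem takes an exact method on $m$ qubits and turns it into an $\epsilon$-approximation of the $n$-qubit target $\hat{U}_{f,n}$ with the \emph{same} functional forms $d(m)$, $s(m)$, $w(m)$, it suffices to evaluate these three quantities at the prescribed truncation level.

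Next I would substitute the value $m=\lceil \log_2(\|f'\|_{\infty,[0,1]}/\epsilon)\rceil$ dictated by Theorem \ref{thm: Approximation theorem}. Treating $\|f'\|_{\infty,[0,1]}$ as a constant (it is bounded by hypothesis), this gives $m=\Theta(\log(1/\epsilon))$ and $2^m=\Theta(\|f'\|_{\infty,[0,1]}/\epsilon)=\Theta(1/\epsilon)$. Plugging in: the depth becomes $d(m)=\mathcal{O}(m)=\mathcal{O}(\log(1/\epsilon))$; the size becomes $s(m)=\mathcal{O}(m2^m)=\mathcal{O}(\log(1/\epsilon)/\epsilon)$; and the width becomes $w(m)=\mathcal{O}(m2^m)=\mathcal{O}(\log(1/\epsilon)/\epsilon)$, which are exactly the claimed scalings.

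The only point requiring care---and the reason the statement is restricted to functions with bounded first derivative---is the applicability of Theorem \ref{thm: Approximation theorem}: its error bound $\|\hat{U}_{f,n}-\hat{U}_{S_{f,M},n}\|_2\le \|f'\|_{\infty,[0,1]}/2^m$ relies on the linear-in-$M$ convergence of the $M$-Walsh series, which in turn requires $f'$ to exist and be bounded on $[0,1]$. I would therefore note explicitly that the hypothesis on $f$ guarantees this bound, so that the chosen $m$ indeed yields a spectral-norm error at most $\epsilon$. No genuine obstacle arises beyond this bookkeeping, since the parallelization and approximation steps are independent: the width cost incurred by full parallelization on $m$ qubits is already absorbed inside $w(m)$, and the replacement of the target by its $m$-qubit Walsh truncation is precisely what the approximation theorem packages.
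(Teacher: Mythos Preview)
Your proposal is correct and follows exactly the paper's own approach: the paper's proof is the single sentence ``This corollary is a direct consequence of corollary \ref{lemma: fully parallelized walsh-hadamard decomposition} with the approximate Theorem \ref{thm: Approximation theorem},'' and you have simply unpacked that combination with the explicit substitution $m=\Theta(\log(1/\epsilon))$, $2^m=\Theta(1/\epsilon)$.
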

This corollary is a direct consequence of corollary \ref{lemma: fully parallelized walsh-hadamard decomposition} with the approximate Theorem \ref{thm: Approximation theorem}.

\begin{corollary}[Partially parallelized approximate sequential decomposition using linear depth decomposition]
\label{cor: partially parallelized approximate sequential decomposition}
Any $n$-qubit diagonal unitary depending on a real-valued function $f$ defined on $[0,1]$ with bounded first derivative is implementable up to an error $\epsilon>0$ in spectral norm through its sequential decomposition with a quantum circuit of depth $\mathcal{O}(\log(1/\epsilon)^2/(\epsilon m)+\log(m/\log(1/\epsilon)))$ and size $\mathcal{O}(\log(1/\epsilon)/\epsilon+m)$ using $m$ ancilla qubits with  $m=\Omega(\log(1/\epsilon))$ and $m=\mathcal{O}(\log(1/\epsilon)/\epsilon)$.
\end{corollary}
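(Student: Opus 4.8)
The plan is to combine the approximation Theorem \ref{thm: Approximation theorem} with the exact partially parallelized sequential decomposition of Corollary \ref{lemma: partially parallelized sequential decomposition}, exactly as the analogous no-ancilla and fully-parallelized corollaries do. The only care needed is to keep the number of ancilla qubits $m$ appearing in the statement distinct from the reduced qubit count produced by the approximation theorem, which I will rename.

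First I would invoke Theorem \ref{thm: Approximation theorem}: since $f$ has bounded first derivative, $\epsilon$-approximating $\hat{U}_{f,n}$ in spectral norm reduces to implementing \emph{exactly} the $\tilde{n}$-qubit diagonal unitary $\hat{U}_{f,\tilde{n}}$ on $\tilde{n}=\lceil \log_2(\|f'\|_{\infty,[0,1]}/\epsilon)\rceil=\mathcal{O}(\log(1/\epsilon))$ qubits, treating $\|f'\|_{\infty,[0,1]}$ as a constant independent of $n$. The crucial arithmetic fact I would record is that $2^{\tilde{n}}=\|f'\|_{\infty,[0,1]}/\epsilon=\mathcal{O}(1/\epsilon)$, so the exponentially large factor $2^{\tilde{n}}$ of the exact method becomes merely linear in $1/\epsilon$.

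Next I would apply the exact adjustable-depth result of Corollary \ref{lemma: partially parallelized sequential decomposition} (itself Lemma \ref{Exact_sequential_decomposition_without_ancilla} combined with the adjustable-depth Theorem \ref{thm : adjustable-depth with ancilla}) to this $\tilde{n}$-qubit problem with $m$ ancilla qubits. Its depth $\mathcal{O}(\tilde{n}^2 2^{\tilde{n}}/m+\log(m/\tilde{n}))$ and size $\mathcal{O}(\tilde{n} 2^{\tilde{n}}+m)$ become, after substituting $\tilde{n}=\mathcal{O}(\log(1/\epsilon))$ and $2^{\tilde{n}}=\mathcal{O}(1/\epsilon)$, a depth $\mathcal{O}(\log(1/\epsilon)^2/(\epsilon m)+\log(m/\log(1/\epsilon)))$ and a size $\mathcal{O}(\log(1/\epsilon)/\epsilon+m)$, matching the claim. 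I would likewise translate the admissible ancilla range $m\in[\tilde{n},\mathcal{O}(\tilde{n}2^{\tilde{n}})]$ of Corollary \ref{lemma: partially parallelized sequential decomposition} into $m=\Omega(\log(1/\epsilon))$ and $m=\mathcal{O}(\log(1/\epsilon)/\epsilon)$ under the same substitution.

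The proof is essentially a substitution argument, so there is no serious analytic obstacle; the main (indeed the only) subtlety is the bookkeeping of the two distinct roles played by the contraction. The approximation theorem collapses the $n$-qubit target onto $\tilde{n}=\mathcal{O}(\log(1/\epsilon))$ qubits, while the ancilla budget $m$ then parallelizes that contracted circuit. Once $\tilde{n}$ and $2^{\tilde{n}}$ are both expressed in terms of $\epsilon$, the depth, size, and ancilla scalings follow by direct substitution with no further estimates required.
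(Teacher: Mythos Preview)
Your proposal is correct and takes exactly the same approach as the paper: the paper's proof consists of the single sentence ``This corollary is a direct consequence of Corollary \ref{lemma: partially parallelized sequential decomposition} with the approximate Theorem \ref{thm: Approximation theorem},'' and you have simply spelled out the substitution $\tilde{n}=\mathcal{O}(\log(1/\epsilon))$, $2^{\tilde{n}}=\mathcal{O}(1/\epsilon)$ that this one-line proof leaves implicit.
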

\begin{proof}
This corollary is a direct consequence of corollary \ref{lemma: partially parallelized sequential decomposition} with the approximate Theorem \ref{thm: Approximation theorem}.
\end{proof}

\begin{corollary}[Partially parallelized approximate sequential decomposition using polylogarithmic depth decomposition]
\label{cor: partially parallelized approximate sequential decomposition bis}
Any $n$-qubit diagonal unitary depending on a real-valued function $f$ defined on $[0,1]$ with bounded first derivative is implementable up to an error $\epsilon>0$ in spectral norm through its sequential decomposition with a quantum circuit of depth $\mathcal{O}(\log(1/\epsilon)\log(\log(1/\epsilon))^3/(\epsilon m)+\log(m/\log(1/\epsilon)))$ and size $\mathcal{O}(\log(1/\epsilon)\log(\log(1/\epsilon))^4/\epsilon+m)$ using $m$ ancilla qubits with  $m=\Omega(\log(1/\epsilon))$ and $m=\mathcal{O}(\log(1/\epsilon)/\epsilon)$.
\end{corollary}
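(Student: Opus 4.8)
The plan is to obtain this corollary by composing two results already established in the excerpt: the exact partially parallelized sequential decomposition of Corollary \ref{lemma: partially parallelized sequential decomposition bis}, which relies on the polylogarithmic-depth multi-controlled gates of Claudon et al., and the approximation Theorem \ref{thm: Approximation theorem}, which trades accuracy for a reduction in the effective number of qubits. The only genuine care needed is that the symbol $m$ plays two different roles in these two inputs, so I would keep them notationally separate throughout.

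First I would invoke Theorem \ref{thm: Approximation theorem}. Writing $n' = \lceil \log_2(\|f'\|_{\infty,[0,1]}/\epsilon)\rceil$ for the effective qubit count produced by that theorem, the restricted diagonal unitary $\hat{U}_{f,n'}\otimes\hat{I}_2^{\otimes(n-n')}$ is an $\epsilon$-approximation of the target $\hat{U}_{f,n}$ in spectral norm, and it inherits exactly the size, depth and width of any exact circuit for the $n'$-qubit diagonal unitary. Since $f$ has bounded first derivative, $\|f'\|_{\infty,[0,1]}$ is an $n$-independent constant, whence $n' = \Theta(\log(1/\epsilon))$ and $2^{n'} = \Theta(1/\epsilon)$.

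Next I would apply Corollary \ref{lemma: partially parallelized sequential decomposition bis} to this $n'$-qubit diagonal unitary, using $m$ ancilla qubits for the parallelization (this is the ancilla count appearing in the final statement). That corollary gives depth $\mathcal{O}(n'\log(n')^3 2^{n'}/m + \log(m/n'))$, size $\mathcal{O}(n'\log(n')^4 2^{n'} + m)$, valid for $m \in [n'+2,\,\mathcal{O}(n'2^{n'})]$. Substituting $n' = \Theta(\log(1/\epsilon))$, $2^{n'}=\Theta(1/\epsilon)$ and $\log(n') = \Theta(\log\log(1/\epsilon))$ turns the depth into $\mathcal{O}(\log(1/\epsilon)\log(\log(1/\epsilon))^3/(\epsilon m) + \log(m/\log(1/\epsilon)))$ and the size into $\mathcal{O}(\log(1/\epsilon)\log(\log(1/\epsilon))^4/\epsilon + m)$, matching the claimed scalings, while the admissible range collapses to $m=\Omega(\log(1/\epsilon))$ together with $m=\mathcal{O}(\log(1/\epsilon)/\epsilon)$.

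The main (and essentially only) obstacle is the bookkeeping around the clash of the letter $m$ and the faithful translation of the edges of the admissible range: one must check that $n'+2 = \Theta(\log(1/\epsilon))$ yields the lower bound $\Omega(\log(1/\epsilon))$ and that $n'2^{n'} = \Theta(\log(1/\epsilon)/\epsilon)$ yields the stated upper bound, so that no constraint on $m$ is dropped or weakened by the substitution. Beyond this asymptotic accounting there is no further computation, since the statement follows directly from the two cited results.
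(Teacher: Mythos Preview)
Your proposal is correct and follows exactly the paper's approach: combine Corollary~\ref{lemma: partially parallelized sequential decomposition bis} with the approximation Theorem~\ref{thm: Approximation theorem} and substitute $n'=\Theta(\log(1/\epsilon))$. The paper states this as a one-line consequence, and your additional bookkeeping on the ancilla range is a faithful unpacking of that substitution.
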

\begin{proof}
This corollary is a direct consequence of corollary \ref{lemma: partially parallelized sequential decomposition bis} with the approximate Theorem \ref{thm: Approximation theorem}.
\end{proof}

\begin{corollary}[Partially parallelized approximate Walsh-Hadamard decomposition]
\label{cor: partially parallelized approximate walsh-hadamard decomposition with one ancilla}
Any $n$-qubit diagonal unitary depending on a real-valued function $f$ defined on $[0,1]$ with bounded first derivative is implementable up to an error $\epsilon>0$ in spectral norm through its Walsh-Hadamard decomposition with a quantum circuit of depth $\mathcal{O}(\log(1/\epsilon)^2/(\epsilon m)+\log(m/\log(1/\epsilon)))$ and size $\mathcal{O}(\log(1/\epsilon)/\epsilon+m)$ using $m$ ancilla qubits with  $m=\Omega(\log(1/\epsilon))$ and $m=\mathcal{O}(\log(1/\epsilon)/\epsilon)$.
\end{corollary}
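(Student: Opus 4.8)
The plan is to obtain this corollary by composing the exact partially-parallelized Walsh--Hadamard construction (Corollary~\ref{lemma: partially parallelized walsh hadamard decomposition with one ancilla}) with the approximation Theorem~\ref{thm: Approximation theorem}, exactly as was done for the sequential decomposition in Corollary~\ref{cor: partially parallelized approximate sequential decomposition}. First I would recall that the exact method implements an arbitrary $n$-qubit diagonal unitary with depth $\mathcal{O}(n^2 2^n/M+\log(M/n))$, size $\mathcal{O}(n2^n+M)$ and $M$ ancilla qubits for any $M\in[n,\mathcal{O}(n2^n)]$; here I deliberately write the ancilla count as $M$ to keep it separate from the symbol $m$ appearing in the statement of the approximation theorem.

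Next I would invoke Theorem~\ref{thm: Approximation theorem}: since $f$ has bounded first derivative, running the \emph{exact} circuit not on all $n$ qubits but on the reduced register of $n_\epsilon=\lceil\log_2(\|f'\|_{\infty,[0,1]}/\epsilon)\rceil$ qubits already yields an $\epsilon$-approximation of $\hat{U}_{f,n}$ in spectral norm. Thus every occurrence of $n$ in the exact scalings is replaced by $n_\epsilon$, while the ancilla parameter $M$ is carried through as a free design variable governing the degree of parallelization.

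The remaining work is the asymptotic substitution. By definition of $n_\epsilon$ one has $2^{n_\epsilon}=\Theta(1/\epsilon)$ and $n_\epsilon=\Theta(\log(1/\epsilon))$. Plugging these into the exact depth gives $\mathcal{O}(\log(1/\epsilon)^2/(\epsilon M)+\log(M/\log(1/\epsilon)))$ and into the size gives $\mathcal{O}(\log(1/\epsilon)/\epsilon+M)$, which match the stated expressions once $M$ is renamed $m$. The admissible ancilla interval $M\in[n_\epsilon,\mathcal{O}(n_\epsilon 2^{n_\epsilon})]$ likewise becomes $m=\Omega(\log(1/\epsilon))$ together with $m=\mathcal{O}(\log(1/\epsilon)/\epsilon)$, reproducing the claimed range.

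The only point requiring care — and the main obstacle, though a mild one — is the notation clash: the approximation theorem uses $m$ for the reduced qubit count whereas the partial-parallelization corollary uses $m$ for the ancilla count. One must apply the theorem with its reduced-register parameter fixed to $n_\epsilon$ and keep the parallelization ancilla count as the independent variable; checking that the interval $[n,\mathcal{O}(n2^n)]$ transforms consistently under $n\mapsto n_\epsilon$ is precisely what produces the final bounds $m=\Omega(\log(1/\epsilon))$ and $m=\mathcal{O}(\log(1/\epsilon)/\epsilon)$. No genuinely new argument beyond this composition is needed.
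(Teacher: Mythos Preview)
Your proposal is correct and follows exactly the paper's approach: the paper's proof is the one-liner ``This corollary is a direct consequence of Corollary~\ref{lemma: partially parallelized walsh hadamard decomposition with one ancilla} with the approximate Theorem~\ref{thm: Approximation theorem}.'' Your write-up simply unpacks this composition in more detail, including the $n\mapsto n_\epsilon=\Theta(\log(1/\epsilon))$ substitution and the resolution of the $m$-notation clash, which is helpful but not a departure from the paper's argument.
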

\begin{proof}
This corollary is a direct consequence of corollary  \ref{lemma: partially parallelized walsh hadamard decomposition with one ancilla} with the approximate Theorem \ref{thm: Approximation theorem}.
\end{proof}

\begin{corollary}[Walsh-recursive]
\label{cor: lemma11+approximate}
Any $n$-qubit diagonal unitary depending on a real-valued function $f$ defined on $[0,1]$ with bounded first derivative is implementable up to an error $\epsilon>0$ in spectral norm with a quantum circuit of depth $\mathcal{O}(1/(\epsilon \log(1/\epsilon)))$ and size $\mathcal{O}(1/\epsilon)$ without ancilla qubits.
\end{corollary}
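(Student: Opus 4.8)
The plan is to combine the depth-optimal exact synthesis of Sun et al. (Lemma 11 of \cite{sun2023asymptotically}) with the approximation Theorem \ref{thm: Approximation theorem}, exactly in the same way that the other rows of Table \ref{Table : approximate case} are derived from their exact counterparts. Lemma 11 provides, for any $m$-qubit diagonal unitary, an ancilla-free circuit of depth $d(m)=\mathcal{O}(2^m/m)$ and size $s(m)=\mathcal{O}(2^m)$, with width $w(m)=0$. These are precisely the three quantities $s(m),d(m),w(m)$ that Theorem \ref{thm: Approximation theorem} consumes.

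First I would invoke Theorem \ref{thm: Approximation theorem} with the target unitary $\hat{U}=\hat{U}_{f,n}$. Since $f$ has bounded first derivative on $[0,1]$, the hypothesis of the theorem is met, so implementing the restriction $\hat{U}_{f,m}$ exactly on $m=\lceil \log_2(\|f'\|_{\infty,[0,1]}/\epsilon)\rceil$ qubits yields an $\epsilon$-approximation of $\hat{U}_{f,n}$ in spectral norm, via a circuit whose depth, size and width equal those of the chosen exact method evaluated at $m$. Choosing that exact method to be Lemma 11 is what produces the ``Walsh-recursive'' row.

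Then I would substitute this choice of $m$ into the scalings of Lemma 11. The key observation is that for this $m$ one has $2^m=\Theta(\|f'\|_{\infty,[0,1]}/\epsilon)=\Theta(1/\epsilon)$, treating $\|f'\|_{\infty,[0,1]}$ as a constant independent of $n$, and $m=\Theta(\log(1/\epsilon))$. Plugging these into $d(m)=\mathcal{O}(2^m/m)$ gives depth $\mathcal{O}(1/(\epsilon\log(1/\epsilon)))$, while $s(m)=\mathcal{O}(2^m)$ gives size $\mathcal{O}(1/\epsilon)$, and the width remains $0$. This is exactly the claimed complexity, so no further work is needed beyond the substitution.

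The only point requiring a little care, and the closest thing to an obstacle, is the bookkeeping around the regime of validity. One must check that $m\ge 1$ (equivalently $\|f'\|_{\infty,[0,1]}/\epsilon\ge 2$, which holds for $\epsilon$ small enough, and otherwise the circuit is trivially of constant size), and that $m\le n$ so that $\hat{U}_{f,m}$ is a genuine sub-register operator; in the opposite case $m\ge n$ the approximation theorem already returns the exact circuit on $n$ qubits and the error vanishes, so the bound holds a fortiori. Everything else is the direct substitution above, and crucially no resource-intensive classical preprocessing is involved, since Lemma 11 produces its circuit explicitly from the Walsh coefficients of $f$ restricted to $m$ bits.
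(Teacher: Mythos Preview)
Your proposal is correct and matches the paper's own proof, which simply states that the corollary is a direct consequence of Lemma 11 of \cite{sun2023asymptotically} combined with the approximation Theorem \ref{thm: Approximation theorem}. Your added discussion of the regime of validity ($m\ge 1$, $m\le n$) is more careful than the paper itself, but the core argument is identical.
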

\begin{proof}
This corollary is a direct consequence of Lemma 11 \cite{sun2023asymptotically} with the approximate Theorem \ref{thm: Approximation theorem}.
\end{proof}

\begin{corollary}[Walsh-optimized adjustable-depth]
\label{cor:lemma20+approximate}
For any $\epsilon>0$, any $n$-qubit diagonal unitary depending on a real-valued function $f$ defined on $[0,1]$ with bounded first derivative is implementable up to an error $\epsilon$ in spectral norm with a quantum circuit of depth $\mathcal{O}(1/(\epsilon m)+\log(m))$ and size $\mathcal{O}(1/\epsilon+\log(1/\epsilon)m)$ using $m$ ancilla qubits with $m =\Omega(\log(1/\epsilon))$ and $m=\mathcal{O}(1/(\epsilon \log(1/\epsilon)))$.
\end{corollary}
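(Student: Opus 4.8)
The plan is to combine the exact adjustable-depth construction of Lemma 20 of \cite{sun2023asymptotically} with the approximation Theorem \ref{thm: Approximation theorem}, in the same spirit as the preceding corollaries of this subsection. The one point demanding care is that the symbol $m$ plays two different roles: in Theorem \ref{thm: Approximation theorem} it is the \emph{reduced qubit count} on which the operator is synthesized exactly, whereas in Lemma 20 and in the present statement $m$ is the number of \emph{ancilla qubits}. To keep these apart I would write $m_0 := \lceil \log_2(\|f'\|_{\infty,[0,1]}/\epsilon) \rceil$ for the reduced qubit count.

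First I would apply Theorem \ref{thm: Approximation theorem}: since $f$ has bounded first derivative, implementing $\hat{U}_{f,m_0}$ exactly on $m_0$ qubits produces a circuit that $\epsilon$-approximates $\hat{U}_{f,n}$ in spectral norm, with resources equal to those of the chosen exact $m_0$-qubit method. Because $\|f'\|_{\infty,[0,1]}$ is a constant independent of $n$, this sets $m_0 = \Theta(\log(1/\epsilon))$ and $2^{m_0} = \Theta(1/\epsilon)$.

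Next I would take the exact method to be Lemma 20 of \cite{sun2023asymptotically}, applied to the $m_0$-qubit diagonal unitary $\hat{U}_{f,m_0}$ with $m$ ancilla qubits: this gives depth $\mathcal{O}(2^{m_0}/m + \log m)$, size $\mathcal{O}(2^{m_0} + m_0 m)$, and width $m$, valid for $m \in [2 m_0,\, 2^{m_0}/m_0]$. Substituting the asymptotic values $2^{m_0} = \Theta(1/\epsilon)$ and $m_0 = \Theta(\log(1/\epsilon))$ immediately yields depth $\mathcal{O}(1/(\epsilon m) + \log m)$ and size $\mathcal{O}(1/\epsilon + \log(1/\epsilon)\, m)$, and turns the admissible interval into $m = \Omega(\log(1/\epsilon))$ together with $m = \mathcal{O}(1/(\epsilon \log(1/\epsilon)))$, exactly as stated.

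The derivation is mechanical, so there is no genuine obstacle; the only items to verify are that the asymptotic substitutions into Lemma 20's expressions are carried out consistently and that the interval $[2 m_0,\, 2^{m_0}/m_0]$ is nonempty, i.e. $2 m_0^2 \le 2^{m_0}$, which holds once $\epsilon$ is small enough that $m_0$ is large. Implicitly one assumes $m_0 \le n$; in the complementary regime $m_0 \ge n$ the exact $n$-qubit synthesis already attains zero error and the claimed bounds hold trivially.
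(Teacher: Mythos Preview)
Your proposal is correct and follows exactly the approach the paper uses: combine Lemma 20 of \cite{sun2023asymptotically} with the approximation Theorem \ref{thm: Approximation theorem}, substituting $m_0=\Theta(\log(1/\epsilon))$ for the reduced qubit count. Your explicit disambiguation of the two uses of $m$ and the sanity checks on the admissible interval are more detailed than the paper's one-line proof, but the argument is the same.
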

\begin{proof}
This corollary is a direct consequence of Lemma 20 \cite{sun2023asymptotically} with the approximate Theorem \ref{thm: Approximation theorem}.
\end{proof}

\begin{corollary}[Walsh-optimized fully parallelized]
\label{cor: lemma20 max qubit+approximate}
For any $\epsilon>0$, any $n$-qubit diagonal unitary depending on a real-valued function $f$ defined on $[0,1]$ with bounded first derivative is implementable up to an error $\epsilon$ in spectral norm with a quantum circuit of depth $\mathcal{O}(\log(1/\epsilon))$ and size $\mathcal{O}(1/\epsilon)$ with $m=\mathcal{O}(1/(\epsilon \log(1/\epsilon)))$ ancilla qubits.
\end{corollary}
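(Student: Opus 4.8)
The plan is to derive this corollary exactly as its companions in this subsection: by feeding an exact fully-parallelized construction into the approximation Theorem \ref{thm: Approximation theorem}. The exact ingredient here is the maximal-ancilla version of Lemma 20 of \cite{sun2023asymptotically}, which implements any $n$-qubit diagonal unitary with depth $\mathcal{O}(n)$, size $\mathcal{O}(2^n)$ and $m=2^n/n$ ancilla qubits.

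First I would invoke Theorem \ref{thm: Approximation theorem}: since $f$ has bounded first derivative, $\epsilon$-approximating the target unitary $\hat{U}_{f,n}$ in spectral norm only requires running the exact circuit on $m'=\lceil \log_2(\|f'\|_{\infty,[0,1]}/\epsilon)\rceil$ qubits rather than on $n$. Consequently, the resource counts are obtained from the exact scalings by the substitution $n\mapsto m'$, and the accuracy guarantee is inherited directly from the theorem, independently of the underlying implementation method.

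Second I would simplify. Since $m'=\Theta(\log(1/\epsilon))$, with the constant $\|f'\|_{\infty,[0,1]}$ absorbed into the asymptotics, one has $2^{m'}=\Theta(1/\epsilon)$. Substituting into the exact scalings gives depth $\mathcal{O}(m')=\mathcal{O}(\log(1/\epsilon))$, size $\mathcal{O}(2^{m'})=\mathcal{O}(1/\epsilon)$, and a number of ancilla qubits $2^{m'}/m'=\mathcal{O}(1/(\epsilon\log(1/\epsilon)))$, which are precisely the claimed bounds.

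The only subtlety, hardly an obstacle, will be the validity of the parameter regime. The maximal-ancilla form of Lemma 20 sets the number of ancilla qubits to $2^n/n$, which presumes that the interval $[2n,2^n/n]$ of the original Lemma 20 is nonempty, i.e. that $n$ is large enough. After the substitution this becomes the requirement that $m'=\Theta(\log(1/\epsilon))$ be sufficiently large, equivalently that $\epsilon$ be small enough; since the statement is asymptotic in $1/\epsilon$ this holds automatically, and the corollary follows.
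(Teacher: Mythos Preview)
Your proposal is correct and follows exactly the paper's approach: combine the maximal-ancilla instance of Lemma~20 of \cite{sun2023asymptotically} with the approximation Theorem~\ref{thm: Approximation theorem}, then substitute $n\mapsto m'=\Theta(\log(1/\epsilon))$. The paper's own proof states this in a single sentence; your added remarks on the parameter regime are sound but not strictly needed.
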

\begin{proof}
This corollary is a direct consequence of Lemma 20 \cite{sun2023asymptotically} for a maximum number of ancilla qubits with the approximate Theorem \ref{thm: Approximation theorem}.
\end{proof}

\subsubsection{For non-unitary diagonal operators depending on differentiable functions}
\label{sec: non-unitary diagonal operators depending on differentiable functions}

The following corollaries summarize the complexity of implementing approximately a block-encoding $\hat{U}_D$ of a $n$-qubit non-unitary diagonal operator $\hat{D}=\sum_{x=0}^{N-1}f(x/N)\ket{x}\bra{x}$, depending on a real-valued function $f$ defined on $[0,1]$ with bounded first derivative, through the controlled-diagonal unitaries $e^{\pm i\arcsin(\hat{D}/(\alpha d_{\max}))}$ with $\alpha > 1$.

\begin{corollary}[Approximate block-encoding using a sequential decomposition and one ancilla qubit]
\label{cor: approximate block-encoding using a sequential decomposition and one ancilla}
     For any $\alpha > 1$, any $n$-qubit non-unitary diagonal operator $\hat{D}$  depending on a real-valued function $f$ defined on $[0,1]$ with bounded first derivative can be $(\alpha d_{\max},1,\epsilon)$-block-encoded with a quantum circuit of depth $\mathcal{O}(\log(1/\epsilon)/\epsilon)$ and size $\mathcal{O}(\log(1/\epsilon)/\epsilon)$ using one ancilla qubit.
\end{corollary}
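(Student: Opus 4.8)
The plan is to obtain this approximate scaling by composing two results already in hand: the exact block-encoding Corollary~\ref{Exact block-encoding using a sequential decomposition and one ancilla}, which produces an $(\alpha d_{\max},1,0)$-block-encoding of any $n$-qubit non-unitary diagonal operator with depth and size $\mathcal{O}(n2^n)$ and a single ancilla qubit, and the approximation Theorem~\ref{thm: Approximation theorem}, which trades exactness for a reduction of the working register from $n$ down to $m=\lceil\log_2(\|\cdot\|_{\infty,[0,1]}/\epsilon)\rceil$ qubits. The corollary should then be essentially a substitution $n\mapsto m$ in the exact cost, followed by a simplification of $m$ and $2^m$ in terms of $\epsilon$.

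First I would pin down which diagonal unitary the block-encoding actually implements. By Theorem~\ref{thm: block-encoding}, producing an $\epsilon$-accurate block-encoding of $\hat{D}$ is the same as approximating, to error $\epsilon$ in spectral norm, the unitary $\hat{U}=e^{i\arcsin(\hat{D}/(\alpha d_{\max}))}=e^{i\hat{h}_n}$ with $h(x)=\arcsin(f(x)/(\alpha d_{\max}))$. The point of this reformulation is that $h$ is itself a differentiable function on $[0,1]$, so the approximation Theorem applies verbatim to $\hat{U}$ with $f$ replaced by $h$.

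Next I would invoke Theorem~\ref{thm: Approximation theorem} on $\hat{U}=e^{i\hat{h}_n}$: implementing the restriction $\hat{U}_{h,m}$ \emph{exactly} on $m=\lceil\log_2(\|h'\|_{\infty,[0,1]}/\epsilon)\rceil$ qubits gives an $\epsilon$-approximation of $\hat{U}$, hence through Theorem~\ref{thm: block-encoding} an $(\alpha d_{\max},1,\epsilon)$-block-encoding of $\hat{D}$. The cost of that exact $m$-qubit block-encoding is read directly off Corollary~\ref{Exact block-encoding using a sequential decomposition and one ancilla} as depth and size $\mathcal{O}(m2^m)$ with one ancilla. Since $2^m=\mathcal{O}(1/\epsilon)$ and $m=\mathcal{O}(\log(1/\epsilon))$, this collapses to depth and size $\mathcal{O}(\log(1/\epsilon)/\epsilon)$, and the ancilla count is untouched, giving the claim.

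The only step needing genuine care — and the reason the hypothesis $\alpha>1$ (rather than $\alpha\ge 1$) appears here — is verifying that $\|h'\|_{\infty,[0,1]}$ is a finite constant independent of $n$, which is exactly what makes $m=\mathcal{O}(\log(1/\epsilon))$ and renders the final complexity $n$-independent. From $h'(x)=\frac{f'(x)/(\alpha d_{\max})}{\sqrt{1-(f(x)/(\alpha d_{\max}))^2}}$ together with $|f(x)/(\alpha d_{\max})|\le 1/\alpha<1$, the radicand stays bounded below by $\sqrt{1-\alpha^{-2}}>0$, so $\|h'\|_{\infty,[0,1]}\le \|f'\|_{\infty,[0,1]}/(\alpha d_{\max}\sqrt{1-\alpha^{-2}})<\infty$. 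The hard part is precisely this bound: were $\alpha=1$, the derivative of $\arcsin$ would diverge at any point where $|f|=d_{\max}$, $h$ would fail to have bounded first derivative, and the approximation Theorem could not be applied; the strict inequality $\alpha>1$ is what removes this obstacle.
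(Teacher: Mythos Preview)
Your proof is correct and follows essentially the same route as the paper: reduce the block-encoding to implementing the diagonal unitary $e^{i\hat{h}}$ with $h=\arcsin(f/(\alpha d_{\max}))$, observe that $\alpha>1$ keeps $\|h'\|_\infty$ finite (your bound $\|f'\|_\infty/(\alpha d_{\max}\sqrt{1-\alpha^{-2}})$ is exactly the paper's $\|f'\|_\infty/(\|f\|_\infty\sqrt{\alpha^2-1})$), then feed this into the approximation Theorem~\ref{thm: Approximation theorem} and the exact sequential cost. The only cosmetic difference is ordering: the paper first packages the approximation into Corollary~\ref{cor : Sequential no ancilla : approximé} and then applies the block-encoding Theorem~\ref{thm: block-encoding}, whereas you first invoke the exact block-encoding Corollary~\ref{Exact block-encoding using a sequential decomposition and one ancilla} and then the approximation theorem; since both chains unwind to Lemma~\ref{Exact_sequential_decomposition_without_ancilla} + Theorem~\ref{thm: Approximation theorem} + Theorem~\ref{thm: block-encoding}, the arguments are equivalent.
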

\begin{proof}
This corollary is a direct consequence of corollary \ref{cor : Sequential no ancilla : approximé} applied on $\hat{U}=\sum_{x=0}^{N-1}e^{ig(x/N)}\ket{x}\bra{x}$ with $g(x)=\arcsin(f(x)/(\alpha \|f\|_{\infty}))$ such that $\|g'\|_{\infty}\leq \|f'\|_{\infty}/(\|f\|_{\infty} \sqrt{\alpha^2-1})$ and the block-encoding Theorem \ref{thm: block-encoding}.
\end{proof}

\begin{corollary}[Approximate sequential block-encoding with two ancilla qubits]
\label{cor: approximate  block-encoding sequential decomposition  with two ancilla}
For any $\alpha > 1$, any $n$-qubit non-unitary diagonal operator $\hat{D}$  depending on a real-valued function $f$ defined on $[0,1]$ with bounded first derivative can be $(\alpha d_{\max},1,\epsilon)$-block-encoded with a quantum circuit of depth $\mathcal{O}(\log(\log(1/\epsilon))^3/\epsilon)$ and size $\mathcal{O}(\log(1/\epsilon)(\log(\log(1/\epsilon))^4/\epsilon)$ using two ancilla qubits.
\end{corollary}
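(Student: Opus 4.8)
The plan is to follow verbatim the template of the preceding Corollary \ref{cor: approximate block-encoding using a sequential decomposition and one ancilla}, only replacing the diagonal-unitary building block by its one-ancilla polylogarithmic-depth variant. The guiding observation, established in Section \ref{sec:non_unitary}, is that block-encoding $\hat{D}$ reduces to implementing the controlled diagonal unitary $\hat{U}=e^{i\arcsin(\hat{D}/(\alpha d_{\max}))}$, and that controlling such a unitary does not change its asymptotic size or depth. Hence the whole statement should drop out of composing a good approximate implementation of $\hat{U}$ with the block-encoding Theorem \ref{thm: block-encoding}.

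First I would introduce $g(x)=\arcsin(f(x)/(\alpha\|f\|_{\infty}))$, so that $\hat{U}=\sum_{x=0}^{N-1}e^{ig(x/N)}\ket{x}\bra{x}$ with $d_{\max}=\|f\|_{\infty}$. Because $\alpha>1$, the function $g$ is differentiable on $[0,1]$ and, exactly as in the one-ancilla corollary, satisfies $\|g'\|_{\infty}\leq \|f'\|_{\infty}/(\|f\|_{\infty}\sqrt{\alpha^2-1})$; in particular $g$ has bounded first derivative, so the approximation Theorem \ref{thm: Approximation theorem} is applicable to $\hat{U}$. I would then apply Corollary \ref{cor: approximate sequential decomposition with one ancilla} to $\hat{U}$, which $\epsilon$-approximates it with depth $\mathcal{O}(\log(\log(1/\epsilon))^3/\epsilon)$ and size $\mathcal{O}(\log(1/\epsilon)\log(\log(1/\epsilon))^4/\epsilon)$ using a single ancilla qubit. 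These scalings come from running the exact polylogarithmic scheme of Lemma \ref{lemma: Exact sequential decomposition with one ancilla} on $m=\lceil \log_2(\|g'\|_{\infty}/\epsilon)\rceil=\mathcal{O}(\log(1/\epsilon))$ qubits and substituting this $m$ into the exact depth $\mathcal{O}(\log(m)^3 2^m)$ and size $\mathcal{O}(m\log(m)^4 2^m)$.

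Finally I would feed this into the block-encoding Theorem \ref{thm: block-encoding}, which turns the approximate product $\tilde{U}=\prod_j \hat{U}_j$ into an $(\alpha d_{\max},1,\epsilon)$-block-encoding controlled by the ancilla $q_A$. Using the identity $C(\hat{V}^{-1}\hat{U}\hat{V})=\hat{V}^{-1}C(\hat{U})\hat{V}$, controlling each sequential factor only promotes its $(n-1)$-controlled phase gate to an $n$-controlled one, leaving the asymptotics untouched. The qubit $q_A$ is the single block-encoding ancilla, and a second ancilla is spent to realize these $n$-controlled phase gates at polylogarithmic depth through Corollary 1 of \cite{claudon2024polylogarithmic}, giving two ancilla qubits in total. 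The one point I would verify carefully — the main obstacle, though a minor one — is the ancilla bookkeeping: I must check that the single polylogarithmic work qubit can be reused (as a zeroed or borrowed ancilla) across all sequentially applied $n$-controlled gates, so that promoting the controls inflates neither the ancilla count beyond two nor the depth and size bounds inherited from Corollary \ref{cor: approximate sequential decomposition with one ancilla}.
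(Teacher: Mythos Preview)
Your proposal is correct and follows essentially the same route as the paper, which simply states that the result is a direct consequence of Corollary~\ref{cor:  approximate sequential decomposition with one ancilla} and the block-encoding Theorem~\ref{thm: block-encoding}. Your additional discussion of the function $g$, its derivative bound, and the ancilla bookkeeping (one qubit for $q_A$, one reusable work qubit for the polylogarithmic multi-controlled gates) just makes explicit what the paper leaves implicit.
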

\begin{proof}
This corollary is a direct consequence of corollary \ref{cor:  approximate sequential decomposition with one ancilla} and the block-encoding Theorem \ref{thm: block-encoding}.
\end{proof}

\begin{corollary}[Approximate Walsh-Hadamard block-encoding with one ancilla]
\label{cor: approximate walsh-hadamard block-encoding with one ancilla}
For any $\alpha > 1$, any $n$-qubit non-unitary diagonal operator $\hat{D}$  depending on a real-valued function $f$ defined on $[0,1]$ with bounded first derivative can be $(\alpha d_{\max},1,\epsilon)$-block-encoded with a quantum circuit of depth $\mathcal{O}(1/\epsilon)$ and size $\mathcal{O}(1/\epsilon)$ using one ancilla qubit.
\end{corollary}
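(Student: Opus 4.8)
The plan is to reduce this non-unitary block-encoding statement to the already-established unitary corollary \ref{cor: approximate walsh-hadamard decomposition} via an appropriate change of function, and then invoke the block-encoding Theorem \ref{thm: block-encoding} to transfer the circuit complexity. The underlying observation is that block-encoding $\hat{D}$ amounts to implementing the controlled diagonal unitaries $e^{\pm i \arcsin(\hat{D}/(\alpha d_{\max}))}$, and these are themselves diagonal unitaries defined through a composed function to which the earlier Walsh-Hadamard approximation result applies directly.

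First I would set $d_{\max}=\|f\|_{\infty}$ and define the auxiliary function $g(x)=\arcsin\!\big(f(x)/(\alpha\|f\|_{\infty})\big)$ on $[0,1]$, so that $\hat{U}=e^{i\arcsin(\hat{D}/(\alpha d_{\max}))}=\sum_{x=0}^{N-1}e^{ig(x/N)}\ket{x}\bra{x}$ is exactly the diagonal unitary to be block-encoded. The key quantitative step is to check that $g$ inherits a bounded first derivative whenever $f$ does. Since $|f(x)/(\alpha\|f\|_{\infty})|\le 1/\alpha<1$ for $\alpha>1$, the denominator in $g'(x)=\dfrac{f'(x)/(\alpha\|f\|_{\infty})}{\sqrt{1-(f(x)/(\alpha\|f\|_{\infty}))^2}}$ stays bounded away from zero, yielding the explicit bound $\|g'\|_{\infty}\le \|f'\|_{\infty}/(\|f\|_{\infty}\sqrt{\alpha^2-1})$. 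This is precisely where the hypothesis $\alpha>1$ is essential: at $\alpha=1$ the argument of the $\arcsin$ could reach $\pm1$ and $g'$ would blow up, so the strict inequality is exactly what keeps $g$ differentiable with a derivative controlled by a constant multiple of $\|f'\|_{\infty}$.

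Next I would apply corollary \ref{cor: approximate walsh-hadamard decomposition} to the unitary $\hat{U}$ associated with $g$: because $g$ has bounded first derivative, that corollary produces a Walsh-Hadamard decomposition $\tilde{U}=\prod_{j}\hat{W}_j$ approximating $\hat{U}$ up to spectral-norm error $\epsilon$ with depth and size both $\mathcal{O}(1/\epsilon)$ and no ancilla qubit. I would then feed this decomposition into the block-encoding Theorem \ref{thm: block-encoding}, which guarantees that the circuit of Fig. \ref{Adjustable depth framework for non unitary diagonal operator}, specialized to the unparallelized case, is an $(\alpha d_{\max},1,\epsilon)$-block-encoding of $\hat{D}$. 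It remains to argue that controlling the decomposition by the single ancilla qubit $\ket{q_A}$ does not change the asymptotic cost: as noted after Theorem \ref{thm: block-encoding}, controlling a Walsh operator \eqref{eq: exp of walsh operator} only requires controlling its central $\hat{R}_Z$ gate, so each controlled $\hat{W}_j$ has the same asymptotic depth and size as $\hat{W}_j$ itself, and copying $\ket{q_A}$ is unnecessary here since the operators are applied sequentially.

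I do not expect a genuine obstacle in this argument; the only points requiring care are the derivative bound for $g$ and the bookkeeping that the controlled-$\hat{R}_Z$ overhead is absorbed into the $\mathcal{O}(1/\epsilon)$ scaling, so that both the depth and the size remain $\mathcal{O}(1/\epsilon)$ while using exactly one ancilla qubit.
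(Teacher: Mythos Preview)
Your proposal is correct and follows essentially the same approach as the paper: invoke corollary \ref{cor: approximate walsh-hadamard decomposition} for the diagonal unitary $e^{i\arcsin(\hat{D}/(\alpha d_{\max}))}$ and then apply the block-encoding Theorem \ref{thm: block-encoding}. Your explicit derivative bound $\|g'\|_{\infty}\le \|f'\|_{\infty}/(\|f\|_{\infty}\sqrt{\alpha^2-1})$ and the remark that controlling each $\hat{W}_j$ only requires controlling its $\hat{R}_Z$ gate are exactly the justifications the paper gives elsewhere (in corollary \ref{cor: approximate block-encoding using a sequential decomposition and one ancilla} and after Theorem \ref{thm: block-encoding}), so you have simply spelled out what the paper's one-line proof leaves implicit.
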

\begin{proof}
This corollary is a direct consequence of corollary \ref{cor: approximate walsh-hadamard decomposition} and the block-encoding Theorem \ref{thm: block-encoding}.
\end{proof}

\begin{corollary}[Fully parallelized block-encoding with approximate sequential decomposition]

\label{cor: Fully parallelized block-encoding with approximate sequential decomposition}
For any $\alpha > 1$, any $n$-qubit non-unitary diagonal operator $\hat{D}$  depending on a real-valued function $f$ defined on $[0,1]$ with bounded first derivative can be $(\alpha d_{\max},1,\epsilon)$-block-encoded with a quantum circuit of depth $\mathcal{O}(\log(1/\epsilon))$ and size $\mathcal{O}(\log(1/\epsilon)/\epsilon)$ using $m=\mathcal{O}(\log(1/\epsilon)/\epsilon)$ ancilla qubits.
\end{corollary}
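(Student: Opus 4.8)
The plan is to follow the same two-step recipe used for the preceding block-encoding corollaries: first reduce the non-unitary problem to a unitary one by the block-encoding Theorem \ref{thm: block-encoding}, and then invoke the already-proven fully parallelized approximate sequential decomposition (Corollary \ref{lemma: fully parallelized approximate sequential decomposition}) for the resulting unitary. Concretely, block-encoding $\hat{D}=\sum_{x=0}^{N-1} f(x/N)\ket{x}\bra{x}$ amounts to implementing the controlled diagonal unitaries $e^{\pm i \arcsin(\hat{D}/(\alpha d_{\max}))}$, so I would set $\hat{U}=\sum_{x=0}^{N-1} e^{ig(x/N)}\ket{x}\bra{x}$ with phase function $g(x)=\arcsin(f(x)/(\alpha \|f\|_{\infty}))$ and treat $\hat{U}$ with the existing unitary machinery.

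The key preliminary check is that $g$ inherits a bounded first derivative from $f$. Differentiating through the $\arcsin$ and using $\alpha>1$ to keep its argument bounded away from $\pm 1$ gives
\begin{equation}
\|g'\|_{\infty,[0,1]} \leq \frac{\|f'\|_{\infty,[0,1]}}{\|f\|_{\infty}\sqrt{\alpha^2-1}},
\end{equation}
exactly as in the proof of Corollary \ref{cor: approximate block-encoding using a sequential decomposition and one ancilla}. This is the step that forces the strict inequality $\alpha>1$ rather than $\alpha\geq 1$, since the derivative of $\arcsin$ diverges at $\pm 1$; it is also the only place where the smoothness hypothesis on $f$ is really consumed. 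With this bound in hand, Corollary \ref{lemma: fully parallelized approximate sequential decomposition} applies to $\hat{U}$ and yields a product $\tilde{U}=\prod_{j=0}^{p-1}\hat{U}_j$ approximating $\hat{U}$ to spectral-norm error $\epsilon$ with a fully parallelized circuit of depth $\mathcal{O}(\log(1/\epsilon))$, size $\mathcal{O}(\log(1/\epsilon)/\epsilon)$ and $\mathcal{O}(\log(1/\epsilon)/\epsilon)$ ancilla qubits, where $p=\mathcal{O}(1/\epsilon)$.

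I would then feed this $\tilde{U}$ into Theorem \ref{thm: block-encoding}, which certifies that the resulting circuit (Fig. \ref{Adjustable depth framework for non unitary diagonal operator}) block-encodes $\hat{D}$ up to spectral-norm error $\epsilon$ with success flagged by the single ancilla qubit $q_A$. The remaining work is purely bookkeeping on the block-encoding overhead, and this is where I expect the only real care to be needed. Two costs are added: controlling each $\hat{U}_j$, and copying $q_A$ into the $p$ flag copies. For the first, the identity $C(\hat{V}^{-1}\hat{U}\hat{V})=\hat{V}^{-1}C(\hat{U})\hat{V}$ means that controlling a sequential operator only upgrades an $(n-1)$-controlled phase gate to an $n$-controlled one, which is asymptotically free, so the per-block depth and the total size are unchanged. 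For the second, Lemma \ref{copy lemma} copies $q_A$ into $p=\mathcal{O}(1/\epsilon)$ copies with depth $\lceil\log_2 p\rceil=\mathcal{O}(\log(1/\epsilon))$ and size $\mathcal{O}(1/\epsilon)$, both dominated by the scalings already present. Collecting these, the block-encoding retains depth $\mathcal{O}(\log(1/\epsilon))$, size $\mathcal{O}(\log(1/\epsilon)/\epsilon)$ and $m=\mathcal{O}(\log(1/\epsilon)/\epsilon)$ ancilla qubits, which is the claim.
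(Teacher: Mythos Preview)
Your proposal is correct and follows exactly the paper's approach: the paper's proof is the one-liner ``This corollary is a direct consequence of corollary \ref{lemma: fully parallelized approximate sequential decomposition} and the block-encoding Theorem \ref{thm: block-encoding}.'' You have simply unpacked this, supplying the derivative bound on $g$ (which the paper records once, in the proof of Corollary \ref{cor: approximate block-encoding using a sequential decomposition and one ancilla}) and the overhead bookkeeping for the control and the $q_A$ copies.
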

\begin{proof}
This corollary is a direct consequence of corollary \ref{lemma: fully parallelized approximate sequential decomposition} and the block-encoding Theorem \ref{thm: block-encoding}.
\end{proof}

\begin{corollary}[Fully parallelized block-encoding with approximate Walsh-Hadamard decomposition]
\label{lemma: fully parallelized approximate Walsh-Hadamard decomposition}
For any $\alpha > 1$, any $n$-qubit non-unitary diagonal operator $\hat{D}$  depending on a real-valued function $f$ defined on $[0,1]$ with bounded first derivative can be $(\alpha d_{\max},1,\epsilon)$-block-encoded with a quantum circuit of depth $\mathcal{O}(\log(1/\epsilon))$ and size $\mathcal{O}(\log(1/\epsilon)/\epsilon)$ using $m=\mathcal{O}(\log(1/\epsilon)/\epsilon)$ ancilla qubits.
\end{corollary}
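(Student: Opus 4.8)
The plan is to mirror the sequential block-encoding of Corollary~\ref{cor: Fully parallelized block-encoding with approximate sequential decomposition}, simply substituting the Walsh--Hadamard decomposition for the sequential one: I would feed the fully parallelized approximate Walsh--Hadamard result for diagonal unitaries (Corollary~\ref{lemma: fully parallelized approximate walsh-hadamard decomposition}) into the block-encoding Theorem~\ref{thm: block-encoding}.

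First I would reduce the non-unitary task to a unitary one. By Theorem~\ref{thm: block-encoding}, to $(\alpha d_{\max},1,\epsilon)$-block-encode $\hat{D}=\sum_{x=0}^{N-1}f(x/N)\ket{x}\bra{x}$ it is enough to implement, up to spectral-norm error $\epsilon$, the controlled diagonal unitary $\hat{U}=e^{i\arcsin(\hat{D}/(\alpha d_{\max}))}=\sum_{x}e^{ig(x/N)}\ket{x}\bra{x}$ with $g=\arcsin(f/(\alpha\|f\|_{\infty}))$. The crucial preliminary is to check the hypothesis of Corollary~\ref{lemma: fully parallelized approximate walsh-hadamard decomposition}, i.e.\ that $g$ is differentiable with bounded first derivative, and this is exactly where $\alpha>1$ is needed: since $|f/(\alpha\|f\|_{\infty})|\le 1/\alpha<1$, the argument of $\arcsin$ stays uniformly away from the endpoints $\pm1$ at which its derivative diverges, and the chain rule yields the finite bound $\|g'\|_{\infty}\le \|f'\|_{\infty}/(\|f\|_{\infty}\sqrt{\alpha^2-1})$. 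Corollary~\ref{lemma: fully parallelized approximate walsh-hadamard decomposition} applied to $\hat{U}$ then produces a fully parallelized circuit of depth $\mathcal{O}(\log(1/\epsilon))$, size $\mathcal{O}(\log(1/\epsilon)/\epsilon)$ and $\mathcal{O}(\log(1/\epsilon)/\epsilon)$ ancilla qubits.

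It then remains to pass from $\hat{U}$ to its block-encoding without degrading these scalings. Theorem~\ref{thm: block-encoding} uses controlled copies of the Walsh operators $\hat{W}_j$ and $\hat{W}_j^{\dagger}$; because $\hat{W}_j=\hat{V}^{-1}(\hat{I}\otimes\hat{R}_Z(a_j))\hat{V}$ with $\hat{V}$ a CNOT ladder, controlling $\hat{W}_j$ reduces to controlling only the single central $\hat{R}_Z$ rotation, so $C(\hat{W}_j)$ keeps the same asymptotic depth and size. The single flag qubit $q_A$ is copied into $p=\mathcal{O}(\log(1/\epsilon)/\epsilon)$ copies via Lemma~\ref{copy lemma}, which adds $\mathcal{O}(\log(1/\epsilon)/\epsilon)$ CNOTs and only $\mathcal{O}(\log(\log(1/\epsilon)/\epsilon))=\mathcal{O}(\log(1/\epsilon))$ depth, both absorbed into the claimed bounds; this one measured flag gives the ``$1$'' in $(\alpha d_{\max},1,\epsilon)$ while the copy registers are uncomputed back to $\ket{0}$. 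The main obstacle --- in fact essentially the only non-bookkeeping step --- is the derivative estimate for $g$, confirming that the $\alpha>1$ rescaling genuinely tames the endpoint singularity of $\arcsin$; the rest is a direct assembly of the cited results, identical in structure to the sequential case.
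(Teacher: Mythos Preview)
Your proposal is correct and follows exactly the paper's approach: the paper's proof is the one-liner ``direct consequence of Corollary~\ref{lemma: fully parallelized approximate walsh-hadamard decomposition} and the block-encoding Theorem~\ref{thm: block-encoding},'' and you have simply unpacked that reference, including the derivative bound $\|g'\|_{\infty}\le \|f'\|_{\infty}/(\|f\|_{\infty}\sqrt{\alpha^2-1})$ that the paper records in the proof of Corollary~\ref{cor: approximate block-encoding using a sequential decomposition and one ancilla}. The additional bookkeeping you supply (control overhead on $\hat{W}_j$, copy depth for $q_A$) is consistent with the discussion surrounding Theorem~\ref{thm: block-encoding} and does not depart from the paper's argument.
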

\begin{proof}
This corollary is a direct consequence of corollary  \ref{lemma: fully parallelized approximate walsh-hadamard decomposition} and the block-encoding Theorem \ref{thm: block-encoding}.
\end{proof}

\begin{corollary}[Partially parallelized block-encoding with approximate sequential decomposition]
\label{corollary: partially parallelized block-encoding with approximate sequential decomposition}
For any $\alpha > 1$, any $n$-qubit non-unitary diagonal operator $\hat{D}$  depending on a real-valued function $f$ defined on $[0,1]$ with bounded first derivative can be $(\alpha d_{\max},1,\epsilon)$-block-encoded with a quantum circuit of depth $\mathcal{O}(\log(1/\epsilon)^2/(\epsilon m) +\log(m/\log(1/\epsilon)) )$ and size $\mathcal{O}(\log(1/\epsilon)/\epsilon+m)$ using $m$ ancilla qubits with $m=\Omega(\log(1/\epsilon))$ and $ m=\mathcal{O}(\log(1/\epsilon)/\epsilon)$.
\end{corollary}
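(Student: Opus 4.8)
The plan is to obtain this corollary by the same two-step composition used for the earlier block-encoding corollaries in this subsection: take the diagonal-unitary scaling already proven in Corollary \ref{cor: partially parallelized approximate sequential decomposition} and feed it into the block-encoding construction of Theorem \ref{thm: block-encoding}.

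First I would reduce the non-unitary problem to a unitary one. By Theorem \ref{thm: block-encoding}, producing an $\epsilon$-block-encoding of $\hat{D}$ amounts to implementing, in controlled form, the diagonal unitary $\hat{U}=e^{i\arcsin(\hat{D}/(\alpha d_{\max}))}=\sum_{x=0}^{N-1}e^{ig(x/N)}\ket{x}\bra{x}$ with $g(x)=\arcsin(f(x)/(\alpha\|f\|_{\infty}))$. The single success-ancilla $q_A$ supplies the ``$1$'' in the $(\alpha d_{\max},1,\epsilon)$ label, while its copy register and the parallelization registers all return to $\ket{0}$ and are absorbed into the $m$-ancilla budget. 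Since $\alpha>1$, the function $g$ is differentiable with $\|g'\|_{\infty}\leq \|f'\|_{\infty}/(\|f\|_{\infty}\sqrt{\alpha^2-1})<\infty$, so $g$ has bounded first derivative and the approximate diagonal-unitary machinery applies to $\hat{U}$.

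Next I would invoke Corollary \ref{cor: partially parallelized approximate sequential decomposition} on $\hat{U}$, which produces a partially parallelized approximate sequential circuit of depth $\mathcal{O}(\log(1/\epsilon)^2/(\epsilon m)+\log(m/\log(1/\epsilon)))$ and size $\mathcal{O}(\log(1/\epsilon)/\epsilon+m)$ for $m=\Omega(\log(1/\epsilon))$ and $m=\mathcal{O}(\log(1/\epsilon)/\epsilon)$. These are exactly the claimed scalings, so nothing remains except to verify that wrapping each parallelized sequential operator in a control by $q_A$ does not disturb the asymptotics.

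That last verification is the only place where a genuine (though light) argument is needed, and I expect it to be the main, rather minor, obstacle. It is settled by the remark following Theorem \ref{thm: block-encoding}: since $C(\hat{V}^{-1}\hat{U}\hat{V})=\hat{V}^{-1}C(\hat{U})\hat{V}$, controlling a sequential operator merely promotes each $(n-1)$-controlled phase to an $n$-controlled phase at the same asymptotic cost, and the copies of $q_A$ required to drive the parallel blocks simultaneously are prepared by the logarithmic-depth copy unitary of Lemma \ref{copy lemma}, which is never more expensive than copying the main register. Granting this, the depth and size carry over unchanged and the corollary follows immediately.
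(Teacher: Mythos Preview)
Your proposal is correct and follows exactly the paper's own route: the paper's proof is the one-line ``direct consequence of corollary \ref{cor: partially parallelized approximate sequential decomposition} and the block-encoding Theorem \ref{thm: block-encoding},'' and you have simply unpacked the implicit steps (differentiability of $g=\arcsin(f/(\alpha\|f\|_\infty))$ and the fact that the control overhead is asymptotically negligible), which are treated elsewhere in the paper in precisely the way you describe.
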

\begin{proof}
This corollary is a direct consequence of corollary  \ref{cor: partially parallelized approximate sequential decomposition} and the block-encoding Theorem \ref{thm: block-encoding}.
\end{proof}

\begin{corollary}[Partially parallelized block-encoding with approximate sequential decomposition using polylogarithmic depth decomposition]
\label{cor: partially parallelized approximate block-encoding sequential decomposition using polylog MCphase}
For any $\alpha > 1$, any $n$-qubit non-unitary diagonal operator $\hat{D}$  depending on a real-valued function $f$ defined on $[0,1]$ with bounded first derivative can be $(\alpha d_{\max},1,\epsilon)$-block-encoded with a quantum circuit of depth $\mathcal{O}(\log(1/\epsilon)\log(\log(1/\epsilon))^3/(\epsilon m) +\log(m/\log(1/\epsilon)))$ and size $\mathcal{O}(\log(1/\epsilon)\log(\log(1/\epsilon))^4/\epsilon+m)$ using $m$ ancilla qubits with  $m=\Omega(\log(1/\epsilon))$ and $ m=\mathcal{O}(\log(1/\epsilon)/\epsilon)$.
\end{corollary}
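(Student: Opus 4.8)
The plan is to obtain this block-encoding as the non-unitary lift of the purely unitary result in Corollary~\ref{cor: partially parallelized approximate sequential decomposition bis}, exactly as in the linear-depth companion statement, via the block-encoding Theorem~\ref{thm: block-encoding}. Concretely, I would first set $\hat{U}=e^{i\arcsin(\hat{D}/(\alpha d_{\max}))}=\sum_{x=0}^{N-1}e^{ig(x/N)}\ket{x}\bra{x}$ with $g=\arcsin(f/(\alpha\|f\|_{\infty}))$, so that block-encoding $\hat{D}$ reduces to implementing the controlled versions of $\hat{U}$ and $\hat{U}^{\dagger}$. The key input I would reuse, already recorded in Corollary~\ref{cor: approximate block-encoding using a sequential decomposition and one ancilla}, is that $g$ is differentiable with $\|g'\|_{\infty}\le \|f'\|_{\infty}/(\|f\|_{\infty}\sqrt{\alpha^2-1})$: this bound is finite precisely because $\alpha>1$ confines the argument of $\arcsin$ to $[-1/\alpha,1/\alpha]$, away from the singularities of $\arcsin'$ at $\pm1$. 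Hence $\hat{U}$ is a diagonal unitary attached to a differentiable phase function, and Corollary~\ref{cor: partially parallelized approximate sequential decomposition bis} applies verbatim to $\hat{U}$.

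Next I would invoke that corollary to produce, for $m=\Omega(\log(1/\epsilon))$ and $m=\mathcal{O}(\log(1/\epsilon)/\epsilon)$, a partially parallelized sequential circuit $\tilde{U}=\prod_{j}\hat{U}_j$ that $\epsilon$-approximates $\hat{U}$ in spectral norm with depth $\mathcal{O}(\log(1/\epsilon)\log(\log(1/\epsilon))^3/(\epsilon m)+\log(m/\log(1/\epsilon)))$ and size $\mathcal{O}(\log(1/\epsilon)\log(\log(1/\epsilon))^4/\epsilon+m)$, where each $\hat{U}_j$ uses the polylogarithmic-depth multi-controlled-phase construction of \cite{claudon2024polylogarithmic}. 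Feeding this decomposition into Theorem~\ref{thm: block-encoding} then yields an $\epsilon$-accurate block-encoding of $\hat{D}/(\alpha d_{\max})$, since the hypothesis of that theorem is exactly an $\epsilon$-approximation of $\hat{U}$. I would argue the resources are unchanged asymptotically using the two remarks in the main text: controlling a sequential building block by $q_A$ only promotes an $(n-1)$-controlled phase of Eq.~\eqref{sequential decomposition} to an $n$-controlled phase, which has the same asymptotic depth and size, and copying the single ancilla $q_A$ into its $\mathcal{O}(p)=\mathcal{O}(\log(1/\epsilon))$ controls costs strictly less than copying the main register, contributing only $\mathcal{O}(\log p)$ extra depth and $\mathcal{O}(p)$ extra size, both already absorbed by the stated bounds.

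The delicate point, and the step I expect to require the most care, is the ancilla bookkeeping rather than the complexity arithmetic. Three kinds of ancilla coexist: the $q_A$ qubit carrying the block-encoding, its copies together with the parallelization copy-registers (counted inside $m$), and the single workspace qubit that each polylogarithmic multi-controlled gate of \cite{claudon2024polylogarithmic} borrows. I would verify that this borrowed workspace is returned to its initial state after every multi-controlled gate, that the $\widehat{\text{copy}}^{-1}$ layers restore all copy-registers to $\ket{0}$, and that consequently the only ancilla left entangled with the output after uncomputation is $q_A$; this is what licenses the block-encoding parameter $1$ in $(\alpha d_{\max},1,\epsilon)$ while all remaining ancilla fit within the $\mathcal{O}(m)$ size budget. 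Once this is confirmed the statement follows directly, mirroring the proof of Corollary~\ref{corollary: partially parallelized block-encoding with approximate sequential decomposition}.
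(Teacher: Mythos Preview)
Your proposal is correct and follows exactly the paper's approach: the paper's proof is the single line ``This corollary is a direct consequence of Corollary~\ref{cor: partially parallelized approximate sequential decomposition bis} and the block-encoding Theorem~\ref{thm: block-encoding},'' and you have simply unpacked that line, including the differentiability of $g=\arcsin(f/(\alpha\|f\|_\infty))$ (which the paper records once in Corollary~\ref{cor: approximate block-encoding using a sequential decomposition and one ancilla}) and the ancilla bookkeeping that justifies the parameter $1$ in the block-encoding while the copy registers and the borrowed workspace of \cite{claudon2024polylogarithmic} are all uncomputed.
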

\begin{proof}
This corollary is a direct consequence of corollary  \ref{cor: partially parallelized approximate sequential decomposition bis} and the block-encoding Theorem \ref{thm: block-encoding}.
\end{proof}

\begin{corollary}[Partially parallelized block-encoding with approximate Walsh-Hadamard decomposition]
\label{Partially parallelized block-encoding with approximate Walsh-Hadamard decomposition}
For any $\alpha > 1$, any $n$-qubit non-unitary diagonal operator $\hat{D}$  depending on a real-valued function $f$ defined on $[0,1]$ with bounded first derivative can be $(\alpha d_{\max},1,\epsilon)$-block-encoded with a quantum circuit of depth $\mathcal{O}(\log(1/\epsilon)^2/(\epsilon m) +\log(m/\log(1/\epsilon)))$ and size $\mathcal{O}(\log(1/\epsilon)/\epsilon+m)$ using $m$ ancilla qubits with $m=\Omega(\log(1/\epsilon))$ and $ m=\mathcal{O}(\log(1/\epsilon)/\epsilon)$.
\end{corollary}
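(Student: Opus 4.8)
The plan is to follow the same two-step template used for every other non-unitary entry in this subsection: first invoke the unitary analogue of the statement, then wrap it with the block-encoding construction of Theorem \ref{thm: block-encoding}. Concretely, I would set $g(x)=\arcsin\!\big(f(x)/(\alpha\|f\|_{\infty})\big)$ and consider the diagonal unitary $\hat{U}=e^{i\arcsin(\hat{D}/(\alpha d_{\max}))}=\sum_{x=0}^{N-1}e^{ig(x/N)}\ket{x}\bra{x}$. Because $\alpha>1$, the argument of $\arcsin$ stays bounded away from $\pm 1$, so $g$ is differentiable with $\|g'\|_{\infty}\leq \|f'\|_{\infty}/(\|f\|_{\infty}\sqrt{\alpha^2-1})$; thus $\hat{U}$ is exactly of the form covered by the unitary corollary.

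First I would apply Corollary \ref{cor: partially parallelized approximate walsh-hadamard decomposition with one ancilla} to $\hat{U}$ (and identically to $\hat{U}^{\dagger}$). Since $\|g'\|_{\infty}$ is a constant, this yields a partially parallelized Walsh-Hadamard circuit that $\epsilon$-approximates $\hat{U}$ with depth $\mathcal{O}(\log(1/\epsilon)^2/(\epsilon m)+\log(m/\log(1/\epsilon)))$ and size $\mathcal{O}(\log(1/\epsilon)/\epsilon+m)$ on the stated range $m=\Omega(\log(1/\epsilon))$, $m=\mathcal{O}(\log(1/\epsilon)/\epsilon)$. The underlying mechanism, which I would only sketch, is that the approximation Theorem \ref{thm: Approximation theorem} replaces the $n$-qubit unitary by its exact restriction to $m_0=\lceil\log_2(\|g'\|_{\infty}/\epsilon)\rceil=\mathcal{O}(\log(1/\epsilon))$ qubits, whose Walsh-Hadamard decomposition contains $\mathcal{O}(1/\epsilon)$ operators of depth $\mathcal{O}(\log(1/\epsilon))$ each; distributing these over the ancilla registers with the adjustable-depth Theorem \ref{thm : adjustable-depth with ancilla} produces precisely the claimed scalings.

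Second I would feed this circuit into the block-encoding Theorem \ref{thm: block-encoding}. The remark following that theorem does the work: controlling each Walsh operator $\hat{W}_j=\hat{V}^{-1}\hat{R}_Z(a_j)\hat{V}$ reduces to $\hat{V}^{-1}C(\hat{R}_Z(a_j))\hat{V}$, so only the single $\hat{R}_Z$ gate per operator acquires a control and the asymptotic depth and size are unchanged. The flag qubit $q_A$ is copied into $p-1$ siblings to drive the controls in parallel, which costs only $\mathcal{O}(\log p)=\mathcal{O}(\log(1/\epsilon))$ additional depth and $\mathcal{O}(p)$ additional size — always cheaper than copying the main register and hence absorbed into the existing bounds. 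Since all copy and copy${}^{-1}$ layers reset the parallelization ancillas deterministically to $\ket{0}$, the only qubit requiring post-selection is $q_A$, giving the $(\alpha d_{\max},1,\epsilon)$ block-encoding.

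The step needing the most care — routine rather than deep — is checking that the controls and the extra copy of $q_A$ genuinely leave the leading-order cost untouched: the additive $\mathcal{O}(\log(1/\epsilon))$ control-copy depth must merge with the $\log(m/\log(1/\epsilon))$ term already present, and the $\mathcal{O}(p)=\mathcal{O}(1/\epsilon)$ extra gates must be subsumed by the size term $\mathcal{O}(\log(1/\epsilon)/\epsilon)$. Both hold on the stated ancilla range, so no obstruction appears beyond what the cited corollary and theorem already guarantee.
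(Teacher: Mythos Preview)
Your proposal is correct and follows exactly the paper's own proof, which is the one-liner ``direct consequence of Corollary \ref{cor: partially parallelized approximate walsh-hadamard decomposition with one ancilla} and the block-encoding Theorem \ref{thm: block-encoding}.'' Your additional detail about the differentiability of $g$, the control identity $C(\hat{V}^{-1}\hat{U}\hat{V})=\hat{V}^{-1}C(\hat{U})\hat{V}$, and the absorption of the $q_A$-copy overhead into the existing bounds merely spells out what the paper leaves implicit in the remark after Theorem \ref{thm: block-encoding}.
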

\begin{proof}
This corollary is a direct consequence of corollary  \ref{cor: partially parallelized approximate walsh-hadamard decomposition with one ancilla} and the block-encoding Theorem \ref{thm: block-encoding}.
\end{proof}

\begin{corollary}[Walsh-recursive]
\label{walsh-recursive nonunitary}
For any $\alpha > 1$, any $n$-qubit non-unitary diagonal operator $\hat{D}$  depending on a real-valued function $f$ defined on $[0,1]$ with bounded first derivative can be $(\alpha d_{\max},1,\epsilon)$-block-encoded with a quantum circuit of depth $\mathcal{O}(1/(\epsilon \log(1/\epsilon) ))$ and size $\mathcal{O}(1/\epsilon)$ using $m=\mathcal{O}(\log(1/\epsilon))$ ancilla qubits.
\end{corollary}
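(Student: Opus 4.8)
The plan is to prove this as the non-unitary counterpart of the unitary Walsh-recursive Corollary \ref{cor: lemma11+approximate}: I would generate an $\epsilon$-accurate circuit for the associated diagonal \emph{unitary} and then turn it into a block-encoding of $\hat{D}$ through Theorem \ref{thm: block-encoding}.

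First I would reduce to the unitary case. Following the remark after Theorem \ref{thm: block-encoding}, the cost of block-encoding $\hat{D}=\sum_{x=0}^{N-1}f(x/N)\ket{x}\bra{x}$ is governed by the cost of implementing the diagonal unitary $\hat{U}=e^{i\arcsin(\hat{D}/(\alpha d_{\max}))}=\sum_{x=0}^{N-1}e^{ig(x/N)}\ket{x}\bra{x}$, where $g(y)=\arcsin(f(y)/(\alpha\|f\|_{\infty}))$. Because $\alpha>1$ is a fixed constant and $f$ has bounded first derivative, the chain rule gives $\|g'\|_{\infty}\leq \|f'\|_{\infty}/(\|f\|_{\infty}\sqrt{\alpha^{2}-1})$, so $g$ is itself differentiable with bounded derivative and every hypothesis of Corollary \ref{cor: lemma11+approximate} is satisfied for $\hat{U}$.

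Second I would invoke Corollary \ref{cor: lemma11+approximate} on $\hat{U}$. That corollary already packages Lemma 11 of \cite{sun2023asymptotically} together with the approximation Theorem \ref{thm: Approximation theorem}: it implements $\hat{U}$ up to error $\epsilon$ by realizing its restriction to $m_{\epsilon}=\lceil\log_{2}(\|g'\|_{\infty}/\epsilon)\rceil=\mathcal{O}(\log(1/\epsilon))$ qubits, producing a circuit of depth $\mathcal{O}(2^{m_{\epsilon}}/m_{\epsilon})=\mathcal{O}(1/(\epsilon\log(1/\epsilon)))$ and size $\mathcal{O}(2^{m_{\epsilon}})=\mathcal{O}(1/\epsilon)$, with no work ancilla beyond the $m_{\epsilon}$ active qubits.

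Third I would block-encode $\hat{D}$ by inserting this circuit into the single-flag scheme of Fig.~\ref{quantum circuit scheme for non-unitary diagonal} / Theorem \ref{thm: block-encoding}, controlling $\hat{U}$ and $\hat{U}^{\dagger}$ by the flag qubit $q_{A}$. Here lies the only real obstacle: unlike the Walsh and sequential decompositions, Lemma 11 produces a \emph{recursive} circuit rather than a product of commuting diagonal blocks, so $q_{A}$ cannot control it factor-by-factor at the same asymptotic cost. I would resolve this exactly as in the exact Lemma-11-plus-block-encoding corollary, copying $q_{A}$ into $\mathcal{O}(m_{\epsilon})$ fresh ancilla via the copy unitary of Lemma \ref{copy lemma} and using a distinct copy to control each primitive gate of a layer in parallel. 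Controlling a CNOT or a single-qubit rotation costs only a constant factor in size, and the copies are prepared and later uncomputed with depth $\mathcal{O}(\log m_{\epsilon})=\mathcal{O}(\log\log(1/\epsilon))$, so the depth stays $\mathcal{O}(1/(\epsilon\log(1/\epsilon)))$ and the size stays $\mathcal{O}(1/\epsilon)$. Since the copies return deterministically to $\ket{0}$ and decouple, the success of the construction is flagged by the single qubit $q_{A}$ alone, giving an $(\alpha d_{\max},1,\epsilon)$-block-encoding that uses $1+\mathcal{O}(m_{\epsilon})=\mathcal{O}(\log(1/\epsilon))$ ancilla qubits in total, independently of $n$, as claimed.
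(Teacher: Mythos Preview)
Your proposal is correct and follows the same approach as the paper's proof, which simply cites Corollary \ref{cor: lemma11+approximate} together with the block-encoding Theorem \ref{thm: block-encoding}. You spell out more of the details—in particular, how to control the recursive Lemma-11 circuit by copying the flag qubit into $\mathcal{O}(m_\epsilon)$ ancilla, exactly as the paper does in its exact Lemma-11-plus-block-encoding corollary—but the underlying argument is the same.
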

\begin{proof}
This corollary is a direct consequence of corollary \ref{cor: lemma11+approximate} and the block-encoding Theorem \ref{thm: block-encoding}.
\end{proof}

\begin{corollary}[Walsh-optimized adjustable-depth]
\label{Walsh-optimized adjustable-depth non-unitary}
For any $\alpha > 1$, any $n$-qubit non-unitary diagonal operator $\hat{D}$  depending on a real-valued function $f$ defined on $[0,1]$ with bounded first derivative can be $(\alpha d_{\max},1,\epsilon)$-block-encoded with a quantum circuit of depth $\mathcal{O}(1/(\epsilon m)+\log(m))$ and size $\mathcal{O}(\log(1/\epsilon)/\epsilon)$ using $m$ ancilla qubits with $m=\Omega(\log(1/\epsilon))$ and $ m=\mathcal{O}(1/(\epsilon \log(1/\epsilon)))$.
\end{corollary}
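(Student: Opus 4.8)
The plan is to reduce the non-unitary statement to its unitary counterpart, Corollary \ref{cor:lemma20+approximate}, and then invoke the block-encoding Theorem \ref{thm: block-encoding}, following the same template as the preceding non-unitary corollaries. First I would introduce the diagonal unitary $\hat{U}=e^{i\arcsin(\hat{D}/(\alpha d_{\max}))}=\sum_{x=0}^{N-1}e^{ig(x/N)}\ket{x}\bra{x}$ associated to the function $g=\arcsin(f/(\alpha\|f\|_{\infty}))$. The role of the parameter $\alpha>1$ is to keep the argument of the arcsine inside $(-1,1)$ and bounded away from $\pm1$, so that $g$ is differentiable with $\|g'\|_{\infty,[0,1]}\leq \|f'\|_{\infty}/(\|f\|_{\infty}\sqrt{\alpha^2-1})$; in particular $g$ has bounded first derivative, which is exactly the hypothesis needed to feed $\hat{U}$ into Corollary \ref{cor:lemma20+approximate}.

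Next I would apply Corollary \ref{cor:lemma20+approximate} to $\hat{U}$: for any $m$ with $m=\Omega(\log(1/\epsilon))$ and $m=\mathcal{O}(1/(\epsilon\log(1/\epsilon)))$, this yields an $\epsilon$-approximation $\tilde{U}=\prod_{j}\hat{U}_j$ of $\hat{U}$ in spectral norm, realized by a circuit of depth $\mathcal{O}(1/(\epsilon m)+\log m)$, size $\mathcal{O}(1/\epsilon+\log(1/\epsilon)\,m)$ and $m$ ancilla qubits. Here the approximation Theorem \ref{thm: Approximation theorem} has already been folded into that corollary: Lemma 20 of \cite{sun2023asymptotically} is run on $\lceil\log_2(\|g'\|_{\infty}/\epsilon)\rceil=\mathcal{O}(\log(1/\epsilon))$ qubits, which is what turns the $2^n$ scalings of Lemma 20 into the $\mathcal{O}(1/\epsilon)$ scalings above.

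Finally I would feed the decomposition $\tilde{U}=\prod_{j}\hat{U}_j$ into Theorem \ref{thm: block-encoding}, which directly produces a block-encoding of $\hat{D}$ with the same error $\epsilon$. Only the single ancilla $q_A$ is measured to flag success, the copy registers being clean ancilla reset to $\ket{0}$, so the result is reported as an $(\alpha d_{\max},1,\epsilon)$-block-encoding in the convention used for the other non-unitary corollaries. The extra cost over the bare unitary is controlling each factor $\hat{U}_j$ by $q_A$: using the identity $C(\hat{V}^{-1}\hat{U}\hat{V})=\hat{V}^{-1}C(\hat{U})\hat{V}$ recalled in Section \ref{sec:non_unitary}, controlling each Walsh factor costs asymptotically the same as implementing it, while preparing and unpreparing the $q_A$-copy register adds only $\mathcal{O}(\log m)$ to the depth, leaving the depth bound $\mathcal{O}(1/(\epsilon m)+\log m)$ intact.

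The step requiring the most care is this last one: one must check that controlling the Walsh-optimized circuit of \cite{sun2023asymptotically} by a single ancilla preserves its adjustable-depth structure. Since that circuit is not merely a product of commuting two-local factors but relies on the recursive Walsh ordering of \cite{sun2023asymptotically}, I would verify that $q_A$ can be copied onto enough qubits to control, simultaneously, all the gates acting in any one parallel layer, so that the depth overhead stays logarithmic while the size picks up at most a single additional logarithmic factor. This accounting is precisely what produces the stated size $\mathcal{O}(\log(1/\epsilon)/\epsilon)$, one $\log(1/\epsilon)$ factor larger than the unitary size $\mathcal{O}(1/\epsilon)$. All remaining steps are routine substitutions of parameters into the two cited results.
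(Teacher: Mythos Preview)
Your proposal is correct and follows essentially the same route as the paper: the paper's own proof is a single sentence invoking Corollary~\ref{cor:lemma20+approximate} together with the block-encoding Theorem~\ref{thm: block-encoding}, and you reproduce exactly this reduction, supplying in addition the verification that $g=\arcsin(f/(\alpha\|f\|_\infty))$ has bounded derivative and a discussion of why controlling the Walsh-optimized circuit by $q_A$ preserves the depth while incurring at most a logarithmic size overhead. Your treatment is thus a more detailed version of the paper's argument rather than a different one.
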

\begin{proof}
This corollary is a direct consequence of corollary \ref{cor:lemma20+approximate} and the block-encoding Theorem \ref{thm: block-encoding}.
\end{proof}

\begin{corollary}[Walsh-optimized fully parallelized]
\label{Walsh-optimized fully parallelized non-unitary}
For any $\alpha > 1$, any $n$-qubit non-unitary diagonal operator $\hat{D}$  depending on a real-valued function $f$ defined on $[0,1]$ with bounded first derivative can be $(\alpha d_{\max},1,\epsilon)$-block-encoded with a quantum circuit of depth $\mathcal{O}(\log(1/\epsilon))$ and size $\mathcal{O}(1/\epsilon)$ using $m=\mathcal{O}(1/(\epsilon \log(1/\epsilon)))$ ancilla qubits.
\end{corollary}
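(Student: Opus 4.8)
The plan is to derive this non-unitary statement from its unitary counterpart, Corollary \ref{cor: lemma20 max qubit+approximate}, by feeding the latter into the block-encoding Theorem \ref{thm: block-encoding}, in exactly the same way as the other non-unitary lines of Table \ref{Table : approximate case} are obtained from their unitary analogues. So the whole argument is a composition of two results already available in the excerpt, and the only real work is checking that passing from the unitary circuit to its block-encoded, $q_A$-controlled version preserves all three scalings.

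First I would rewrite the target block-encoding in terms of the controlled diagonal unitary $\hat{U}=e^{i\arcsin(\hat{D}/(\alpha d_{\max}))}=\sum_{x=0}^{N-1}e^{ig(x/N)}\ket{x}\bra{x}$, whose phase function is $g=\arcsin(f/(\alpha\|f\|_\infty))$ since $d_{\max}=\|f\|_\infty$. The role of the hypothesis $\alpha>1$ is precisely to keep the argument of $\arcsin$ bounded away from $\pm 1$, so that $g$ is differentiable on $[0,1]$ with $\|g'\|_\infty\le \|f'\|_\infty/(\|f\|_\infty\sqrt{\alpha^2-1})<\infty$. Hence $\hat{U}$ is a diagonal unitary depending on a function with bounded first derivative, and Corollary \ref{cor: lemma20 max qubit+approximate} applies verbatim: the approximation Theorem \ref{thm: Approximation theorem} restricts the problem to $m'=\lceil\log_2(\|g'\|_\infty/\epsilon)\rceil=\mathcal{O}(\log(1/\epsilon))$ qubits, on which Lemma 20 of \cite{sun2023asymptotically} run with its maximal ancilla count $2^{m'}/m'$ produces a circuit of depth $\mathcal{O}(m')=\mathcal{O}(\log(1/\epsilon))$, size $\mathcal{O}(2^{m'})=\mathcal{O}(1/\epsilon)$ and $\mathcal{O}(2^{m'}/m')=\mathcal{O}(1/(\epsilon\log(1/\epsilon)))$ ancilla qubits, approximating $\hat{U}$ to spectral-norm error $\epsilon$.

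Next I would feed this circuit into Theorem \ref{thm: block-encoding}, noting that for the Walsh-optimized scheme the parallelization is the internal one of \cite{sun2023asymptotically} rather than the ``copy the main register'' scheme, so one controls the \emph{whole} unitary circuit on $q_A$ by controlling its individual gates. Using $C(\hat{V}^{-1}\hat{U}\hat{V})=\hat{V}^{-1}C(\hat{U})\hat{V}$, only the innermost single-qubit rotations need a control, giving controlled size $\mathcal{O}(s+n)=\mathcal{O}(1/\epsilon)$ and controlled depth equal to the original depth plus $\mathcal{O}(\log(\text{width}))=\mathcal{O}(\log(1/\epsilon))$. To apply the controls of one layer in parallel, $q_A$ must be copied a number of times equal to the maximal layer width, which is bounded by the total number of qubits touched, $\mathcal{O}(2^{m'}/m')=\mathcal{O}(1/(\epsilon\log(1/\epsilon)))$; by Lemma \ref{copy lemma} this copy costs depth $\mathcal{O}(\log(1/\epsilon))$ and a CNOT count subdominant to the circuit size. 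Consequently the depth stays $\mathcal{O}(\log(1/\epsilon))$, the size stays $\mathcal{O}(1/\epsilon)$, the total ancilla count stays $\mathcal{O}(1/(\epsilon\log(1/\epsilon)))$, and the block-encoding error equals the spectral-norm error of $\hat{U}$, namely $\epsilon$.

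The delicate point — the only one beyond routine bookkeeping — is the last step: confirming that the copies of $q_A$ required to parallelize the controls are, up to constants, no more numerous than the ancilla already present in the unitary circuit, so that the favourable $\mathcal{O}(1/(\epsilon\log(1/\epsilon)))$ ancilla count is preserved rather than inflated to $\mathcal{O}(1/\epsilon)$. This hinges on the fact that the maximal layer width of the Sun et al. circuit is itself $\mathcal{O}(2^{m'}/m')$, which I would verify from the structure of the circuit in \cite{sun2023asymptotically}; once that is in hand, the three scalings follow immediately and the corollary is proved.
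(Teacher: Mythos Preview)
Your proposal is correct and takes essentially the same approach as the paper: the paper's proof is the one-liner ``direct consequence of corollary \ref{cor: lemma20 max qubit+approximate} and the block-encoding Theorem \ref{thm: block-encoding}.'' Your write-up expands on the bookkeeping that the paper leaves implicit, in particular your check that the $q_A$-copies needed to parallelize the controls do not inflate the ancilla count beyond $\mathcal{O}(1/(\epsilon\log(1/\epsilon)))$; this is more careful than the paper itself, which simply asserts the scalings carry over.
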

\begin{proof}

This corollary is a direct consequence of corollary \ref{cor: lemma20 max qubit+approximate} and the block-encoding Theorem \ref{thm: block-encoding}.
\end{proof}

\subsubsection{Fourier GQSP methods}
\label{sec: Fourier GQSP methods}

First, we recall a result established by Motlagh et al. \cite{motlagh2024generalized}: 

\begin{corollary}[Corollary 14 of \cite{motlagh2024generalized}]
\label{cor: gqsp_diag}
Let $f (x) = \sum_{k=0}^{d}\alpha_ke^{2i\pi kx/N}$ be a function such that $\|f\|_{\infty}\leq 1$, it is possible to $(1,1,0)$-block encode the $n$-qubit operator $\sum_{k=0}^d f(x)\ket{x}\bra{x}$ with $O(dn)$ single- and two-qubit gates. 
\end{corollary}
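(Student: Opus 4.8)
The plan is to recognize the target operator as a polynomial in the elementary phase unitary $\hat{U}_\omega = \sum_{x=0}^{N-1} e^{i\frac{2\pi x}{N}}\ket{x}\bra{x}$ and then invoke the generalized quantum signal processing (GQSP) construction of \cite{motlagh2024generalized} to block-encode that polynomial. Concretely, since $\hat{U}_\omega^k = \sum_{x=0}^{N-1} e^{2\pi i kx/N}\ket{x}\bra{x}$, the diagonal operator $\hat{D}_f = \sum_{x=0}^{N-1} f(x)\ket{x}\bra{x}$ is exactly $P(\hat{U}_\omega)$ with $P(z)=\sum_{k=0}^{d}\alpha_k z^k$ a polynomial of degree $d$. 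The hypothesis $\|f\|_\infty\le 1$ says precisely that $|P(z)|\le 1$ for every $z$ on the unit circle (identifying $z=e^{2\pi i x}$ with $x\in[0,1]$), which is exactly the admissibility condition required to run GQSP.

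Next I would apply the GQSP theorem: for a degree-$d$ polynomial $P$ bounded by one on the unit circle and a unitary $\hat{U}_\omega$ supplied through its controlled version, there exist single-qubit rotation angles $\{(\theta_j,\phi_j,\lambda_j)\}_{j=0}^{d}$ such that the circuit interleaving these ancilla rotations with $d$ calls to controlled-$\hat{U}_\omega$ realizes a unitary $\hat{U}_D$ whose $\ket{0}\bra{0}$ ancilla block equals $P(\hat{U}_\omega)=\hat{D}_f$ exactly. Since the encoding is exact, uses a single ancilla qubit, and has normalization factor one, this is by definition a $(1,1,0)$-block-encoding of $\hat{D}_f$ in the sense of the block-encoding definition given earlier.

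For the gate count I would exploit the tensor-product structure of $\hat{U}_\omega$. Writing $x=\sum_{i=0}^{n-1}x_i 2^i$ gives $e^{2\pi i x/N}=\prod_{i=0}^{n-1}\big(e^{2\pi i 2^i/N}\big)^{x_i}$, so $\hat{U}_\omega=\bigotimes_{i=0}^{n-1}\mathrm{diag}\!\left(1,\,e^{2\pi i 2^i/N}\right)$ is a product of $n$ single-qubit phase gates. Its controlled version (controlled by the GQSP ancilla) is therefore a product of $n$ controlled-phase gates, costing $O(n)$ one- and two-qubit gates per query. The full circuit makes $d$ such queries and inserts $d+1$ single-qubit rotations on the ancilla, for a total of $O(dn)+O(d)=O(dn)$ gates.

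The genuinely hard part is the existence of the GQSP angles realizing $P$, i.e.\ the completion argument showing that any $P$ with $\|P\|_{\infty,\,|z|=1}\le 1$ admits a complementary polynomial so that the alternating rotation--query sequence reproduces exactly the coefficients $\alpha_k$; this is the substance of \cite{motlagh2024generalized}, and I would invoke it as a black box rather than re-derive it. The only point needing care on our side is that the admissibility bound must hold on the \emph{entire} unit circle and not merely at the $N$-th roots of unity where $\hat{U}_\omega$ has its spectrum; this is exactly why the hypothesis is phrased as $\|f\|_\infty\le 1$ with $f$ viewed as a function on the continuum $[0,1]$ rather than as a bound at the $N$ discrete grid points.
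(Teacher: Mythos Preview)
Your proposal is correct and is precisely the argument underlying the cited result. Note that the paper itself offers no proof of this corollary: it is stated as a recalled result (``Corollary 14 of \cite{motlagh2024generalized}''), so there is no in-paper proof to compare against; your sketch---write $\hat{D}_f=P(\hat{U}_\omega)$, invoke GQSP under the unit-circle bound $|P|\le 1$, and count $O(n)$ gates per controlled-$\hat{U}_\omega$ query times $d$ queries---is exactly the construction in the cited reference.
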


A direct consequence of this corollary is that, when a function $f$ can be approximated by a Fourier series that converges exponentially fast, then it is possible to reach an accuracy $\epsilon>0$ with only a logarithmic number of terms $\mathcal{O}(\log(1/\epsilon))$. 
\begin{corollary}
\label{cor:GQSP approximate}
Any $n$-qubit non-unitary diagonal operator $\hat{D}$ depending on a real-valued function $f$ defined on $[0,1]$, which is periodic $f(0)=f(1)$, analytic and there exist two constants $C>0$ and $R>1$ such that for all $M\in\mathbb{N}$, $\|\partial_{x}^{(M)}f\|_{L^1}\leq C M!/R^M$, can be $(d_{\max},1,\epsilon)$-block-encoded with a quantum circuit of depth $\mathcal{O}(n\log(1/\epsilon))$ and size $\mathcal{O}(n\log(1/\epsilon))$ using one ancilla qubit.
\end{corollary}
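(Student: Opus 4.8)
The plan is to exploit the analyticity hypothesis on $f$ to show that its Fourier coefficients decay exponentially, so that only $\mathcal{O}(\log(1/\epsilon))$ Fourier modes are needed, and then to invoke Corollary \ref{cor: gqsp_diag} on the truncated series. First I would normalize by setting $\tilde{f}=f/d_{\max}$, so that $\|\tilde{f}\|_{\infty}\le 1$ as required by that corollary. Writing the Fourier coefficients $c_k=\int_0^1 \tilde{f}(x)e^{-2\pi i k x}\,\mathrm{d}x$ and integrating by parts $M$ times, with the boundary terms vanishing by the periodicity $f(0)=f(1)$, gives $c_k=(2\pi i k)^{-M}\int_0^1 \partial_x^{(M)}\tilde{f}(x)e^{-2\pi i k x}\,\mathrm{d}x$, hence $|c_k|\le \|\partial_x^{(M)}\tilde{f}\|_{L^1}/(2\pi|k|)^M\le (C/d_{\max})\,M!/(2\pi|k|R)^M$ for every integer $M\ge 0$.

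Next I would optimize the free parameter $M$ for each fixed $k$. Taking $M\approx 2\pi|k|R$ and using Stirling's bound $M!\le e\sqrt{M}(M/e)^M$ collapses the right-hand side to $\mathcal{O}(\sqrt{|k|}\,e^{-2\pi R|k|})$, which for $R>1$ yields an exponential decay $|c_k|\le C' e^{-\gamma|k|}$ with some fixed $\gamma>0$. Summing the tail then gives, for the degree-$d$ truncation $\tilde{f}_d=\sum_{|k|\le d}c_k e^{2\pi i k x}$, the estimate $\|\tilde{f}-\tilde{f}_d\|_{\infty}\le \sum_{|k|>d}|c_k|=\mathcal{O}(e^{-\gamma d})$, so choosing $d=\mathcal{O}(\log(1/\epsilon))$ makes this at most $\epsilon$.

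Finally I would apply Corollary \ref{cor: gqsp_diag} to $\tilde{f}_d$, reducing the two-sided Laurent series to the one-sided form of that corollary by a frequency shift $k\mapsto k+d$ realized through the diagonal unitary $\hat{U}_\omega^{d}$, which is unitary and hence leaves the block-encoding norm unchanged. This produces a $(1,1,0)$-block-encoding of $\sum_x \tilde{f}_d(x/N)\ket{x}\bra{x}$ using $\mathcal{O}(dn)=\mathcal{O}(n\log(1/\epsilon))$ gates. Since each controlled-$\hat{U}_\omega$ factor reduces to $n$ controlled phase gates all sharing the single ancilla, its depth is $\mathcal{O}(n)$ and the total depth is $\mathcal{O}(dn)=\mathcal{O}(n\log(1/\epsilon))$. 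Because $\tfrac{1}{d_{\max}}\hat{D}=\sum_x \tilde{f}(x/N)\ket{x}\bra{x}$, the spectral-norm error of the resulting block-encoding equals $\max_x|\tilde{f}(x/N)-\tilde{f}_d(x/N)|\le \|\tilde{f}-\tilde{f}_d\|_{\infty}\le \epsilon$, so the circuit is a $(d_{\max},1,\epsilon)$-block-encoding of $\hat{D}$ with the claimed depth and size using one ancilla qubit.

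The main obstacle is the second step: converting the hypothesis on the $L^1$-norms of the derivatives into a clean exponential bound on the Fourier coefficients via the $M$-optimization, and verifying that the resulting decay rate $\gamma$ stays bounded away from zero (here uniformly for all $|k|$ large enough) so that the truncation degree is genuinely $\mathcal{O}(\log(1/\epsilon))$. The remaining steps are bookkeeping: tracking the normalization by $d_{\max}$, confirming that the discretized evaluation $\tilde{f}_d(x/N)$ is exactly the diagonal of $\sum_{k}c_k\hat{U}_\omega^{k}$, and handling the one-sided versus two-sided Fourier convention demanded by Corollary \ref{cor: gqsp_diag}.
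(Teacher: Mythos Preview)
Your proposal is correct and follows the same overall strategy as the paper: bound the Fourier-truncation error using the hypothesis on the $L^1$-norms of the derivatives, then invoke Corollary \ref{cor: gqsp_diag}. The paper bounds $\|f-\tilde{f}_M\|_\infty$ directly by a single inequality (essentially $\sum_{|k|>M}|c_k|\lesssim \|\partial_x^{(M)}f\|_{L^1}/((2\pi)^M(M-1)M^{M-1})$, quoting a variant of a lemma from \cite{childs2022quantum}) and then applies Stirling, whereas you bound each $|c_k|$ individually by integration by parts and optimize $M$ per mode; both routes yield the same exponential decay and the same $d=\mathcal{O}(\log(1/\epsilon))$ truncation. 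Your handling of the two-sided series via the shift $\hat{U}_\omega^d$ is exactly what the paper does implicitly when it writes $P_M(X)=\sum_{k=0}^{2M}(\alpha_{k-M}/\cdots)X^k$.

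One small point to tighten: you normalize by $d_{\max}$ \emph{before} truncating, so that $\|\tilde{f}\|_\infty\le 1$, but the object you feed into Corollary \ref{cor: gqsp_diag} is the truncation $\tilde{f}_d$, which a priori only satisfies $\|\tilde{f}_d\|_\infty\le 1+\epsilon$ and may violate the hypothesis $\|\cdot\|_\infty\le 1$. The paper avoids this by normalizing the \emph{truncated} series by $\|f\|_\infty+\epsilon/2$ instead. This is a one-line fix (divide by $(1+\epsilon)d_{\max}$, or equivalently absorb a factor $(1+\epsilon)$ into the subnormalization $\alpha$ and rescale $\epsilon$), but it should be made explicit.
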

\begin{proof}
The proof comes from the previous corollary \ref{cor: gqsp_diag} and the error between the function and its truncated Fourier series. We present an alternative version of Lemma 19 from \cite{childs2022quantum}: when a function $f$ is periodic, analytic and there exist two constants $C>0$ and $R>1$ such that for all $M\in\mathbb{N}$, $\|\partial_{x}^{(M)}f\|_{L^1}\leq C M!/R^M$, then for all $M\in\mathbb{N}$:
\begin{equation}
\begin{split}
     \|f-\tilde{f}_M\|_\infty& \leq2\frac{\|\partial_{x}^{(M)}f\|_{L^1}}{(2\pi)^{M}(M-1) M^{(M-1)}} \leq2C\frac{M!}{(2\pi R)^{M}(M-1) M^{(M-1)}} \\& \leq 2C\frac{\sqrt{2\pi M^3} }{(2\pi e R)^{M}(M-1)} e^{\frac{1}{12M}}
\end{split}
\label{Fourier convergence}
    \end{equation}
where $\tilde{f}_M(x)=\sum_{k=-M}^M  a_{k}^g e^{2 i\pi kx}$ and $a_{k}^f=\int_{0}^1f(x)e^{-2i\pi kx}dx$ the $k$-th Fourier coefficient of $g$. Notice that $\|f-\tilde{f}_M\|_\infty \leq \epsilon/2$ implies that $M$ scales as $\mathcal{O}(\log(1/\epsilon))$.

Then, the generalized quantum signal processing protocol \cite{motlagh2024generalized}  allows the implementation of polynomial of operators without additional restrictions on the polynomial other than $|P(e^{i\theta})|\leq1$ for all $\theta\in[0,2\pi[$. We consider the normalized function $\tilde{\tilde{f}}_M=\tilde{f}_M/(\|f\|_{\infty}+\epsilon/2)$ such that $\|\tilde{\tilde{f}}_M\|_{\infty}\leq 1$, and $\|\frac{f}{\|f\|_{\infty}}-\tilde{\tilde{f}}_M\|_{\infty}\leq\frac{\epsilon}{\|f\|_{\infty}}$. The GQSP protocol, using the polynomial $P_M(X)=\sum_{k=0}^{2M}(\alpha_{k-M}/(\|f\|_{\infty}+\epsilon/2))X^k$ and the unitary $\hat{U}_{\omega}=\sum_{j=0}^{N-1}e^{2i\pi j/N}\ket{j}\bra{j}$, produces a $(1,1,0)$-block encoding $\hat{W}$ of $\tilde{D}_M=\sum_{k=0}^d \tilde{\tilde{f}}_M(x)\ket{x}\bra{x}$. This operator $\hat{W}$ is a $(\|f\|_{\infty},1,\epsilon)$-block-encoding of $\hat{D}$: 
\begin{equation}
\|\hat{D}-\|f\|_{\infty}(\bra{0}\otimes \hat{I}_n)\hat{W}(\ket{0} \otimes \hat{I}_n)\|_2\leq\|f\|_{\infty}\|\frac{f}{\|f\|_{\infty}}-\tilde{\tilde{f}}_M\|_{\infty}\leq \epsilon
\end{equation}
The associated quantum circuit has a size and depth of $O(Mn)$, using one ancilla qubit. 
\end{proof}

\begin{corollary}
\label{cor:GQSP approximate parallel}
Any $n$-qubit non-unitary diagonal operator $\hat{D}$ depending on a real-valued function $f$ defined on $[0,1]$, which is periodic $f(0)=f(1)$, analytic and there exist two constants $C>0$ and $R>1$ such that for all $M\in\mathbb{N}$, $\|\partial_{x}^{(M)}f\|_{L^1}\leq C M!/R^M$, can be $(f_{\max},n,\epsilon)$-block-encoded with a quantum circuit of depth $\mathcal{O}(\log(1/\epsilon)\log(n))$ and size $\mathcal{O}(n\log(1/\epsilon))$ using $n$ ancilla qubit.
\end{corollary}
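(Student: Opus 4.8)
The plan is to take the generalized quantum signal processing circuit of Corollary~\ref{cor:GQSP approximate} and to parallelize, one at a time, each controlled application of the signal unitary $\hat{U}_\omega$ by means of the copy unitary of Lemma~\ref{copy lemma}. Since I will change only the way each controlled-$\hat{U}_\omega$ is realized, and not the logical sequence of operations, the polynomial $P_M(\hat{U}_\omega)$ being implemented and therefore the block-encoding error $\epsilon$ are unaffected; correctness and the error bound will follow verbatim from Corollary~\ref{cor:GQSP approximate}, and only the depth, size and width need to be re-estimated.

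First I would record that $\hat{U}_\omega=\bigotimes_{i=0}^{n-1}\hat{P}(2\pi 2^i/N)$ factorizes into $n$ single-qubit phase gates on distinct qubits, because the phase $e^{2i\pi j/N}$ on $\ket{j}$ splits as $\prod_i e^{2i\pi j_i 2^i/N}$. Hence the controlled operator $C_{q_A}(\hat{U}_\omega)$ is a product of $n$ controlled-phase gates all sharing the single control $q_A$; with only one control they must be applied sequentially, which is exactly what gives depth $\mathcal{O}(n)$ per controlled-$\hat{U}_\omega$ and the overall depth $\mathcal{O}(Mn)$ of Corollary~\ref{cor:GQSP approximate}, where $M=\mathcal{O}(\log(1/\epsilon))$ is the Fourier truncation order.

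The key step is to replace each controlled-$\hat{U}_\omega$ by the following block: apply $\widehat{\text{copy}}$ to spread $q_A$ onto $n-1$ zeroed ancillas, producing $\alpha\ket{0}^{\otimes n}+\beta\ket{1}^{\otimes n}$ on the control register; apply the $n$ controlled-phase gates in parallel, each controlled by a distinct copy and targeting a distinct main-register qubit; then apply $\widehat{\text{copy}}^{-1}$. I would check by a short direct computation that the $\ket{0}^{\otimes n}$ branch receives no phase while the $\ket{1}^{\otimes n}$ branch receives the full $\hat{U}_\omega$, so that after uncopying the ancillas return to $\ket{0}$ and the block realizes exactly $C_{q_A}(\hat{U}_\omega)$. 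By Lemma~\ref{copy lemma} the two copy steps have depth $\lceil\log_2 n\rceil=\mathcal{O}(\log n)$ and use $\mathcal{O}(n)$ CNOTs, while the parallel layer of controlled-phase gates has depth $\mathcal{O}(1)$ and size $\mathcal{O}(n)$; the single-qubit GQSP rotations, acting on $q_A$ alone, cost depth $\mathcal{O}(1)$ each.

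Summing over the $M=\mathcal{O}(\log(1/\epsilon))$ steps would then give depth $\mathcal{O}(M\log n)=\mathcal{O}(\log(1/\epsilon)\log n)$ and size $\mathcal{O}(Mn)=\mathcal{O}(n\log(1/\epsilon))$, using $n-1$ copy ancillas together with the signal qubit, i.e.\ $n$ ancilla qubits, yielding the announced $(f_{\max},n,\epsilon)$-block-encoding. The main obstacle to watch is that the copy register must be uncomputed before every signal-qubit rotation: a rotation acting on $q_A$ alone would destroy the correlation $\alpha\ket{0}^{\otimes n}+\beta\ket{1}^{\otimes n}$ and spoil the next parallel control, so the copy/uncopy cannot be hoisted out of the loop and must be repeated at each step, which is exactly what produces the $\log n$ rather than constant per-step overhead.
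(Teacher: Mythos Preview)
Your proposal is correct and follows exactly the same approach as the paper: factor $\hat{U}_\omega$ into $n$ single-qubit phase gates, then parallelize each controlled-$\hat{U}_\omega$ by copying the signal qubit onto $n-1$ ancillas via Lemma~\ref{copy lemma}, yielding depth $\mathcal{O}(\log n)$ per step and $\mathcal{O}(\log(1/\epsilon)\log n)$ overall. Your write-up is in fact more explicit than the paper's own proof, and your remark that the copy register must be uncomputed before each single-qubit GQSP rotation (so the copy/uncopy cannot be hoisted out of the loop) is a useful observation that the paper leaves implicit.
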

\begin{proof}
The previous corollary \ref{cor:GQSP approximate} enables to implement diagonal operators using a polynomial of  $\hat{U}_{\omega}=\sum_{j=0}^{N-1}e^{2i\pi j/N}\ket{j}\bra{j}$. The operator  $\hat{U}_{\omega}=\sum_{j=0}^{N-1}e^{2i\pi j/N}\ket{j}\bra{j}$ is composed of $n$ phase gates, and the control $\hat{U}_{\omega}$ is made of $n$ controlled phase gates. For each controlled $\hat{U}_{\omega}$, it is possible to parallelize the implementation of the control phase gate by using $n-1$ additional ancilla qubits and the copy unitary presented in Lemma \ref{copy lemma}. 
\end{proof}

\subsection{Sparse diagonal operators}
\label{sec: sparse diagonal operator}

\subsubsection{Sparse diagonal unitaries}\label{sparse du}
The following Lemmas and corollaries summarize the complexity of implementing any $n$-qubit diagonal unitary with a decomposition containing only $s$ sequential operators or $s$ Walsh-Hadamard operators.

\begin{lemma}[Sparse sequential decomposition without ancilla]
\label{lemma : Sequential no ancilla : Sparse}
   Any $n$-qubit diagonal unitary with a $s$-sparse sequential decomposition is exactly implementable with a quantum circuit of depth $\mathcal{O}(ns)$ and size $\mathcal{O}(ns)$ without using ancilla qubits.
\end{lemma}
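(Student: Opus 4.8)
The plan is to reduce the statement to a direct cost accounting over the $s$ sequential operators appearing in the decomposition, reusing the single-operator cost established in the preliminaries. Write $\hat{U}=\prod_{j\in S}\hat{U}_j$ where $S\subseteq\{0,\dots,N-1\}$ with $|S|=s$, and each $\hat{U}_j$ is the sequential operator of Eq.~\eqref{sequential decomposition}, which changes the phase of a single computational-basis eigenvector. The key observation is that each $\hat{U}_j$ is built from at most $2k_j\le 2n$ $\hat{X}$-Pauli gates together with a single $(n-1)$-multi-controlled phase gate, so the whole circuit is just the concatenation of $s$ such blocks.

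First I would bound the cost of one block. Implementing the $(n-1)$-controlled phase gate $\Lambda_{\{0,\dots,n-2\}}(\hat{P}(\theta_j))$ without ancilla qubits can be done with the linear-depth, linear-size decomposition of \cite{Craig}, which gives depth $\mathcal{O}(n)$ and size $\mathcal{O}(n)$. Adding the surrounding $\hat{X}$ layers contributes at most $2n$ further single-qubit gates, and once these are grouped into layers they add only $\mathcal{O}(1)$ to the depth. Hence each $\hat{U}_j$ costs depth $\mathcal{O}(n)$ and size $\mathcal{O}(n)$.

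Then I would simply sum over the $s$ operators, implemented sequentially. Since there are no ancilla qubits available to parallelize the blocks, the depths and sizes add, giving total depth $\sum_{j\in S}\mathcal{O}(n)=\mathcal{O}(ns)$ and total size $\mathcal{O}(ns)$, which is exactly the claimed bound.

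The main (and only minor) obstacle is that, unlike the dense case of Lemma~\ref{Exact_sequential_decomposition_without_ancilla}, a Gray-code ordering cannot in general be used to cancel a constant fraction of the $\hat{X}$-Pauli gates between consecutive blocks, because the $s$ indices in $S$ need not differ pairwise by a single bit. This affects only the constant factor, not the asymptotic scaling: the crude bound of $2n$ $\hat{X}$ gates per block is already absorbed into $\mathcal{O}(ns)$, so no optimization of the ordering is needed to reach the stated complexity.
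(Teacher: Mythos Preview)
Your proof is correct and follows essentially the same approach as the paper: decompose $\hat{U}$ into $s$ sequential operators, bound each by $2n$ $\hat X$-Pauli gates plus one $(n-1)$-controlled phase gate implemented via the linear-depth ancilla-free scheme of \cite{Craig}, and sum to obtain $\mathcal{O}(ns)$ depth and size. Your additional remark about the inapplicability of Gray-code ordering in the sparse case is a nice clarification that the paper does not spell out here.
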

\begin{proof}
    The quantum circuit is composed of $s$ operator $\hat{U}_j$ defined in Eq. \eqref{sequential decomposition}. Each of the $s$ sequential operator contains at worst $2n$ $\hat{X}$-Pauli gates and one $\Lambda_{\{0,...,n-2\}}(\hat{P}(\theta_j))$ which is implemented using the scheme of C.Gidney \cite{Craig}. It is an exact method for multi-controlled gates with size and depth linear in $n$ without using ancilla qubits.
\end{proof}

\begin{lemma}[Approximate sparse sequential decomposition without ancilla]
\label{lemma : Sequential no ancilla : Sparse approxime}
   Any $n$-qubit diagonal unitary with a $s$-sparse sequential decomposition is implementable up to an error $\epsilon>0$ with a quantum circuit of depth $\mathcal{O}(s\log(n)^3\log(s/\epsilon))$ and size $\mathcal{O}(sn\log(n)^4\log(s/\epsilon))$ without using ancilla qubits.
\end{lemma}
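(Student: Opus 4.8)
The plan is to obtain this as a direct corollary of the exact sparse construction (Lemma \ref{lemma : Sequential no ancilla : Sparse}), replacing the exact multi-controlled phase gate of \cite{Craig} by the approximate polylogarithmic-depth construction of Claudon et al.\ \cite{claudon2024polylogarithmic}, and then distributing the global error budget $\epsilon$ evenly across the $s$ sequential operators. First I would write $\hat{U}=\prod_{j}\hat{U}_j$ as a product of the $s$ sequential operators, each of the form given in Eq.\ \eqref{sequential decomposition}: a single $(n-1)$-controlled phase gate $\Lambda_{\{0,\dots,n-2\}}(\hat{P}(\theta_j))$ conjugated by at most $2n$ $\hat{X}$-Pauli gates. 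The $\hat{X}$ layers are implemented exactly (depth $\mathcal{O}(1)$ and size $\mathcal{O}(n)$ per operator, contributing only an $\mathcal{O}(ns)$ additive term that is subsumed by the final size bound), so the sole source of approximation error is the multi-controlled phase gate.

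Next I would fix a per-gate tolerance $\epsilon'=\epsilon/s$ and implement each $\Lambda_{\{0,\dots,n-2\}}(\hat{P}(\theta_j))$ up to spectral-norm error $\epsilon'$ using Proposition~2 of \cite{claudon2024polylogarithmic}, which yields per-gate depth $\mathcal{O}(\log(n)^3\log(1/\epsilon'))=\mathcal{O}(\log(n)^3\log(s/\epsilon))$ and size $\mathcal{O}(n\log(n)^4\log(s/\epsilon))$ without ancilla qubits. Since the conjugating $\hat{X}$ gates are exact and unitary, the resulting approximation $\tilde{\hat{U}}_j$ satisfies $\|\hat{U}_j-\tilde{\hat{U}}_j\|_2=\|\Lambda_j-\tilde{\Lambda}_j\|_2\leq\epsilon'$ because conjugation by a unitary preserves the spectral-norm distance. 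Summing the $s$ per-gate costs gives the claimed total depth $\mathcal{O}(s\log(n)^3\log(s/\epsilon))$ and size $\mathcal{O}(sn\log(n)^4\log(s/\epsilon))$.

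The only step requiring genuine care is the error accumulation, which I would handle with the standard telescoping bound for products of unitaries: expanding $\prod_j\hat{U}_j-\prod_j\tilde{\hat{U}}_j$ into $s$ terms, each of the form (product of exact or approximate factors of norm one)$\cdot(\hat{U}_k-\tilde{\hat{U}}_k)\cdot$(product of norm-one factors), submultiplicativity of the spectral norm gives $\|\prod_j\hat{U}_j-\prod_j\tilde{\hat{U}}_j\|_2\leq\sum_{k}\|\hat{U}_k-\tilde{\hat{U}}_k\|_2\leq s\epsilon'=\epsilon$. This closes the argument. There is no deep obstacle here; the content is entirely in recognizing that the linear-in-$s$ error split costs only an extra additive $\log s$ inside the logarithm, which is precisely why the $\log(s/\epsilon)$ factor (rather than $\log(1/\epsilon)$) appears in both the depth and the size.
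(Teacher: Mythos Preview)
Your proposal is correct and matches the paper's own proof essentially step for step: decompose into $s$ sequential operators, implement each $(n-1)$-controlled phase gate via Proposition~2 of \cite{claudon2024polylogarithmic} with per-gate tolerance $\epsilon'=\epsilon/s$, and sum the per-gate costs. The paper leaves the telescoping error bound implicit, so your explicit treatment of it is a welcome addition but not a different approach.
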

\begin{proof}
    The quantum circuit is composed of $s$ operator $\hat{U}_j$ defined in Eq. \eqref{sequential decomposition}. Each of the $s$ sequential operator contains at worst $2n$ $\hat{X}$-Pauli gates and one $\Lambda_{\{0,...,n-2\}}(\hat{P}(\theta_j))$. Each of the $s$ multi-controlled phase gate is implemented using the approximative method of Claudon et al. (Proposition 2 in \cite{claudon2024polylogarithmic}) with error $\epsilon'=\epsilon/s$,  depth $\mathcal{O}(\log(n)^3\log(s/\epsilon))$ and size $\mathcal{O}(n\log(n)^4\log(s/\epsilon))$.
\end{proof}

\begin{lemma}[Sparse sequential decomposition with one ancilla]
\label{lemma : Sequential one ancilla : Sparse}
   Any $n$-qubit diagonal unitary with a $s$-sparse sequential decomposition is exactly implementable with a quantum circuit of depth $\mathcal{O}(s\log(n)^3)$ and size $\mathcal{O}(sn\log(n)^4)$ using one ancilla qubit.
\end{lemma}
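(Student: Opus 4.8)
The plan is to follow the same template as the preceding exact-decomposition lemmas, now invoking the ancilla-assisted multi-controlled gate of Claudon et al. A diagonal unitary with an $s$-sparse sequential decomposition is, by definition, a product of $s$ operators $\hat{U}_j$ of the form in Eq. \eqref{sequential decomposition}. Each $\hat{U}_j$ consists of at most $2n$ single-qubit $\hat{X}$-Pauli gates surrounding one $(n-1)$-controlled phase gate $\Lambda_{\{0,\dots,n-2\}}(\hat{P}(\theta_j))$. First I would bound the cost of a single $\hat{U}_j$: the $\hat{X}$ gates act on distinct qubits and therefore contribute depth $\mathcal{O}(1)$ and size $\mathcal{O}(n)$, while the multi-controlled phase gate is implemented exactly using Corollary 1 of \cite{claudon2024polylogarithmic}, which realizes any $(n-1)$-controlled operation with one zeroed ancilla qubit at depth $\mathcal{O}(\log(n)^3)$ and size $\mathcal{O}(n\log(n)^4)$.

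Next I would aggregate over the $s$ operators. Since the $\hat{U}_j$ are applied sequentially, the depths and sizes simply add, giving total depth $s\cdot(\mathcal{O}(\log(n)^3)+\mathcal{O}(1))=\mathcal{O}(s\log(n)^3)$ and total size $s\cdot(\mathcal{O}(n\log(n)^4)+\mathcal{O}(n))=\mathcal{O}(sn\log(n)^4)$. The crucial observation is that a single ancilla qubit suffices for the entire circuit rather than one per operator: because the ancilla used by Corollary 1 is \emph{zeroed} — returned to $\ket{0}$ at the end of each multi-controlled gate — the same physical qubit can be reused for every subsequent $\hat{U}_j$.

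The only point requiring genuine care is verifying that this single ancilla can indeed be recycled without accumulating residual entanglement with the main register; this follows immediately from the zeroed-ancilla guarantee of Corollary 1 in \cite{claudon2024polylogarithmic}, so no real obstacle arises and the lemma is essentially a bookkeeping exercise. I would optionally remark, as in Lemma \ref{lemma: Exact sequential decomposition with one ancilla}, that a Gray-code ordering of the $\hat{U}_j$ cancels most of the $\hat{X}$ gates; however, since the $\hat{X}$-count is already dominated by the multi-controlled-gate cost, this refinement does not affect the stated asymptotic scaling.
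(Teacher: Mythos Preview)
Your proposal is correct and follows essentially the same argument as the paper: decompose the unitary into $s$ sequential operators, bound each by at most $2n$ $\hat{X}$-Pauli gates plus one $(n-1)$-controlled phase gate, and implement the latter via Corollary~1 of \cite{claudon2024polylogarithmic} with one ancilla qubit at depth $\mathcal{O}(\log(n)^3)$ and size $\mathcal{O}(n\log(n)^4)$. Your additional remarks on ancilla reuse and the Gray-code ordering are sound but not needed for the stated asymptotics.
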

\begin{proof}
    The quantum circuit is composed of $s$ operator $\hat{U}_j$ defined in Eq. \eqref{sequential decomposition}. Each of the $s$ sequential operator contains at worst $2n$ $\hat{X}$-Pauli gates and one $\Lambda_{\{0,...,n-2\}}(\hat{P}(\theta_j))$. Each of the $s$ multi-controlled phase gate is implemented using the exact method of Claudon et al. (Corollary 1 in \cite{claudon2024polylogarithmic}) with  depth $\mathcal{O}(\log(n)^3)$, size $\mathcal{O}(n\log(n)^4)$ using one ancilla qubit.
\end{proof}

\begin{corollary}[Adjustable-depth sparse sequential decomposition]
\label{cor : Sequential adjustable-depth : Sparse}
 Any $n$-qubit diagonal unitary with a $s$-sparse sequential decomposition is exactly implementable with a quantum circuit of depth $\mathcal{O}(s\log(n)^3/(m/n) + \log(m/n))$ and size $\mathcal{O}(sn\log(n)^4+m)$ using $m\geq n+2$ ancilla qubits.
\end{corollary}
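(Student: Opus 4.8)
The plan is to combine the exact sparse one-ancilla construction of Lemma~\ref{lemma : Sequential one ancilla : Sparse} with the adjustable-depth Theorem~\ref{thm : adjustable-depth with ancilla}, treating the $s$ sequential operators $\hat{U}_0,\dots,\hat{U}_{s-1}$ of the decomposition~\eqref{sequential decomposition} as the primitive building blocks to be parallelized. First I would recall that each $\hat{U}_j$ consists of at most $2n$ $\hat{X}$-Pauli gates surrounding a single $(n-1)$-controlled phase gate, and that by the exact scheme of Claudon et al.~(Corollary 1 in \cite{claudon2024polylogarithmic}) each such block is implementable with depth $d_j=\mathcal{O}(\log(n)^3)$ and size $\mathcal{O}(n\log(n)^4)$ using one ancilla qubit. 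Since all the $d_j$ share the same asymptotic bound, the ordering hypothesis of Theorem~\ref{thm : adjustable-depth with ancilla} is automatically satisfied with $\max_j d_j=\mathcal{O}(\log(n)^3)$.

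Next I would apply Theorem~\ref{thm : adjustable-depth with ancilla} with $p=s$ and $m$ ancilla qubits, so that $m'=\lceil m/n\rceil$ groups of operators are implemented in parallel. The depth bound becomes
\begin{equation}
d\leq \Big\lceil \frac{s}{m'}\Big\rceil \max_j(d_j) + 2\lceil \log_2(m')\rceil = \mathcal{O}\!\left(\frac{s\log(n)^3}{m/n}+\log(m/n)\right),
\end{equation}
using $m'=\Theta(m/n)$ and $s/m'=sn/m$. For the size, the $s$ blocks contribute a total of $\mathcal{O}(sn\log(n)^4)$ gates, while the copy and uncopy unitaries add $2m$ CNOT gates, giving an overall size $\mathcal{O}(sn\log(n)^4+m)$, exactly as claimed.

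The step I expect to require the most care is the ancilla bookkeeping, because the ancilla qubits play two distinct roles here. A number $m$ of them is spent copying the $n$-qubit main register into $m'-1$ additional registers so that $m'$ groups can run simultaneously, but each of these $m'$ parallel branches also needs its own workspace qubit to realize the Claudon et al.\ multi-controlled gate exactly. Within a single branch the blocks are executed sequentially, so one workspace qubit per branch suffices and can be reused across the $\lceil s/m'\rceil$ operators assigned to that branch; the $m'=\mathcal{O}(m/n)$ extra workspace qubits are absorbed into the $\mathcal{O}(m)$ size term and do not affect the asymptotics. This accounting also explains the threshold $m\ge n+2$: one needs at least $n$ ancillas to produce a single copy of the register, hence a parallelization factor of two, together with one workspace qubit for each of the two resulting branches, below which no depth reduction over Lemma~\ref{lemma : Sequential one ancilla : Sparse} is achievable.
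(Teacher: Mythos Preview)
Your proof is correct and follows exactly the same approach as the paper, which simply states that the result is a direct consequence of Lemma~\ref{lemma : Sequential one ancilla : Sparse} and the adjustable-depth Theorem~\ref{thm : adjustable-depth with ancilla}. You actually provide more detail than the paper does, in particular your careful ancilla bookkeeping (distinguishing the copy registers from the per-branch workspace qubit for the Claudon et al.\ construction, and explaining the $m\ge n+2$ threshold) makes explicit points that the paper leaves implicit.
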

\begin{proof}
This corollary is a direct consequence of Lemma \ref{lemma : Sequential one ancilla : Sparse} and the adjustable-depth Theorem \ref{thm : adjustable-depth with ancilla}.
\end{proof}

\begin{corollary}[Fully parallelized sparse sequential decomposition ]
\label{cor : Sequential fully parallelized : Sparse}
 Any $n$-qubit diagonal unitary with a $s$-sparse sequential decomposition is exactly implementable with a quantum circuit of depth $\mathcal{O}(\log(n)^3+\log(s))$ and size $\mathcal{O}(sn\log(n)^4)$ using $m=\mathcal{O}(ns)$ ancilla qubits.
\end{corollary}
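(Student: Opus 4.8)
The plan is to combine the single-ancilla sparse sequential construction of Lemma \ref{lemma : Sequential one ancilla : Sparse} with the full parallelization Theorem \ref{thm : full parallelization}. Writing the diagonal unitary in its $s$-sparse sequential form $\hat{U}=\prod_{j=0}^{s-1}\hat{U}_j$, each $\hat{U}_j$ is a single sequential operator of the form \eqref{sequential decomposition}, i.e. a $(n-1)$-controlled phase gate dressed by at most $2n$ $\hat{X}$-Pauli gates. Implementing each such $\hat{U}_j$ with the exact polylogarithmic-depth multi-controlled scheme of Claudon et al. (Corollary 1 in \cite{claudon2024polylogarithmic}) gives per-operator depth $d_j=\mathcal{O}(\log(n)^3)$ and size $s_j=\mathcal{O}(n\log(n)^4)$, at the cost of one ancilla qubit per operator.

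Next I would apply Theorem \ref{thm : full parallelization} with $p=s$. The theorem prepares $s-1$ copies of (the relevant qubits of) the main register with a $\widehat{\text{copy}}$ unitary, so that the $s$ operators $\hat{U}_j$ can be applied simultaneously on disjoint registers, and then uncomputes the copies. Since each $\hat{U}_j$ acts non-trivially on $k_j\le n$ qubits, the number of copy ancillas is $k=\sum_{j=1}^{s-1}k_j=\mathcal{O}(ns)$; adding the $s$ ancillas needed for the multi-controlled gates keeps the total width at $m=\mathcal{O}(ns)$. The theorem then yields a depth bounded by $\max_j(d_j)+2\lceil\log_2 s\rceil=\mathcal{O}(\log(n)^3+\log s)$ and a size $\sum_{j=0}^{s-1}s_j+2k=\mathcal{O}(sn\log(n)^4)+\mathcal{O}(ns)=\mathcal{O}(sn\log(n)^4)$, which are exactly the claimed scalings.

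The only real bookkeeping obstacle is keeping the two distinct sources of ancilla qubits straight: the $\mathcal{O}(ns)$ qubits used by $\widehat{\text{copy}}$ to duplicate the main register, and the one extra qubit per parallel branch required so that Claudon's scheme can realize each multi-controlled phase in polylogarithmic depth. One must check that both contributions stay within $\mathcal{O}(ns)$ and that the $2\lceil\log_2 s\rceil$ depth of the copy/uncopy steps — rather than the per-operator cost — is what supplies the additive $\log s$ term. A minor point to verify is that $\widehat{\text{copy}}$ need only duplicate the support of each $\hat{U}_j$, so that the copy depth stays $\mathcal{O}(\log s)$ and does not reintroduce a dependence on $n$.
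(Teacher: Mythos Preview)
Your proposal is correct and follows exactly the paper's approach: the paper's proof is the one-line statement that the result is a direct consequence of Lemma~\ref{lemma : Sequential one ancilla : Sparse} and the full parallelization Theorem~\ref{thm : full parallelization}. Your write-up simply unpacks this combination in more detail, including the ancilla bookkeeping, and arrives at the same depth, size, and width bounds.
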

\begin{proof}
This corollary is a direct consequence of lemma \ref{lemma : Sequential one ancilla : Sparse} and the fully parallelized Theorem \ref{thm : full parallelization}.
\end{proof}

\begin{lemma}[Sparse Walsh-Hadamard decomposition without ancilla]
\label{lemma : walsh no ancilla : Sparse}
Any $n$-qubit diagonal unitary with a $s$-sparse Walsh-Hadamard decomposition is exactly implementable with a quantum circuit of depth $\mathcal{O}(sk)$ and size $\mathcal{O}(sk)$ without using ancilla qubits, where $k$ is the maximum number of $1$ in the binary decomposition of the indexes $j$ of the Walsh-Hadamard operators.
\end{lemma}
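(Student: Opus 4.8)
The plan is to build the circuit by implementing the $s$ Walsh operators of the decomposition one after another, and to bound the cost of each one using the explicit single-operator circuit of Eq.~\eqref{eq: exp of walsh operator}. First I would write the diagonal unitary in its sparse Walsh-Hadamard form $\hat{U}=\prod_{j\in S}\hat{W}_j$, where $S\subseteq\{0,\dots,N-1\}$ is the set of retained orders with $|S|=s$ and $\hat{W}_j=e^{ia_j\hat{w}_j}$ as in Eq.~\eqref{walsh decomposition}. The crucial observation, already contained in Eq.~\eqref{eq: exp of walsh operator}, is that the cost of a single $\hat{W}_j$ is governed not by $n$ but by $k_j$, the number of $1$'s in the binary expansion of $j$: it uses exactly one $\hat{R}_Z$ gate and $2k_j$ CNOT gates.

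Next I would bound each factor uniformly. Since $k_j\le k$ for every $j\in S$ by the definition of $k$, each $\hat{W}_j$ is implementable with at most $2k+1$ primitive gates, giving size $\mathcal{O}(k)$ per operator. For the depth, I would note that the $k_j$ CNOT gates of the forward ``stair'' in Eq.~\eqref{eq: exp of walsh operator} all share the common target qubit $p-1$, so they cannot be parallelized and must be executed sequentially; the same holds for the uncomputation stair. Hence a single $\hat{W}_j$ has depth $2k_j+1=\mathcal{O}(k)$.

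Finally I would sum over the $s$ operators. Implementing the factors sequentially (no ancilla qubits are available, and in the worst case the supports overlap so no parallelization is possible) yields total size $\sum_{j\in S}(2k_j+1)\le s(2k+1)=\mathcal{O}(sk)$ and total depth $\sum_{j\in S}(2k_j+1)\le s(2k+1)=\mathcal{O}(sk)$, as claimed.

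As for the main obstacle, there is essentially no hard step here: the result follows directly from the single-operator circuit of Eq.~\eqref{eq: exp of walsh operator}. The only point requiring mild care is the depth accounting --- one must observe that the CNOTs inside a given $\hat{W}_j$ target a single qubit and therefore serialize, so the per-operator depth is $\mathcal{O}(k)$ rather than $\mathcal{O}(1)$. Note also that, as remarked after Eq.~\eqref{eq: exp of walsh operator}, a Gray-code ordering need not help in the sparse regime, so one simply concatenates the $s$ blocks without attempting CNOT cancellations between consecutive operators.
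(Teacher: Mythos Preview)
Your proof is correct and follows essentially the same argument as the paper: decompose $\hat{U}$ as $\prod_{j\in S}\hat{W}_j$ with $|S|=s$, invoke the single-operator circuit of Eq.~\eqref{eq: exp of walsh operator} to get $2k_j$ CNOTs and one $\hat{R}_Z$ per factor, bound $k_j\le k$, and sum over the $s$ factors. Your explicit justification of the per-operator depth (the CNOTs share a common target and hence serialize) and your remark on Gray codes are nice additions, but the core reasoning is identical to the paper's.
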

\begin{proof}
  The quantum circuit is composed of $s$ operator $\hat{W}_j$ defined in Eq. \eqref{eq: exp of walsh operator} for $j\in S \subseteq \{0,1,...,2^n-1\}$, with $\|S\|=s$. Each Walsh-Hadamard operator $\hat{W}_j$ for $j=\sum_{i=0}^{n-1}j_i 2^i \in S$ contains exactly $2k_j$ controlled-NOT gates and one $\hat{R}_Z$ gate where $k_j$ is the number of $1$ in the binary decomposition of $j$, a.k.a. $k_j=\sum_{i=0}^{n-1}j_i$. By defining $k=\max_{j\in S} k_j$, the size of the quantum circuit is bounded by $\mathcal{O}(sk)$ and the depth is also bounded by $\mathcal{O}(sk)$ where $k\leq n$ is independant of $n$.
\end{proof}

\begin{corollary}[Adjustable-depth sparse Walsh-Hadamard decomposition]
\label{cor : walsh adjustable-depth : Sparse}
Any $n$-qubit diagonal unitary with a $s$-sparse Walsh-Hadamard decomposition is exactly implementable with a quantum circuit of depth $\mathcal{O}(sk/(m/k)+\log(m/k))$ and size $\mathcal{O}(sk+m)$ without using $m\geq k$ ancilla qubits, where $k$ is the maximum number of $1$ in the binary decomposition of the indexes $j$ of the Walsh-Hadamard operators.
\end{corollary}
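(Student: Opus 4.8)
The plan is to obtain the stated bounds as a direct combination of Lemma \ref{lemma : walsh no ancilla : Sparse}, which supplies the cost of each individual Walsh operator, with the adjustable-depth Theorem \ref{thm : adjustable-depth with ancilla}, the only difference being that the role played there by the full register size $n$ is now taken over by the maximal support size $k$. First I would record the per-operator data extracted from the proof of Lemma \ref{lemma : walsh no ancilla : Sparse}: the diagonal unitary is a product of $s$ commuting Walsh operators $\hat{W}_j$, $j\in S$, and each $\hat{W}_j$ acts nontrivially only on the $k_j\le k$ qubits indexed by the ones in the binary expansion of $j$, with size $s_j=\mathcal{O}(k_j)=\mathcal{O}(k)$ and depth $d_j=\mathcal{O}(k_j)=\mathcal{O}(k)$. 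In particular the support of every $\hat{W}_j$ has size at most $k$, so copying one such support costs at most $k$ ancilla qubits and depth $\lceil\log_2(k)\rceil$ through the copy unitary of Lemma \ref{copy lemma}.

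Next I would rerun the parallelization argument of Theorem \ref{thm : adjustable-depth with ancilla} verbatim, but copying only the supports of the $\hat{W}_j$ rather than the entire $n$-qubit register. With $m\ge k$ ancilla qubits one prepares $m'=\lceil m/k\rceil$ parallel copies (each of at most $k$ qubits) in depth $\lceil\log_2(m')\rceil$, distributes the $s$ operators into $m'$ groups of at most $\lceil s/m'\rceil$ of them, runs the groups in parallel, and finally undoes the copies in depth $\lceil\log_2(m')\rceil$. Since $m'=\lceil m/k\rceil$ and $\max_j d_j=\mathcal{O}(k)$, the depth is bounded by $\lceil s/m'\rceil\max_j(d_j)+2\lceil\log_2(m')\rceil=\mathcal{O}\!\big(sk/(m/k)+\log(m/k)\big)$, which recovers exactly the two terms $sk/(m/k)=sk^2/m$ and $\log(m/k)$. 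The size is the total size $\sum_j s_j=\mathcal{O}(sk)$ of the Walsh operators augmented by the $2m$ controlled-NOT gates of the copy and uncopy steps, giving $\mathcal{O}(sk+m)$.

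The one step that needs care — and the only place where the argument departs from a literal citation of Theorem \ref{thm : adjustable-depth with ancilla} — is the claim that each parallel lane can be realized with only $\mathcal{O}(k)$ replicated qubits. In the dense theorem each lane holds a copy of the whole register, so operators of arbitrary support can be run sequentially inside a lane; here I want lanes of width $\mathcal{O}(k)$, which forces the grouping to respect the supports of the $\hat{W}_j$, in the sense that operators assigned to a common lane must be executable within a common window of $\mathcal{O}(k)$ copied qubits. I expect this support bookkeeping to be the main obstacle: one must either copy, per lane, the union of the supports of the operators it carries and control its size, or, alternatively, process the operators in rounds with a fresh copy step per round and verify that the $\log(m/k)$ contribution stays additive rather than multiplicative. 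Because we claim only the unoptimized size $\mathcal{O}(sk)$, and because $k\le n$ is treated as independent of $n$, this reorganization does not affect the stated asymptotics, and the bound follows once the lane-width accounting is fixed.
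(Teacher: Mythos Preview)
Your approach is essentially the same as the paper's: the paper's proof is the single sentence ``This corollary is a direct consequence of Lemma~\ref{lemma : walsh no ancilla : Sparse} and the adjustable-depth Theorem~\ref{thm : adjustable-depth with ancilla}.'' You invoke the same two ingredients but spell out the details, correctly observing that Theorem~\ref{thm : adjustable-depth with ancilla} as stated gives $m'=\lceil m/n\rceil$ and that passing to $m'=\lceil m/k\rceil$ requires re-running its proof with the maximal support size $k$ in place of the full register width $n$.

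Your flagging of the lane-width bookkeeping as the only non-mechanical step is accurate and is a point the paper's one-line proof does not address either: since distinct $\hat W_j$ have distinct supports of size $\le k$, a fixed ancilla lane of width $k$ cannot in general host an arbitrary subset of them, so neither the ``one copy, many groups'' nor the ``fresh copy per round'' variant obviously yields the additive $\log(m/k)$ term without further argument. Your proposal is therefore at least as complete as the paper's own proof and is more explicit about where the remaining work lies.
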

\begin{proof}
This corollary is a direct consequence of lemma \ref{lemma : walsh no ancilla : Sparse} and the adjustable-depth Theorem \ref{thm : adjustable-depth with ancilla}.
\end{proof}

\begin{corollary}[Fully-parallelized sparse Walsh-Hadamard decomposition]
\label{cor : walsh fully parallelized : Sparse}
Any $n$-qubit diagonal unitary with a $s$-sparse Walsh-Hadamard decomposition is exactly implementable with a quantum circuit of depth $\mathcal{O}(k+\log(s))$ and size $\mathcal{O}(sk)$ without using $m=\mathcal{O}(sk)$ ancilla qubits, where $k$ is the maximum number of $1$ in the binary decomposition of the indexes $j$ of the Walsh-Hadamard operators.
\end{corollary}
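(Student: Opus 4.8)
The plan is to obtain this corollary as an immediate consequence of the per-block cost established in Lemma \ref{lemma : walsh no ancilla : Sparse} together with the full parallelization Theorem \ref{thm : full parallelization}, in exactly the same way the adjustable-depth variant was obtained in Corollary \ref{cor : walsh adjustable-depth : Sparse}. First I would write the target diagonal unitary in its sparse form $\hat{U} = \prod_{j \in S} \hat{W}_j$ with $|S| = s$, so that there are $p = s$ commuting factors to parallelize, and recall that the $\hat{W}_j$ commute, which is what licenses any reordering and grouping.

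Next I would record, from Lemma \ref{lemma : walsh no ancilla : Sparse} and Eq. \eqref{eq: exp of walsh operator}, the individual resource cost of each building block: the operator $\hat{W}_j$ acts non-trivially only on the $k_j$ qubits whose index bit is set, and it is implemented with one $\hat{R}_Z$ gate and $2k_j$ CNOT gates, so its own depth and size are both $d_j = s_j = \mathcal{O}(k_j) \le \mathcal{O}(k)$, where $k = \max_{j \in S} k_j$. Summing over the $s$ factors gives a total unparallelized size $\sum_{j} s_j = \mathcal{O}(sk)$.

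I would then invoke Theorem \ref{thm : full parallelization} with these blocks. The theorem copies, for each $j \ge 1$, only the support of $\hat{W}_j$ into a fresh register, so the number of ancilla qubits is $\sum_{j} k_j = \mathcal{O}(sk)$, matching the claimed width $m = \mathcal{O}(sk)$. The parallelized depth is bounded by $\max_j(d_j) + 2\lceil \log_2(s)\rceil = \mathcal{O}(k) + \mathcal{O}(\log s) = \mathcal{O}(k + \log s)$, and the size is the unparallelized size plus the $2m$ CNOTs of the two copy layers, i.e. $\mathcal{O}(sk) + \mathcal{O}(sk) = \mathcal{O}(sk)$. These three bounds are precisely the claimed depth, width, and size, so the argument closes.

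The only real point of care — rather than a genuine obstacle — is the clash of notation: the symbol $k$ in Theorem \ref{thm : full parallelization} denotes the total ancilla count $\sum_j k_j$, whereas here $k$ is the maximum Hamming weight, so I would rename the former (say $K = \sum_j k_j$) when applying the theorem. One should also note explicitly that the copy step duplicates only the $k_j$ support qubits of each block and not all $n$ qubits, since otherwise the width would inflate to $\mathcal{O}(sn)$; and that the $\log s$ term in the depth arises because a single qubit may lie in the support of up to $s$ distinct Walsh operators and must therefore be copied up to $s$ times, which the tree-copy of Lemma \ref{copy lemma} performs in depth $\mathcal{O}(\log s)$.
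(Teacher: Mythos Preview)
Your proposal is correct and follows exactly the paper's approach: the paper's proof is the single sentence ``This corollary is a direct consequence of Lemma \ref{lemma : walsh no ancilla : Sparse} and the fully parallelized Theorem \ref{thm : full parallelization},'' and you have simply unpacked that derivation in detail, including the notational care about the two meanings of $k$ and the observation that only the $k_j$ support qubits need to be copied.
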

\begin{proof}
This corollary is a direct consequence of lemma \ref{lemma : walsh no ancilla : Sparse} and the fully parallelized Theorem \ref{thm : full parallelization}.
\end{proof}

\subsubsection{Sparse non-unitary diagonal operators}\label{sparse nu diag} 
The following corollaries summarize the complexity of implementing any $n$-qubit non-unitary diagonal operators $\hat{D}$ with an associated $s$-sparse operator $e^{\pm i\arcsin(\hat{D}/(\alpha d_{\max}))}$, with $\alpha > 1$.

\begin{corollary}[Block-encoding of sparse sequential decomposition with one ancilla]
\label{lemma : Sequential block-encoding no ancilla : Sparse}
   For any $\alpha\geq 1$, any $n$-qubit non-unitary diagonal operator with a $s$-sparse sequential decomposition can be $(\alpha d_{\max},1,0)$-block-encoded with a quantum circuit of depth $\mathcal{O}(ns)$ and size $\mathcal{O}(ns)$ using one ancilla qubit.
\end{corollary}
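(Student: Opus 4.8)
The plan is to derive this block-encoding directly from the exact sparse implementation of the associated diagonal \emph{unitary} (Lemma~\ref{lemma : Sequential no ancilla : Sparse}) together with the block-encoding construction (Theorem~\ref{thm: block-encoding}), taking the non-parallelized regime so that only the block-encoding ancilla $q_A$ is used. The two facts to verify are that the sparsity carries over from $\hat{D}$ to its associated unitary, and that adding the control on $q_A$ does not change the per-operator cost.

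First I would note that the unitary $\hat{U}=e^{i\arcsin(\hat{D}/(\alpha d_{\max}))}$ inherits the $s$-sparsity of $\hat{D}$: every eigenvalue $d_x=0$ of $\hat{D}$ is sent to $\arcsin(0)=0$ and hence to the eigenvalue $e^{i0}=1$ of $\hat{U}$, so $\hat{U}$ has at most $s$ eigenvalues different from unity. Its sequential decomposition $\hat{U}=\prod_j\hat{U}_j$, with $\hat{U}_j$ as in Eq.~\eqref{sequential decomposition}, therefore contains only $s$ non-trivial factors. Next I would invoke Theorem~\ref{thm: block-encoding} with this decomposition and no parallelization (the circuit of Fig.~\ref{quantum circuit scheme for non-unitary diagonal}), where the two controlled diagonal unitaries $e^{\pm i\arcsin(\hat{D}/(\alpha d_{\max}))}$ are governed by the single ancilla $q_A$; this yields an exact $(\alpha d_{\max},1,0)$-block-encoding, the error vanishing because the sequential decomposition of $\hat{U}$ is exact.

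It then remains to count gates. Using $C(\hat{V}^{-1}\hat{U}\hat{V})=\hat{V}^{-1}C(\hat{U})\hat{V}$ with $\hat{V}$ the $\hat{X}$-layer of Eq.~\eqref{sequential decomposition}, controlling a sequential operator by $q_A$ merely promotes its $(n-1)$-controlled phase gate to an $n$-controlled phase gate while leaving the $2k_j\le 2n$ uncontrolled $\hat{X}$ gates untouched. Each $n$-controlled phase gate is implementable with depth and size $\mathcal{O}(n)$ without ancilla by the scheme of \cite{Craig}, exactly as in the proof of Lemma~\ref{lemma : Sequential no ancilla : Sparse}. Summing over the $2s$ controlled factors (the $s$ factors of each of $e^{\pm i\arcsin(\hat{D}/(\alpha d_{\max}))}$) gives depth $\mathcal{O}(ns)$ and size $\mathcal{O}(ns)$. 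The only real subtlety, and hence the main obstacle, is purely a matter of bookkeeping: confirming that the extra control raises the control count by exactly one, so that the $\mathcal{O}(n)$ per-operator cost is preserved, and that no copy-register overhead arises when the whole product is controlled by a single qubit in the non-parallelized circuit; both points are immediate from the cited results.
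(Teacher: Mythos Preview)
Your proof is correct and follows essentially the same approach as the paper: invoke Lemma~\ref{lemma : Sequential no ancilla : Sparse} together with the block-encoding Theorem~\ref{thm: block-encoding}, after observing that the $s$-sparsity of $\hat{D}$ transfers to $\hat{U}=e^{\pm i\arcsin(\hat{D}/(\alpha d_{\max}))}$. Your write-up is in fact more explicit than the paper's, which states the result as a direct consequence of those two ingredients plus the sparsity remark; the additional bookkeeping you provide about the control promoting the $(n-1)$-controlled phase to an $n$-controlled phase is exactly the argument the paper gives earlier in Section~\ref{sec:non_unitary}.
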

\begin{proof}
This corollary is a direct consequence of lemma \ref{lemma : Sequential no ancilla : Sparse} and the block-encoding Theorem \ref{thm: block-encoding}. Remark that if $\hat{D}$ has $s$ non-zero real eigenvalues, then $\hat{U}^=e^{\pm i\arcsin(\hat{D}/(\alpha d_{\max}))}$ has at most $s$ eigenvalues different than $1$.
\end{proof}

\begin{corollary}[Block-encoding of approximate sparse sequential decomposition with one ancilla]
\label{lemma : block-encoding Sequential no ancilla : Sparse approxime}
For any $\alpha\geq 1$, any $n$-qubit non-unitary diagonal operator with a $s$-sparse sequential decomposition can be $(\alpha d_{\max},1,\epsilon)$-block-encoded with a quantum circuit of depth $\mathcal{O}(s\log(n)^3\log(s/\epsilon))$, size $\mathcal{O}(sn\log(n)^4\log(s/\epsilon))$ using one ancilla qubit.
\end{corollary}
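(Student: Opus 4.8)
The plan is to obtain this block-encoding as a direct combination of the approximate sparse sequential implementation of Lemma~\ref{lemma : Sequential no ancilla : Sparse approxime} with the block-encoding construction of Theorem~\ref{thm: block-encoding}. First I would invoke Theorem~\ref{thm: block-encoding}, which reduces the task of $\epsilon$-block-encoding $\hat{D}$ to that of producing an $\epsilon$-approximation in spectral norm of the diagonal unitary $\hat{U}=e^{i\arcsin(\hat{D}/(\alpha d_{\max}))}$. The structural fact that makes the sparse bound survive this reduction is that sparsity is preserved under the map $\hat{D}\mapsto\hat{U}$: because $\arcsin(0)=0$ and $e^{i0}=1$, every vanishing eigenvalue of $\hat{D}$ is sent to a unit eigenvalue of $\hat{U}$, so $\hat{U}$ has at most $s$ eigenvalues different from unity and hence inherits an $s$-sparse sequential decomposition $\hat{U}=\prod_{j}\hat{U}_j$.

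Next I would apply Lemma~\ref{lemma : Sequential no ancilla : Sparse approxime} to this decomposition, distributing the total error budget as $\epsilon/s$ across the $s$ multi-controlled phase gates so that their accumulated spectral-norm error is at most $\epsilon$. Each gate is realized by the approximate polylogarithmic-depth circuit of Claudon et al.\ (Proposition~2 in \cite{claudon2024polylogarithmic}) with depth $\mathcal{O}(\log(n)^3\log(s/\epsilon))$ and size $\mathcal{O}(n\log(n)^4\log(s/\epsilon))$; summing over the $s$ operators yields depth $\mathcal{O}(s\log(n)^3\log(s/\epsilon))$ and size $\mathcal{O}(sn\log(n)^4\log(s/\epsilon))$ with no ancilla qubits. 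Feeding this $\tilde{U}=\prod_{j}\hat{U}_j$ back into Theorem~\ref{thm: block-encoding} then furnishes an $(\alpha d_{\max},1,\epsilon)$-block-encoding, the single extra ancilla qubit being the block-encoding qubit $\ket{q_A}$ of Fig.~\ref{Adjustable depth framework for non unitary diagonal operator}.

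The one point that needs care — and the main obstacle to a naive argument — is that the block-encoding circuit uses the \emph{controlled} operators $C_{q_A}(\hat{U}_j)$ and $C_{q_A}(\hat{U}_j^{\dagger})$, not the bare $\hat{U}_j$. I would resolve this exactly as in Section~\ref{sec:non_unitary}: using the conjugation identity $C(\hat{V}^{-1}\hat{U}\hat{V})=\hat{V}^{-1}C(\hat{U})\hat{V}$, controlling a sequential operator merely promotes its $(n-1)$-controlled phase gate to an $n$-controlled phase gate, with the surrounding $\hat{X}$-layers left untouched. Since the Claudon et al.\ construction scales only polylogarithmically in the number of controls, this promotion from $n-1$ to $n$ controls leaves every depth and size estimate asymptotically unchanged, and the single copy of $q_A$ supplies the additional control without requiring any further ancilla. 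This keeps the ancilla count at one and confirms that the cost of the block-encoding matches the cost of implementing $\hat{U}$ to the same accuracy, giving the claimed bounds.
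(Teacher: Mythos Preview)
Your proposal is correct and follows exactly the paper's approach, which simply cites Lemma~\ref{lemma : Sequential no ancilla : Sparse approxime} together with the block-encoding Theorem~\ref{thm: block-encoding}. You have merely unpacked the two ingredients the paper leaves implicit---that sparsity of $\hat{D}$ transfers to $\hat{U}=e^{i\arcsin(\hat{D}/(\alpha d_{\max}))}$ (as the paper spells out in the proof of the preceding corollary) and that controlling each sequential operator only promotes $(n-1)$-controlled to $n$-controlled phase gates without affecting the asymptotics (as noted in Section~\ref{sec:non_unitary}).
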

\begin{proof}
This corollary is a direct consequence of lemma \ref{lemma : Sequential no ancilla : Sparse approxime} and the block-encoding Theorem \ref{thm: block-encoding}.
\end{proof}

\begin{corollary}[Block-encoding of sparse sequential decomposition with two ancilla qubits]
\label{lemma : block-encoding Sequential one ancilla : Sparse}
For any $\alpha\geq 1$, any $n$-qubit non-unitary diagonal operator with a $s$-sparse sequential decomposition can be $(\alpha d_{\max},2,0)$-block-encoded with a quantum circuit of depth $\mathcal{O}(s\log(n)^3)$, size $\mathcal{O}(sn\log^4(n))$ using two ancilla qubit.
\end{corollary}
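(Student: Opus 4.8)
The plan is to combine the exact one-ancilla sparse sequential implementation of Lemma \ref{lemma : Sequential one ancilla : Sparse} with the block-encoding Theorem \ref{thm: block-encoding}, the only non-routine point being the bookkeeping of the ancilla budget once the sequential operators are controlled.

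First, I would recall that Theorem \ref{thm: block-encoding} realizes the block-encoding of $\hat{D}$ by controlling, on the block-encoding ancilla $q_A$, the $s$ sequential operators (and their adjoints) appearing in the decomposition of $\hat{U}=e^{i\arcsin(\hat{D}/(\alpha d_{\max}))}$. Since each sequential operator has the conjugated form $\hat{U}_j=(\hat{X}_0^{j_0}\otimes\cdots\otimes\hat{X}_{n-1}^{j_{n-1}})\,\Lambda_{\{0,\dots,n-2\}}(\hat{P}(\theta_j))\,(\hat{X}_0^{j_0}\otimes\cdots\otimes\hat{X}_{n-1}^{j_{n-1}})$, the identity $C(\hat{V}^{-1}\hat{U}\hat{V})=\hat{V}^{-1}C(\hat{U})\hat{V}$, already invoked for Theorem \ref{thm: block-encoding}, shows that adding $q_A$ as a control leaves the $\hat{X}$-conjugation untouched and merely promotes each $(n-1)$-controlled phase gate into an $n$-controlled phase gate.

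Second, I would implement each of these $n$-controlled phase gates exactly via Corollary 1 of \cite{claudon2024polylogarithmic}, which has depth $\mathcal{O}(\log(n)^3)$, size $\mathcal{O}(n\log(n)^4)$, and requires a single (zeroed or borrowed) ancilla qubit regardless of the number of controls. Because the $s$ multi-controlled gates are applied sequentially, this same ancilla is freed and reused for every one of them. The total ancilla count is therefore exactly two: the block-encoding ancilla $q_A$ and the single ancilla consumed by the Claudon et al. scheme, precisely mirroring the unitary count in Corollary \ref{lemma: Exact block-encoding sequential decomposition with one ancilla}.

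Finally, I would sum the per-operator costs over the $\mathcal{O}(s)$ controlled sequential operators, obtaining depth $\mathcal{O}(s\log(n)^3)$ and size $\mathcal{O}(sn\log^4(n))$; the $\hat{X}$-conjugations contribute only $\mathcal{O}(sn)$ gates and do not change the scalings. As the decomposition of $\hat{U}$ is exact, Theorem \ref{thm: block-encoding} delivers an $(\alpha d_{\max},2,0)$-block-encoding. The main point requiring care is verifying that the promotion from $(n-1)$- to $n$-control does not force a second Claudon-scheme ancilla, which is immediate since Corollary 1 of \cite{claudon2024polylogarithmic} implements an arbitrary $m$-controlled single-qubit gate with one ancilla, uniformly in $m$.
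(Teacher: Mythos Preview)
Your proof is correct and follows exactly the same approach as the paper, which simply cites Lemma \ref{lemma : Sequential one ancilla : Sparse} together with Theorem \ref{thm: block-encoding}. Your additional bookkeeping---the promotion of the $(n-1)$-controlled phase gate to an $n$-controlled one via $C(\hat{V}^{-1}\hat{U}\hat{V})=\hat{V}^{-1}C(\hat{U})\hat{V}$ and the reuse of a single Claudon ancilla across all $s$ operators---is precisely the mechanism the paper spells out in the analogous dense-case Corollary \ref{lemma: Exact block-encoding sequential decomposition with one ancilla}, so nothing is missing or different.
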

\begin{proof}
This corollary is a direct consequence of lemma \ref{lemma : Sequential one ancilla : Sparse} and the block-encoding Theorem \ref{thm: block-encoding}.
\end{proof}

\begin{corollary}[Adjustable-depth sparse sequential decomposition ]
\label{cor : block-encoding Sequential adjustable-depth : Sparse}
For any $\alpha\geq 1$, any $n$-qubit non-unitary diagonal operator with a $s$-sparse sequential decomposition can be $(\alpha d_{\max},m,0)$-block-encoded with a quantum circuit of depth $\mathcal{O}(s\log(n)^3/(m/n) + \log(m/n))$, size $\mathcal{O}(sn\log^4(n)+m)$ using $m$ ancilla qubits with $m=\Omega(n)$ and $m=\mathcal{O}(ns)$.

\end{corollary}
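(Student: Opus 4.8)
The plan is to obtain this statement as the block-encoded counterpart of the unitary result Corollary~\ref{cor : Sequential adjustable-depth : Sparse}, exactly as the other block-encoding corollaries in this subsection are derived. Concretely, I would invoke Corollary~\ref{cor : Sequential adjustable-depth : Sparse} for the diagonal unitary $\hat{U}=e^{i\arcsin(\hat{D}/(\alpha d_{\max}))}$ and then feed its sequential decomposition into the block-encoding Theorem~\ref{thm: block-encoding}. The key bridge, already established in the discussion following Theorem~\ref{thm: block-encoding}, is that block-encoding $\hat{D}$ costs asymptotically the same as implementing $\hat{U}$ and $\hat{U}^\dagger$ under the control of the ancilla $\ket{q_A}$.

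First I would verify the transfer of sparsity. Writing $\hat{D}=\sum_{x} d_x\ket{x}\bra{x}$ with only $s$ nonzero eigenvalues, the operator $\hat{U}=e^{i\arcsin(\hat{D}/(\alpha d_{\max}))}$ satisfies $\arcsin(0)=0$ on every index with $d_x=0$, hence at most $s$ eigenvalues differ from unity. Therefore $\hat{U}$ admits an \emph{exact} $s$-sparse sequential decomposition $\hat{U}=\prod_{j=0}^{s-1}\hat{U}_j$, so Corollary~\ref{cor : Sequential adjustable-depth : Sparse} applies verbatim and produces, with $m$ ancilla qubits ($m=\Omega(n)$, $m=\mathcal{O}(ns)$), a circuit of depth $\mathcal{O}(s\log(n)^3/(m/n)+\log(m/n))$ and size $\mathcal{O}(sn\log(n)^4+m)$ for $\hat{U}$. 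Because the decomposition is exact, Theorem~\ref{thm: block-encoding} yields an error $\epsilon=0$, matching the claimed $(\alpha d_{\max},m,0)$ block-encoding.

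Next I would account for the control overhead introduced by Theorem~\ref{thm: block-encoding}. Each sequential $\hat{U}_j$ is, up to $\hat{X}$ conjugations, an $(n-1)$-controlled phase gate; controlling it by $\ket{q_A}$ merely upgrades it to an $n$-controlled phase gate, which carries the same polylogarithmic depth $\mathcal{O}(\log(n)^3)$ and size $\mathcal{O}(n\log(n)^4)$ via Claudon et al.~\cite{claudon2024polylogarithmic}, so the per-operator scalings are unchanged. The additional copy register for $\ket{q_A}$ holds at most $p\le s$ copies and is, as noted after Theorem~\ref{thm: block-encoding}, cheaper than copying the main register; its $\mathcal{O}(\log)$ depth and $\mathcal{O}(s)$ size are subsumed into the existing bounds. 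I would then confirm that all ancillae, namely the parallelization copies of the main register together with the copies of $\ket{q_A}$, fit within the stated budget $m=\mathcal{O}(ns)$.

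The step I expect to demand the most care is the bookkeeping of the ancilla count and the parallelization groupings: one must check that inserting the controls and the $\ket{q_A}$-copy register neither pushes the qubit total past $\mathcal{O}(ns)$ nor disturbs the $(m/n)$ grouping factor that drives the depth reduction. Everything else is a direct substitution into the two cited results.
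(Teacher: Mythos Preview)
Your proposal is correct and follows exactly the paper's approach: the paper's proof is the one-liner ``direct consequence of Corollary~\ref{cor : Sequential adjustable-depth : Sparse} and the block-encoding Theorem~\ref{thm: block-encoding}'', and you have simply unpacked the implicit steps (sparsity transfer via $\arcsin(0)=0$, control overhead on the multi-controlled phase gates, and ancilla bookkeeping) that the paper leaves to the surrounding discussion.
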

\begin{proof}
This corollary is a direct consequence of corollary \ref{cor : Sequential adjustable-depth : Sparse} and the block-encoding Theorem \ref{thm: block-encoding}.
\end{proof}

\begin{corollary}[Fully parallelized sparse sequential decomposition]
\label{cor : block-encoding Sequential fully parallelized : Sparse}
For any $\alpha\geq 1$, any $n$-qubit non-unitary diagonal operator with a $s$-sparse sequential decomposition can be $(\alpha d_{\max},m,0)$-block-encoded with a quantum circuit of depth $\mathcal{O}(\log(n)^3+ \log(s))$, size $\mathcal{O}(sn\log^4(n))$ using $m=\mathcal{O}(ns)$ ancilla qubits.
\end{corollary}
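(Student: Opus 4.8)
The plan is to obtain this block-encoding by feeding the \emph{unitary} fully parallelized sparse sequential decomposition of Corollary~\ref{cor : Sequential fully parallelized : Sparse} into the block-encoding construction of Theorem~\ref{thm: block-encoding}, with $\epsilon=0$ since the decomposition is exact. The starting observation, already used in Corollary~\ref{lemma : Sequential block-encoding no ancilla : Sparse}, is that if $\hat{D}$ has $s$ non-zero real eigenvalues then the diagonal unitary $\hat{U}=e^{i\arcsin(\hat{D}/(\alpha d_{\max}))}$ has at most $s$ eigenvalues different from unity, so it admits an $s$-sparse sequential decomposition $\hat{U}=\prod_{j=0}^{p-1}\hat{U}_j$ with $p=\mathcal{O}(s)$. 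This is exactly the input required by Theorem~\ref{thm: block-encoding}.

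First I would apply Corollary~\ref{cor : Sequential fully parallelized : Sparse} to $\hat{U}$: each $\hat{U}_j$ is realized through Lemma~\ref{lemma : Sequential one ancilla : Sparse} in depth $\mathcal{O}(\log(n)^3)$ and size $\mathcal{O}(n\log(n)^4)$, and the full parallelization Theorem~\ref{thm : full parallelization} lays the $s$ operators out on $\mathcal{O}(ns)$ copies of the main register, giving depth $\mathcal{O}(\log(n)^3+\log(s))$ at size $\mathcal{O}(sn\log(n)^4)$. Next I would insert this parallelized layout inside the circuit of Fig.~\ref{Adjustable depth framework for non unitary diagonal operator}: the ancilla qubit $\ket{q_A}$ is put in superposition, copied $p-1$ times via Lemma~\ref{copy lemma} (an extra $\mathcal{O}(s)$ ancillas and $\mathcal{O}(\log s)$ depth), and each copy controls one parallel block $\hat{U}_j,\hat{U}_j^{\dagger}$. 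Theorem~\ref{thm: block-encoding} then certifies that the resulting circuit is an exact $(\alpha d_{\max},\,k+p,\,0)$-block-encoding of $\hat{D}$, with $k=\sum_{j=1}^{p-1}k_j=\mathcal{O}(ns)$.

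The main point to verify is that \emph{controlling} each $\hat{U}_j$ does not inflate the asymptotic cost. Writing the sequential operator of Eq.~\eqref{sequential decomposition} as $\hat{U}_j=\hat{V}^{-1}\Lambda_{\{0,\dots,n-2\}}(\hat{P}(\theta_j))\hat{V}$ with $\hat{V}$ a layer of $\hat{X}$ gates, the identity $C(\hat{V}^{-1}\hat{U}\hat{V})=\hat{V}^{-1}C(\hat{U})\hat{V}$ shows that the control merely promotes the $(n-1)$-controlled phase to an $n$-controlled phase, still implemented in depth $\mathcal{O}(\log(n)^3)$ and size $\mathcal{O}(n\log(n)^4)$ by Corollary~1 of \cite{claudon2024polylogarithmic} with one extra ancilla per block. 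The bookkeeping is then routine: the $\mathcal{O}(s)$ copies of $q_A$ and the $\mathcal{O}(s)$ per-block multi-control ancillas are both dominated by the $\mathcal{O}(ns)$ register copies, so the width remains $m=\mathcal{O}(ns)$, the two copy stages add only $\mathcal{O}(\log s)$ to the depth, and the size stays $\mathcal{O}(sn\log(n)^4)$. I expect the only genuinely delicate step to be confirming that the ancilla demanded by the Claudon~\emph{et al.} construction can be supplied concurrently across all $s$ blocks without serializing them; this follows because distinct blocks act on disjoint register copies, so their polylogarithmic-depth implementations run in parallel.
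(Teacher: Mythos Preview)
Your proposal is correct and follows essentially the same route as the paper, which simply cites Corollary~\ref{cor : Sequential fully parallelized : Sparse} together with the block-encoding Theorem~\ref{thm: block-encoding}. Your version just spells out the details the paper leaves implicit: the sparsity-preservation observation (already noted in the proof of Corollary~\ref{lemma : Sequential block-encoding no ancilla : Sparse}), the control-cost invariance argument from Section~\ref{sec:non_unitary}, and the ancilla bookkeeping.
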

\begin{proof}
This corollary is a direct consequence of Corollary \ref{cor : Sequential fully parallelized : Sparse} and the block-encoding Theorem \ref{thm: block-encoding}.
\end{proof}

\begin{corollary}[Sparse Walsh-Hadamard decomposition with one ancilla]
\label{lemma : block-encoding walsh no ancilla : Sparse}
Any $n$-qubit non-unitary diagonal $\hat{D}$ with real eigenvalues such that $\hat{U}^=e^{\pm i\arcsin(\hat{D}/(\alpha d_{\max}))}$ has an $\epsilon>0$ approximative $s$-sparse Walsh-Hadamard decomposition can be $(\alpha d_{\max},1,\epsilon)$-block-encoded with a quantum circuit of depth $\mathcal{O}(sk)$, size $\mathcal{O}(sk)$ using one ancilla qubit.
\end{corollary}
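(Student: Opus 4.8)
The plan is to derive this statement as a direct composition of the sparse Walsh–Hadamard implementation of a diagonal unitary, Lemma~\ref{lemma : walsh no ancilla : Sparse}, with the block-encoding construction of Theorem~\ref{thm: block-encoding} specialized to the non-parallelized regime $p=1$. By hypothesis the diagonal unitary $\hat{U}=e^{i\arcsin(\hat{D}/(\alpha d_{\max}))}$ admits an $\epsilon$-approximate decomposition $\tilde{U}=\prod_{j\in S}\hat{W}_j$ into $s$ Walsh operators of the form \eqref{eq: exp of walsh operator}, whose orders $j$ carry at most $k$ ones in their binary expansion. First I would apply Lemma~\ref{lemma : walsh no ancilla : Sparse} to implement the product $\tilde{U}$ \emph{exactly} with depth and size $\mathcal{O}(sk)$ and no ancilla, using that each $\hat{W}_j$ costs $2k_j\le 2k$ CNOTs and one $\hat{R}_Z$ gate.

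Next I would feed this decomposition into Theorem~\ref{thm: block-encoding}. Choosing $p=1$, i.e. implementing the Walsh product serially rather than splitting it across ancilla registers, collapses the theorem's ancilla count $k+p=\sum_{j=1}^{p-1}k_j+p$ to $0+1=1$: the only ancilla is the Hadamard-sandwiched control qubit $q_A$ of Fig.~\ref{quantum circuit scheme for non-unitary diagonal}, and neither the main register nor $q_A$ is copied. The block-encoding then amounts to implementing the two controlled diagonal unitaries $e^{\mp i\hat{\theta}}=\hat{U}^{\mp 1}$ conditioned on $q_A$, where each of $e^{\pm i\hat{\theta}}$ shares the same sparse Walsh structure (same orders, conjugate coefficients).

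The single point requiring a line of justification is that controlling each Walsh operator does not inflate the $\mathcal{O}(sk)$ count. Here I would invoke the identity $C(\hat{V}^{-1}\hat{U}\hat{V})=\hat{V}^{-1}C(\hat{U})\hat{V}$ stated just after Theorem~\ref{thm: block-encoding}: since $\hat{W}_j$ is a CNOT-conjugated $\hat{R}_Z$ rotation, its controlled version only requires controlling the single inner $\hat{R}_Z$, a constant-overhead operation. Hence each controlled product keeps depth and size $\mathcal{O}(sk)$, and adding the two controlled copies together with the $\mathcal{O}(1)$ Hadamard and phase gates preserves these scalings.

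Finally I would track the error. Theorem~\ref{thm: block-encoding} guarantees that if $\tilde{U}$ approximates $\hat{U}$ in spectral norm to accuracy $\epsilon$, the resulting circuit is an $(\alpha d_{\max},1,\epsilon)$-block-encoding of $\hat{D}$, which is precisely the claim. The main obstacle is therefore bookkeeping rather than technical: confirming that the $p=1$ specialization genuinely yields a single ancilla and that the control step leaves the double-sparsity parameters $s$ and $k$ intact. The substantive input—the existence of an $s$-sparse Walsh decomposition of $\hat{U}$—is supplied as a hypothesis, so no further analysis of $f$ or its derivative is needed here.
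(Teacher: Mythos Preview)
Your proposal is correct and follows exactly the paper's approach: the paper's proof is the one-line ``direct consequence of Lemma~\ref{lemma : walsh no ancilla : Sparse} and the block-encoding Theorem~\ref{thm: block-encoding},'' and you have simply unpacked that combination with the appropriate $p=1$ specialization and the control identity $C(\hat{V}^{-1}\hat{U}\hat{V})=\hat{V}^{-1}C(\hat{U})\hat{V}$ already noted in the text after Theorem~\ref{thm: block-encoding}.
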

\begin{proof}
 This corollary is a direct consequence of lemma \ref{lemma : walsh no ancilla : Sparse} and the block-encoding Theorem \ref{thm: block-encoding}.
\end{proof}

\begin{corollary}[Adjustable-depth sparse Walsh-Hadamard decomposition]
\label{cor :  block-encoded walsh adjustable-depth : Sparse}
Any $n$-qubit non-unitary diagonal $\hat{D}$ with real eigenvalues such that $\hat{U}^=e^{\pm i\arcsin(\hat{D}/(\alpha d_{\max}))}$ has an $\epsilon>0$ approximative $s$-sparse Walsh-Hadamard decomposition can be $(\alpha d_{\max},m,\epsilon)$-block-encoded with a quantum circuit of depth $\mathcal{O}(sk/(m/k)+\log(m/k))$, size $\mathcal{O}(sk+m)$ using $m=\Omega(k)$ ancilla qubits.
\end{corollary}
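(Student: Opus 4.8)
The plan is to obtain this block-encoding by directly combining the adjustable-depth sparse unitary result, Corollary \ref{cor : walsh adjustable-depth : Sparse}, with the block-encoding Theorem \ref{thm: block-encoding}. By hypothesis the diagonal unitary $\hat{U}=e^{i\arcsin(\hat{D}/(\alpha d_{\max}))}$ admits an $\epsilon$-approximate $s$-sparse Walsh-Hadamard decomposition $\tilde{U}=\prod_{j\in S}\hat{W}_j$ with $|S|=s$, where each order $j$ has at most $k$ ones in its binary expansion and $\hat{U}$ has real eigenvalues with $\alpha\geq 1$. These are exactly the hypotheses required by Theorem \ref{thm: block-encoding}, so the first step is simply to verify that they hold and to identify the $p=s$ building blocks $\hat{U}_j=\hat{W}_j$, each acting on $k_j+1\leq k+1$ qubits.

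The second step is to control these blocks by the ancilla qubit $q_A$ without asymptotic overhead. Using $C(\hat{V}^{-1}\hat{U}\hat{V})=\hat{V}^{-1}C(\hat{U})\hat{V}$ applied to the CNOT-ladder form of each Walsh operator in Eq.~\eqref{eq: exp of walsh operator}, only the single $\hat{R}_Z$ gate of each $\hat{W}_j$ needs to be controlled, so each controlled Walsh operator still costs $\mathcal{O}(k)$ gates and the per-block depth is unchanged. Invoking Corollary \ref{cor : walsh adjustable-depth : Sparse} on the (controlled) sparse decomposition then yields the parallelized circuit of Fig.~\ref{Adjustable depth framework for non unitary diagonal operator} with depth $\mathcal{O}(sk/(m/k)+\log(m/k))$ and size $\mathcal{O}(sk+m)$ using $m=\Omega(k)$ ancilla, and Theorem \ref{thm: block-encoding} certifies that this circuit is an $(\alpha d_{\max},m,\epsilon)$-block-encoding of $\hat{D}$, since it encodes $\sin(\arcsin(\hat{D}/(\alpha d_{\max})))=\hat{D}/(\alpha d_{\max})$ up to the spectral-norm error $\epsilon$ inherited from $\tilde{U}$.

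The main obstacle, though a mild one, is the bookkeeping of the extra ancilla introduced by the block-encoding relative to the unitary case. Theorem \ref{thm: block-encoding} requires copying $q_A$ into $p-1=\mathcal{O}(s)$ registers so that all parallelized blocks can be controlled simultaneously; I would invoke Lemma \ref{copy lemma} to bound this copy (and its inverse) by depth $\mathcal{O}(\log s)$ and size $\mathcal{O}(s)$. The remaining check is that these contributions are absorbed into the stated scalings: the $\mathcal{O}(\log s)$ copy depth is dominated by the $\mathcal{O}(sk/(m/k)+\log(m/k))$ term, and the $\mathcal{O}(s)$ extra CNOTs are dominated by the $\mathcal{O}(sk+m)$ size, while the extra width is $\Theta(m)$ as before. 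Hence the block-encoding inherits exactly the depth, size, and ancilla scalings of the unitary adjustable-depth construction, which completes the argument.
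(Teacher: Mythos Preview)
Your proof is correct and follows exactly the same approach as the paper, which simply states that the result is a direct consequence of Corollary~\ref{cor : walsh adjustable-depth : Sparse} and the block-encoding Theorem~\ref{thm: block-encoding}. You have merely unpacked the details (controlling only the $\hat{R}_Z$ gate via the conjugation identity, bounding the cost of copying $q_A$, and checking that the overheads are absorbed) that the paper leaves implicit.
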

\begin{proof}
 This corollary is a direct consequence of corollary \ref{cor : walsh adjustable-depth : Sparse} and the block-encoding Theorem \ref{thm: block-encoding}.
\end{proof}

\begin{corollary}[Fully-parallelized sparse Walsh-Hadamard decomposition]
\label{cor : block-encoded walsh fully parallelized : Sparse}
Any $n$-qubit non-unitary diagonal $\hat{D}$ with real eigenvalues such that $\hat{U}^=e^{\pm i\arcsin(\hat{D}/(\alpha d_{\max}))}$ has an $\epsilon>0$ approximative $s$-sparse Walsh-Hadamard decomposition can be $(\alpha d_{\max},m,\epsilon)$-block-encoded with a quantum circuit of depth $\mathcal{O}(k+\log(s))$, size $\mathcal{O}(sk)$ using $m=\mathcal{O}(sk)$ ancilla qubits.
\end{corollary}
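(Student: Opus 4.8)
The plan is to obtain this statement by composing the unitary result of Corollary \ref{cor : walsh fully parallelized : Sparse} with the block-encoding Theorem \ref{thm: block-encoding}, exactly as in the preceding block-encoding corollaries of this section. First I would unpack the hypothesis that $\hat{U}=e^{\pm i\arcsin(\hat{D}/(\alpha d_{\max}))}$ admits an $\epsilon$-approximate $s$-sparse Walsh-Hadamard decomposition, that is, there is a product $\tilde{U}=\prod_{j\in S}\hat{W}_j$ with $|S|=s$ and $\|\hat{U}-\tilde{U}\|_2\le\epsilon$, where $k=\max_{j\in S}k_j$ bounds the number of $1$'s in the binary expansions of the retained orders.

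By Corollary \ref{cor : walsh fully parallelized : Sparse}, this $\tilde{U}$ can be implemented by a fully parallelized circuit of depth $\mathcal{O}(k+\log s)$ and size $\mathcal{O}(sk)$ using $\mathcal{O}(sk)$ ancilla qubits: each $\hat{W}_j$ contributes a CNOT-stair of depth $\mathcal{O}(k)$, the $s$ operators are placed on disjoint registers prepared by $\widehat{\text{copy}}$, and the copy and uncopy layers cost depth $\mathcal{O}(\log s)$ by Lemma \ref{copy lemma}. I would then feed this decomposition into Theorem \ref{thm: block-encoding} with $p=s$ and $\hat{U}_j=\hat{W}_j$, which produces the circuit of Fig.\ \ref{Adjustable depth framework for non unitary diagonal operator} realizing an $(\alpha d_{\max},k+p,\epsilon)$-block-encoding of $\hat{D}$; the spectral-norm error is inherited directly from $\|\hat{U}-\tilde{U}\|_2\le\epsilon$ through the final inequality in the proof of that theorem.

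The only steps needing care, and the nearest thing to an obstacle here, are bookkeeping ones. The block-encoding circuit controls each $\hat{W}_j$ by a copy of the ancilla qubit $q_A$; using the identity $C(\hat{V}^{-1}\hat{U}\hat{V})=\hat{V}^{-1}C(\hat{U})\hat{V}$ recalled in Section \ref{sec:non_unitary}, controlling a Walsh operator reduces to controlling its single $\hat{R}_Z$ gate, so the controlled version keeps the same asymptotic depth $\mathcal{O}(k)$ and size as $\hat{W}_j$ itself. Preparing the $p=s$ copies of $q_A$ required to control the $s$ blocks in parallel costs depth $\mathcal{O}(\log s)$ and size $\mathcal{O}(s)$ by Lemma \ref{copy lemma}, both already absorbed into the stated bounds. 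Finally I would confirm the total ancilla count $k+p=\sum_{j=1}^{p-1}k_j+s\le(s-1)k+s=\mathcal{O}(sk)$, so that the circuit has depth $\mathcal{O}(k+\log s)$, size $\mathcal{O}(sk)$ and width $\mathcal{O}(sk)$, as claimed.
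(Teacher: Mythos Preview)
Your proposal is correct and follows exactly the same route as the paper, namely combining Corollary~\ref{cor : walsh fully parallelized : Sparse} with the block-encoding Theorem~\ref{thm: block-encoding}. Your version is simply more explicit about the bookkeeping (controlled Walsh operators, copies of $q_A$, ancilla count), all of which is consistent with the discussion in Section~\ref{sec:non_unitary}.
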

\begin{proof} 
This corollary is a direct consequence of corollary \ref{cor : walsh fully parallelized : Sparse} and the block-encoding Theorem \ref{thm: block-encoding}.
\end{proof}

\section{Discretization error of the diffusion equation}
\label{sec:discretization error}

In this Appendix, the bound on the discretization error of the diffusion equation is derived. The space discretization of the diffusion equation and the centered finite difference approximation of the one-dimensional Laplacian operator introduce an error in the resolution of the diffusion equation. We define $\ket{f}_t=\sum_{x=0}^{N-1}f(x/N,t)\ket{x}$ as the non-normalized state encoding the solution of the diffusion Equation (\ref{eq : heat equation}) and $\ket*{\tilde{f}}_t$ the solution of the ordinary differential equation (\ref{eq:Discrete ODE}) obtained after space discretization. Notice that
\begin{equation}
\begin{split}
\partial_t \ket{f}_t &=\kappa\frac{(\hat{S}-\hat{S}^\dagger)^2}{4\Delta x^2} \ket{f}_t+\ket{r}_t \\
\ket{f}_{t=0}&=\ket{f_0}
\end{split}
\end{equation}
with $\ket{r}_t=\kappa \ket{\partial_{xx}f}_t-\kappa\frac{(\hat{S}-\hat{S}^\dagger)^2}{4\Delta x^2} \ket{f}_t$ and $\ket{\partial_{xx}f}_t=\sum_{x=0}^{N-1}\partial_{xx}f(x/N,t)\ket{x}$. The variation of parameter formula implies:
\begin{equation}
\ket{f}_t=\ket*{\tilde{f}}_t+\int_0^t\hat{A}(t,s)\ket{r(s)}ds
\end{equation}
where $\hat{A}(t,s)=e^{\kappa(t-s)\frac{(\hat{S}-\hat{S}^\dagger)^2}{4\Delta x^2}}$. Therefore, by noticing that $\|\hat{A}\|_2\leq1$, the difference can be bounded as:
\begin{equation}
\| \ket{f}_t-\ket*{\tilde{f}}_t\|_{2,N} \leq t \kappa \max_{s\in[0,t]}\| \ket{\partial_{xx}f}_s-\frac{(\hat{S}-\hat{S}^\dagger)^2}{4\Delta x^2} \ket{f}_s \|_{2,N}
\end{equation}
Supposing that $f$ is three times differentiable, one can use the Taylor formula $f(b)=\sum_{j=0}^{p-1}\frac{f^{(j)}(a)}{j!}(b-a)^j+\int_a^b\frac{f^{(p)}(s)}{(p-1)!}(b-s)^{p-1}ds$ to bound the last term as:
\begin{equation}
\| \ket{\partial_{xx}f}_s-\frac{(\hat{S}-\hat{S}^\dagger)^2}{4\Delta x^2} \ket{f}_s \|_{2,N} \leq \frac{2}{3\sqrt{N}} \| \partial_{xxx}f \|_{\infty}
\end{equation}
Using the triangular inequality, one can now bound the difference of the normalized states as:
\begin{equation}
\| \frac{\ket{f}_t}{\| \ket{f}_t \|_{2,N}}-\frac{\ket*{\tilde{f}}_t}{\| \ket*{\tilde{f}}_t \|_{2,N}} \|_{2,N} \leq t \kappa \frac{2}{3\sqrt{N} \| \ket{f}_t \|_{2,N} } \| \partial_{xxx}f \|_{\infty}
\end{equation}

Remark that $\frac{\sqrt{N}}{\|f_s\|_{2,N}}$ converges toward $\frac{1}{\sqrt{\int_0^1 |f(x,s)|^2 dx}}$. Then, since the solution $f_t$ of the diffusion equation is bounded for all time $t$, there is a constant $K>0$, depending on the maximum over space and time of the third derivative of $f$ such that:
\begin{equation}
\| \frac{\ket{f}_t}{\| \ket{f}_t \|_{2,N}}-\frac{\ket*{\tilde{f}}_t}{\| \ket*{\tilde{f}}_t \|_{2,N}} \|_{2,N}  \leq t K/N
\end{equation}

\end{document}